\algnewcommand\algorithmicparfor{\textbf{parfor}}
\algnewcommand\algorithmicpardo{\textbf{do}}
\algnewcommand\algorithmicendparfor{\textbf{end\ parfor}}
\DeclareMathAlphabet{\mathpzc}{OT1}{pzc}{m}{it}
\newtheorem{theorem}{Theorem}[section]
\newtheorem{corollary}[theorem]{Corollary}
\newtheorem{lemma}[theorem]{Lemma}
\newtheorem{question}[theorem]{Question}
\newcommand{\jdef}{ deg_i \cdot c  {\ln n}}
\newcommand{\izo}{\lfloor {\log (\kappa\rho)}\rfloor}
\newcommand{\mch}{\mathcal{H}}
\newcommand{\aismai}{A_i\setminus A_{i+1}}
\newcommand{\hopell}{ \lceil (1/\rho)\rceil }
\newcommand\nfrac{n^{1+\frac{1}{\kappa}}}
\newcommand\epsi{\left({1}/{\epsilon}\right)^i}
\newcommand{\degi}{n^\frac{2^i}{\kappa}}
\newcommand{\apdi}{(1+\epsilon_H)\delta_i}
\newcommand{\eps}[1]{\left({1}/{\epsilon}\right)^{#1}}
\newcommand{\krange}{[k_0,\lambda]}
\newcommand{\emuellb}{{\log (\kappa\rho)} + 1/\rho}
\newcommand{\emuellbk}{{\log (\rho{\log n})} + 1/\rho}
\newcommand{\deltai}{\alpha\cdot \epsi+4R_i}
\newcommand{\betaemu}{ 28\left(\frac{28(\emuellb)}{\epsilon_{M}}\right)^{\emuellb-1}}
\newcommand{\betaemuk}{ 28\left(\frac{28(\emuellbk)}{\epsilon_{M}}\right)^{\emuellbk}}
\newcommand{\betaemuko}{ 28\left(\frac{28(\emuellbk)}{\epsilon_{M}}\right)^{\emuellbk-1}}
\newcommand{\dlimits}{\frac{t}{(1+\epsilon_\mathcal{H})^2(k+1)}}
\newcommand{\addedmemoryparam}{n^\rho\cdot c\cdot {\ln n}}
\newcommand{\addedmemory}{\mu}
\newcommand{\dgmhb}[1]{d^{({#1}\beta_\mathcal{H})}_{G\cup \mathcal{H}}}
\newcommand{\dghBH}{d^{(\beta_{\mathcal{H}})}_{G\cup \mathcal{H}}}
\newcommand{\tval}{\left(  \frac{28(\emuellbk)}{\epsilon_{M}}\right)^{\emuellbk+1}}
\newcommand{\dghb}{d^{(4\beta)}_{G\cup H}}
\newcommand{\emuell}{\lfloor{\log {\kappa \rho}}\rfloor + \lceil \frac{\kappa+1}{\kappa\rho}\rceil-1}
\newcommand{\poly}{\mathop\mathrm{poly}}
\newcommand{\clique}{\textsc{Congested Clique}\xspace}
\title{Massively Parallel Algorithms for Approximate Shortest Paths}
\renewcommand\@date{{%
  \vspace{-\baselineskip}%
  \large\centering
  \begin{tabular}{@{}c@{}}
    Michal Dory\textsuperscript{1} \\
    \normalsize mdory@ds.haifa.ac.il
  \end{tabular}%
  \qquad \quad
  \begin{tabular}{@{}c@{}}
    Shaked Matar\textsuperscript{2} \\
    \normalsize matars@post.bgu.ac.il
  \end{tabular}

  \bigskip

  \textsuperscript{1}Department of Computer Science, University of Haifa \par
  \textsuperscript{2}Department of Computer Science, Ben-Gurion University of the Negev

  \bigskip

  \today
}}
\begin{document}

%%
%% The "title" command has an optional parameter,
%% allowing the author to define a "short title" to be used in page headers.
 
%%
%% The "author" command and its associated commands are used to define
%% the authors and their affiliations.
%% Of note is the shared affiliation of the first two authors, and the
%% "authornote" and "authornotemark" commands
%% used to denote shared contribution to the research.

%%
%% By default, the full list of authors will be used in the page
%% headers. Often, this list is too long, and will overlap
%% other information printed in the page headers. This command allows
%% the author to define a more concise list
%% of authors' names for this purpose.

%%
%% The abstract is a short summary of the work to be presented in the
%% article.
\maketitle
\begin{abstract}
   We present fast algorithms for approximate shortest paths in the massively parallel computation (MPC) model. 
We provide randomized algorithms that take $\poly(\log{\log{n}})$ rounds in the near-linear memory MPC model. Our results are for unweighted undirected graphs with $n$ vertices and $m$ edges.

Our first contribution is a $(1+\epsilon)$-approximation algorithm for Single-Source Shortest Paths (SSSP) that takes $\poly(\log{\log{n}})$ rounds in the near-linear MPC model, where the memory per machine is  $\tilde{O}(n)$ and the total memory is $\tilde{O}(mn^{\rho})$, where $\rho$ is a small constant. 

Our second contribution is a distance oracle that allows to approximate the distance between any pair of vertices. The distance oracle is constructed in $\poly(\log{\log{n}})$ rounds and allows to query a $(1+\epsilon)(2k-1)$-approximate distance between any pair of vertices $u$ and $v$ in $O(1)$ additional rounds. The algorithm is for the near-linear memory MPC model with total memory of size $\tilde{O}((m+n^{1+\rho})n^{1/k})$, where $\rho$ is a small constant.

While our algorithms are for the near-linear MPC model, in fact they only use one machine with $\tilde{O}(n)$ memory, where the rest of machines can have sublinear memory of size $O(n^{\gamma})$ for a small constant $\gamma < 1$.
All previous algorithms for approximate shortest paths in the near-linear MPC model either required $\Omega(\log{n})$ rounds or had an $\Omega(\log{n})$ approximation.

Our approach is based on fast construction of near-additive emulators, limited-scale hopsets and limited-scale distance sketches that are tailored for the MPC model. While our end-results are for the near-linear MPC model, many of the tools we construct such as hopsets and emulators are constructed in the more restricted sublinear MPC model.
\end{abstract}

\thispagestyle{empty}

\newpage

\tableofcontents

\thispagestyle{empty}

\newpage

\setcounter{page}{1}

\section{Introduction}
\label{sec intro}

Processing massive data is an important algorithmic challenge, which has received a lot of attention in recent years. %due to the rapid increase in the amount of available data.  
The massively parallel computation (MPC) model \cite{karloff2010model,beame2017communication,GoodrichSZ11} %\todo{we have goodrich2011sorting and GoodrichSZ11 in our references} 
is a modern parallel model developed to model large-scale parallel processing settings such as MapReduce \cite{dean2008mapreduce}, Hadoop \cite{white2012hadoop}, Spark \cite{zaharia2010spark}, and Dryad \cite{isard2007dryad}, 
that deal with massive data.
In this model, the input is distributed between a set of machines with a limited local memory of size $s$, that communicate with each other in synchronous rounds. In each round each machine can communicate with other machines, but is limited to send and receive $s$ information in total. The goal is to minimize the number of communication rounds, as in each round a massive amount of data is communicated between the machines.  A central line of research focuses on obtaining fast algorithms for graph problems in MPC, with the goal of obtaining faster algorithms compared to algorithms in the traditional parallel settings. There are several variants of the MPC model that depend on the size of memory per machine. In the \emph{super-linear} MPC model each machine has $s=n^{1+\epsilon}$ memory for a constant $\epsilon$, where $n$ is the number of vertices in the input graph, in the \emph{near-linear} MPC model each machine has $s=\tilde{O}(n)$ memory, and in the \emph{sublinear} MPC model each machine has $s=n^{\gamma}$ memory for a constant $\gamma < 1$. 

\paragraph{Algorithms for the MPC Model.}

The MPC model has received a lot of attention in recent years. A rich line of work led to fast algorithms for various graph problems  
such as minimum spanning tree \cite{lattanzi2011filtering,andoni2014parallel,nowicki2021deterministic}, coloring \cite{chang2019complexity,czumaj2021simple}, matching and maximal independent set \cite{lattanzi2011filtering,czumaj2018round,ghaffari2018improved, behnezhad2019massively,behnezhad2019exponentially,ghaffari2019sparsifying,assadi2019coresets}, minimum cut \cite{lattanzi2011filtering,ghaffari2020massively}, shortest paths and spanners \cite{li2020faster, andoni2020parallel,DinitzN19,DBLP:conf/podc/DoryFKL21,DBLP:conf/spaa/BiswasDGMN21,fischer2022massively} and more. 
The main goal is to obtain very fast algorithms that ideally take sub-logarithmic  or even constant number of rounds. While in the super-linear and linear memory regimes of MPC many problems indeed have constant or $\poly(\log{\log{n}})$ round algorithms, in the sublinear memory regime many important problems such as computing minimum spanning tree or shortest paths are conjectured to require $\Omega(\log{n})$ rounds, based on the 1-cycle vs 2-cycles conjecture.
In this work, we are interested in solving distance related problems in the MPC model.

\paragraph{Distance Computation in MPC.} Despite the long line of work studying graph problems in the MPC model, less is known about distance problems such as approximate shortest paths.
The current known algorithms for approximate shortest paths can be divided into 2 categories:   algorithms that require $\Omega(\log{n})$ rounds or algorithms with $\Omega(\log{n})$-approximation. The first category includes $\poly(\log{n})$-round algorithms for $(1+\epsilon)$-approximate Single-Source Shortest Paths (SSSP) based on PRAM algorithms \cite{li2020faster, andoni2020parallel}, as well as poly-logarithmic algorithms for distance sketches \cite{DinitzN19} and approximate All-Pairs Shortest Paths (APSP) \cite{hajiaghayi2019mapreduce}.
All these algorithms work in the more restricted sublinear MPC model, and work in weighted graphs.
As mentioned above, in sublinear MPC all these problems are conjectured to require $\Omega(\log{n})$ rounds. The second category includes $\Omega(\log{n})$-approximation algorithms based on building graph spanners in near-linear MPC \cite{DBLP:conf/podc/DoryFKL21,DBLP:conf/spaa/BiswasDGMN21,fischer2022massively}. This approach leads to $O(\log{n})$-approximation in $O(1)$ rounds for weighted APSP in the near-linear MPC model. A major open question is to obtain sub-logarithmic approximation in sub-logarithmic number of rounds. This question is already open for unweighted undirected graphs.

\begin{question}\label{question_fastSP}
    Can we obtain $o(\log{n})$ approximation for approximate shortest paths in $o(\log{n})$ rounds in the near-linear MPC model?
\end{question}

\subsection{Our Contribution}\label{sec contribution}

In this work, we answer Question \ref{question_fastSP} in the affirmative, by providing randomized $\poly(\log{\log{n}})$-round algorithms for $O(1)$-approximate shortest paths in the near-linear MPC model for unweighted undirected graphs.

\paragraph{Single-Source Shortest Paths.}
We first study the single-source shortest paths (SSSP) problem where the goal is to compute the distances from a single source. As standard, we use the notation $\tilde{O}(x)$ to hide poly-logarithmic factors in $x$. Our algorithm has the following guarantees.

%\todo[inline]{
%This is the complete simplified version of the theorem:

%Given an unweighted, undirected graph $G$ on $n$ vertices, parameters $\epsilon <1$ and $\rho \in[ 1/{\log {\log n}}, 1/2]$, our algorithm computes $(1+\epsilon)$-approximate shortest paths. The algorithm works in the heterogeneous extra memory MPC, where we have a single machine with $\tilde{O}(n)$ memory  and $\tilde{O}(\frac{mn^\rho}{n^\gamma})$ machines with $O(n^\gamma)$ memory each. The round complexity of our algorithm is with high probability (henceforth, w.h.p.)  $T(n) =  O\left( \frac{ 
%        {\log^{(2)} n}\left({\log ^{(3)}n} - {{\log \epsilon}} \right) }
 %       {\epsilon\rho^2\gamma}\right)^{\frac{1}{\rho} +2}$, where ${\log ^{(x)}n}$ is the $x$-iterated log of $n$. 
%}

\begin{theorem} \label{thm_SSSP}
    Given an unweighted, undirected graph $G=(V,E)$ on $n$ vertices, a parameter $\epsilon <1$ and a constant $\rho \in[ 1/{\log {\log n}}, 1/2]$, there is a randomized algorithm that computes $(1+\epsilon)$-approximation for SSSP with high probability (w.h.p.). The algorithm works
    in the near-linear MPC model using $\tilde{O}((|E|+n^{1+\rho})\cdot n^{\rho})$ total memory.
     The round complexity of the algorithm is $T(n) =  \tilde{O}\left( \frac 
        {\log{\log{n}} }
        {\epsilon\rho}\right)^{\frac{1}{\rho} +2}.$ 
\end{theorem}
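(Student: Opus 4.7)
The argument factors into two parts: first, construct a near-additive $(1+\epsilon,\beta)$-hopset $H$ on $G$ whose hopbound $\beta$ is only $\poly(\log\log n)$; second, run a hop-bounded Bellman--Ford from the source on $G\cup H$. The first part is the technical core and is where the $\emuell$ scales and the $\tval$ hopbound come from; the second part leverages the fact that the near-linear memory regime lets one aggregator machine hold the entire distance vector $\tilde d[\cdot]$.

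\textbf{Stage 1: Hopset construction.} I would follow the Thorup--Zwick / Elkin--Neiman super-clustering paradigm, building pivot hierarchies at $\emuell$ geometrically growing distance scales $R_0 < R_1 < \dots$, with sampling probability at scale $i$ roughly $n^{-2^i/\kappa}$. At each scale, each vertex runs a hop-bounded exploration out to its assigned pivot, and the resulting ``pivot edges'' are added to $H$. The error parameters are split across scales as $\alphai$ and $\betai$, and the total stretch telescopes to $1+\epsilon$ after the $\emuell$ scales. Each scale is implemented in MPC by a constant number of sort/aggregate primitives plus a handful of bounded-radius Bellman--Ford rounds; the per-scale round cost is $\tilde O(\log\log n/(\epsilon\rho))$ and the per-scale total memory is $\tilde O((|E|+n^{1+\rho})n^{\rho})$, coming from the size of the sampled pivot sets and the balls around them. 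After all scales, $H$ has $O(n^{1+\rho})$ edges and hopbound $\tval$, which is $\poly(\log\log n)$ for constant $\rho$.

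\textbf{Stage 2: SSSP via the hopset, and accounting.} Given $H$, initialize $\tilde d[s]\gets 0$ and $\tilde d[v]\gets\infty$ on the aggregator machine, then perform $\beta$ rounds of $(1+\epsilon)$-relaxation on $G\cup H$: in each round broadcast $\tilde d[\cdot]$ (size $\tilde O(n)$) from the aggregator, have every machine propose $\tilde d[u]+w(u,v)$ for its local edges $(u,v)$, and aggregate the new minimum at $v$ via sorting by key. Each such round is $O(1)$ MPC rounds in the near-linear model, so the post-processing contributes $O(\beta)=\poly(\log\log n)$ rounds. Since $G\cup H$ has a $\beta$-hop path of length at most $(1+\epsilon)d_G(s,v)$ to every $v$ (after a mild rescaling to absorb the additive $\beta$ into the multiplicative factor via a separate short-distance Bellman--Ford of length $O(\beta/\epsilon)$), $\beta$ relaxations recover a $(1+\epsilon)$-approximation w.h.p. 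Combining, the total round complexity is $\emuell \cdot \tilde O(\log\log n/(\epsilon\rho))+O(\beta)$, which matches the claimed $\tilde O(\log\log n/(\epsilon\rho))^{1/\rho+2}$, and the total memory is dominated by Stage 1's $\tilde O((|E|+n^{1+\rho})n^{\rho})$.

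\textbf{Main obstacle.} The hard part is Stage 1: achieving hopbound $\beta=\poly(\log\log n)$ with a construction that itself runs in $\poly(\log\log n)$ MPC rounds. Classical super-clustering has both $\Omega(\log n)$ scales and an $\Omega(\log n)$-hop inner Bellman--Ford per scale, so two logarithmic factors must be shaved simultaneously. The first is addressed by collapsing the scale hierarchy into $O(1/\rho)$ levels with exponentially growing thresholds, which forces the memory to blow up by $n^\rho$ at every level; keeping this blow-up inside the $\tilde O((|E|+n^{1+\rho})n^{\rho})$ budget requires a careful sample-rate/radius trade-off. The second is addressed by exploiting the single $\tilde O(n)$-memory machine to replace the inner Bellman--Ford with a constant number of aggregations per scale. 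Finally, one must control the error split $\alphai+\betai$ so that the accumulated multiplicative error telescopes to $1+\epsilon$ and the final hopbound $\tval$ is indeed $\poly(\log\log n)$; tuning these parameters across levels is the delicate step.
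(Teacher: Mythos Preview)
Your plan has a genuine structural gap: you are trying to build a \emph{full-scale} hopset with $\poly(\log\log n)$ hopbound in $\poly(\log\log n)$ MPC rounds, and the paper's whole point is that this route does not seem to work. A hopset that guarantees $(1+\epsilon)$-approximate $\beta$-hop paths for \emph{all} pairs must be built across $\Theta(\log n)$ distance scales (each scale $k$ handles pairs at distance $\approx 2^k$ and uses the union of the lower-scale hopsets as input). Collapsing the \emph{sampling hierarchy} into $O(1/\rho)$ levels, as you propose, controls the size and hopbound, but it does not reduce the number of distance scales you must iterate over; those two quantities are orthogonal. With only $\emuellb$ scales you get a hopset only for distances up to $t=\tval$, which is what the paper calls a \emph{limited-scale} hopset. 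Your Stage~2 assertion that ``$G\cup H$ has a $\beta$-hop path of length at most $(1+\epsilon)d_G(s,v)$ to every $v$'' is therefore unjustified for $d_G(s,v)>t$, and no amount of short-range Bellman--Ford absorbs this. (Also, you identify $\tval$ as the hopbound; it is in fact the distance threshold $t$. The hopbound $\beta_{\mathcal H}$ is $O(\log t/(\epsilon\rho))^{1/\rho+1}$, which is what is $\poly(\log\log n)$.)

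The paper's fix, which your outline is missing entirely, is to pair the limited-scale hopset with a second object: a near-additive \emph{emulator} $M$ of size $\tilde O(n)$ with additive term $\beta_M\le \epsilon_M t$. The emulator is built (in sublinear MPC) via one more invocation of the same edge-selection framework, using $\mathcal H$ to keep the explorations to $O(\beta_{\mathcal H})$ hops; then it is shipped to the single near-linear machine, which computes $d_M(s,\cdot)$ locally. For $d_G(s,v)\ge t$ the emulator already gives a $(1+O(\epsilon))$-approximation because $\beta_M/d_G(s,v)\le\epsilon$, while for $d_G(s,v)\le t$ the $\beta_{\mathcal H}$-hop Bellman--Ford in $G\cup\mathcal H$ suffices. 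The final estimate is the minimum of the two. The near-linear machine is used precisely to host $M$ and compute shortest paths in it, not to ``replace the inner Bellman--Ford'' during hopset construction; the hopset and emulator are both built in the sublinear regime.
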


As mentioned above, all the previous algorithms for SSSP, even in unweighted undriected graphs, require $\Omega(\log{n})$ rounds or $\Omega(\log{n})$ approximation. In particular, our algorithm is exponentially faster compared to previous algorithms for $(1+\epsilon)$-approximate SSSP \cite{li2020faster, andoni2020parallel} that worked also for weighted graphs. We remark that the total memory used by our algorithm is slightly super-linear, where previous algorithms used $\tilde{O}(|E|)$ total memory. However, this slight increase in total memory allows us to obtain significantly better running time or approximation compared to previous algorithms.

While our algorithm works in the near-linear memory regime, in fact it only uses one machine with near-linear memory, where the rest of machines have sublinear memory of size $O(n^{\gamma})$ for a small constant parameter $\gamma < 1$. Hence it works in the heterogeneous MPC regime defined in \cite{fischer2022massively}. Having one machine with near-linear memory is necessary, as in the sublinear MPC model the problem is conjectured to require $\Omega(\log{n})$ rounds.

Our algorithm can be easily extended to compute $(1+\epsilon)$-approximate distances from a set $S$ of sources of size $O(n^{\rho})$ in the same time and with the same total memory.  In this case we need $|S|$ machines with near-linear memory, where the rest of machines have subliner memory.

\paragraph{All-Pairs Shortest Paths.}
Our next goal is to approximate the distances between all pairs of vertices. Note that storing all distances explicitly requires $\Omega(n^2)$ total memory, and our goal is to find a sparser implicit representation that is more suitable for the MPC model. In particular, our goal is to construct a \emph{distance oracle}, a sparse data structure that allows to query the distance between any pair of vertices efficiently.  Implicit representations of the distances in MPC were also studied before. For example, the previous works \cite{DBLP:conf/podc/DoryFKL21,DBLP:conf/spaa/BiswasDGMN21,fischer2022massively} construct a multiplicative weighted $\Omega(\log{n})$-spanner of $\tilde{O}(n)$ size that can be stored in one machine that can then compute approximate distance between any pair of vertices when needed. Another work \cite{DinitzN19} studies computation of distance sketches that allow to query approximate distances in $O(1)$ rounds in weighted graphs. Here the goal is to compute a small sketch per vertex, such that given the sketches of a pair of vertices one can compute their approximate distance.

Note that Theorem \ref{thm_SSSP} already allows to query $(1+\epsilon)$-approximate distance between any pair of vertices $u,v$ in $\poly(\log{\log{n}})$ rounds by just computing SSSP from $u$ or $v$. 
We next design algorithms that have smaller $O(1)$ query time, with the following guarantees. 

\begin{theorem} \label{thm_APSP_intro}
    Let $G=(V,E)$ be an unweighted, undirected graph on $n$ vertices, and let $\epsilon <1/2$, $\rho \in[ 1/{\log {\log n}}, 1/2]$ and $2 \leq k \leq 1/\rho$ be parameters. 
    There is a randomized algorithm that w.h.p.  computes a distance oracle of size $\tilde{O}(kn^{1+1/k})$ that provides $(1+\epsilon)(2k-1)$-approximation for all the distances.
    The preprocessing time for constructing the oracle is $T(n) =  \tilde{O}\left( \frac 
        {\log{\log{n}} }
        {\epsilon\rho}\right)^{\frac{1}{\rho} +2}$ rounds. Given the oracle, we can query each distance in $O(1)$ rounds. The algorithm works in the near-linear MPC model using $\tilde{O}((|E|+n^{1+\rho})n^{1/k})$ total memory. 
\end{theorem}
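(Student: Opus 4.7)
The plan is to implement the Thorup--Zwick distance oracle on top of the approximate SSSP subroutine of Theorem~\ref{thm_SSSP}. First, I would sample a nested hierarchy $V = A_0 \supseteq A_1 \supseteq \dots \supseteq A_{k-1} \supseteq A_k = \emptyset$, where each vertex of $A_{i-1}$ is kept in $A_i$ independently with probability $n^{-1/k}$. Standard Chernoff bounds give $|A_i| = \tilde{O}(n^{1-i/k})$ w.h.p., which sampling can be done in $O(1)$ rounds. For each level $i \in \{1,\dots,k-1\}$, I attach a virtual super-source $s_i$ with zero-weight edges to every vertex of $A_i$, invoke the SSSP algorithm of Theorem~\ref{thm_SSSP} from $s_i$ with error parameter $\epsilon' = \Theta(\epsilon/k)$, and obtain for every $v \in V$ a $(1+\epsilon')$-estimate $\tilde{d}(v,A_i)$ together with a closest pivot $p_i(v) \in A_i$. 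This costs $k$ invocations of Theorem~\ref{thm_SSSP}, well within the claimed round bound since $k \le 1/\rho$, and the total memory grows by a factor $n^{1/k}$ over the SSSP budget, matching $\tilde{O}((|E|+n^{1+\rho})n^{1/k})$.

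Next I would compute, for every vertex $v$, its \emph{bunch}
\[
B(v) = \bigcup_{i=0}^{k-1}\bigl\{w \in A_i \setminus A_{i+1} : \tilde{d}(v,w) < \tilde{d}(v,A_{i+1})\bigr\},
\]
together with an approximate distance $\tilde{d}(v,w)$ for every $w \in B(v)$. The oracle then consists, for each $v$, of the $k$ pairs $(p_i(v),\tilde{d}(v,p_i(v)))$ and the bunch $B(v)$ with its stored distances; by the classical Thorup--Zwick analysis, the expected bunch size is $O(n^{1/k})$ per level, so the oracle has size $\tilde{O}(k n^{1+1/k})$ as claimed. The query phase is identical to the standard one: starting from $u,v$ with $w = u$ and $i=0$, iterate swapping $u \leftrightarrow v$, incrementing $i$, and resetting $w = p_i(u)$ until $w \in B(v)$; return $\tilde{d}(u,w)+\tilde{d}(w,v)$. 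In MPC this is $O(1)$ rounds: each check ``$w \in B(v)$?'' is one lookup broadcast to the machine holding $B(v)$. Composing $k$ hops of $(1+\epsilon')$ approximations with the Thorup--Zwick stretch bound yields a final stretch of $(1+\epsilon')^{O(k)}(2k-1) \le (1+\epsilon)(2k-1)$ by the choice of $\epsilon'$.

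The main obstacle is the efficient construction of the bunches, which in the centralized setting is done by a pruned Dijkstra from each $w \in A_i \setminus A_{i+1}$ that explores only vertices $v$ with $d(v,w) < d(v, A_{i+1})$. In MPC we cannot afford to launch that many independent SSSP calls. The plan is to exploit the hopset/emulator machinery underlying Theorem~\ref{thm_SSSP}: after its preprocessing we have a sparse augmented graph $G \cup H$ in which approximate distances are realized by paths with only $\poly(\log \log n)$ hops. Using this structure I would compute a truncated multi-source exploration from each $A_i \setminus A_{i+1}$, where the termination condition at a vertex $v$ is the already-computed value $\tilde{d}(v,A_{i+1})$. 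Because the expected total exploration size is $\sum_{w \in A_i \setminus A_{i+1}} |C(w)| = \sum_v |B(v) \cap (A_i \setminus A_{i+1})| = \tilde{O}(n^{1+1/k})$, the work fits the memory budget, and the small hop-diameter of $G \cup H$ lets the exploration complete in $\poly(\log \log n)$ rounds via the same sketching primitives used for Theorem~\ref{thm_SSSP}. Correctness of the pruning, i.e.\ that each bunch is neither missed nor grossly overgrown under the $(1+\epsilon')$ approximation of $\tilde{d}(v,A_{i+1})$, will be verified by the standard Thorup--Zwick argument applied to slightly perturbed thresholds.
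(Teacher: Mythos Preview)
Your high-level plan---Thorup--Zwick sketches on top of the approximate SSSP machinery---is the right template, and your memory accounting and bunch-size bounds are fine. But there is a genuine gap in the bunch-construction step, stemming from a misreading of what Theorem~\ref{thm_SSSP} actually provides.

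You write that ``after its preprocessing we have a sparse augmented graph $G\cup H$ in which approximate distances are realized by paths with only $\poly(\log\log n)$ hops.'' This is not true: the hopset $H$ built in the paper is \emph{limited-scale}. It guarantees $\beta_{\mathcal H}=\poly(\log\log n)$-hop approximating paths only for pairs at distance at most $t=(\log\log n/\epsilon)^{O(\log\log n)}$. For pairs farther apart there is no hop guarantee whatsoever. The reason Theorem~\ref{thm_SSSP} still works for all distances is that long distances are handled by the \emph{emulator} $M$, which is collected on a single near-linear machine and queried locally---not by exploring $G\cup H$. Your truncated multi-source exploration, however, must reach every $w\in A_i\setminus A_{i+1}$ with $d_G(v,w)<d_G(v,A_{i+1})$, and $d_G(v,A_{i+1})$ can be $\Theta(D)$ (indeed $d_G(v,A_{k-1})$ routinely is). With only a limited-scale hopset, such explorations cannot terminate in $\poly(\log\log n)$ rounds, and the emulator does not help here: it lives on one machine and cannot drive a distributed multi-source BFS from $\Theta(n)$ sources with per-target truncation.

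The paper's fix is precisely to \emph{not} build a full Thorup--Zwick sketch. It builds a \emph{limited-scale} distance sketch, valid only for pairs at distance at most $d=\Theta(t/k)$: pivots are the closest $A_i$-vertex within $(i\beta_{\mathcal H})$-hop distance in $G\cup\mathcal H$ (possibly undefined), and bunches are truncated at distance $t$. The stretch analysis is redone to show that for $d_G(u,v)\le d$ the query never falls off the truncated sketch and still returns a $(1+\epsilon_{\mathcal H})(2k-1)$ estimate; the growing hop budget $i\beta_{\mathcal H}$ at level $i$ is what substitutes for the triangle inequality that hop-bounded distances lack. For pairs with $d_G(u,v)>d$, the emulator alone already gives a $(2k-1)$ estimate, because its additive term $\beta_M\le \epsilon_M t$ is then at most a constant fraction of $(2k-1)d_G(u,v)$. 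The final oracle answers a query by computing both the sketch answer and $d_M(u,v)$ and returning the minimum.

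A smaller point: your query as described takes $O(k)$ rounds (one per level), not $O(1)$. The paper gets $O(1)$ by shipping $B(u)$ and $B(v)$ in one round to the near-linear machine, which then runs the entire Thorup--Zwick loop locally.
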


Compared to the distance sketches of \cite{DinitzN19}, our algorithm is exponentially faster running in $\poly(\log{\log{n}})$ rounds, where the algorithm in \cite{DinitzN19} takes at least poly-logarithmic number of rounds. Compared to the $O(\log{n})$-approximations in \cite{DBLP:conf/podc/DoryFKL21,DBLP:conf/spaa/BiswasDGMN21,fischer2022massively}, our algorithm gives a better $O(1)$-approximation. We remark that the previous works also handle weighted graphs.
This algorithm also uses only one machine with near-linear memory, where the rest of machines can have sublinear memory. %Hence, it too works in the heterogeneous MPC regime defined in \cite{fischer2022massively}.

\paragraph{Optimizing the total memory.}
As common in the construction of distance oracles, our approach offers a tradeoff between the approximation, round complexity and total memory requirements of the algorithm. We can further improve the total memory requirements by running our algorithms on a spanner, which leads to $O(1/c)$-approximation for SSSP in $\tilde{O}(|E|+n^{1+c})$ total memory, or $O(1/c^2)$-approximation in our distance oracle in the same $\tilde{O}(|E|+n^{1+c})$ total memory, where $c$ can be an arbitrarily small constant. In particular, if $|E|=\Omega(n^{1+c})$ for a constant $c>0$, the total memory becomes $\tilde{O}(|E|)$.
%To optimize the total memory, we can choose $1/k$ and $\rho$ to be small constants. In fact, if $|E|=O(n^{1+c})$ for a constant $c > 0$, we can even improve the total memory to $\tilde{O}(|E|)$ at a price of an $O(1/c^2)$-approximation by running the algorithm on a spanner. 
%Concretely, if $|E|=O(n^{1+c})$, 

To do so, we start by constructing an $O(1/c)$-spanner of size $O(n^{1+c/2})$, denote it by $G'=(V,E')$. $G'$ is a subgraph of $G$ that preserves all distances up to an $O(1/c)$-approximation, and it is known that such spanners can be constructed in $O(1)$ rounds in the MPC model, see \cite{DBLP:conf/podc/DoryFKL21,fischer2022massively}. Then, we can just run the algorithm from Theorem \ref{thm_APSP_intro} on the graph $G'$ choosing $\rho=c/2, k = 2/c$. This leads to an $O(1/c)$-approximation of the distances in $G'$, which is an $O(1/c^2)$-approximation of the distances in $G$, where the total memory requirements are $\tilde{O}(|E|+(|E'|+n^{1+\rho})n^{1/k})= \tilde{O}(|E|+(n^{1+c/2})n^{c/2}) = \tilde{O}(|E|+n^{1+c})$. 
Similarly, if we run the algorithm from Theorem \ref{thm_SSSP} on the spanner $G'$, we get a $(1+\epsilon)$-approximation of the distances in $G'$, which is an $O(1/c)$-approximation of the distances in $G$ in $O(|E|+n^{1+c})$ total memory.

%===============

\section{Technical Overview} 

Our approach is based on efficient constructions of hopsets, emulators and distance sketches that we show that are tailored for the MPC model. While these structures were studied before in many computational models, implementing previous algorithms in MPC would be too expensive in terms of memory or running time, as we discuss next. To overcome it, we show a new framework that allows us to obtain fast and low-memory algorithms. 
We next describe our approach in more detail.

\paragraph{Dealing with Short Paths via Hopsets.} A hopset is a set of weighted edges that is added to a graph, such that after adding these edges, there are low-hop paths between any pair of vertices. Specifically, a set $H$ is considered a $(1+\epsilon_H,\beta_H)$-hopset for a graph $G$, if for every pair of vertices $u,v\in V$, the graph $G' = (V,E\cup H)$  contains a path of at most $\beta_H$ hops approximating $d_G(u,v)$ by a factor of $(1+\epsilon_H)$. Hopsets are key in approximate shortest path algorithms across various computational models, enabling focus on paths with a small number of hops. In particular, explorations from a source vertex $s\in V$ can be terminated after $\beta_H$ hops, providing near-exact distance approximations.

Previous hopset algorithms have hopbound $\beta_H$ and a running time polylogarithmic in $n$, rooted in the need to explore the graph $G$ to depth $\Omega(D)$, where $D$ is the diameter of the graph. To avoid $\Omega(D)$ running time, hopset computation can be decomposed into ${\log D} \leq {\log n}$ tasks of computing \textit{limited-scale} hopsets. Specifically, for every $k\in [0,{\log n}]$, one can compute a hopset $H_k$ for the scale $(2^k,2^{k+1}]$ by exploring the graph to a depth of approximately $O(2^{k+1})$. Leveraging lower-scale hopsets allows constructing $H_k$ using only explorations to $O(\beta_H)$ hops. The union of all hopsets $\{ H_k \ | \ k\in [0,{\log n}]$ is a \textit{full-scale} hopset.

Dividing distances into ${\log n}$ scales and addressing them individually results in $\Omega({\log n})$ running time, which is too expensive for our needs as we aim for a sub-logarithmic round complexity. 
To overcome it, we use hopsets only to deal with short paths. More concretely, we show that we can build in $\poly({\log t})$ time a limited-scale hopset that provides $(1+\epsilon_H)$-approximations for pairs of vertices at distance up to $t$. To do so, we build hopsets for the first ${\log t}$ scales, each having hopbound $\beta_H = \poly({\log t})$. We will choose $t$ such that $\poly({\log t}) = \poly({\log {\log n}})$, which results in limited-scale hopsets that can be constructed in $\poly({\log {\log n}})$ time.
While limited-scale hopsets solve part of the problem, they lack guarantees for distant pairs of vertices. To address this, we use near-additive emulators.

\paragraph{Dealing with Long Paths via Near-Additive Emulators.}

The problem of computing near-exact approximations for distant pairs of vertices can be solved using near-additive emulators. Given an unweighted undirected graph $G=(V, E)$, a near-additive emulator $M=(V,E')$ is a sparse graph with the same vertex set that provides $(1+\epsilon_M,\beta_M)$-approximation for the distances. Concretely, for any pair of vertices $u,v$, we have $$d_G(u,v) \leq d_H(u,v) \leq (1+\epsilon_M) d_G(u,v) + \beta_M.$$ If $H$ is a subgraph of $G$, then it is called a spanner, but in general, $H$ may have edges that are not part of $G$, and it can be weighted even though $G$ is unweighted.

Intuitively, the reason that near-additive emulators are useful for near-exact approximation of shortest paths is that, for pairs of vertices that are far away (of distance $\Omega(\beta_M/\epsilon_M)$ from each other), the additive term becomes negligible, and the approximate distance in the emulator is already a $(1+O(\epsilon_M))$-approximation. 
    
Our goal is to construct an emulator of size $\tilde{O}(n)$, since we can store it in one machine that can then locally compute distances in the emulator. Existing emulators of this size have $\beta_M=(\log{\log{n}})^{O(\log{\log{n}})}$, see, e.g., \cite{ElkinP01,ThorupZ06,Pettie10,DBLP:journals/talg/ElkinN19}. This is also close to optimal by existential lower bounds \cite{abboud2018hierarchy}. Constructing emulators requires exploring the graph to depth $\beta_M$, which can lead to an  $\Omega(\beta_M)$ round complexity, i.e., super-logarithmic and too expensive for our needs. While emulators have been widely studied in various computational models (see, e.g., \cite{ElkinP01,ThorupZ06,Pettie10,DBLP:journals/talg/ElkinN19,DBLP:conf/podc/ElkinM19,DBLP:conf/podc/DoryP20,DBLP:journals/eatcs/ElkinN20,DBLP:conf/podc/ElkinM21,bergamaschi2021new,DBLP:conf/approx/ElkinT22}), most existing constructions of emulators with $\tilde{O}(n)$ size require $\Omega(\beta_M)$ time.\footnote{The only exception we are aware of is the \clique algorithm from \cite{DBLP:conf/podc/DoryP20} that also takes $\poly({\log {\log n}})$ time as we discuss in detail later. However, this algorithm requires a lot of memory and hence does not lead to an efficient MPC algorithm.} 
%Previous constructions of emulators with $\tilde{O}(n)$ size construct the emulator directly from the graph $G$, and therefore have at least super-logarithmic running time. 

To achieve a goal of $\poly({\log {\log n}})$ MPC time, we suggest a different approach. Instead of constructing the emulator directly from the graph, we leverage previously created limited-scale hopsets designed for approximating short distances. These hopsets offer near-exact approximations for distances up to $t = {({\log{\log n}}/\epsilon_M)}^{O({\log {\log n}})}$ with a hopbound $\beta_H = \poly(\log{t}) = \poly({\log {\log n}})$. By carefully adjusting the parameters of both the hopset $H$ and the emulator $M$, we can utilize the hopset $H$ to construct $M$ in $\poly({\log {\log n}})$ time.

To summarize, we combine the strengths of near-exact hopsets and near-additive emulators as complementary products. Utilizing the hopsets tailored for short distances, constructed in $\poly({\log {\log n}})$ time, we efficiently build sparse emulators approximating long distances in the same round complexity. Then, in order to compute shortest paths from a source $s$, we aggregate the edges of the emulator into a single machine $M^*$ with $\tilde{O}(n)$ memory. The machine $M^*$ locally computes distances from $s$ in the emulator. This gives $(1+O(\epsilon_M))$-approximation for  pairs of vertices with distance at least $\beta_M/\epsilon_M$.  Additionally, in $O(\beta_H)$ time, $(\beta_H)$-hops limited distances from $s$ in $G\cup H$ are computed, which gives a $(1+\epsilon_H)$-approximation for pairs of vertices with distance at most $\beta_M/\epsilon_M$. For each vertex $v\in V$, the minimal distance estimate is selected. 
In fact, in the same round complexity and total memory we can compute $(1+\epsilon)$-approximate distances from a set of sources of size $O(n^{\rho})$. In this case, we require additional $O(n^\rho)$ machines with $\tilde{O}(n)$ memory.
%\todo{ I want to make sure no one thinks we use only $n^{1+\rho$ memory } 
%In fact, the same resources can be utilized to provide a $(1+\epsilon)$ distance approximation from a set of $O({\log n})$ sources.\todo{maybe drop this result then? leave only $O(n^\rho)$?} \mtodo{I'm fine either way} Moreover, given access to $\tilde{O}(n^\rho)$ machines with $\tilde{O}(n)$ memory and a total memory of size $\tilde{O}((|E|+n^{1+\rho})n^\rho)$, we can compute 
%$(1+\epsilon)$ distance approximation from a set of $O(n^\rho)$ sources.\todo{we did not discuss total memory before, but not using it here makes it seem like we say we use only $\tilde{O}(n^{1/\rho})$ memory} \mtodo{maybe can say instead with the same total memory and round complexity. Otherwise, it may not be clear that this is the memory of the SSSP algorithm as well, as it's only discussed later}

%Note that the machine $M^*$ can send $O({\log n})$ distance estimates for every vertex $v\in V$. In addition, the hopset can be used to provide distance estimates from up to $O(n^\rho)$ sources, for some parameter $\rho \in [1/{\log {\log n}},1/2]$.\todo{I don't like the fact that $\rho$ is introduced here} Therefore, our algorithm can provide near-exact distance approximations from $O({\log n})$ sources in $poly({\log {\log n}})$ time.  

\paragraph{Constructing Hopsets and Emulators.}

Traditionally, hopsets and emulators have been treated as distinct structures, despite a recognized strong connection between the two (see \cite{DBLP:journals/eatcs/ElkinN20} for a survey). This connection is evident in the shared methodologies for constructing near-exact hopsets and near-additive emulators,  leading to similar characteristics in size and even the parameter $\beta$, although its meaning differs in the two contexts. Moreover, the analyses of algorithms for constructing these structures exhibit notable similarities. While it is well known that there is a connection between near-exact hopsets and near-additive emulators, its nature is not yet fully understood. We provide a generic algorithm, inspired by the construction of Thorup and Zwick \cite{ThorupZ06}, which produces a generic structure.  This algorithm can be used in a black-box manner to produce both near-exact hopsets and near-additive emulators, by simply adjusting its input parameters.

In particular, our generic algorithm is composed of two building blocks. The first building block is a procedure that samples a hierarchy of vertices $\mathcal{A} = A_0 \supseteq\dots \supseteq A_{\ell+1}$ where $A_0 = V$ and $A_{\ell+1} = \emptyset$, for a suitable parameter $\ell$. Given a degree sequence $\{deg_i \ | \ i\in [0,\ell-1]\}$, each vertex $v\in A_i$ is sampled to $A_{i+1}$ with probability $1/deg_i$, for all $i\in [0,\ell-1]$. 

The second building block is a procedure that, given a hierarchy $\mathcal{A}$, selects edges to form a set $Q$. This edge selection procedure draws inspiration from \cite{ThorupZ06}. 
In essence, for every $i\in [0,\ell-1]$ and for each vertex $v\in A_i\setminus A_{i+1}$, if $v$ has an $A_{i+1}$-vertex $w$ in its vicinity, then $v$ selects $w$ to be its \textit{pivot} and adds the edge $(v,w)$ to $Q$. Otherwise, the vertex $v$ adds to $Q$ edges to all vertices $u\in A_i$ that are in close proximity to $v$. This scenario only occurs if the vertex $v$ does not have any proximal vertex $w\in A_{i+1}$, indicting that w.h.p., $v$ has a small number of proximal $A_{i}$ vertices.

\paragraph{MPC Implementation.}
Implementing the edge selection procedure requires executing several Bellman-Ford explorations to a certain depth. To expedite computation, we provide the edge selection procedure with a suitable limited-scale $(1+\epsilon_H,\beta_H)$-hopset $H$. 
In particular, for every $i\in [1,\ell]$ our algorithm first executes $(\beta_H)$-rounds of a Bellman-Ford exploration in the graph $G\cup H$ from all vertices in $A_{i+1}$. Consequently, each vertex $v\in A_i$ that has a proximal $A_{i+1}$-vertex can select such a vertex to be its pivot $w\in A_{i+1}$, and add the edge $(v,w)$ to the set $Q$. We remark that this vertex does not have to be the closest $A_{i+1}$ vertex to $v$. 
Subsequently, we are left with $A_i$-vertices that did not chose a pivot, indicating that w.h.p. they do not have many nearby $A_i$-vertices. Connecting these vertices with their $A_i$-neighborhood requires executing $(\beta_H)$-hops restricted Bellman-Ford explorations in $G\cup H$ from them in parallel. 
While we may execute $\Omega(n)$ explorations simultaneously, we show that w.h.p., each vertex $v\in V$ is traversed by at most $\tilde{O}(n^\rho)$ explorations in parallel. 
Thus, we are able to provide an efficient sublinear MPC algorithm that computes these explorations in parallel, utilizing $\tilde{O}(n^\rho)$ memory per edge.

To construct hopsets, we employ the sampling procedure once and, for $O({\log t})$ consecutive phases, execute the edge selection procedure. Each execution $k$ computes a hopset $H_k$ for the scale $(2^k,2^{k+1}]$. Maintaining the same hierarchy across all executions enables the provision of sparse hopsets. The hopset $H = \bigcup_{k'\in [0,k]}H_{k'}$ is the limited-scale hopset provided to execution $k+1$, allowing all explorations to be performed in $\poly({\log {\log n}})$ time in sublinear MPC. Crucially, we maintain a $\poly({\log {\log n}})$ hopbound by customizing the sampling probability to ensure a hierarchy with only a constant number of levels. 
In particular, given a small constant $\rho \in [1/{\log {\log n}},1/2]$, we compute a $(1+\epsilon,\beta)$-hopset of size $\tilde{O}(n^{1+\rho})$, and hopbound $\beta = O\left( {{\log t}}/{\epsilon_H}\right)^{1/\rho} $, for distances up to $t$. 

To construct our emulator, we use the sampling and edge selection procedure once, incorporating the limited-scale hopset $H$ to expedite computations. Notably, the emulator must be aggregated into a single machine with $\tilde{O}(n)$ memory. To achieve a sparse emulator, even at the cost of super-logarithmic additive stretch, we employ a degree sequence that guarantees sparsity. This emulators construction requires $\poly({\log {\log n}})$ sublinear MPC rounds. 

While our end-results are for the near-linear MPC model because we collect the emulator in one machine, our hopsets and emulators are constructed in the more restricted sublinear MPC model. Since our algorithm is efficient both in terms of running time and memory usage we believe that our approach can be useful also for other computational settings with limited resources.

\paragraph{Comparison to Previous Approaches.} 

While previous constructions view hopsets and emulators as different structures, %\mtodo{what was shown before? you mentioned that Merav had a work about the connections between the 2 \cite{DBLP:conf/focs/KoganP22}, so we should mention it and explain what they show.} 
we provide a unified algorithm that produced a generic object, that can be viewed as either a hopset or an emulator. 
Analyzing the intrinsic properties of this generic structure independently of its application for hopsets or emulators provides valuable insights, shedding light on the underlying connection between near-exact hopsets and near-additive emulators.
We remark that understanding the connection between hopsets and emulators receives a lot of interest in recent years (see \cite{DBLP:journals/eatcs/ElkinN20} for a survey). While it is well-known that similar approaches can be used to construct hopsets and emulators, the full connection between these objects is still not fully understood. A recent work \cite{DBLP:conf/focs/KoganP22} shows a reduction in one direction: one can construct emulators in a black-box manner using algorithms for full-scale hopsets and multiplicative spanners. Note that constructing full-scale hopsets is a global problem, and all current constructions of full-scale hopsets require $\Omega(\log{n})$ rounds, so using the reduction in this direction does not lead to an efficient MPC algorithm. Here we use the more local nature of emulators to construct them faster. We provide a different perspective on the connection between hopsets and emulators, by providing a unified object that can be viewed both as a hopset and an emulator.

As we mentioned above, near-additive spanners and emulators have been widely studied in various computational models, leading to centralized, distributed, PRAM, streaming and dynamic algorithms \cite{ElkinP01,ThorupZ06,Pettie10,DBLP:journals/talg/ElkinN19,DBLP:conf/podc/ElkinM19,DBLP:conf/podc/DoryP20,DBLP:journals/eatcs/ElkinN20,DBLP:conf/podc/ElkinM21,bergamaschi2021new,DBLP:conf/approx/ElkinT22}. 
Almost all of the existing constructions of near-additive emulators need time that is at least linear in $\beta_M$  (which is super-logarithmic and too expensive for our needs). One exception is the \clique algorithm from \cite{DBLP:conf/podc/DoryP20} that builds emulators and computes approximate shortest paths in $\poly(\log{\log{n}})$ rounds. While the linear-memory MPC model and the \clique model are closely related, a major difference between the models is that there are no explicit memory requirements in the \clique model. Indeed, the algorithm from \cite{DBLP:conf/podc/DoryP20} computes near-additive emulators by using graph exponentiation techniques which allow vertices to learn $t$-neighborhoods around them in $\poly(\log{t})$ rounds. However, this approach uses a lot of memory, and a direct application of it in the MPC model would require $\Omega(n^2)$ total memory, which is too expensive. To overcome it, we show a different construction of near-additive emulators that is both efficient and uses low memory. We believe that our approach can found future applications in other computational models with limited resources.

\paragraph{All Pairs Approximate Shortest Paths (APSP)}

As discussed in Section \ref{sec contribution}, we aim at constructing a sparse data set in $\poly({\log {\log n}})$ time, such that given a query $(u,v)$, one can retrieve a $(1+\epsilon)(2k-1)$ approximation for the distance $(u,v)$ in $O(1)$ time. We remark that our emulators provide a satisfactory solution for pairs of vertices $u,v$ with distance $d(u,v) = \Omega(t/k)$. However, computing distances using the hopset requires $O(\beta_H)$ time. Thus, the hopsets cannot be used directly to answer distance queries in $O(1)$ time, and close pairs of vertices require a different approach. 

To provide distance approximations for close pairs of vertices, we provide an algorithm designed to construct limited-scale distance sketches. These sketches offer $(2k-1)(1+\epsilon)$-distance estimates for pairs of vertices with distances up to $O(t/k)$. Essentially, our algorithm can be seen as a modification of the Thorup and Zwick \cite{ThorupZ05} algorithm for distance sketches. We note that a previous  MPC implementation of the Thorup-Zwick algorithm was presented by Dinitz and Nazari \cite{DinitzN19}. However, their algorithm has polylogarithmic round complexity since they build a full-scale hopset in polylogarithmic time and then use this hopset to build distance sketches. Given our objective of achieving a time complexity of $\poly({\log {\log n}})$, building a full-scale hopset becomes impractical. Nevertheless, we demonstrate that a limited-scale hopset is sufficient for constructing limited-scale distance sketches for distances up to $O(t/k)$.  

Using hopsets to construct distance sketches, and in particular using limited-scale hopsets, presents certain challenges, for which we have developed algorithmic solutions. First, the stretch analysis in the algorithm of Thorup and Zwick relies on the triangle inequality which holds for distances in $G$. However, the $\beta_H$-hops limited distances in the graph $G\cup H$ do not adhere to the triangle inequality. Consequently, using a hopset implies that we can no longer directly apply the stretch analysis of Thorup and Zwick. To address this, we customize the number of hops in each exploration to some characteristics of the origin of the exploration. This modification enables us to get $(1+\epsilon_H)(2k-1)$ approximations. 
%\todo{another difference is that DN restrict $k > 1/\gamma$. We do not have this constraint, enabling us to get better approximations in the case where it is cheaper to get many machines with really small memory, then to get fewer machines with the same total memory. }

In addition, there is a spacial challenge in using only a limited-scale hopset. The correctness of the algorithm of Thorup and Zwick hinges on the existence of a set of vertices $A$ such that all vertices in $V$ know their distances to all vertices in $A$. This requires  explorations of the graph to a depth of $\Omega(D)$ (where $D$ is the diameter of the graph $G$), which is impractical in our scenario. 
To overcome this challenge, we provide a new stretch analysis, and show that in order to answer distance queries regarding vertices with distance up to $O(t/k)$ from one another, it suffices to explore only the $t$-neighborhood of each vertex $v\in V$ in the preprocessing step. Thus, these explorations can be conducted in $\poly({\log {\log n}})$ time using the limited-scale hopset $H$. 

To the best of our knowledge, previous MPC algorithms for the APSP problem provide a single data structure, that deals with all distance queries \cite{DinitzN19,DBLP:conf/podc/DoryFKL21,DBLP:conf/spaa/BiswasDGMN21,fischer2022massively}. Our approach involves two distinct structures: a limited-scale distance sketch and an emulator. While neither structure handles all queries independently, their combined construction is exponentially faster compared to the distance sketches in \cite{DinitzN19}, and the approximation provided is constant which improves over the $\Omega(\log{n})$-approximation provided by computing a multiplicative spanner of size $\tilde{O}(n)$ \cite{DinitzN19,DBLP:conf/podc/DoryFKL21,DBLP:conf/spaa/BiswasDGMN21,fischer2022massively}.
We remark that our limited-scale distance sketches are constructed in the sublinear MPC model, and the only reason we need one machine with $\tilde{O}(n)$ memory is to collect the emulator and compute distance estimates in the emulator.

\subsection{Outline}
Section \ref{sec prel} contains definitions and details of the computation model. 
In Section \ref{sec general}, we devise the general framework for producing both near-exact hopsets and near-additive emulators. Sections \ref{sec hop} and \ref{sec emu} show how the general framework is used to compute near-exact hopsets and near-additive emulators, respectively. In Section \ref{sec applications} we discuss the applications of the general framework to shortest paths problems. Our distance oracles construction is given in Section \ref{sec APSP}.

\subsection{Preliminaries}\label{sec prel}

\paragraph{The Model}
We consider the Massively Parallel Computation, or
MPC model. The MPC model was introduced in \cite{karloff2010model} and refined by \cite{beame2017communication,GoodrichSZ11}. It models MapReduce and other realistic distributed settings. In the model, there is an input of size $N$. The input is arbitrarily distributed over %$N/S$ 
a set of machines, each with $S = N^\epsilon$ memory for some $\epsilon \in (0,1)$.  
The number of machines is usually close to $\tilde{O}(N/S)$ which means the total size of memory is $\tilde{O}(N)$, but having an additional extra memory is also considered. For graph problems, the size of the input is $O(m)$ words, where $m=|E|$ is the number of edges in the input graph. 
There are several prevalent settings for MPC, which differ by the amount of memory each machine is allowed to have. In the $MPC(n^\gamma)$ low memory setting, each machine is allowed to have at most $O(n^\gamma)$ memory, for $\gamma\in (0,1)$ and $n = |V|$. 

%\mtodoseen{I updated the next and above paragraphs. I think it's standard to use a bit more memory so I wouldn't give it a special name, and also wouldn't say it was defined in \cite{DinitzN19}}
We use three variants of MPC. 
First, we consider a variant of the low memory MPC model with extra total space. The memory per machine is still $O(n^\gamma)$ and the number of machines is $\tilde{O}(\frac{mn^\rho}{n^\gamma})$. Consequently, the total memory used by all machines is $\tilde{O}(mn^\rho)$, where $\rho\in [1/{\log {\log n}},1/2]$ is a small constant. We require $\gamma \in (0,1)$.

The second variant is the \textit{heterogeneous MPC setting}, defined by \cite{fischer2022massively}. The heterogeneous setting is created by adding a single near-linear machine to the sublinear MPC regime. We use this model in conjunction with the extra space MPC, i.e., we have one machine with $\tilde{O}(n)$ memory, and $\tilde{O}(\frac{mn^\rho}{n^\gamma})$ machines with $O(n^\gamma)$ memory each.

%The first variant, is \textit{MPC with extra space}, defined by \cite{DinitzN19}. Essentially, the number of machines is increased, and consequently, the amount of memory increases. However, the amount of memory each individual machine has does not change. Specifically, in the $MPC(n^\gamma,n^\rho)$ model, the memory per machine is $n^\gamma$, the number of machines is $\Theta(\frac{mn^\rho}{n^\gamma})$ and  the total input size is $m$. Consequently, the total memory used by all machines is $\Theta(mn^\rho)$ and not $\Theta(m)$. We require $\gamma,\rho \in (0,1)$. 
%\mtodoiseen{I removed the definition of $MPC(n^\gamma,n^\rho)$ because as we discussed in email I think it's easier to talk about the memory per machine and total memory in the statements. But now I see that this definition still appears in the technical section. I would suggest to update the relevant places in the technical section, but if it turns out that we do need it then should add it here again.}

\section{The General Framework}\label{sec general}

In this section, we present an algorithm that selects a set of weighted edges $Q$.  In Sections \ref{sec hop} and \ref{sec emu}, we demonstrate that this algorithm can be used to effectively generate the desired near-exact hopset and near-additive emulator, respectively.

The input for our algorithm is an unweighted, undirected graph $G=(V, E)$  on $n$-vertices, and parameters $\epsilon <1$, $\rho \in [1/{\log {\log n}},1/2]$, $\kappa \geq 1/\rho$, and $\alpha >0$. The output of the algorithm is a set of edges $Q$. The parameter $\epsilon$ determines the multiplicative stretch of $Q$, while $\rho$ governs the memory usage of the algorithm. The parameter $\kappa$ dictates the size of the set $Q$. To carry out explorations efficiently, we assume access to a $(1+\epsilon_H, \beta)$-hopset $H$ for distances up to %$\frac{1}{2}\alpha\epsilon \ell$ 
 $\frac{1}{2}\alpha\eps{\ell}$ in the graph $G$ for a suitable parameter $\ell$.
We remark that if one chooses not to use a hopset, then the empty set can trivially serve as a  $(1,\frac{1}{2}\alpha\eps{\ell})$-hopset for distances up to $\frac{1}{2}\alpha\eps{\ell}$.

We now provide a high-level overview of the framework, drawing inspiration from the construction of near-exact hopsets and near-additive spanners and emulators by Thorup and Zwick \cite{ThorupZ06}.
The underlying idea is to add to $Q$ edges from each vertex to all others in its proximity. However, creating edges between all vertices in such a manner may result in an overly dense set. Since we aim at a sparse set $Q$, we do not allow all vertices to connect with all vertices in their vicinity. Instead, we first compute a hierarchy of vertices. Then, vertices with a higher-sampled counterpart in their proximity connect exclusively with a single such higher-level vertex. The remaining vertices establish connections with proximal vertices in their own level. The rationale behind this approach lies in the expectation that vertices lacking a proximal higher-sampled vertex are likely to belong to a sparser neighborhood.

Formally, the algorithm is composed of two building blocks. The first one is a procedure that samples a hierarchy $\mathcal{A} = A_0\supseteq A_1 \supseteq\dots\supseteq A_\ell \supseteq A_{\ell+1}$ where $A_0 = V$ and $A_{\ell+1} = \emptyset$, for a suitable parameter $\ell$ that will be specified in the following section. The second building block is a procedure that, given a hierarchy $\mathcal{A}$, computes a set of edges $Q$. 

Given a hierarchy of vertices  $\mathcal{A}$, the edge selection procedure operates as follows. For a vertex $u \in \aismai$, if there exists a vertex $v \in A_{i+1}$ in proximity to $u$, then $u$ adds an edge only to such a specific vertex $v$. In this case, the responsibility for managing the vertices in the vicinity of $u$ is entrusted to $v$. Essentially, $v$ considers the nearby vertices of $u$ as part of its own vicinity.

On the other hand, if there is no vertex $v \in A_{i+1}$ in the vicinity of $u$, then $u$ must independently manage its connections. In this scenario, $u$ adds to $Q$ edges to all vertices from $A_i$ that are within its proximity. This encapsulates the fundamental concept of the general framework

%The rest of this section is organized as follows. In Sections \ref{sec samp gen} and \ref{sec edge selec gen}, we describe the sampling and the edge selection procedures, respectively. In Section \ref{sec gen cc}, we discuss the round complexity of the algorithm. In Sections \ref{sec gen size} and \ref{sec gen stretch}, we analyze the size and stretch of the set $Q$, respectively. Corollary \ref{coro Q} summarizes the properties of the general framework.  

\subsection{Sampling}\label{sec samp gen}
Let $\ell$ be a parameter, and let $\{deg_i \ |\ i\in [0,\ell]\}$ be a degree sequence. We specify their values shortly. 
First, we define $A_0 = V$. Then, for every $i\in [0,\ell-1]$, each vertex of $A_{i}$ is sampled into $A_{i+1}$ with probability $1/deg_i$.  
In addition, we define $ A_{\ell+1}= \emptyset$. The pseudo-code of the sampling procedure is given in Algorithm \ref{alg sample}.

For every $i\in [0,\ell]$, setting $deg_i = \degi$ ensures that the overall size of the set $Q$ is roughly $O(\nfrac)$. However, during our algorithm we will work in phases where the execution of each phase $i$ requires using $\tilde{O}(deg_i)$ memory per edge in $E\cup H$. Therefore, we restrict $deg_i$ to be at most $n^\rho$. To summarize, for every $i\in [0,i_0 =\izo]$, we set $deg_i = \degi$, and for every $i\in [i_0+1,\ell]$ we set $deg_i = n^\rho$.

For the parameter $\ell$, we set $\ell = \emuell$. The intuition behind this assignment is as follows. Firstly, vertices of $A_\ell$ cannot connect with higher-level vertices. Consequently, all vertices of $A_\ell$ must take care of themselves, meaning that each vertex of $A_\ell$ could potentially connect with all other vertices of $A_\ell$. In addition, memory constraints limit us from allowing $A_\ell$ to exceed $n^\rho$. Setting $\ell = \emuell$ ensures that, w.h.p., $|A_{\ell}|\leq n^\rho$, adhering to our memory constraints
This completes the description of the hierarchy selection procedure. The pseudo-code of this procedure is given in Algorithm \ref{alg sample}.

\begin{algorithm}
   \caption{Sample Hierarchy}
    \begin{algorithmic}[1]
      \Function{Sample}{$V,\kappa,\rho$} %\ell, \{deg_i \ | \ i\in [0,\ell] \}$}%\Comment{Where $V$ is a set of vertices, $\ell\in \mathbb{N}$ and $\{deg_i \ | \ i\in [0,\ell] \}\subset \mathbb{R} $}
      \State  $\ell =\emuell$
        \For {$i\in [0,\ell]$} $deg_i = {\min \{ \degi, n^\rho \} }$
        \EndFor
        \State $A_0 = V$
        \For{$i = 0$ to $\ell-1$}
            \State sample each vertex of $A_i$  to $A_{i+1}$ with probability $1/deg_i$.
        \EndFor
        \State $A_{\ell+1} = \emptyset$\\
        \Return $\{A_i \ | \ i\in [0,\ell+1] \}$
    \EndFunction
\end{algorithmic}
       \label{alg sample}
\end{algorithm}

\subsection{Selecting Edges} \label{sec edge selec gen}
In this section, we discuss the process of selecting edges for the set $Q$. Recall that we have access to a $(1+\epsilon_H,\beta)$-hopset $H$ for distances up to $\frac{1}{2}\alpha\eps{\ell}$. 
 %During this procedure, explorations are conducted from certain vertices up to a distance of approximately $\alpha \epsilon \ell$. 
The algorithm begins by initializing $Q = \emptyset$. Then, for every level $i\in [0,\ell]$, we connect vertices of $\aismai$ with vertices in their vicinity. Specifically, for a vertex $u\in \aismai$, we define the \textit{bunch} of $u$ to be the set $B(u) = \{v\in A_i \ | \ \dghb(u,v)\leq \apdi\}$, where $\delta_i$ is a parameter specified later.
A vertex $u \in \aismai$ is labeled as \textit{$i$-dense} if $B(u)$ contains a vertex from $A_{i+1}$; otherwise, it is labeled as \textit{$i$-sparse}.

Consider an $i$-dense vertex $u$. The vertex $u$ selects a vertex $p(u) \in B(u)\cap A_{i+1}$ as its pivot. We remark that the pivot of $u$ does not have to be the closest $A_{i+1}$ vertex to $u$ in $B(u)$.
The edge $\{u, p(u)\}$ is then added to $Q$ with weight $\dghb(p(u),u)$. Now, the responsibility of \textit{taking care} of the vertex $u$ and all vertices $v \in A_{0} \setminus A_i$ for which $u$ is responsible lies with the pivot $p(u)$. 

Next, consider an $i$-sparse vertex $u$. One can add to the set $Q$ edges from $u$ to all vertices in its bunch without violating the restriction on the size of the set $Q$. However, this may violate memory constraints. Thus, we define the \textit{close-bunch} of $u$ to be $CB(u) = \{v\in A_i \ | \ \dghb(u,v)\leq \frac{1}{2}\apdi\}$. For every vertex $v\in CB(u)$, the vertex $u$ adds to the set $Q$ the edge $\{u,v\}$ with weight $\dghb(u,v)$.

Next, we discuss the parameter $\delta_i$, which defines the radii of the bunches of vertices in $A_i$. Observe that vertices of $A_0$ are in charge of taking care only of themselves, while vertices of $A_1$ may be in charge of taking care of some vertices in $A_0$, etc. Intuitively, we want to allow the radii of bunches of vertices from $A_i$ to be large enough to take care of all vertices of lower scales that depend on them. For this aim, we introduce the a variable $R_i$, representing an upper bound on the on the distance between a vertex $v \in \aismai$ and all vertices $u$ that depend on $v$. Define $R_0 = 0$ and  $R_{i} = (1+\epsilon_H)\delta_{i-1}+ R_{i-1}$ for $i\in [1,\ell]$.
For the sequence of distance thresholds, we set $\delta_i = \deltai$ for all $i\in [0,\ell]$.

Note that the maximal distance  to which we explore the graph is $\delta_\ell$. Intuitively, in Lemma \ref{lemma zeta} from Section \ref{sec gen stretch}, we show that explorations that are limited to $4\beta$-hops in $G\cup H$ provide $(1+\epsilon_H)$-approximations for distances up to $\delta_\ell$ in $G$. For this reason, all of our explorations are executed to $4\beta$-hops in $G\cup H$.

This completes the description of the edge selection procedure. The pseudo-code of this procedure is given in Algorithm \ref{alg connect}.

\begin{algorithm}
   \caption{Connect Vertices}
    \begin{algorithmic}[1]
      \Function{Connect}{$G, H, \epsilon_H,\beta,\kappa,\rho,\alpha, \{ A_i \ | \ i\in [0,\ell+1] \}$} 
      \Comment{$H$ is a $(1+\epsilon_H,\beta)$-hopset for distances up to $\frac{1}{2}\alpha\eps{\ell}$}
       \State  $\ell =\emuell$
      \State $R_0 = 0$, for $i\in [1,\ell]$ define $R_{i} = (1+\epsilon_H)\delta_{i-1}+ R_{i-1}$
      \State $\delta_i = \deltai$ for all $i\in [0,\ell]$ 
      \State $Q = \emptyset$
      \For{$i=0 $ to $\ell$}
        \For{ $u\in A_{i}\setminus A_{i+1}$ in parallel}
            \If{\label{step bf1}{$\exists \ v\in A_{i+1}$ with $\dghb(u,v)\leq \apdi$}}
               \State add  $\{u,v\}$ with weight $\dghb(u,v)$ to $Q$
            \Else
            \For{\label{step bf2}{ $v\in A_i$ with $\dghb(u,v)\leq \apdi/2$} in parallel}
              \State add  $\{u,v\}$ with weight $\dghb(u,v)$ to $Q$
            \EndFor
            \EndIf
        \EndFor
        \EndFor\\
        \Return $Q$
    \EndFunction
\end{algorithmic}
       \label{alg connect}
\end{algorithm}

\subsection{Implementation and Round Complexity}\label{sec gen cc}

In this section, we provide an MPC implementation for Algorithms \ref{alg sample} and  \ref{alg connect}, and analyze their round  complexity in MPC when we use machines with $O(n^\gamma)$ memory, and a total memory of $\tilde{O}((|E|+|H|)n^\rho)$. 

We assume that the machines are indexed $M_0,M_1,\dots$, and each machine $M_j$ knows its index $j$.
Recall that the input for the algorithm is the graph $G$, a $(1+\epsilon_H,\beta)$-hopset $H$, parameters $\epsilon <1$, $\rho \in [1/{\log {\log n}},1/2]$, $\kappa \geq 1/\rho$, and $\alpha >0$.
In addition, we assume that for each edge $(u,v)\in E \cup H$, both $(u,v)$ and $(v,u)$ are saved to the memory of some machine. For every vertex $u\in V$, let $M(u)$  be the set of machines that store edges with $u$ as their left endpoint. Denote by $r_u$ the minimal index of a machine in $M(u)$. 

\subsubsection{Preprocessing}

In a preliminary step, we guarantee that for each vertex $u \in V$, the machines in $M(u)$ form a contiguous set, and all machines within $M(u)$ are aware of $r_u$. To achieve this, the algorithm begins by sorting all edges based on their left endpoint. Consequently, for any vertex $u \in V$, the set $M(u)$ becomes contiguous. It has been demonstrated by Goodrich et al. \cite{GoodrichSZ11} that this edge sorting process on the machines can be accomplished in $O(1/\gamma)$ rounds of MPC using machines with $O(n^\gamma)$ memory, and a total memory of ${O}(|E|+|H|)$.

Once the set $M(u)$ is a contiguous set, we can compute $r_u$. For this aim, we utilize a subroutine provided by Dinitz and Nazari  \cite{DinitzN19}. This subroutine, computes for each edge $(u,v)$ a tuple $((u,v), deg(u),$ $ deg(v), r_u,$ $ r_v, i_u(v), i_v(u))$, where $deg(u), deg(v)$ represent the degrees of $u$ and $v$ in the graph $G \cup H$, and $i_u(v),i_v(u)$ are the respective indices (according to lexicographical order) of the edge  $(u,v)$ among all edges incident to $u$ and $v$. These tuples play a crucial role in facilitating seamless communication between vertices. Once these tuples are computed, for every $u\in V$, all machines in $M(u)$ know  $r_u$. 
The following lemma from \cite{DinitzN19} provides a concise summary of the round complexity of computing these tuples for every edge.

\begin{lemma}\label{lemma dn tuple}\cite{DinitzN19}
    Let $G' = (V',E')$ be an input graph. 
    Let $M_{(u,v)}$ be the machine that stores a given edge $(u,v)$. We can create tuples of the form $((u,v), deg(u),deg(v), r_u, r_v, i_u(v),i_v(u))$, stored at $M_{(u,v)}$ for all edges in $O(1/\gamma)$ rounds of MPC using machines with $O(n^\gamma)$ memory, and a total memory of ${O}(|E'|)$.  
\end{lemma}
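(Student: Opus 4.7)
\textbf{Proof plan for Lemma \ref{lemma dn tuple}.}
The plan is to reduce the whole task to a constant number of invocations of the Goodrich--Sitchinava--Zaharoglou sorting primitive, which sorts $N$ tuples in $O(1/\gamma)$ rounds of sublinear MPC with memory $O(n^\gamma)$ per machine and total memory $O(N)$. Since each step below is either one such sort or a parallel-prefix / segmented scan over the sorted array, the final round complexity will be $O(1/\gamma)$.

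First, I would sort all edges of $E'$ by their left endpoint, breaking ties by right endpoint; after this sort, for every vertex $u$ the set $M(u)$ of machines storing edges with left endpoint $u$ is a contiguous interval of machine indices, and each machine knows its own index $j$. Within this ordering I run a segmented parallel prefix scan keyed on the left endpoint: each machine first computes, for each $u$ it stores, the local count and the local list of ranks, and then a constant number of convergecast / broadcast rounds over the contiguous interval of $M(u)$ suffice to agree on $r_u$ (the minimum index in the interval), $deg(u)$ (the total count in the interval), and the global rank $i_u(v)$ of each edge $(u,v)$ within $u$'s adjacency list. Standard results show that a segmented prefix scan over a sorted array reduces to $O(1)$ additional sorts and local reductions, each in $O(1/\gamma)$ rounds. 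I then attach $(deg(u), r_u, i_u(v))$ to the record of every edge $(u,v)$.

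Next, I repeat the same procedure with the roles of the endpoints swapped: sort by right endpoint, and compute $deg(v), r_v, i_v(u)$ by an identical segmented scan. Finally, I perform one more sort to bring each edge back to its original custodian machine $M_{(u,v)}$, joining the two partial tuples into the full tuple $((u,v), deg(u), deg(v), r_u, r_v, i_u(v), i_v(u))$. Since a constant number of GSZ sorts plus a constant number of segmented prefix scans are used, the whole procedure takes $O(1/\gamma)$ rounds and total memory $O(|E'|)$.

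The main obstacle I anticipate is verifying that the intermediate data movement does not blow up the total memory: between the two sorting phases each edge temporarily carries only a constant number of extra fields, so total memory stays $O(|E'|)$ words, but one must be careful that the per-machine load during a segmented scan across a contiguous interval of $M(u)$ remains $O(n^\gamma)$. This follows because the scan operates entirely within the sorted layout and moves only aggregate values (partial sums, min indices) between adjacent machines, for which the standard $O(1/\gamma)$-round prefix-sum implementation on sorted MPC arrays applies without modification.
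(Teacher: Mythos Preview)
The paper does not prove this lemma at all: it is stated as a black-box result imported from Dinitz and Nazari \cite{DinitzN19}, with no accompanying argument. Your sketch is a reasonable reconstruction of how such a statement is established (sorting by each endpoint in turn via the Goodrich--Sitchinava--Zaharoglou primitive, then segmented prefix computations to extract degrees, minimum machine indices, and local ranks), and it correctly identifies that the total memory stays $O(|E'|)$ because only a constant number of extra fields ride along with each edge. So there is nothing to compare against in this paper; your plan is sound, and in fact more detailed than what the paper provides.
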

We use the procedure devised by Dinitz and Nazari to create tuples for every (directed) edge $(u,v)\in E\cup H$. By Lemma \ref{lemma dn tuple}, this can  be done in $O(1/\gamma)$ rounds of MPC using machines with $O(n^\gamma)$ memory, and a total memory of ${O}(|E|+|H|)$.   

\subsubsection{Sampling Hierarchy}\label{sec sampling hierarchy gf}
Here we explain how Algorithm \ref{alg sample} ensures that all machines in $M(u)$  agree on the index $i$ such that $u\in A_i\setminus A_{i+1}$,  for every vertex $u\in V$. 

Recall that every copy of the edge $(u,v)$ is associated with a tuple that contains the index ${r}_u$. The machine  $M_{{r}_u}$ samples $u$ to the hierarchy. Then, the machine $M_{{r}_u}$ broadcasts the message $\langle u, i\rangle $ such that $u\in A_i\setminus A_{i+1}$ to all other machines in $M(u)$. 
Note that $M_{r_u}$ knows $deg(u)$ and the number of edges stored on each machine $n^\gamma$. Thus, it can infer which machines are in $M(u)$.  
Observe that some machines may be responsible for sampling more than one vertex. However, since the edges are sorted by the left endpoint,  there can be at most one vertex $u\in V$ such that $M$ is the first machine in $M(u)$ and not the last machine in $M(u)$. Therefore, each machine $M$ is responsible for broadcasting the result regarding at most one vertex. 
Dinitz and Nazari \cite{DinitzN19} have provided a routine $Broadcast(b,x,y)$, that given a message $b$ and two indices $x,y$, broadcasts the message $b$ to all machines $M_x,M_{x+1},\dots,M_{y}$. The properties of this routine is summarized in the following theorem.

\begin{theorem}\label{theo min broadcast}\cite{DinitzN19}
Given an input graph $G' = (V',E')$, the subroutine $Broadcast(b,x,y)$  can be implemented in $O(1/\gamma) $ rounds of MPC using machines with $O(n^\gamma)$ memory, and a total memory of size ${O}(|E'|)$. 
\end{theorem}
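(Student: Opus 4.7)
The plan is to implement $\mathrm{Broadcast}(b,x,y)$ via a virtual broadcast tree of arity $n^\gamma$ over the contiguous block of machines $M_x, M_{x+1}, \ldots, M_y$. The key observation driving the round complexity is that a machine with $O(n^\gamma)$ local memory can transmit $b$ to $n^\gamma$ distinct machines in a single round (the message $b$ is assumed to fit in a constant number of words). Since the total number of machines is $O(|E'|/n^\gamma)$, and for the settings of interest $|E'| = \poly(n)$, a tree of arity $n^\gamma$ over $y-x+1 \le O(|E'|/n^\gamma)$ nodes has depth $O\bigl(\log(|E'|/n^\gamma)/\log(n^\gamma)\bigr) = O(\log|E'|/(\gamma \log n)) = O(1/\gamma)$.

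The first step is to fix a canonical tree layout that each machine can compute locally from just its own index $i$ and the endpoints $x,y$. Concretely, identify $M_{x+j}$ with the $j$-th node in a left-to-right breadth-first labelling of a complete $n^\gamma$-ary tree; then the parent of $M_{x+j}$ is $M_{x + \lfloor (j-1)/n^\gamma \rfloor}$ and its children are $M_{x + j n^\gamma + 1}, \ldots, M_{x + (j+1)n^\gamma}$, clipped to the range $[x,y]$. Every machine can compute these indices in $O(1)$ local time with no communication, so no coordination overhead is incurred.

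The second step is the broadcast itself. Initially only $M_x$ (the root) holds $b$. In each subsequent round, every machine that already holds $b$ writes $n^\gamma$ copies of $b$, each addressed to one of its children, and emits them in a single communication round. Because outgoing traffic per machine is $O(n^\gamma)$ words, this respects the per-machine memory and bandwidth budget. After $d$ rounds, every machine at depth at most $d$ in the tree has received $b$; since the depth is $O(1/\gamma)$, termination occurs within $O(1/\gamma)$ rounds. The global memory footprint is only the $O(n^\gamma)$ buffer per machine times $O(|E'|/n^\gamma)$ machines, which is $O(|E'|)$ as claimed.

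The main obstacle I anticipate is the implicit assumption that $|E'| = \poly(n)$, which is what turns the $\log|E'|$ factor into $O(\log n)$ and ultimately into $O(1/\gamma)$ after dividing by $\log(n^\gamma)$. If $|E'|$ were super-polynomial in $n$, the depth would instead be $O(\log|E'|/(\gamma\log n))$ rounds. In our setting this assumption is immediate since $|E'| \le \binom{n}{2}$, and so the argument goes through cleanly. A secondary subtlety is guaranteeing that the single source machine $M_x$ actually holds $b$ at the start of the call; this is a precondition of the subroutine interface, so no additional work is needed within the proof.
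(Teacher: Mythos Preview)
The paper does not supply its own proof of this theorem; it is quoted verbatim as a black-box result from \cite{DinitzN19} and used without further argument. Your $n^\gamma$-ary broadcast-tree argument is the standard way to establish such a bound and is correct: the depth computation $O(\log(|E'|)/\log(n^\gamma)) = O(1/\gamma)$ uses exactly the fact that $|E'|=\poly(n)$, which holds here, and the per-round fan-out of $n^\gamma$ respects the local memory budget. So there is nothing to compare against in the present paper, but your proposal would serve perfectly well as a self-contained justification.
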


For every vertex $u\in V$, the machine $M_{r_u}$ broadcasts a single message to all machines in $M(u)$.

The following lemma summarizes the round complexity of Algorithm \ref{alg sample}. 
\begin{lemma}
    \label{lemma cc select}
    Given a graph $G=(V,E)$, a hopset $H$, a parameter $\ell$ and a sampling probability sequence $\{deg_i \ | \ i\in [0,\ell] \}$, Algorithm \ref{alg sample} can be executed in $O(1/\gamma)$  rounds of MPC using machines with $O(n^\gamma)$ memory, and a total memory of size ${O}(|E|+|H|)$. 
\end{lemma}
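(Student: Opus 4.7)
The plan is to follow the implementation already sketched in Section \ref{sec sampling hierarchy gf} and assemble the round-complexity bound from the three subroutines invoked: edge sorting, tuple creation, and broadcast. The proof will simply argue that each subroutine fits the stated round and memory budget, and that the local sampling itself is free.

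First I would carry out the preprocessing. By the Goodrich et al.\ result, we can sort all edges of $E \cup H$ by left endpoint in $O(1/\gamma)$ MPC rounds with $O(n^\gamma)$ memory per machine and $O(|E|+|H|)$ total memory, which makes each set $M(u)$ contiguous. Next I invoke Lemma \ref{lemma dn tuple} on the input graph $G \cup H$ to compute, for every edge, the tuple containing $r_u$ and $r_v$; this again takes $O(1/\gamma)$ rounds within the required memory budget. At this point, every machine in $M(u)$ knows $r_u$ and $\deg(u)$, so it can identify which other machines belong to $M(u)$.

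Next I would describe the sampling and broadcast. For each vertex $u$, the designated leader machine $M_{r_u}$ draws $u$'s level $i(u)$ locally: it flips independent coins of bias $1/\deg_j$ for $j=0,1,\dots,\ell-1$ until the first failure or until reaching level $\ell$. This is a purely local computation and incurs no communication, and the $\ell$ thresholds are known from the input parameters. The machine $M_{r_u}$ then calls the subroutine $\textsc{Broadcast}(\langle u, i(u)\rangle, r_u, r_u + \lceil \deg(u)/n^\gamma\rceil - 1)$ from Theorem \ref{theo min broadcast} to disseminate $i(u)$ to every machine in $M(u)$. By the observation reproduced in the excerpt (that edges are sorted by left endpoint), every machine is the leader for at most one vertex, so all broadcasts can be issued concurrently without a single machine having to source multiple messages. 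The target ranges $M(u)$ for distinct $u$ are contiguous and disjoint along the machine index axis, so the concurrent broadcasts do not contend and the aggregate invocation still terminates in $O(1/\gamma)$ rounds under the stated memory constraints.

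Combining the three $O(1/\gamma)$-round phases gives the claimed round complexity, and since we never store more than the edges of $G \cup H$ and the associated constant-size tuples and level labels, the memory budget is preserved throughout. The only real subtlety, and what I would flag as the main obstacle, is the concurrent-broadcast step: one must check that the Dinitz--Nazari broadcast primitive, as stated in Theorem \ref{theo min broadcast}, indeed supports simultaneous invocations indexed by disjoint contiguous ranges of machines within the same $O(1/\gamma)$-round and $O(|E|+|H|)$ total-memory budget. Since each machine is the source of at most one message and each machine is a destination of at most two messages (one for the vertex whose block ends at it and one for the block that begins at it), a careful reading of \cite{DinitzN19} shows that two parallel invocations suffice, which preserves the bound.
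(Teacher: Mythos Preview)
Your proposal is correct and follows essentially the same approach as the paper: the paper's Section~\ref{sec sampling hierarchy gf} describes exactly this implementation (preprocessing via sorting and tuple creation, local sampling at $M_{r_u}$, then broadcast of the level to $M(u)$), and Lemma~\ref{lemma cc select} is stated as a summary of that discussion rather than given a separate proof block. Your treatment of the concurrent-broadcast subtlety is slightly more explicit than the paper's, which simply notes that each machine is responsible for broadcasting at most one vertex's result and appeals to Theorem~\ref{theo min broadcast}.
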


%=========

\subsubsection{Selecting Edges}\label{sec selecting edges impl}

We now discuss Algorithm \ref{alg connect}. The algorithm works in $\ell+1$ consecutive phases. For every $i\in [0,\ell]$, Algorithm \ref{alg connect} performs only two tasks that require communication. The first task is detecting the $i$-dense vertices and connecting them with their pivots (step \ref{step bf1}). The second task is connecting the $i$-sparse vertices with their close-bunch (step \ref{step bf2}). 
Both tasks are solved using variants of a Bellman-Ford exploration.

Essentially, to assign $i$-pivots to $i$-dense vertices, explorations from all sources in $A_i$ to depth $4\beta$ in $G\cup H$ are executed. Since each vertex needs to be detected by at most one exploration, this is equivalent to computing a single Bellman-Ford exploration from a dummy vertex $s$ connected to all vertices in $A_{i+1}$ with $0$-weight edges. % Dinitz and Nazari \cite{DinitzN19} provided an MPC implementation for the restricted Bellman-Ford exploration, summarized in Theorem \ref{theo restricted bellman-ford}.

The primary challenge lies in connecting $i$-sparse vertices with their close bunches, as it entails executing multiple Bellman-Ford explorations with limited depth and hops, in parallel. While Dinitz and Nazari \cite{DinitzN19} addressed this issue, their work assumes $\rho \leq \gamma$. In our implementation, we extend their approach to cases where $\rho > \gamma$.

\paragraph{Detecting $i$-dense vertices. }
To detect the $i$-dense vertices, we execute a Bellman-Ford exploration from all vertices in $A_{i+1}$. The exploration is limited to $4\beta$-hops and $\apdi$-distance, and executed in the graph $G\cup H$. We remark that while here we have a set of sources, each vertex is required to learn only the closest source to it. Therefore, this exploration is computationally equivalent to performing a single Bellman-Ford exploration, which is limited to $(4\beta)$-hop and $\apdi$-distance, in $G\cup H$ from a dummy vertex $s$ which is connected to all vertices in $A_{i+1}$ with $0$-weight edges.
Dinitz and Nazari \cite{DinitzN19} have provided an MPC implementation for the restricted Bellman-Ford exploration. Their result is summarized in Theorem \ref{theo restricted bellman-ford} below. 

\begin{theorem}
    \label{theo restricted bellman-ford}\cite{DinitzN19}
    Given a graph $G= (V,E)$, a set of edges $H$, a parameter $h$ and a source node $s\in V$ the restricted Bellman-Ford algorithm computes distances $d^{(h)}_{G\cup H}(s,v)$ for all $v\in V$ in $O(\frac{h}{\gamma})$ rounds of MPC using machines with $O(n^\gamma)$ memory, and a total memory of size $O(|E|+|H|)$. 
\end{theorem}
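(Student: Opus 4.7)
The plan is to implement $h$ iterations of the classical Bellman-Ford relaxation in the MPC model, showing that each iteration can be carried out in $O(1/\gamma)$ rounds and the whole procedure fits within the stated memory budget. The global state we maintain is, for each $v \in V$, a tentative estimate $d_i(v)$ representing the current best $i$-hop-bounded distance from $s$ to $v$ in $G \cup H$; we initialize $d_0(s) = 0$ and $d_0(v) = \infty$ otherwise. By the preprocessing described before Lemma \ref{lemma dn tuple}, the edges are sorted by left endpoint so that for every $u$ the set $M(u)$ of machines storing edges with left endpoint $u$ is contiguous, and every edge copy is annotated with the tuple $((u,v), deg(u), deg(v), r_u, r_v, i_u(v), i_v(u))$. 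We designate the machine $M_{r_v}$ as the authoritative owner of the value $d_i(v)$.

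A single relaxation round computes
\[
d_{i+1}(v) \;=\; \min\Bigl\{d_i(v),\; \min_{(u,v)\in E\cup H}\bigl(d_i(u) + w(u,v)\bigr)\Bigr\}.
\]
To execute it, I would first make $d_i(u)$ visible to every machine that stores an edge with left endpoint $u$. The owner $M_{r_u}$ invokes the $Broadcast$ primitive of Theorem \ref{theo min broadcast}, which disseminates a single value along the contiguous block $M(u)$ in $O(1/\gamma)$ rounds with total memory $O(|E|+|H|)$. After the broadcast, every machine locally computes the candidate value $d_i(u) + w(u,v)$ for each stored directed edge $(u,v)$. Because each undirected edge is stored in both orientations, this yields one candidate value labeled by every (right-endpoint) vertex $v$.

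The second step is a segmented minimum: for each $v$, aggregate the minimum of the candidate values across all edges whose right endpoint is $v$, and combine with the stored $d_i(v)$. Using the sorting primitive of Goodrich et al.\ in $O(1/\gamma)$ rounds on machines of memory $O(n^\gamma)$, I would reorganize the candidates by right endpoint, compute a per-segment minimum via a standard balanced-tree reduction (again $O(1/\gamma)$ rounds), and route the resulting $d_{i+1}(v)$ back to $M_{r_v}$. Repeating this for $h$ iterations gives round complexity $O(h/\gamma)$, while memory per machine never exceeds $O(n^\gamma)$ and total memory stays $O(|E|+|H|)$ because we only ever store one constant-size record per edge plus one per vertex.

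The main obstacle is the high-degree case: when $deg(u) \gg n^\gamma$, the block $M(u)$ spans many machines, so both the broadcast of $d_i(u)$ and the segmented aggregation for large-degree right endpoints must work across a contiguous group rather than within a single machine. This is exactly what the $Broadcast$ routine and the MPC sort/segmented-reduction primitives handle in $O(1/\gamma)$ rounds independent of the vertex's degree; the tuples from Lemma \ref{lemma dn tuple} provide precisely the indices $r_u, r_v, i_u(v), i_v(u)$ needed to address these contiguous blocks without any extra communication. Summing over the $h$ relaxation iterations yields the claimed $O(h/\gamma)$ round complexity.
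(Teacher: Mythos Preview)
The paper does not prove this theorem; it is quoted verbatim as a black-box result from \cite{DinitzN19}, so there is no ``paper's own proof'' to compare against. Your sketch is a faithful reconstruction of the standard argument: one Bellman-Ford relaxation is a broadcast of the current estimate $d_i(u)$ along the contiguous block $M(u)$ (Theorem~\ref{theo min broadcast}) followed by a segmented minimum over incoming edges, both of which are $O(1/\gamma)$-round primitives on machines with $O(n^\gamma)$ memory and $O(|E|+|H|)$ total memory; iterating $h$ times gives the bound. This is exactly the idea behind the cited result, and your handling of the high-degree case via the tuples of Lemma~\ref{lemma dn tuple} is correct.
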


%Note that we execute the restricted Bellman-Ford algorithm on the graph $G\cup H$. Therefore, here we require $O(|E|+|H|)$ total memory. 
Note that we limit the depth of the exploration. This is computationally equivalent to discarding all distance estimates that exceed our limit. Thus, detecting $i$-dense vertices and connecting them with their pivots can be done in $O(\frac{\beta}{\gamma})$  rounds of MPC using machines with $O(n^\gamma)$ memory, and a total memory of size $O(|E|+|H|)$. 

For every vertex $v\in \aismai$, let $w\in A_{i+1}$ be the vertex that has detected $v$, if such a vertex exists. The machine $M_{r_u}$ writes to its memory that the edge $(v,w)$ was added to $Q$, with weight $\dghb(u,v)$. We note that each machine $M$ writes to its memory at most one edge per vertex that is stored on $M$. Therefore, this does not violate memory constraints.

\paragraph{Connecting $i$-sparse vertices.}

As discussed above, the second task is more complicated from a computational point of view. This is because each vertex needs to acquire information regarding \textit{all} $i$-sparse vertices that are within distance $\frac{1}{2}\apdi$ and $4\beta$ hops from it. 
For this aim, for every index $i\in [0,\ell]$, we execute a Bellman-Ford exploration in the graph $G\cup H$. These explorations are limited to $(4\beta)$-hops and $(\frac{1}{2}\apdi)$-distance. We use the term $S_i$-explorations to describe all explorations that are executed from vertices in $A_i\setminus A_{i+1}$ in step \ref{step bf2} of Algorithm \ref{alg connect}. 

Observe that some vertices may be traversed by several explorations.
However, the expected number of explorations that traverse each vertex $v\in V$ is small. Intuitively, this happens because areas with many $A_i$-vertices are likely to have an $A_{i+1}$-vertex. Thus, all $A_i$-vertices in these areas are $i$-dense. In the following lemma, we provide an upper bound on the number of $S_i$-explorations that traverse each vertex $u\in V$. 
\begin{lemma}
    \label{lemma traverse}
    Let $i\in [0,\ell]$. For every vertex $u\in V$, for any constant $c>1$, with probability at least $1-n^{-c}$, the number of $S_i$-explorations that traverse $u$ is at most $deg_i\cdot c\cdot {\ln n}$.
\end{lemma}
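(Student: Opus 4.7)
The plan is to carry out a standard Thorup--Zwick--style ordering argument. Fix $u \in V$ and $i \in [0,\ell]$, and set $T = deg_i \cdot c \cdot {\ln n}$. Since an $S_i$-exploration is executed from its source to depth $\apdi/2$ and $4\beta$ hops in $G\cup H$, the set of potential source vertices whose exploration can reach $u$ is contained in the deterministic set $N_u := \{v \in V \mid \dghb(u,v)\leq \apdi/2\}$. I would order the $A_i$-vertices inside $N_u$ by increasing $\dghb(u,\cdot)$ as $v_1, v_2, \dots, v_s$, with ties broken deterministically.

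The heart of the argument is the following structural claim: \emph{if $v_j$ sends an $S_i$-exploration traversing $u$, then none of $v_1,\dots,v_j$ belongs to $A_{i+1}$.} Indeed, $v_j$ being $i$-sparse means its bunch $B(v_j)$ contains no $A_{i+1}$-vertex, so it suffices to show that $v_k \in B(v_j)$ for every $k \leq j$, i.e., that $\dghb(v_j,v_k) \leq \apdi$. Naively concatenating the two $\apdi/2$-length, $4\beta$-hop paths through $u$ only yields an $8\beta$-hop bound; to restore the $4\beta$-hop guarantee, I would pass through actual distances in $G$, using the distance-preservation property of the hopset (Lemma \ref{lemma zeta}) to translate between $\dghb$ and $d_G$, and relying on the slack term $4R_i$ in $\delta_i = \deltai$ to absorb the resulting $(1+\epsilon_H)$ factors. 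This is the step I expect to require the most care.

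Granting the claim, the probability analysis is short. If strictly more than $T$ vertices send $S_i$-explorations traversing $u$, then the maximum index $j$ with $v_j$ sending an exploration is at least $T+1$, and by the claim $v_1,\dots,v_{T+1}$ all lie in $A_i \setminus A_{i+1}$. Conditioning on the realization of $A_i$, the events $\{v_k \in A_{i+1}\}$ are mutually independent with probability $1/deg_i$ each, so the conditional probability that $v_1,\dots,v_{T+1}$ simultaneously avoid $A_{i+1}$ is at most $(1 - 1/deg_i)^{T+1} \leq e^{-(T+1)/deg_i} \leq e^{-c{\ln n}} = n^{-c}$. Since this bound holds uniformly in the realization of $A_i$, it transfers to the unconditional probability, completing the proof.
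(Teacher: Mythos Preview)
Your plan is correct in spirit but considerably more elaborate than what the paper actually does. The paper dispenses with the ordering altogether and argues via a single two-case split on the set
\[
X=\{v\in A_i:\dghb(u,v)\le\apdi\}
\]
(note the radius is the full $\apdi$, not your $\apdi/2$). In Case~1, some $v\in X$ is sampled into $A_{i+1}$; the paper then argues that the step-\ref{step bf1} exploration from $v$ (which has depth $\apdi$ and $4\beta$ hops) detects every vertex of $X$, so every $w\in X$ is $i$-dense and hence no $S_i$-exploration is launched from inside $X$ at all --- in particular none traverses $u$. In Case~2, $X\cap A_{i+1}=\emptyset$; then at most $|X|$ explorations can traverse $u$, and conditioning on $A_i$ the probability of Case~2 is $(1-1/deg_i)^{|X|}\le n^{-c}$ once $|X|\ge deg_i\cdot c\ln n$. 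Your ordering argument is logically a refinement of exactly this dichotomy: your structural claim ``$v_j$ $i$-sparse $\Rightarrow v_1,\dots,v_j\notin A_{i+1}$'' is precisely the contrapositive of Case~1 restricted to the prefix $v_1,\dots,v_j$, so the ordering machinery adds no power here.

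Where your write-up diverges substantively is in trying to \emph{derive} the structural claim by detouring through $d_G$ and Lemma~\ref{lemma zeta}. That route does not close as stated: concatenating through $u$ yields $d_G(v_j,v_k)\le\apdi=(1+\epsilon_H)\delta_i$, which already exceeds the hypothesis $d_G\le\delta_i$ of Lemma~\ref{lemma zeta}; even if one extends that lemma slightly, the conclusion is only $\dghb(v_j,v_k)\le(1+\epsilon_H)^2\delta_i$, strictly above the bunch threshold $(1+\epsilon_H)\delta_i$. The additive slack $4R_i$ in $\delta_i=\alpha(1/\epsilon)^i+4R_i$ is of order $\epsilon\cdot\alpha(1/\epsilon)^i$ by Lemma~\ref{lemma exp ri} and cannot absorb a multiplicative $(1+\epsilon_H)$ blow-up applied to all of $\delta_i$. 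The paper does not take this detour at all; it handles Case~1 by a direct one-line assertion that the exploration from $v$ discovers all of $X$, rather than by any triangle-inequality manipulation on $\dghb$.
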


\begin{proof}
    Let $i\in [0,\ell]$, and let $u\in V$. Let $X$ be the set of vertices $v\in A_i$ such that $\dghb(u,v)\leq \apdi$, and denote $|X| = x$. 
    
    \textbf{Case 1:} at least one vertex  $v\in X$ is sampled to $A_{i+1}$. In this case, an exploration was executed from the vertex $v$ to depth $4\beta$ hops and $\apdi$ distance in the graph $G\cup H$. Therefore, 
    all vertices in $X$ were discovered by an exploration that has originated in a vertex from $A_{i+1}$ in step \ref{step bf1} of Algorithm \ref{alg connect}. It follows that all vertices in $X$ are $i$-dense. Thus, in step \ref{step bf2} of Algorithm \ref{alg connect}, none of the vertices in $X$ initiate an exploration. As a result, the vertex $u$ is not traversed by any one of the $S_i$-explorations.
    
    \textbf{Case 2:} no vertex in $X$ is sampled to $A_{i+1}$. In this case, exactly $x$ explorations traverse the vertex $u$. The probability that no vertex of $X$ is sampled to $A_{i+1}$ is $(1-1/deg_i)^x$. Thus, the expected number of $S_i $-explorations that traverse $u$ is $x\cdot (1-1/deg_i)^\ell \leq deg_i$. 
    
    Observe that for $x\geq deg_i\cdot c\cdot {\ln n}$, the probability that no vertex of $X$ is sampled to $A_{i+1}$ is at most 
    \begin{equation*}
        (1-1/deg_i)^{deg_i\cdot c\cdot {\ln n}} \leq e^{-c{\ln n}} \leq n^{-c}.
    \end{equation*}    
    It follows that with probability at least $1-n^{-c}$, the number of $S_i$-explorations that traverse $u$ is at most $deg_i\cdot c\cdot {\ln n}$.
\end{proof}

By Lemma \ref{lemma traverse}, we conclude that w.h.p.,  each vertex is traversed by at most $\addedmemoryparam$ $S_i$-explorations. Therefore, we allocate $\mu = \tilde{O}(n^\rho)$ memory for each edge $(u,v)\in E\cup H$.

%======
%======
%======
%======
For the multiple-sources Bellman-Ford algorithm, we create $\mu = \tilde{O}(n^\rho)$ copies for every edge $(u,v)\in E\cup H$ and sort the edges by the left endpoint. A mechanism is devised to deliver $\mu$ messages along each edge in $O(1)$ rounds and handle distance estimates efficiently. This mechanism involves computing a tuple for every copy of an edge, containing information about the machines storing edges connected to the left and right endpoints. This allows seamless collaboration between machines to process messages and update distance estimates.

In the following lemma, we discuss the round complexity of computing $\addedmemory$ copies for each edge $(u,v)\in E\cup H$.

\begin{lemma}
    \label{lemma copy edges}
    Given a set of edges $E'$, there is a subroutine that for every edge $(u,v)\in E'$, creates and stores $\addedmemory$ copies of $(u,v)$, such that the set $M(u)$ is a contiguous set of machines. Furthermore, for every edge $(u,v)\in E'$ all of its edge-copies are stored on a contiguous set of machines. 
    The subroutine can be executed in $O(1/\gamma)$ time using machines with $O(n^\gamma)$ memory, and a total memory of size $\tilde{O}(|E'|\cdot n^\rho)$. 
\end{lemma}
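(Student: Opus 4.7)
The plan is to reduce the task to a constant number of standard MPC primitives — sorting and broadcast — each of which runs in $O(1/\gamma)$ rounds. I would proceed in three stages.

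First, I would sort $E'$ by left endpoint using Goodrich et al.'s MPC sorting algorithm \cite{GoodrichSZ11}, which takes $O(1/\gamma)$ rounds on $O(n^\gamma)$-memory machines with total memory $O(|E'|)$. Combining this with the tuple-construction routine of Lemma \ref{lemma dn tuple}, every edge $(u,v)$ now knows its global rank $r_{(u,v)}$ in the sorted sequence as well as its left-endpoint start index $r_u$. These ranks allow each edge to name its $\mu$ target positions in a final ``extended array'' of length $|E'|\cdot \mu$: copy $j\in\{0,\dots,\mu-1\}$ of $(u,v)$ is destined for position $\mu\cdot r_{(u,v)} + j$.

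Second, I would create the $\mu$ copies. The immediate obstacle is the regime $\rho > \gamma$ flagged in the excerpt: a machine holding $n^\gamma$ distinct edges cannot emit $n^\gamma \cdot \mu$ copies locally without violating the per-machine memory bound. To circumvent this, I would materialize $|E'|\cdot \mu$ empty ``slot'' placeholders $(\mathrm{slot\_id},\bot)$ spread across $\tilde{O}(|E'|\cdot n^\rho / n^\gamma)$ machines; each machine generates its own slots locally from its index, so no communication is needed here. I would then deliver each edge's payload to the block of $\mu$ slot indices assigned to it by (i) one sort that co-locates each edge with the range of machines hosting its slot block, and (ii) invoking the broadcast subroutine of Theorem \ref{theo min broadcast} within that range. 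Because distinct edges broadcast into disjoint index ranges, the broadcasts run in parallel without load blow-up, and each step takes $O(1/\gamma)$ rounds.

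Third, I would perform one final MPC sort of the filled slots lexicographically by $(u,\ v,\ j)$, again in $O(1/\gamma)$ rounds. This places all $\mu$ copies of each edge $(u,v)$ on a contiguous run of machines, and $M(u)$ becomes contiguous as the union of those runs over all $v$. Correctness of the two contiguity properties then follows directly from the sort key, the round count is $O(1/\gamma)$ since each of the four sub-steps is $O(1/\gamma)$, and the total memory is dominated by the extended array of size $|E'|\cdot \mu = \tilde{O}(|E'|\cdot n^\rho)$, matching the stated bound. The main delicate point I expect to justify carefully is exactly the multicast stage when $\mu > n^\gamma$: I need to verify that the sorting-based routing keeps the per-machine load at $O(n^\gamma)$ at every intermediate step, which is precisely the regime outside the scope of the routine in \cite{DinitzN19} and where the paper claims to extend their technique.
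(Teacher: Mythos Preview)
Your proposal is correct and follows essentially the same route as the paper: spread the edges out so each one sits at the start of a length-$\mu$ gap, then use Theorem~\ref{theo min broadcast} to replicate each edge into its gap. The paper's version is a bit leaner---it sends each edge directly to global index $((i-1)n^\gamma+(j-1))\mu$ rather than materializing slot placeholders and sorting, and it skips your stage~3 sort since the layout is already contiguous by construction; your ``delicate point'' about $\mu>n^\gamma$ is exactly what the paper dispatches by observing that after redistribution each machine initiates and receives a broadcast for at most one edge.
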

\begin{proof}
    First, we redistribute the edges on the set of machines such that there is a gap of $\mu-1$ between every pair of consecutive edges. Then, we fill the gap following each edge $(u,v)$ with $\addedmemory-1$ copies of it.

    To distribute the edges, we first compute the new location of each edge $(u,v)$. Let $(u,v)$ be the $j$th edge stored on the $i$th machine, i.e., it has global index $i\cdot n^\gamma +j$. This edge is relocated to global index $l = ((i-1)n^\gamma +(j-1))\cdot\mu$. This index is the $l\% n^\gamma $ index in the machine $M_{\lfloor \frac{l}{n^\gamma} \rfloor}$. Note that this requires each machine to send at most $n^\gamma$ messages, as each machine initially stores at most $n^\gamma$ edges. In addition, observe that we allocate at most $n^\gamma$ edges to every machine. Thus, this step can be done in $O(1)$ rounds.

     Next, we fill the gaps. Specifically, for every edge $(u,v)$ stored now in global index $l$, we make copies $\{ ((u,v),j) \ | \ j\in [0,\addedmemory-1]\}$ and store them on global indices $[l,l+\addedmemory-1]$. Denote by $M_k$ the machine that stores the edge $(u,v)$. Observe that global index $l+\addedmemory-1$ is located in machine number $k' = \lceil \frac{l+\mu-1}{n^\gamma}\rceil$. The machine $M_k$ broadcasts the message $\langle (u,v), l \rangle $ to all machines with indices in $[k,k']$. Observe that each machine has at most one edge $e$ such that it needs to broadcast messages to consecutive machines regarding $e$. In addition, each machine receives a broadcast regarding at most one edge $e$. Thus, by Theorem \ref{theo min broadcast}, this broadcast can be done in $O(1/\gamma)$ time using machines with $O(n^\gamma)$ memory, and a total of $O(|E'|\cdot n^\rho)$ memory.

     Let $M$ be a machine that receives a broadcast $\langle (u,v), l \rangle $. Since $M$ knows its index, using the global index $l$ it can infer which copies of $(u,v)$ are to be stored on it, and write these copies to its memory. Consequently, machines $[M_k,M_{k'}]$ now store the copies $((u,v),x)$ for $x\in [0,\mu-1]$.

    Observe that now, for every vertex $u\in V$, the set of machines $M(u)$  is a  contiguous set of machines, and also, all copies of an edge $(u,v)$ are stored on a contiguous set of machines.

    %Next, we sort the edges by the left endpoint. Ties are broken by the right endpoint and only then by the index of the copy.   This ensures that for every vertex $u\in V$, the set of machines $M(u)$  is a  contiguous set of machines, and also, all copies of an edge $(u,v)$ are stored on a contiguous set of machines. 
    
\end{proof}

We execute the edge-copying subroutine on the edges of $E\cup H$, which requires $\tilde{O}((|E|+|H|) n^\rho) $ total memory.

For every edge $(u,v)$ and for every index $j\in [0,\mu-1]$, denote by $M_{((u,v),j)}$ the machine that stores the $j$th copy of the edge $(u,v)$. Each machine  $M_{((u,v),j)}$ now locally updates the tuple associated with the $j$th copy of $(u,v)$. Specifically, we compute the tuple $((u,v),j,\widehat{deg}(u),\widehat{deg}(v), \widehat{r}_u,\widehat{r}_v,\widehat{i}_u(v),\widehat{i}_v(u))$, where 
\begin{equation*}
    \begin{array}{lclclclcl}
         \widehat{deg}(u) &=& deg(u)\cdot \addedmemory ,& 
         \widehat{deg}(v) &=& deg(v)\cdot \addedmemory ,\\
         \widehat{r}_u & = & r_u +\frac{\mu}{n^\gamma},&
         \widehat{r}_v & = & r_v +\frac{\mu}{n^\gamma},\\
         \widehat{i}_u(v) & = & i_u(v)\cdot \addedmemory + j ,&
         \widehat{i}_v(u) & = & i_v(u)\cdot \addedmemory + j.
    \end{array}
\end{equation*}

We are now ready to to discuss the multiple sources restricted Bellman-Ford algorithm.  The input for the algorithm is a graph $G= (V,E)$, a hopset $H$, a hop parameter $h = 4\beta$ and a distance threshold parameter $\frac{1}{2}\apdi$. The set of sources $S_i$ is the set of $i$-sparse vertices. For the correctness of our analysis, it is crucial that Lemma \ref{lemma traverse} holds. Essentially, at the beginning of each round, each vertex $u$ stores distance estimates to at most $\mu$ distinct sources. These estimates are sent from the machines in $M(u)$ to all machines that store neighbors of $u$. Then, the machines in $M(u)$ process the messages they had received to update their distance estimates.
In the following lemma, we provide implementation details and analyze the round complexity of a single step of the $S_i$-explorations. 

\begin{lemma}
    \label{lemma single round msBF}
    For $h'\in [0,h]$, w.h.p., step $h'$ of all $S_i$-explorations can be executed in $O(1/\gamma)$ rounds of MPC using machines with $O(n^\gamma)$ memory and $\tilde{O}((|E|+|H|)n^\rho)$ total memory.  
\end{lemma}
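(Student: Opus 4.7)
The plan is to implement one relaxation step of the multi-source Bellman-Ford by using the $\addedmemory$-fold edge-copying produced by Lemma \ref{lemma copy edges} as a collection of parallel communication channels for the at most $\addedmemory$ distance estimates each vertex carries. I would maintain, as an invariant preserved round-by-round, that at the start of step $h'$ every vertex $u$ stores at most $\addedmemory=\addedmemoryparam$ tuples of the form $(s,\hat{d}(s,u))$ laid out in sorted order across its contiguous block $M(u)$, one per source that has already reached $u$. By Lemma \ref{lemma traverse} this invariant holds w.h.p.\ throughout all $h$ steps.

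First, I would perform the fan-out: each tuple $(s,\hat{d}(s,u))$ stored on some machine in $M(u)$ must be copied onto one of the $\addedmemory$ copies of each edge $(u,v)$ incident to $u$, accompanied by the candidate $\hat{d}(s,u)+w(u,v)$ to be delivered to the other endpoint. By Lemma \ref{lemma copy edges}, both $M(u)$ and the set of edge-copies of $u$'s incident edges occupy contiguous global-index ranges whose exact addresses can be computed from the augmented tuple $(\widehat{r}_u,\widehat{deg}(u),\widehat{i}_u(v),\ldots)$ attached to each copy. Hence the routing reduces to sending each tuple to a prescribed list of global machine addresses, which by the sorting primitive of Goodrich et al.\ \cite{GoodrichSZ11} takes $O(1/\gamma)$ rounds; the number of outgoing messages is at most $\sum_{u}\deg_{G\cup H}(u)\cdot\addedmemory=\tilde{O}((|E|+|H|)n^\rho)$, within the stated memory budget.

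Next, I would aggregate the arriving candidates at each destination $v$: every edge-copy $((u,v),j)$ that now carries a pair $(s,\hat{d}(s,u)+w(u,v))$ ships it to $M(v)$, which must retain, for every source $s$ that reached $v$, only the minimum such candidate. I would implement this by one more sort keyed by $(v,s)$ followed by a segmented-min within each contiguous $(v,s)$-block; both are standard primitives running in $O(1/\gamma)$ rounds with total memory linear in the number of messages. Re-applying Lemma \ref{lemma traverse} to $v$ shows that w.h.p.\ at most $\addedmemory$ distinct sources contribute a candidate, so the surviving list fits in $M(v)$ and the invariant is re-established for step $h'+1$.

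The main obstacle, and the reason this requires more than a direct invocation of a known primitive, is the regime $\rho>\gamma$, where a single vertex's estimate list and a single edge's $\addedmemory$ copies each span multiple machines. This is precisely the regime that the previous implementation of Dinitz and Nazari \cite{DinitzN19} sidesteps by assuming $\rho\leq\gamma$. Our handle on it is the contiguous index layout of Lemma \ref{lemma copy edges} together with the augmented per-copy tuples, which together turn every required data movement into a sort by precomputed global addresses, so that every substep of a single Bellman-Ford round is handled by $O(1)$ invocations of an $O(1/\gamma)$-round sort and the overall round count is $O(1/\gamma)$.
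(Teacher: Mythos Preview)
Your proposal is correct and follows essentially the same approach as the paper: maintain the per-vertex invariant bounded by Lemma~\ref{lemma traverse}, use the $\addedmemory$-fold edge-copy infrastructure of Lemma~\ref{lemma copy edges} as parallel channels, and implement each relaxation step by a constant number of $O(1/\gamma)$-round sorting/aggregation primitives. The only notable difference is in the low-level data movement: the paper's proof splits each step into an explicit \emph{Broadcast} (direct point-to-point send from the $j$th copy of $(u,v)$ to the $j$th copy of $(v,u)$ via the precomputed address $\widehat{r}_v+\lfloor \widehat{i}_v/n^\gamma\rfloor$), a \emph{Compute} (sort within $M(v)$ and compare with the neighboring machine to keep one minimum per source), and a \emph{Rearrange} (sort by copy-index, then use the $Broadcast$ routine of Theorem~\ref{theo min broadcast} from the minimal-neighbor edge $(v,u^*)$ to redistribute each surviving estimate to all incident edge-copies), whereas you collapse everything into two global sorts (one for fan-out, one keyed by $(v,s)$ for segmented-min). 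Both realizations are valid; the paper's version makes the I/O accounting per machine more explicit, while yours leans more heavily on the sorting black box.
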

\begin{proof}
    Step $h'$ consists of three parts.
    \begin{enumerate}
        \item \textbf{Broadcast.} In this part, each vertex broadcasts its distance estimates to its neighbors in $G\cup H$. 
        \item \textbf{Compute Distance Estimates.} In this part, each vertex infers the current distance estimates according to messages it has received in the previous part. 
        \item \textbf{Rearrange Data.} In the final part, the data is rearranged so that we will be able to send it in the next round.
    \end{enumerate}
    
    Let $h'\in [0,h]$, and let $u$ be a vertex that has been detected by $x$ sources $s_0,s_1,\dots,s_{x-i}$ from $A_i$ until round $h'$. We assume that when round  $h'$ begins, for every edge $(u,v)\in E\cup H$ and every index $j\in [0,x-1]$, the machine $M_{((u,v),j)}$ contains the distance estimate $\langle (s_j,u), \mathpzc{d}_{s_j}(u)\rangle$. We show that this assumption is valid when we discuss part \textit{rearrange data}.
    
    \paragraph{Part $1$: Broadcast.}
    Round $h'$ begins by sending distance estimates. Specifically, for every edge $(u,v)\in E\cup H$ and every index $j\in [0,x-1]$, the machine $M_{((u,v),j)}$ sends the distance estimate $\langle (s_j,u), \mathpzc{d}_{s_j}(u)\rangle$ to machine $M_{((v,u),j)}$, which has index $\widehat{r}_v+ \lfloor \frac{\widehat{i}_v}{n^\gamma}\rfloor$. Now, the machine $M_{((v,u),j)}$ saves $\langle (s_j,v), \mathpzc{d}_{s_j}(v)  = \mathpzc{d}_{s_j}(u) +\omega(u,v) \rangle$ to its memory. If $\mathpzc{d}_{s_j}(v) > \apdi$, indicating that the computed distance exceeds the specified threshold, the distance estimate is discarded immediately. We note that the $j$th copy of an edge $(u,v)$ communicates exclusively with the $j$th copy of the edge $(v,u)$, and vice versa. Therefore, this step adheres to the I/O limitation for each machine.
    
    \paragraph{Part 2: Compute Distance Estimates.}
    Observe that at this point, it is possible that the machines in $M(v)$  hold multiple distance estimations corresponding to the same source. Therefore, we sort the distance estimates stored on $M(v)$ according to the source. By \cite{GoodrichSZ11}, this can be done in $O(1/\gamma)$ time. Now, each machine  $M_y \in M(v)$ that holds a distance estimate $\langle (s,v), \mathpzc{d}_{s}(v)\rangle$ checks whether this is the minimal distance estimate between $s$ and $v$. Note that the distance estimates may be distributed on several machines and that each machine may hold estimates to several sources. Let $\langle (s,v), \mathpzc{d}_{s}(v)\rangle$ be the smallest distance estimate held by a machine $M_y$ for a source $s$. As the estimates are sorted by distance, the machine $M_{y}$ needs to check if the machine $M_{y-1}$, if exists, holds a smaller distance estimate to the source $s$. Any distance estimate found to be suboptimal is then discarded. 
    Now, for every sources $s\in A_i$, the machines in $M(v)$ hold at most one distance estimate $\langle (s,v), \mathpzc{d}_{s}(v)\rangle$.

    \paragraph{Part 3: Rearrange Data.}
    It is left to rearrange the information on the machines in $M(v)$ such that if the vertex $v$ has been detected by sources $s_0,s_1,\dots,s_{x'-1}$, 
    then for every edge $(u,v)\in E\cup H$ and every index $j\in [0,x'-1]$, the machine $M_{((v,u),j)}$ contains the distance estimate $\langle (s_j,v), \mathpzc{d}_{s_j}(v)\rangle$. For this aim, recall that we allow each machine to hold at most $n^\gamma$ distance estimates and also at most $n^\gamma$ edge-tuples. Also, recall that the edge-tuples are sorted on $M(v)$ according to the the right endpoint, and ties are broken by the index of the copy $j$. Let $(v,u^*)$ be the minimal edge in this sort, i.e., $u^*$ is the minimal ID neighbor of $v$.

    Next, we sort all distance estimates $\langle (s,v), \mathpzc{d}_{s}(v)\rangle$ stored on $M(v)$. By Lemma \ref{lemma traverse}, there are at most $\addedmemory$ such distance estimates stored on $M(v)$. Now we know that the number of tuples $((v,u^*),j,\dots)$ held by each machine $M\in M(v)$ is at least the number of distance estimates held by $M$. Each tuple $((v,u^*),j,\dots)$ is now assigned with at most one distance estimate $\langle (s_j,v), \mathpzc{d}_{s_j}(v)\rangle$, where $u^*$ is the minimal ID neighbor of $v$. It is the responsibility of the $j$th copy of the edge $(v,u^*)$ to distribute the distance estimate $\langle (s_j,v), \mathpzc{d}_{s_j}(v)\rangle$ to the $j$th copies of all edges. For this aim, we add the estimate $\langle (s_j,v), \mathpzc{d}_{s_j}(v)\rangle$ to the tuple $((v,u^*),j,\dots)$, and sort all tuples, by the left endpoint. Ties are broken by the index $j$, and then by the right endpoint. 
    
    Observe that now, for every vertex $v\in V$ and for every $j\in [0,\addedmemory-1]$, the $j$th tuples of all edges incident to $v$ are stored on a contiguous set of machines. Furthermore, for every $j$, the first tuple with index $j$ is the tuple of the edge $(v,u^{*})$.    
    Next, the machine $M_{((v,u^{*}),j)}$ broadcasts the distance estimate held by the tuple  $((v,u^{*}),j,\dots)$ to all machines that store $j$-th copies of edges $(v,\cdot)$. Observe that for every edge $e$, all of its copies are stored \textit{in bulks}, on a contiguous set of machines. Thus, every machine participated in a broadcast concerning edge-copies of at most two edges. 
    Hence, by Theorem \ref{theo min broadcast} this requires $O(1/\gamma)$ time. Each machine $M_{((v,u),j)}$ that receives a distance estimate $\langle (s_j,v), \mathpzc{d}_{s_j}(v)\rangle$, adds this estimate to the tuple   $((v,u),j,\dots)$. Finally, the tuples are sorted again, by the left endpoint. Ties are broken by the right endpoint, and then by the index of the copy. 
\end{proof}

The following theorem summarizes the properties of the multiple-sources restricted Bellman-Ford exploration.

\begin{theorem}
    \label{theo multi source BF}
    Let $G= (V,E)$ be an unweighted, undirected graph, $H$ be a hopset, $h$, $\delta_i$ be parameters, and  $S_i\subseteq V$ be a set of sources,
    such that each vertex $v\in V$ has at most $\mu$ sources $s\in S_i$ with $d^{(h)}_{G\cup H}(s,v)\leq \apdi$. 
    
    The multiple-sources restricted Bellman-Ford algorithm computes distances $d^{(h)}_{G\cup H}(s,v)$ for every pair $(s,v)\in S\times V$ with $d^{(h)}_{G\cup H}(s,v)\leq \apdi$ in $O(h/\gamma)$ rounds of MPC using machines with $O(n^\gamma)$ memory and $\tilde{O}((|E|+|H|)n^\rho)$ total memory.
\end{theorem}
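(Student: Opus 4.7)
The plan is to assemble the theorem by sequencing three ingredients: a one-shot preprocessing step, $h$ iterations of the single-round routine, and a correctness/invariant argument. First, I invoke Lemma \ref{lemma copy edges} on the edge set $E \cup H$ with the per-edge multiplicity $\mu = \tilde{O}(n^\rho)$, which in $O(1/\gamma)$ rounds produces $\mu$ copies of each edge, laid out in the contiguous-machine fashion that Lemma \ref{lemma single round msBF} requires; the total memory consumed is $\tilde{O}((|E|+|H|)\cdot n^\rho)$. After the layout is in place, each machine locally updates the per-copy tuple $((u,v),j,\widehat{deg}(u),\widehat{deg}(v),\widehat{r}_u,\widehat{r}_v,\widehat{i}_u(v),\widehat{i}_v(u))$ as specified in the implementation section, and for every source $s\in S_i$, the machine $M_{r_s}$ seeds the initial estimate $\langle (s,s),0\rangle$ in copy slot $0$.

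Next, I apply Lemma \ref{lemma single round msBF} $h$ times in sequence. Each invocation takes $O(1/\gamma)$ MPC rounds with $O(n^\gamma)$ memory per machine and $\tilde{O}((|E|+|H|)n^\rho)$ total memory, so the round complexity telescopes to $O(h/\gamma)$ and the memory budget is preserved. Correctness follows from the standard multi-source Bellman--Ford invariant: by induction on $j$, after $j$ rounds every vertex $v$ holds the estimate $d^{(j)}_{G\cup H}(s,v)$ for each source $s\in S_i$ with that value at most $\apdi$, because part 2 of the single-round routine retains the per-source minimum via the sort-and-compare step, while part 1 discards any propagated estimate exceeding the threshold (which is safe, since the theorem only promises output for pairs whose $h$-hop distance is at most $\apdi$). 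Taking $j=h$ yields exactly the $d^{(h)}_{G\cup H}(s,v)$ values the theorem requires.

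The main obstacle, and the reason the hypothesis on $\mu$ is indispensable, is verifying that the precondition of Lemma \ref{lemma single round msBF}, namely that each vertex stores at most $\mu$ live distance estimates at the start of each round, is preserved throughout all $h$ iterations. The key observation is that $h$-hop distances are monotonically nonincreasing in the hop budget, so $d^{(j)}_{G\cup H}(s,v)\geq d^{(h)}_{G\cup H}(s,v)$ for every $j\leq h$. Consequently, if any intermediate estimate $d^{(j)}_{G\cup H}(s,v)$ survives the threshold filter $\apdi$, then also $d^{(h)}_{G\cup H}(s,v)\leq \apdi$, so $s$ lies in the set of at most $\mu$ sources whose $h$-hop distance to $v$ stays under the threshold. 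Thus at every iteration the live-estimate count per vertex remains bounded by $\mu$, the precondition of Lemma \ref{lemma single round msBF} holds at the top of every iteration, and the induction carries through, giving the claimed $O(h/\gamma)$ round complexity and memory bounds.
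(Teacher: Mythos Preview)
Your proposal is correct and follows essentially the same route the paper takes: the paper states Theorem~\ref{theo multi source BF} as a summary of Lemma~\ref{lemma copy edges} (preprocessing) and Lemma~\ref{lemma single round msBF} (one Bellman--Ford step), and you assemble these two pieces exactly as intended. Your explicit monotonicity argument---that a surviving estimate after $j\le h$ rounds forces $d^{(h)}_{G\cup H}(s,v)\le\apdi$, hence the per-vertex live-source count stays below $\mu$---is the right way to discharge the precondition of Lemma~\ref{lemma single round msBF} at every iteration; the paper leaves this implicit (there it leans on Lemma~\ref{lemma traverse} in context), so your version is in fact slightly more self-contained.
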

Observe that when the multiple sources Bellman-Ford exploration terminates, for every vertex $v\in V$, for every source $s\in S_i$ such that $\dghb(s,v)\leq \apdi$, there is exactly one index $j$ such that the $j$th copy of the edge  $(v,u^*)$ stored $\dghb(s,v)$. The machine $M_{((v,u^*),j)}$ now adds the edge $(s,v)$ to the set $Q$ with weight $\dghb(s,v)$. 

We aim at storing all edges of $Q$ on a set of output machines. 
Once the algorithm finishes connecting $i$-sparse vertices, the edges stored on all machines are sorted, such that the edges added to the set $Q$ are stored on a separate set of machines. By \cite{GoodrichSZ11}    
this can be accomplished in $O(1/\gamma)$ rounds of MPC using machines with $O(n^\gamma)$ memory, and a total memory of $\tilde{O}((|E|+|H|)n^\rho+|Q|)$.

We are now ready to analyze the round complexity of Algorithm \ref{alg connect}.

\begin{lemma}\label{lemma alg connect rt}
   W.h.p., Algorithms \ref{alg sample} and \ref{alg connect} can be executed in $O(\beta/\gamma)$ 
   rounds of MPC using machines with $O(n^\gamma)$ memory and $\tilde{O}((|E|+|H|)n^\rho+ |Q|)$ total memory.
\end{lemma}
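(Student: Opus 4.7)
The plan is to aggregate the round complexity and memory usage contributed by each building block analyzed in Section~\ref{sec gen cc}. First I would dispose of the preprocessing: sorting the edges of $E\cup H$ by left endpoint via \cite{GoodrichSZ11} and constructing the tuples $((u,v),deg(u),deg(v),r_u,r_v,i_u(v),i_v(u))$ via Lemma~\ref{lemma dn tuple} together use $O(1/\gamma)$ rounds and $O(|E|+|H|)$ total memory. Executing Algorithm~\ref{alg sample} contributes another $O(1/\gamma)$ rounds by Lemma~\ref{lemma cc select}, since it reduces to one broadcast per sampled vertex along the contiguous set $M(u)$.

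Next I would analyze the $\ell+1$ sequential phases of Algorithm~\ref{alg connect}. Within phase $i$, detecting the $i$-dense vertices and assigning pivots is cast as a single-source restricted Bellman–Ford from a dummy source connected to all of $A_{i+1}$ by $0$-weight edges; by Theorem~\ref{theo restricted bellman-ford} this costs $O(\beta/\gamma)$ rounds in $O(|E|+|H|)$ total memory. Connecting the $i$-sparse vertices to their close-bunches is a multiple-sources restricted Bellman–Ford whose per-edge congestion is, by Lemma~\ref{lemma traverse}, bounded w.h.p.\ by $\mu=\tilde{O}(n^\rho)$. I would therefore first invoke the edge-duplication subroutine of Lemma~\ref{lemma copy edges}, which in $O(1/\gamma)$ rounds creates $\mu$ contiguously stored copies of every edge in $E\cup H$ inside the memory envelope $\tilde{O}((|E|+|H|)n^\rho)$, and then run the multiple-sources exploration via Theorem~\ref{theo multi source BF} in $O(\beta/\gamma)$ rounds within the same envelope. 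At the end of the phase, the edges chosen into $Q$ are sorted off onto dedicated output machines via \cite{GoodrichSZ11} in $O(1/\gamma)$ rounds, adding $|Q|$ to the total memory.

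The main obstacle is ensuring that the high-probability congestion bound of Lemma~\ref{lemma traverse} holds simultaneously everywhere; I would handle this by invoking Lemma~\ref{lemma traverse} with a sufficiently large constant $c$ and taking a union bound over all $n$ vertices and all $\ell+1$ phases, keeping the overall failure probability polynomially small so that Theorem~\ref{theo multi source BF} is applicable in every phase. Summing over all contributions gives $(\ell+1)\cdot O(\beta/\gamma)+O(1/\gamma)$ rounds with peak total memory $\tilde{O}((|E|+|H|)n^\rho+|Q|)$. Since $\ell=\emuell$ is a constant in the considered parameter regime $\rho\in[1/\log\log n,1/2]$ and $\kappa\geq 1/\rho$, it is absorbed into the $O(\cdot)$, yielding the stated $O(\beta/\gamma)$ round complexity w.h.p.\ under the claimed memory budget.
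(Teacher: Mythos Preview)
Your proposal is correct and follows essentially the same approach as the paper: invoke Lemma~\ref{lemma cc select} for the sampling, Theorem~\ref{theo restricted bellman-ford} for detecting $i$-dense vertices in each phase, Lemma~\ref{lemma traverse} together with Lemma~\ref{lemma copy edges} and Theorem~\ref{theo multi source BF} for connecting the $i$-sparse vertices, and finally sort $Q$ onto output machines via \cite{GoodrichSZ11}. One small caveat: your claim that $\ell=\emuell$ is a constant is not quite right at the low end of the regime (when $\rho=1/\log\log n$ one gets $\ell=\Theta(\log\log n)$), but the paper's own proof likewise absorbs this factor silently into the $O(\beta/\gamma)$ bound, so your argument matches the paper's level of rigor.
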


\begin{proof}
First, note that by Lemma \ref{lemma cc select}, Algorithm \ref{alg sample} can be executed in $O(1/\gamma)$  rounds of MPC using machines with $O(n^\gamma)$ memory, and a total memory of size ${O}(|E|+|H|)$.

We now discuss Algorithm \ref{alg connect}. 
    For every $i\in [0,\ell]$, detecting the $i$-dense vertices and connecting them with their pivots requires executing a Bellman-Ford exploration from all vertices in $A_i$, limited to $4\beta$ hops and $\apdi$ depth. By Theorem \ref{theo restricted bellman-ford}, this can be done in $O(\beta/\gamma)$ rounds of MPC using machines with $O(n^\gamma)$ memory and ${O}(|E|+|H|)$ total memory.

    Connecting the $i$-sparse vertices with their close bunches requires creating $\addedmemory$ copies of each edge $(u,v)\in E\cup H$, and executing Bellman-Ford explorations from all $i$-sparse vertices, in parallel. These explorations are limited to $4\beta$-hops and $\frac{1}{2}\apdi$ depth. Observe that by Lemma \ref{lemma traverse}, w.h.p., each vertex $u\in V$ has at most $\addedmemory$ $i$-sparse vertices $v$ such that $\dghb(u,v)\leq \frac{1}{2}\apdi$. Recall that $\addedmemory = \tilde{O}(n^\rho)$. Thus, by Lemma \ref{lemma copy edges} and Theorem \ref{theo multi source BF}, copying edges and performing explorations can be done in $O(\beta/\gamma)$ rounds of MPC using machines with $O(n^\gamma)$ memory and $\tilde{O}((|E|+|H|)n^\rho)$ total memory.

    Sorting the edges such that the edges of the set $Q$ are stored on a set of output machines can be accomplished in $O(1/\gamma)$ rounds of MPC using machines with $O(n^\gamma)$ memory, and a total memory of $\tilde{O}((|E|+|H|)n^\rho+|Q|)$ \cite{GoodrichSZ11}.

    To summarize, w.h.p., Algorithm \ref{alg connect} can be executed in $O(\beta/h)$ rounds of MPC using machines with $O(n^\gamma)$ memory and $\tilde{O}((|E|+|H|)n^\rho+|Q|)$ total memory. 
\end{proof}

%===========================

%===============
%================

\subsection{Analysis of the Size}\label{sec gen size}

In this section, we analyze the size of the edge set $Q$. Our strategy is to first provide upper bounds on the expected size of each set $A_i$. Then, we provide an upper bound on the expected number of edges added to the set $Q$ for every vertex $u\in A_i\setminus A_{i+1}$. This strategy is standard for algorithms that use the approach of Elkin and Peleg \cite{ElkinP01} or of Thorup and Zwick \cite{ThorupZ06} for constructing hopsets, emulators, and spanners. We remark that algorithms employing the Elkin and Peleg approach have a different edge addition mechanisms, while algorithms inspired by Thorup and Zwick usually use a different sampling probability sequence, leading to variations in the details of their size analyses compared to our approach.

In the following two lemmas, we provide an upper bound on the expected size of the set $A_i$ for all $i\in [0,\ell]$. Recall that for every $i\in [0,i_0 =\izo]$, we set $deg_i = \degi$, and for every $i\in [i_0+1,\ell]$ we set $deg_i = n^\rho$.
\begin{lemma}
    \label{lemma size whp 1}
    Let $i\in [0,i_0+1]$. The expected size of $A_i$ is $n^{1-\frac{2^{i}-1}{\kappa}}$. 

    Moreover, for every $i\in [0,i_0+1]$, we have that w.h.p. the size of $A_i$ is at most $n^{1-\frac{2^{i}-1}{\kappa}}\cdot{\log n}$.
\end{lemma}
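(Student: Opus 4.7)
The plan is to prove both claims using the fact that the hierarchy sampling in Algorithm~\ref{alg sample} is performed independently per vertex, which makes $|A_i|$ a sum of $n$ independent Bernoulli indicators.

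\textbf{Expectation.} I would first observe that for $j \leq i_0$, the definition $i_0 = \izo$ gives $2^j \leq \kappa\rho$, so $n^{2^j/\kappa} \leq n^\rho$, and therefore the minimum in Algorithm~\ref{alg sample} is attained at $deg_j = n^{2^j/\kappa}$. A vertex $v \in V$ lies in $A_i$ iff it was selected in each of the first $i$ sampling rounds, so $\Pr[v \in A_i] = \prod_{j=0}^{i-1} 1/deg_j = n^{-(2^i-1)/\kappa}$, using $\sum_{j=0}^{i-1} 2^j = 2^i - 1$. Linearity of expectation then yields $\mathbb{E}[|A_i|] = n^{1-(2^i-1)/\kappa}$, which is the first claim. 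The $i=0$ case is trivial since $A_0 = V$.

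\textbf{High-probability bound.} The indicator events $\{v \in A_i\}_{v \in V}$ involve disjoint per-vertex coin flips and are therefore mutually independent. Hence $|A_i|$ is a sum of $n$ i.i.d.\ Bernoulli variables with total mean $\mu := n^{1-(2^i-1)/\kappa}$. I would apply the multiplicative Chernoff bound
\begin{equation*}
    \Pr[|A_i| \geq K\mu] \;\leq\; (e/K)^{K\mu}
\end{equation*}
with $K = \log n$. For this to give a useful tail, $K\mu$ must be at least a constant. From $\rho \leq 1/2$ and $i \leq i_0+1$ combined with $i_0 = \izo$, I would derive $2^i \leq 2^{i_0+1} \leq 2\kappa\rho \leq \kappa$, hence $(2^i-1)/\kappa \leq 1-1/\kappa$ and $\mu \geq n^{1/\kappa} \geq 1$. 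The Chernoff tail therefore evaluates to at most $(e/\log n)^{\log n \cdot \mu} \leq n^{-\Omega(\log\log n)}$, which is smaller than $n^{-c}$ for every constant $c$, so $|A_i| \leq \mu \log n$ holds with high probability.

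\textbf{Main obstacle.} The only delicate point is ensuring that $\mu$ is large enough for the Chernoff tail to be meaningful throughout the parameter range $\rho \in [1/\log\log n, 1/2]$, $\kappa \geq 1/\rho$, $i \in [0,i_0+1]$. This reduces precisely to the inequality $2^{i_0+1} \leq \kappa$ above, which is immediate from $\rho \leq 1/2$ and the definition of $i_0$. Beyond this check, both statements follow from a textbook inductive expectation calculation together with a standard Chernoff application, so no further probabilistic subtleties are expected.
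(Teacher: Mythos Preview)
Your proposal is correct and takes essentially the same approach as the paper: both compute the expectation from the per-level sampling probabilities (you via the direct product $\prod_{j<i}1/deg_j$, the paper via an equivalent one-step induction), verify $\mu\geq 1$ through the same chain $2^{i_0+1}\leq 2\kappa\rho\leq\kappa$, and then apply a multiplicative Chernoff bound to the sum of independent per-vertex indicators. The only differences are cosmetic---your Chernoff form $(e/K)^{K\mu}$ versus the paper's $(1+\delta)$-form with $\delta=c\ln n$---and both yield the required $n^{-c}$ tail.
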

\begin{proof}
    The proof is by induction on the index $i$.
    For $i= 0$, we have $A_0 = V $, thus $|A_0| = n$ and the claim holds. 

    Assume the claim holds for $i\leq i_0$, and prove it holds for $i+1$. By the induction hypothesis, $E[|A_i|] = n^{1-\frac{2^{i}-1}{\kappa}}$. A vertex $v\in A_i$ is sampled to $A_{i+1}$ with probability $ deg_i = \degi$. Thus, the expected number of vertices that belong to $A_{i+1}$ is 
    \begin{equation*}
    \begin{array}{lclclclclclc}
        E[|A_{i+1}|] & = & 
        E[|A_i|]/deg_i & = &
        n^{1-\frac{2^{i}-1}{\kappa}}/ \degi 
        & = & n^{1-\frac{2^{i+1}-1}{\kappa}}. 
    \end{array}
    \end{equation*}

    For the second assertion of the lemma, recall that $\rho \leq 1/2$. Observe that the expected size of $A_i$ is at least $1$, since  
    \begin{equation*}
        \begin{array}{lclclclclclclclclc}
             \frac{2^{i}-1}{\kappa}& \leq  &
             \frac{2^{i_0+1}-1}{\kappa}& \leq  &\\
             \frac{2^{\izo+1}-1}{\kappa}& \leq  &
             \frac{2\kappa\rho-1}{\kappa}& \leq  &
             2\rho
             & \leq &1  .
        \end{array}
    \end{equation*}
    
    Let $c>2$ be a constant. From a Chernoff bound, for $\delta = c{\ln n}$ we have 
    \begin{equation*}
    \begin{array}{lclclclclclclclc}
        Pr[ |A_i| > (1+c{\ln n}) E[|A_i|]] 
        &\leq & e^{-\frac{(c{\ln n})^2 E[|A_i|]}{2+c{\ln n}}}\\
        
        &\leq& e^{-\frac{c{\ln n}}{2}}  & = &n^{-\frac{c}{2}}.          
    \end{array}
    \end{equation*}

\end{proof}

\begin{lemma}
    \label{lemma size whp 2}
    Let $i\in [i_0+1,\ell]$. The expected size of $|A_i|$ is $n^{1 + \frac{1}{\kappa} - ( i-i_0 )\rho}$. 

    Moreover, for every $i\in [i_0+1,\ell]$, we have that w.h.p. the size of $A_i$ is at most $n^{1 + \frac{1}{\kappa} - ( i-i_0 )\rho}\cdot {\log n}$.
\end{lemma}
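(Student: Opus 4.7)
I would prove Lemma \ref{lemma size whp 2} in direct parallel to Lemma \ref{lemma size whp 1}, exploiting that for $i \geq i_0+1$ the sampling probability is capped at the simpler value $1/n^{\rho}$. The plan splits naturally into (i) an expected-size induction whose base case glues onto the output of Lemma \ref{lemma size whp 1}, followed by (ii) a Chernoff bound on the concentration.

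\textbf{Expected size.} I would induct on $i$, starting at $i = i_0 + 1$. For the base case, Lemma \ref{lemma size whp 1} gives $E[|A_{i_0+1}|] \leq n^{1 - (2^{i_0+1}-1)/\kappa}$. Since $i_0 = \lfloor \log(\kappa\rho) \rfloor$ forces $\kappa\rho \leq 2^{i_0+1}$, i.e.\ $2^{i_0+1}/\kappa \geq \rho$, rearranging yields $(2^{i_0+1}-1)/\kappa \geq \rho - 1/\kappa$ and therefore $E[|A_{i_0+1}|] \leq n^{1 + 1/\kappa - \rho}$, matching the target at $i = i_0+1$. For the inductive step, every vertex of $A_i$ is independently sampled into $A_{i+1}$ with probability $1/deg_i = 1/n^{\rho}$ (as $i \geq i_0+1$), so by linearity of expectation
\[
E[|A_{i+1}|] = E[|A_i|]/n^{\rho} \leq n^{1 + 1/\kappa - (i+1-i_0)\rho}.
\]

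\textbf{Concentration.} Since the sampling decisions at every level are made independently per vertex, the event $\{v \in A_i\}$ is Bernoulli with parameter $\prod_{j=0}^{i-1} 1/deg_j$ and these indicators are mutually independent over $v \in V$. Thus $|A_i|$ is a sum of $n$ independent Bernoullis, and I would apply exactly the same multiplicative Chernoff bound used in the proof of Lemma \ref{lemma size whp 1} with deviation $\delta = c\ln n$ for a sufficiently large constant $c$: this gives $\Pr[|A_i| > (1+c\ln n) E[|A_i|]] \leq n^{-c/2}$. To legalize the Chernoff step I need to know $E[|A_i|] \geq 1$, or equivalently, that the exponent $1 + 1/\kappa - (i-i_0)\rho$ is non-negative throughout $[i_0+1,\ell]$; this follows from $\ell - i_0 = \lceil (\kappa+1)/(\kappa\rho)\rceil - 1 \leq (\kappa+1)/(\kappa\rho)$, so $(\ell-i_0)\rho \leq 1 + 1/\kappa$. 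Combining the two bounds yields $|A_i| \leq n^{1 + 1/\kappa - (i-i_0)\rho} \cdot \log n$ w.h.p., as claimed.

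\textbf{Expected difficulty.} There is no deep obstacle; the only subtle ingredient is the arithmetic at the transition index $i = i_0+1$, where the output of Lemma \ref{lemma size whp 1} has to be algebraically rewritten in the new ``capped'' form, and this is precisely what the definition $i_0 = \lfloor \log(\kappa\rho)\rfloor$ is tailored to make work. A minor bookkeeping point is ensuring that the Chernoff bound is applicable all the way up to $i = \ell$, which amounts to checking that the choice $\ell = \emuell$ keeps $E[|A_\ell|] \geq 1$; the ceiling estimate above handles this uniformly.
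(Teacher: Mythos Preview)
Your proposal is correct and follows essentially the same approach as the paper's own proof: induction on $i$ with the base case at $i = i_0+1$ glued onto Lemma~\ref{lemma size whp 1} via the arithmetic $2^{i_0+1} \geq \kappa\rho$, the trivial inductive step using $deg_i = n^{\rho}$, and the identical Chernoff bound with $\delta = c\ln n$ after verifying $E[|A_i|]\geq 1$ through the ceiling estimate on $\ell - i_0$. The only cosmetic difference is that the paper's inequality chain for the base case routes through $n^{1-(\kappa\rho-1)/\kappa}$ explicitly, while you go straight to $1+1/\kappa-\rho$; the content is the same.
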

\begin{proof}
    
    The proof is by induction on the index $i$.
    For $i= i_0+1$, by Lemma \ref{lemma size whp 1} we have that the expected size of $A_i$ is 
    \begin{equation*}
      \begin{array}{lclclclclclclclcl}
           n^{1-\frac{2^{i_0+1}-1}{\kappa}} 
           &\leq& n^{1-\frac{2^{\izo+1}-1}{\kappa}} 
           &\leq& n^{1-\frac{\kappa\rho-1}{\kappa}}\\
           &\leq& n^{1 + \frac{1}{\kappa} - \rho}
           &\leq& n^{1 + \frac{1}{\kappa} - ( i-i_0 )\rho}.
      \end{array}  
    \end{equation*}
    
    Moreover,  the size of $A_i$ is at most $n^{1 + \frac{1}{\kappa} - ( i-i_0 )\rho}\cdot {\log n}$, w.h.p. 

    Assume the claim holds for $i\in  [i_0+1,\ell-1]$, and prove it holds for $i+1$. By the induction hypothesis, 
    $E[|A_i|] = n^{1 + \frac{1}{\kappa} - ( i-i_0 )\rho}$. A vertex $v\in A_i$ is sampled to $A_{i+1}$ with probability $ deg_i = n^\rho$. Thus, the expected number of vertices that belong to $A_{i+1}$ is 
    \begin{equation*}
    \begin{array}{lclclclclclc}
        E[|A_{i+1}|] & = & 
        E[|A_i|]/deg_i & = &
        n^{1 + \frac{1}{\kappa} - ( i-i_0 )\rho}/ n^\rho 
        \\ &&&= & n^{1 + \frac{1}{\kappa} - ( i+1-i_0 )\rho}. 
    \end{array}
    \end{equation*}

    We now prove the second assertion of the lemma. Observe that the expected size of $|A_i|$ is at least $1$, since  
    \begin{equation*}
    \begin{array}{lclclclclclc}
        1 + \frac{1}{\kappa} - ( i+1-i_0 )\rho
        &\geq& 1 + \frac{1}{\kappa} - ( \ell-i_0 )\rho
        \\&\geq& 1 + \frac{1}{\kappa} - ( \lceil \frac{\kappa+1}{\kappa\rho}\rceil  -1)\rho
        \\&\geq& 1 + \frac{1}{\kappa} - ( \frac{\kappa+1}{\kappa\rho}  )\rho
        \\&\geq& 
        1 + \frac{1}{\kappa} - \frac{\kappa+1}{\kappa} 
        
        &\geq& 0
    \end{array}
    \end{equation*}

    Let $c>2$ be a constant.  From a Chernoff bound, for $\delta = c{\ln n}$ we have 
    \begin{equation*}
    \begin{array}{lclclclclclclclc}
        Pr[ |A_i| > (1+c{\ln n}) E[|A_i|]] 
        &\leq & e^{-\frac{(c{\ln n})^2 E[|A_i|]}{2+c{\ln n}}} 
        \\
         &\leq& e^{-\frac{c{\ln n}}{2}}  & = & n^{-\frac{c}{2}}.          
    \end{array}
    \end{equation*}

\end{proof}

Observe that Lemma \ref{lemma size whp 2} implies that for  $i= \ell$ we have
\begin{equation}
    \label{eq Aell size}
    \begin{array}{lclclclclc}
         E[|A_{\ell}|] 
        & = & n^{1 + \frac{1}{\kappa} - ( \ell-i_0 )\rho} 
        \\& \leq  & n^{1 + \frac{1}{\kappa} - (\lceil \frac{\kappa+1}{\kappa\rho}\rceil  -1 )\rho} 
         \\& \leq  & n^{1 + \frac{1}{\kappa} -  \frac{\kappa+1}{\kappa}  +\rho} 
        &= & n^\rho,
    \end{array}
\end{equation}

and also for any constant $c>2$  we have
\begin{equation}
    \label{eq Aell size whp}
    \begin{array}{lclclclclc}
         Pr[ |A_\ell| > (1+c{\ln n}) n^\rho] 
        &\leq & n^{-\frac{c}{2}}.     
    \end{array}
\end{equation}

Next, we provide an upper bound on the expected size of $B(u)$ for  $i$-sparse vertices $u\in \aismai$.
Recall that for every vertex $u\in \aismai$, the bunch of $u$ is defined to be $B(u) = \{v\in A_i\  |\ \dghb(u,v)< \apdi \} $.
\begin{lemma}
    \label{lemma isprse bunch}
    Let $i\in [0,\ell]$, and let $u\in \aismai$ be an $i$-sparse vertex. Then, the expected size of $B(u)$ is at most $deg_i$, Moreover, w.h.p. the size of $B(u)$ is $O(deg_i\cdot {\log n})$.
\end{lemma}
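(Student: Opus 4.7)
The plan is to dominate $|B(u)|$, on the event that $u$ is $i$-sparse, by a geometric random variable with success probability $1/deg_i$, following the standard Thorup--Zwick-style analysis. I would condition on the realization of $A_i$ and on $u \in \aismai$, and enumerate the elements of $A_i$ in non-decreasing $\dghb$-distance from $u$, say $u = v_0, v_1, v_2, \ldots$. Once $A_i$ is fixed the enumeration is deterministic, while each $v_j$ with $j \geq 1$ remains an independent Bernoulli$(1/deg_i)$ trial for inclusion in $A_{i+1}$.

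Define $P = \min\{j \geq 1 : v_j \in A_{i+1}\}$ (set $P = \infty$ if no such $j$ exists). Then $P$ is a truncation of a geometric random variable with parameter $1/deg_i$, so $E[P] \leq deg_i$. I would argue that if $u$ is $i$-sparse then $|B(u)| \leq P$: otherwise $v_P \in A_{i+1} \cap B(u)$ would contradict $i$-sparseness, so $\dghb(u, v_P) > \apdi$; because the $v_j$'s are sorted by distance, every $v_j$ with $j \geq P$ also lies outside the $\apdi$-ball, forcing $B(u) \subseteq \{v_0, \ldots, v_{P-1}\}$ and hence $|B(u)| \leq P$. The resulting stochastic domination $|B(u)| \cdot \mathbf{1}[u \text{ is } i\text{-sparse}] \leq P$ then yields $E[|B(u)| \cdot \mathbf{1}[u \text{ is } i\text{-sparse}]] \leq E[P] \leq deg_i$, which is the expected-size bound.

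For the high-probability bound I would use the geometric tail directly: $\Pr[P > c \cdot deg_i \cdot \ln n] = (1 - 1/deg_i)^{c \cdot deg_i \cdot \ln n} \leq e^{-c \ln n} = n^{-c}$ for any constant $c > 0$. Since $|B(u)| \leq P$ on the $i$-sparse event, this yields $|B(u)| = O(deg_i \cdot \log n)$ w.h.p.

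The main subtlety is disentangling the two layers of randomness cleanly: $|B(u)|$ depends on $A_i$ (through the $\dghb$-distances and ball membership), while the $i$-sparse event depends on the subsequent independent sampling of $A_{i+1}$ from $A_i$. Conditioning on $A_i$ makes the distance-sorted enumeration deterministic and exposes the clean geometric structure of the first-success index $P$; averaging over $A_i$ then preserves both the expected-value and high-probability bounds, because the domination $|B(u)|\cdot\mathbf{1}[u \text{ is } i\text{-sparse}] \leq P$ holds pointwise. A minor bookkeeping point is that $v_0 = u$ is in $\aismai$ by assumption, so it is excluded from $A_{i+1}$ and does not interfere with the geometric law of $P$.
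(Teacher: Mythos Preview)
Your geometric-variable argument is essentially the same as the paper's for $i \in [0,\ell-1]$, and your use of the geometric tail for the high-probability bound is cleaner than the paper's Chernoff phrasing. However, there is a genuine gap at the top level $i = \ell$.

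Your argument relies on the claim that, conditional on $A_i$, each $v_j \in A_i \setminus \{u\}$ is independently placed in $A_{i+1}$ with probability $1/deg_i$. This is true for $i \leq \ell-1$, but for $i = \ell$ the set $A_{\ell+1}$ is \emph{defined} to be $\emptyset$ (see Algorithm~\ref{alg sample}), not obtained by sampling. Consequently $P = \infty$ deterministically, every vertex of $A_\ell$ is $\ell$-sparse, and your bound $|B(u)| \leq P$ is vacuous; neither the expectation bound nor the tail bound follows. The paper handles this case separately: since $B(u) \subseteq A_\ell$, one simply invokes the earlier bounds $E[|A_\ell|] \leq n^\rho = deg_\ell$ and $|A_\ell| = O(n^\rho \log n)$ w.h.p.\ (equations~\eqref{eq Aell size} and~\eqref{eq Aell size whp}). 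Adding this one-line case split at the start of your proof closes the gap.
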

\begin{proof}
      First, consider the case where $i= \ell$.  Let $u\in A_\ell$. Note that $B(u) \subseteq A_\ell$. By eq. \ref{eq Aell size} we have that  $E[|A_\ell|]\leq n^\rho= deg_\ell$. In addition, by eq. \ref{eq Aell size whp}, we have  $Pr[|A_\ell| > (1+c{\ln n}) deg_\ell] \leq  n^{-c/2}$ for any constant $c>2$.  Thus the claim holds for $i=\ell$.   

    Consider now the case where $i\in [0,\ell-1]$. Let $u\in A_{i}\setminus A_{i+1}$. 
    Since $u$ is $i$-sparse, we have $B(u)\cap A_{i+1} = \emptyset$.  %Intuitively, if the set $B(u)$ contains more than $deg_i$ vertices, then we expect one of them to be in $A_{i+1}$. 
    Let $v$ be the vertex in $A_{i+1}$ with minimal $\dghb(u,v)$. The expected number of vertices $u'\in A_i$ with $\dghb(u,u')\leq \dghb(u,v)$ is $deg_i$, since each vertex of $A_i$ is sampled to $A_{i+1}$ with probability $1/deg_i$. Observe that for every vertex $u'\in B(u)$, we have $\dghb(u,u')\leq \dghb(u,v)$. Hence, if $B(u)\cap A_{i+1} = \emptyset$, then the expected size of $B(u)$ is at most $deg_i$. 

    For the second assertion of the lemma, from a Chernoff bound, we get that 
    $Pr[|B(u)| > (1+\delta)deg_i ] \leq e^{-\frac{\delta^2 deg_i}{2+\delta}} $ for any $\delta >0$. Recall that $deg_i \geq \nfrac \geq  1$. Choosing $\delta = c{\ln n }$ shows that 
    \begin{equation*}
    \begin{array}{lclclclc}
         Pr[|B(u)| > \jdef] & \leq & 
         e^{-\frac{\delta^2 deg_i}{2+\delta}} 
         \\&\leq & e^{-\frac{\jdef}{2}} 
         \\& \leq & n^{-\frac{c\cdot deg_i}{2}} \leq n^{-c/2}.
    \end{array}
    \end{equation*}    
\end{proof}

 The following lemma provides an upper bound on the expected number of edges added to the set $Q$ by vertices of $\aismai$.

\begin{lemma}\label{lemma vx contr size}
    Let $i\in [0,\ell]$. For every $u\in A_i\setminus A_{i+1}$, the expected number of edges added to the set $Q$ for the vertex $u$ is at most $deg_i$. 
    Moreover, w.h.p. the number of edges added to the set $Q$ for the vertex $u$ is  $O(deg_i\cdot {\log n})$.  
\end{lemma}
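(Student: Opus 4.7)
The plan is to split the analysis into two cases according to whether $u$ is $i$-dense or $i$-sparse, since the edge selection procedure in Algorithm \ref{alg connect} behaves very differently in the two cases. This is a completely standard split for the Thorup--Zwick style of analysis, and the bulk of the work has already been done in Lemma \ref{lemma isprse bunch}.

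In the $i$-dense case, the vertex $u$ selects exactly one pivot $p(u) \in A_{i+1} \cap B(u)$ and adds a single edge $\{u,p(u)\}$ to $Q$. Since $deg_i \geq 1$ (it is either $\degi \geq 1$ or $n^{\rho} \geq 1$), this contribution of $1$ trivially satisfies both the expectation bound $deg_i$ and the high-probability bound $O(deg_i \cdot \log n)$.

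In the $i$-sparse case, $u$ adds to $Q$ one edge to each vertex in its close-bunch $CB(u) \subseteq B(u)$. Hence the number of edges contributed by $u$ is at most $|B(u)|$, and Lemma \ref{lemma isprse bunch} directly gives $\mathbb{E}[|B(u)| \mid u \text{ is } i\text{-sparse}] \leq deg_i$ and $|B(u)| = O(deg_i \log n)$ w.h.p. when $u$ is $i$-sparse. Combining the two cases by the law of total expectation,
\begin{equation*}
\mathbb{E}[\#\text{edges for } u] \;\leq\; \Pr[u \text{ dense}] \cdot 1 + \Pr[u \text{ sparse}] \cdot deg_i \;\leq\; deg_i,
\end{equation*}
using $deg_i \geq 1$. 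The high-probability statement follows by taking a union bound over the two cases, since in either case the number of edges is $O(deg_i \log n)$ w.h.p.

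There is essentially no obstacle: the whole statement is a direct corollary of Lemma \ref{lemma isprse bunch} together with the observation that $i$-dense vertices contribute only one edge. The only minor care needed is to note that the events ``$u$ is $i$-sparse'' and ``$|B(u)|$ is large'' are correlated, but Lemma \ref{lemma isprse bunch} is already phrased conditionally on $u$ being $i$-sparse (its proof explicitly assumes $B(u) \cap A_{i+1} = \emptyset$), so the conditional expectation and conditional tail bound can be used directly in the case split above.
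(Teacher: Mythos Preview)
Your proposal is correct and follows essentially the same approach as the paper: split into the $i$-dense case (one edge) and the $i$-sparse case (at most $|CB(u)| \leq |B(u)|$ edges, bounded via Lemma~\ref{lemma isprse bunch}). Your write-up is in fact slightly more careful than the paper's in explicitly invoking the law of total expectation and noting that Lemma~\ref{lemma isprse bunch} is conditional on $u$ being $i$-sparse.
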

\begin{proof}
    Let $u\in \aismai$. The analysis splits into two cases. 

    \textbf{Case 1:} The vertex $u$ is $i$-dense. In this case, the algorithm adds to the set $Q$ only a single edge from the vertex $u$ to its pivot and the claim holds. 

    \textbf{Case 2:} The vertex $u$ is $i$-sparse. In this case, the algorithm adds to the set $Q$ edges from $u$ to all vertices in its close-bunch, i.e., all vertices in $CB(u) = \{v\in A_i\  |\ \dghb(u,v)< \apdi/2 \} $. Observe that $CB(u)\subseteq B(u)$. By Lemma \ref{lemma isprse bunch} we have that the expected size of $B(u)$ is at most $deg_i$. In addition, w.h.p., the size of $B(u)$ is $O(deg_i\cdot {\log n})$. 
\end{proof}

We are now ready to provide an upper bound on the size of the set $Q$. 
\begin{lemma}\label{lemma fin size}
   The expected size of the set $Q$ is  $O( \nfrac\cdot {\log (\kappa\rho)})$. 

   Moreover, the size of $Q$ is $O( \nfrac\cdot {\log ^2 n}\cdot {\log (\kappa\rho)})$ w.h.p.
\end{lemma}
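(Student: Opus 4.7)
The plan is to decompose the total size of $Q$ by the level $i$ of the hierarchy, apply the per-vertex bound from Lemma~\ref{lemma vx contr size}, and sum over $i$. Specifically, $|Q| = \sum_{i=0}^{\ell} \sum_{u \in A_i \setminus A_{i+1}} (\text{edges contributed by } u)$, and by Lemma~\ref{lemma vx contr size} the inner sum has expected value at most $|A_i| \cdot deg_i$ (in expectation). So it suffices to bound $\sum_{i=0}^{\ell} E[|A_i|] \cdot deg_i$.

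I would split the sum into the two ranges dictated by the definition of $deg_i$. For $i \in [0, i_0]$, Lemma~\ref{lemma size whp 1} gives $E[|A_i|] = n^{1 - (2^i - 1)/\kappa}$ and $deg_i = n^{2^i/\kappa}$, so the product telescopes to exactly $n^{1 + 1/\kappa}$ at every level. Since $i_0 + 1 = \lfloor \log(\kappa \rho) \rfloor + 1 = O(\log(\kappa\rho))$, this range contributes $O(n^{1+1/\kappa} \cdot \log(\kappa\rho))$. For $i \in [i_0+1, \ell]$, Lemma~\ref{lemma size whp 2} gives $E[|A_i|] \le n^{1 + 1/\kappa - (i - i_0)\rho}$ and $deg_i = n^\rho$, so the product is at most $n^{1 + 1/\kappa - (i - i_0 - 1)\rho}$. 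This is a geometric series with ratio $n^{-\rho} \le 1/2$, so the total over this range is dominated by the first term and is $O(n^{1+1/\kappa})$. Adding the two ranges yields the expected bound $O(n^{1+1/\kappa} \cdot \log(\kappa\rho))$.

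For the high-probability bound, I would repeat the same decomposition but use the w.h.p.\ bounds. By Lemmas~\ref{lemma size whp 1} and~\ref{lemma size whp 2}, w.h.p.\ each $|A_i|$ is at most its expectation times $\log n$. By Lemma~\ref{lemma vx contr size}, w.h.p.\ each vertex $u \in A_i \setminus A_{i+1}$ contributes $O(deg_i \log n)$ edges. Taking a union bound over the $O(n)$ vertices and $O(\log n)$ levels (each event fails with probability $n^{-\Omega(1)}$), both bounds hold simultaneously, and multiplying the two extra logarithmic factors into the expectation calculation gives $O(n^{1+1/\kappa} \cdot \log^2 n \cdot \log(\kappa\rho))$, as claimed.

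The main obstacle is just careful indexing across the two ranges and verifying that the exponent in the second range is indeed non-increasing in $i$ so the geometric-series argument is valid; this reduces to checking that $i_0 + 1$ is the correct boundary where $n^{2^i/\kappa}$ first exceeds $n^\rho$, which is immediate from the definition $i_0 = \lfloor \log(\kappa\rho) \rfloor$. Once the per-level product is written down, no nontrivial calculation remains.
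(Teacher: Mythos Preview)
Your proposal is correct and follows essentially the same approach as the paper: split the sum at $i_0$, use Lemmas~\ref{lemma size whp 1}, \ref{lemma size whp 2}, and~\ref{lemma vx contr size} to get a per-level product of $n^{1+1/\kappa}$ in the first range and a geometric series in the second, then pick up the two extra $\log n$ factors for the high-probability statement. The paper's computation is written out a bit more explicitly but there is no meaningful difference in strategy or in the calculation.
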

%==================
\begin{proof}
    Let $i\in [0,i_0]$. 
    By Lemmas \ref{lemma size whp 1} and \ref{lemma vx contr size}, the expected number of edges added to $Q$ by vertices of $A_i\setminus A_{i+1}$ is at most     
    \begin{equation*}
        \begin{array}{lclclclc}     
           E[|A_i|]\cdot deg_i & = & n^{1-\frac{2^{i}-1}{\kappa}}\cdot \degi & = & \nfrac. 
            \end{array}
    \end{equation*}

    Let $i\in [i_0+1,\ell]$.  By Lemmas \ref{lemma size whp 2} and \ref{lemma vx contr size}, the expected number of edges added to $Q$ by vertices of $A_i\setminus A_{i+1}$ is at most     
    \begin{equation*}
        \begin{array}{lclclclc}     
           E[|A_i|]\cdot deg_i & = & 
           n^{1 + \frac{1}{\kappa} - ( i-i_0 )\rho} \cdot n^\rho 
           & = & n^{1 + \frac{1}{\kappa} - ( i-i_0-1 )\rho}. 
            \end{array}
    \end{equation*}

     Recall that $i_0= \izo$. To summarize, the expected size of $Q$ is at most 
     \begin{equation*}
        \begin{array}{lclclclc}     
           E[|A|] & =  & (i_0+1)\nfrac+ \sum\limits_{j= i_0+1}^{\ell}
            n^{1 + \frac{1}{\kappa} - ( j-i_0-1 )\rho} \\& = & (\izo+2)\nfrac, 
            \end{array}
    \end{equation*}
Where the last inequality holds since

     \begin{equation*}
        \begin{array}{lclclclc}     
            n^{1 + \frac{1}{\kappa}}\cdot \sum\limits_{j= i_0+1}^{\ell}
            n^{- ( j-i_0-1 )\rho} & \leq  & n^{1 + \frac{1}{\kappa}} .
            \end{array}
    \end{equation*}

     For the second assertion of the lemma, by Lemmas  \ref{lemma size whp 1} and \ref{lemma vx contr size},  the number of edges added to $Q$ by vertices of $A_i\setminus A_{i+1}$ is w.h.p. at most 
     \begin{equation*}
        \begin{array}{lclclclc}     
           E[|A_i|] \cdot  deg_i \cdot O({\log^2 n})  = &\\ 
            \begin{cases}
                O(\nfrac\cdot {\log ^2 n}),  & i\in [0,i_0] \\
                 O(n^{1 + \frac{1}{\kappa} - ( i-i_0-1 )\rho}\cdot {\log ^2 n}) & i\in [i_0+1,\ell]
\end{cases}
            \end{array}
    \end{equation*}

     It follows that the size of $Q$ is w.h.p. at most 
     \begin{equation*}
        \begin{array}{lclclclc}     
            &&O(i_0\cdot\nfrac\cdot {\log ^2 n}+ \sum\limits_{j= i_0+1}^{\ell}
            n^{1 + \frac{1}{\kappa} - ( j-i_0-1 )\rho}\cdot {\log^2 n}) \\
            & = & O( \nfrac\cdot {\log ^2 n}\cdot {\log (\kappa\rho)}). 
            \end{array}
    \end{equation*}

\end{proof}

%==================
%==================
%==================
%==================

%================

\subsection{Analysis of Stretch}
\label{sec gen stretch} 
Here, we analyze the stretch of set $Q$. Unlike previous constructions, which focus solely on near-exact hopsets or near-additive emulators, our analysis demonstrates the dual functionality of $Q$ as both a near-exact hopset and a near-exact emulator.
Generally speaking, we show here that for pairs of vertices $u,v\in V$ with $d_G(u,v)\leq \eps{\ell}$, the set $Q$ contains a path with a small number of edges, that provides a good approximation for the distance in $G$ between $u,v$. This approach is also often used when using the superclustering-and-interconnection approach \cite{ElkinP01} or the Thourup-Zwich approach \cite{ThorupZ06} for constructing near-additive spanners or emulators and near-exact hopsets.  However, there are some technical differences between our analysis and the analysis of previous constructions.  

Most importantly,  previous constructions were used exclusively to build near-additive spanners, near-additive emulators, or near-exact hopsets. Thus, in the analysis of previous work, some of the properties of the output set are omitted. To the best of our knowledge, this is the first work to provide the general properties of the output, which essentially satisfy the stretch requirement in the context of both near-additive emulators and near-exact hopsets. This makes our argument a bit more elaborate than previous arguments, but allows us to provide a unified view on the construction. 

Next, we provide some definitions which are necessary for the analysis of the stretch. Let $v\in V$. Recall that a vertex $v\in A_i\setminus A_{i+1}$ is said to be $i$-dense if it is connected with its pivot. We say that $v$ is $0$-clustered and denote $p_0(v)=v$. For $i>0$, we say that $v$ is $i$-clustered if $v$ is $(i-1)$-clustered, and $p_{i-1}(v)$ is $(i-1)$-dense. In this case, we define $p_i(v)$ to be the pivot of the vertex $p_{i-1}(v)$, i.e., $p_i(v) = p(p_{i-1}(v))$.
Observe that each $i$-clustered vertex is also $i'$-clustered for all $0\leq i'\leq i$.

Recall that $R_0 = 0$ and  $R_{i} = (1+\epsilon_H)\delta_{i-1}+ R_{i-1}$  for $i\in [0,\ell]$. We now show that for every $i$-clustered vertex $v\in V$, the set $Q$ contains a $v-p_i(v)$ path of weight up to $R_i$ and at most $i$ edges. 

\begin{lemma}\label{lemma i clustered ri hop}
    Let $i\in [0,\ell]$, and let $v$ be an $i$-clustered vertex. Then, $$d^{(i)}_{Q}(v,p_i(v))\leq R_i.$$
\end{lemma}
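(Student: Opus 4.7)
The plan is to prove this by induction on $i$, unwinding the recursive definition of ``$i$-clustered'' one level at a time and accumulating one hop of weight at most $(1+\epsilon_H)\delta_{i-1}$ each step.

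\textbf{Base case} ($i = 0$). By definition, $v$ is always $0$-clustered with $p_0(v) = v$, so $d^{(0)}_Q(v, p_0(v)) = 0 = R_0$, matching the claim trivially.

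\textbf{Inductive step.} Assume the statement holds for $i-1 \geq 0$, and let $v$ be $i$-clustered. By the definition of $i$-clustered, (a) $v$ is also $(i-1)$-clustered, and (b) the vertex $w := p_{i-1}(v)$ is $(i-1)$-dense, with $p_i(v) = p(w)$. By the induction hypothesis applied to $v$ at level $i-1$, there is a path in $Q$ from $v$ to $w$ with at most $i-1$ edges and total weight at most $R_{i-1}$; that is, $d^{(i-1)}_Q(v, w) \leq R_{i-1}$. Because $w$ is $(i-1)$-dense, the edge-selection procedure (Algorithm \ref{alg connect}, step \ref{step bf1}) added the edge $\{w, p(w)\} = \{w, p_i(v)\}$ to $Q$ with weight $\dghb(w, p_i(v)) \leq (1+\epsilon_H)\delta_{i-1}$.

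Concatenating this single edge to the inductive path yields a path in $Q$ from $v$ to $p_i(v)$ of at most $i$ hops and total weight at most
\[
d^{(i-1)}_Q(v, w) + (1+\epsilon_H)\delta_{i-1} \leq R_{i-1} + (1+\epsilon_H)\delta_{i-1} = R_i,
\]
by the recursive definition of $R_i$. This establishes $d^{(i)}_Q(v, p_i(v)) \leq R_i$ and completes the induction. I do not expect any real obstacle here: the main things to double-check are that the edge added in step \ref{step bf1} indeed has weight bounded by $(1+\epsilon_H)\delta_{i-1}$ (which follows directly from the $i$-dense condition $\dghb(u,v) \leq (1+\epsilon_H)\delta_i$ applied at level $i-1$) and that the hop count accumulates correctly, both of which are immediate from the construction.
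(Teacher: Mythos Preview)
Your proof is correct and follows essentially the same approach as the paper's: induction on $i$, with the inductive step concatenating the $(i-1)$-hop path from $v$ to $p_{i-1}(v)$ (given by the hypothesis) with the single pivot edge $\{p_{i-1}(v), p_i(v)\}$ of weight at most $(1+\epsilon_H)\delta_{i-1}$, yielding total weight $R_{i-1} + (1+\epsilon_H)\delta_{i-1} = R_i$ within $i$ hops.
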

\begin{proof}
    The proof is by induction on the index $i$. For $i=0$, by definition, we have $p_0(v)=0$, and so the claim holds. Let $i>0$. Assume that the claim holds for $i-1$ and prove it also holds for $i$.

  Let $v$ be an $i$-clustered vertex. Let $u= p_{i-1}(v)$. By the induction hypothesis, we have that
   \begin{equation}\label{eq ri-1 hop}
    d^{(i-1)}_{Q}(v,u)\leq R_{i-1}.   
   \end{equation}
    
   Note that the vertex $u$ has added the edge $\{u,p_i(v)\}$ to the set  $Q$ with weight $\dghb(u,p_i(v))$. Also, note that 
   $\dghb(u,p_i(v))\leq (1+\epsilon_H)\cdot \delta_{i-1}$. 
   Together with eq. \ref{eq ri-1 hop}, we conclude that 
    \begin{equation*}\label{eq v ci(v) hop}
    d^{(i)}_{Q}(v,p_i(v)) \leq d^{(i-1)}_Q(v,u)  + d^{(1)}_Q(u,p_i(v) \leq 
    R_{i-1}+ (1+\epsilon_H)\cdot \delta_{i-1} = R_{i}.   
   \end{equation*}
\end{proof}

We now provide an explicit upper bound on $R_i$. Recall that $\delta_i = \deltai$ for all $i\in [0,\ell]$.  Assume that $\epsilon_H\leq 1/10$.

\begin{lemma}
    \label{lemma bound ri}
    For every $i\in [0,\ell]$, we have $R_i <  (1+\epsilon_H)\alpha\cdot \sum_{j=0}^{i-1}\eps{j}\cdot 4^{i-1-j}$. 
\end{lemma}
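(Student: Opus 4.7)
The plan is to prove the claim by straightforward induction on $i$. The base case $i = 0$ is immediate since $R_0 = 0$ and the sum on the right-hand side is empty. For the inductive step, I would first substitute the definition $\delta_{i-1} = \alpha\cdot(1/\epsilon)^{i-1} + 4R_{i-1}$ into the recurrence $R_i = (1+\epsilon_H)\delta_{i-1} + R_{i-1}$ to obtain
\[
R_i \;=\; (1+\epsilon_H)\alpha\,(1/\epsilon)^{i-1} \;+\; \bigl(4(1+\epsilon_H)+1\bigr)\,R_{i-1}.
\]
Then I would apply the inductive hypothesis to bound $R_{i-1}$ and compare with the decomposition of the target expression
\[
(1+\epsilon_H)\alpha \sum_{j=0}^{i-1} (1/\epsilon)^j \cdot 4^{i-1-j} \;=\; (1+\epsilon_H)\alpha\,(1/\epsilon)^{i-1} + 4\cdot(1+\epsilon_H)\alpha \sum_{j=0}^{i-2} (1/\epsilon)^j \cdot 4^{i-2-j}.
\]
The $(1+\epsilon_H)\alpha(1/\epsilon)^{i-1}$ terms match exactly, and the remaining summation on the right is precisely $4$ times the inductive hypothesis bound on $R_{i-1}$. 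Thus the desired inequality reduces to showing that the coefficient $4(1+\epsilon_H)+1$ in front of $R_{i-1}$ is dominated by the factor $4$ arising from the exponent shift in the target sum.

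The delicate point, and what I expect to be the main obstacle, is exactly this constant mismatch: a naive unrolling gives a per-level geometric growth factor of $5 + 4\epsilon_H$ rather than the claimed $4$. To close the induction I would strengthen the hypothesis to a tighter form — for instance by peeling off the leading $(1/\epsilon)^{i-1}$ term and treating the rest as a geometric tail — then use the assumption $\epsilon_H \le 1/10$ to quantify the excess, and exploit the strict-inequality slack in the statement. Concretely, for small $\epsilon$ the leading $(1/\epsilon)^{i-1}$ term dominates the sum and absorbs the extra factor, whereas for larger $\epsilon$ the geometric tail decays quickly enough that the gap between $5 + 4\epsilon_H$ and $4$ can be charged to the $(1+\epsilon_H)$ prefactor already present in the target bound. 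Once this calibration is in place, term-by-term comparison of the unrolled expression for $R_i$ with $(1+\epsilon_H)\alpha\sum_{j=0}^{i-1}(1/\epsilon)^j 4^{i-1-j}$ yields the strict inequality, completing the induction.
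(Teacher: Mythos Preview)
Your induction setup and the identification of the obstacle are both correct: substituting $\delta_{i-1}=\alpha(1/\epsilon)^{i-1}+4R_{i-1}$ into the recurrence gives
\[
R_i \;=\; (1+\epsilon_H)\alpha(1/\epsilon)^{i-1} \;+\; (5+4\epsilon_H)\,R_{i-1},
\]
and the multiplier $5+4\epsilon_H$ genuinely exceeds the $4$ that the target bound requires. However, your proposed rescue --- strengthening the hypothesis and charging the excess to strict-inequality slack --- cannot work. Already at $i=2$ one computes $R_2=(1+\epsilon_H)\alpha/\epsilon+(5+4\epsilon_H)(1+\epsilon_H)\alpha$, while the claimed bound is $(1+\epsilon_H)\alpha(1/\epsilon+4)$; dividing through by $(1+\epsilon_H)\alpha$ the required inequality becomes $5+4\epsilon_H<4$, which is false for every $\epsilon_H\geq 0$. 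So with the coefficient $4$ in the definition of $\delta_i$ the lemma is simply false, and no repackaging of the sum will close the gap.

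The paper's own proof does not encounter this obstacle because at the expansion step it substitutes $\delta_i=\alpha(1/\epsilon)^i+2R_i$ rather than $+4R_i$ (there is a discrepancy between the displayed definition of $\delta_i$ and the expansion used inside this proof). With the coefficient $2$ one gets
\[
R_{i+1}=(1+\epsilon_H)\alpha(1/\epsilon)^i+(3+2\epsilon_H)R_i,
\]
and since $\epsilon_H\leq 1/10$ gives $3+2\epsilon_H\leq 3.2<4$, the bound $R_{i+1}\leq (1+\epsilon_H)\alpha(1/\epsilon)^i+4R_i$ follows immediately, and the induction closes in two lines. The missing ingredient in your argument is therefore not a subtler analysis but simply the correct (smaller) coefficient in $\delta_i$; once that is in place, your first displayed decomposition already finishes the proof.
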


\begin{proof}
    The proof is by induction on the index $i$. For $i=0$, recall that $R_0=0$, and so the claim holds. 
    Let $i\geq 0$. Assume that the claim holds for $i$ and prove it for $i+1$.
    Recall that $\epsilon_H \leq 1/10$. By definition, we have

    \begin{equation*}
        \begin{array}{lclclclclclcl}
    R_{i+1} & =& (1+\epsilon_H)\delta_{i}+ R_{i} \\& =& 
         (1+\epsilon_H)\left(\alpha\eps{i}+2R_i\right)+ R_{i} 
    \\& \leq & (1+\epsilon_H)\alpha\eps{i}+ 4R_{i} .
        \end{array}
    \end{equation*}
    Together with the induction hypothesis, we have
    \begin{equation*}
        \begin{array}{lclclclclclcl}
             R_{i+1} & \leq  &   (1+\epsilon_H)\alpha\eps{i}+ 4R_{i}\\
             & \leq &(1+\epsilon_H)\alpha\eps{i}+ 4 \cdot (1+\epsilon_H)\alpha\cdot \sum_{j=0}^{i-1}\eps{j}\cdot 4^{i-1-j} \\
             & < &  (1+\epsilon_H)\alpha\cdot \sum_{j=0}^{i}\eps{j}\cdot 4^{i-j}.
        \end{array}
    \end{equation*}
\end{proof}

\begin{lemma}
    \label{lemma exp ri}
    Let $\epsilon< 1/10$. Then, $R_i \leq \frac{22}{10}\alpha\eps{i-1}$.
\end{lemma}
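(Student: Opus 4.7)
The plan is to apply the closed-form bound from Lemma \ref{lemma bound ri} and then recognize the resulting sum as a truncated geometric series. Starting from
\[ R_i < (1+\epsilon_H)\alpha\cdot\sum_{j=0}^{i-1}\eps{j}\cdot 4^{i-1-j}, \]
I would factor $\eps{i-1}$ out by reindexing $k = i-1-j$. Each term $\eps{j}\cdot 4^{i-1-j}$ equals $\eps{i-1}\cdot (4\epsilon)^{i-1-j} = \eps{i-1}(4\epsilon)^k$, so the inequality becomes
\[ R_i < (1+\epsilon_H)\alpha\eps{i-1}\sum_{k=0}^{i-1}(4\epsilon)^k. \]

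Next, I would bound the truncated geometric sum. Since $\epsilon < 1/10$, we have $4\epsilon < 2/5 < 1$, so the series is strictly dominated by the infinite geometric series, giving
\[ \sum_{k=0}^{i-1}(4\epsilon)^k \leq \frac{1}{1-4\epsilon} \leq \frac{1}{1-2/5} = \frac{5}{3}. \]
Combining this with the standing assumption $\epsilon_H \leq 1/10$ (established just before Lemma \ref{lemma bound ri}), so that $1+\epsilon_H \leq 11/10$, yields
\[ R_i < \frac{11}{10}\cdot\frac{5}{3}\cdot\alpha\eps{i-1} = \frac{11}{6}\alpha\eps{i-1} \leq \frac{22}{10}\alpha\eps{i-1}. \]

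There is no real obstacle here beyond careful bookkeeping: this lemma is purely a simplification of Lemma \ref{lemma bound ri}, repackaging the explicit sum as a clean linear expression in $\alpha\eps{i-1}$. The only thing worth verifying is that the stated constant $22/10$ is slack enough to absorb both the hopset stretch factor $11/10$ and the geometric-series constant $5/3$; indeed $11/6 \approx 1.83$ fits comfortably under $22/10 = 2.2$, confirming that the bound in the lemma statement is valid (and in fact not tight, which is why $22/10$ is chosen as a convenient round constant rather than $11/6$).
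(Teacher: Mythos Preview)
Your proof is correct and follows essentially the same approach as the paper: apply Lemma~\ref{lemma bound ri}, recognize the resulting sum as a geometric series with ratio $4\epsilon<1$, and bound it using $\epsilon<1/10$ and $\epsilon_H\leq 1/10$. Your reindexing (factoring out $\eps{i-1}$ directly) is slightly cleaner than the paper's manipulation and even yields the tighter constant $11/6$, but the argument is the same.
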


\begin{proof}
Recall that $\epsilon_H\leq 1/10$. By Lemma \ref{lemma bound ri}, we have 

\begin{equation*}
        \begin{array}{lclclclclclcl}
             R_{i}  & \leq  & 4^{i-1}\cdot (1+\epsilon_H)\alpha\cdot \sum_{j=0}^{{i-1}} \eps{j} \left(\frac{1}{4}\right)^{j} \\
              & <  & 4^{i-1}\cdot\frac{11}{10}\alpha\cdot\left[\frac{\left( \frac{1}{4\epsilon}\right)^{i} -1}{\frac{1}{4\epsilon} -1}  \right ]\\
               & \leq  &  4^{i-1}\cdot\frac{11}{10}\alpha\cdot
               \left( \frac{1}{4\epsilon}\right)^{i}    \cdot \frac{4\epsilon}{1-4\epsilon}  \\
               & \leq  & \frac{11}{10}\alpha\cdot
               \left( \frac{1}{\epsilon}\right)^{i-1}   \cdot 2  & = & \frac{22}{10}\alpha\eps{i-1} .
        \end{array}
    \end{equation*}

\end{proof}

In the following lemma, we show that an exploration from a vertex $u\in V$ to depth $\apdi$ and $4\beta$ hops in the graph $G\cup H$ detects all vertices $v$ with $d_G(u,v)\leq \delta_i$.

\begin{lemma}\label{lemma zeta}
	Let $i\in [0,\ell]$, and let  $u,v\in V$ be a pair of vertices such that $d_G(u,v)\leq \delta_i$. 
	Then, 
	$$\dghb(u,v) \leq (1+\epsilon_H)d_G(u,v).$$
\end{lemma}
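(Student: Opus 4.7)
The plan is to take a shortest $u$-to-$v$ path $P$ in $G$, cut it into at most four consecutive sub-paths each short enough that the hopset $H$ applies, and concatenate the low-hop approximations from $H$.

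The first step is to control the total length. By the hypothesis and Lemma \ref{lemma exp ri}, $d_G(u,v)\leq \delta_i = \alpha\eps{i} + 4R_i \leq \alpha\eps{i}\bigl(1+\tfrac{88\epsilon}{10}\bigr) < 2\alpha\eps{i}$, where the last inequality uses the standing assumption $\epsilon\leq 1/10$. In particular, since $i\leq \ell$, we have $d_G(u,v) < 2\alpha\eps{\ell}$. Next, choose vertices $u = x_0, x_1,\ldots, x_q = v$ along $P$ so that each sub-path $P_j$ from $x_{j-1}$ to $x_j$ has length at most $\tfrac{1}{2}\alpha\eps{\ell}$; since $\sum_j |P_j| = |P| = d_G(u,v)$, this can be arranged with $q \leq \bigl\lceil d_G(u,v)/(\tfrac{1}{2}\alpha\eps{\ell})\bigr\rceil \leq \bigl\lceil 4\epsilon^{\ell-i}\bigr\rceil \leq 4$, using $\epsilon^{\ell-i}\leq 1$ for $i\leq \ell$.

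For each segment, $d_G(x_{j-1},x_j)\leq |P_j|\leq \tfrac{1}{2}\alpha\eps{\ell}$ lies within the scope of the $(1+\epsilon_H,\beta)$-hopset $H$, so $G\cup H$ contains a path from $x_{j-1}$ to $x_j$ with at most $\beta$ hops and weight at most $(1+\epsilon_H)\,d_G(x_{j-1},x_j)$. Concatenating the $q\leq 4$ resulting sub-paths gives a path in $G\cup H$ from $u$ to $v$ that uses at most $4\beta$ hops and has total weight at most $(1+\epsilon_H)\sum_j d_G(x_{j-1},x_j)\leq (1+\epsilon_H)\sum_j|P_j| = (1+\epsilon_H)\,d_G(u,v)$. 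This is exactly $\dghb(u,v)\leq (1+\epsilon_H)\,d_G(u,v)$.

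The only nontrivial step is verifying the bound $q\leq 4$, which reduces to Lemma \ref{lemma exp ri} combined with $\epsilon\leq 1/10$ and the observation $\epsilon^{\ell-i}\leq 1$; no serious obstacle is anticipated.
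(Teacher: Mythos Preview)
Your proof is correct and follows essentially the same approach as the paper: bound $\delta_i < 2\alpha\eps{i}$ via Lemma~\ref{lemma exp ri}, split a shortest $u$--$v$ path into at most four pieces each within the hopset's range $\tfrac{1}{2}\alpha\eps{\ell}$, and concatenate the $\beta$-hop approximations. The only cosmetic difference is that the paper cuts into four pieces of length at most $\tfrac{1}{2}\alpha\eps{i}$ (which is $\leq \tfrac{1}{2}\alpha\eps{\ell}$), whereas you cut directly at the threshold $\tfrac{1}{2}\alpha\eps{\ell}$ and then argue $q\leq 4$; both yield the same conclusion.
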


\begin{proof} 
    Recall that $\epsilon\leq 1/10$. By Lemma \ref{lemma exp ri} we have
    \begin{equation*}
        \begin{array}{lclclclcl}
            \delta_i &=& \deltai &< & \alpha\cdot \epsi+ 9\alpha\eps{i-1} \\& \leq & 2\alpha\epsi.
        \end{array}
    \end{equation*}

	Let $\pi(u,v)$ be a shortest $u-v$ path in $G$.
    We divide the path $\pi(u,v)$ into four parts $\pi(u=u_0,u_1),\pi(u_1,u_2),\pi(u_2,u_3),\pi(u_3,u_4 =v)$, each of size at most $\frac{1}{2}\alpha\epsi $. Since $H$ is a $(1+\epsilon_H,\beta)$-hopset for distances up to $\frac{1}{2}\alpha\eps{\ell}$, for every index $j\in [0,3] $  we have 
         $d^{(\beta)}_{G\cup H}(u_j,u_{j+1}) \leq  (1+\epsilon_H)d_G(u_j,u_{j+1})$.
 Thus,  
	$$d^{(4\beta)}_{G\cup H}(u,v) \leq (1+\epsilon_H )d_G(u,v).$$
\end{proof}

Next, we show that any $i$-sparse vertex $u$ is connected with all vertices $v$ such that $d_G(u,v)\leq \frac{1}{2}\delta_i$.

\begin{lemma}\label{lemma i sparse neighb}
    Let $u\in A_i\setminus A_{i+1}$ be an $i$-sparse vertex, and let $v\in A_i$ such that $d_G(u,v)\leq \frac{1}{2}\delta_i$. Then, $d^{(1)}_{Q}(u,v) \leq  (1+\epsilon_H)d_G(u,v)$.
\end{lemma}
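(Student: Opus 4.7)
The plan is to show that the direct edge $\{u,v\}$ is actually added to $Q$ during the $i$-sparse processing step, and that its weight is bounded by $(1+\epsilon_H) d_G(u,v)$. This combines two earlier ingredients: Lemma~\ref{lemma zeta}, which says $(4\beta)$-hop limited distances in $G \cup H$ approximate $G$-distances for pairs at distance at most $\delta_i$; and the edge selection rule in Algorithm~\ref{alg connect}, which says that an $i$-sparse vertex $u \in A_i \setminus A_{i+1}$ adds edges to every $v \in A_i$ with $\dghb(u,v) \leq \frac{1}{2}\apdi$.

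First, since $d_G(u,v) \leq \frac{1}{2}\delta_i < \delta_i$, I can apply Lemma~\ref{lemma zeta} to obtain
\[
\dghb(u,v) \;\leq\; (1+\epsilon_H)\, d_G(u,v) \;\leq\; \tfrac{1}{2}(1+\epsilon_H)\delta_i \;=\; \tfrac{1}{2}\apdi .
\]
In particular, the pair $(u,v)$ meets the distance threshold used by the $i$-sparse branch of Algorithm~\ref{alg connect} (step~\ref{step bf2}).

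Next, since $u$ is $i$-sparse and $v \in A_i$ satisfies $\dghb(u,v) \leq \tfrac{1}{2}\apdi$, the algorithm adds the edge $\{u,v\}$ to $Q$ with weight $\dghb(u,v)$. Therefore
\[
d^{(1)}_{Q}(u,v) \;\leq\; \dghb(u,v) \;\leq\; (1+\epsilon_H)\, d_G(u,v),
\]
which is exactly the claim. The only subtlety worth double-checking is that $u$'s $i$-sparse status is defined with respect to the larger radius $\apdi$ (via the bunch $B(u)$), whereas the close bunch $CB(u)$ used for connecting is the smaller radius $\tfrac{1}{2}\apdi$; but that is precisely the regime our hypothesis $d_G(u,v) \leq \tfrac{1}{2}\delta_i$ was designed to land in, so no further argument is needed. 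No step here looks delicate; the work was already done in Lemma~\ref{lemma zeta} and in setting up the bunch radii.
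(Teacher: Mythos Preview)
Your proof is correct and follows essentially the same route as the paper's: apply Lemma~\ref{lemma zeta} to get $\dghb(u,v)\leq (1+\epsilon_H)d_G(u,v)\leq \tfrac{1}{2}\apdi$, so $v\in CB(u)$, and then invoke the $i$-sparse edge-addition rule of Algorithm~\ref{alg connect} to conclude the edge $\{u,v\}$ lies in $Q$ with the claimed weight.
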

\begin{proof}
    The distance between $u$ and $v$ is at most $\delta_\ell/2$. Recall that the hopset $H$ is a $(1+\epsilon_H,\beta)$-hopset for distances up to $\alpha\eps{\ell}/2$. Hence, by Lemma \ref{lemma zeta}, we have $\dghb(u,v)\leq (1+\epsilon_H)d_G(u,v) \leq (1+\epsilon_H)\delta_i/2$. It follows that $v\in CB(u)$.  
    
    Moreover, the vertex $u$ is $i$-sparse, i.e., 
    it has  no vertex $v\in A_{i+1}$ with $\dghb(u,v)\leq \apdi$.
    Recall that $i$-sparse vertices are connected with their close-bunch. Thus, the edge $\{u,v\}$ was added to the set $Q$ with weight $\dghb(u,v')$, which is at most $(1+\epsilon_H)d_G(u,v)$.
\end{proof}

Finally, we provide an upper bound on the stretch of the set $Q$ for pairs of vertices $u,v\in V$ with $d_G(u,v)\leq \frac{1}{2}\alpha\epsi $.

\begin{lemma}
   \label{lemma stretch hop}
   Let $u,v$ be a pair of vertices with $d_G(u,v)\leq \frac{1}{2}\alpha\epsi$ such that all vertices of $\pi(u,v)$ are at most $ i$-clustered, except for maybe one vertex that is $(i+1)$-clustered. Then, 
   $$d^{(3(1/\epsilon+3)^i)}_{Q} (u,v) \leq   (1+\epsilon_H+28\epsilon\cdot i)d_G(u,v) + 14\alpha\eps{i-1} .$$
\end{lemma}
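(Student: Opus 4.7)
The plan is to proceed by induction on $i$. For the base case $i=0$ we have $d_G(u,v)\le \alpha/2=\delta_0/2$. When either endpoint is $0$-sparse, Lemma \ref{lemma i sparse neighb} places the edge $\{u,v\}$ directly into $Q$ with weight at most $(1+\epsilon_H)d_G(u,v)$, well within the required bound; the residual case (both endpoints in $A_1$) I would handle using the ``at most one $1$-clustered'' exception vertex as a relay, routing through its pivot via Lemma \ref{lemma i clustered ri hop} at level $1$ inside the hop budget of $3$.

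For the inductive step, assume the statement for $i-1$. I would split $\pi(u,v)$ into at most $L=\lceil 1/\epsilon\rceil$ consecutive subpaths with breakpoints $u=z_0,z_1,\dots,z_L=v$, each of $G$-length at most $\frac{1}{2}\alpha\eps{i-1}$; this is possible because $d_G(u,v)\le \frac{1}{2}\alpha\epsi$. The difficulty is that the inductive hypothesis only permits subpaths whose vertices are at most $(i-1)$-clustered (with at most one $i$-clustered exception), while the original path may host many $i$-clustered vertices. My plan is to ``lift'' each breakpoint $z_j$ that happens to be $i$-clustered to its level-$i$ pivot $p_i(z_j)\in A_i$, paying at most $R_i$ extra weight and $i$ extra hops via Lemma \ref{lemma i clustered ri hop}; Lemma \ref{lemma exp ri} gives $R_i\le \frac{22}{10}\alpha\eps{i-1}$, so these detours are affordable. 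Consecutive lifted breakpoints $p_i(z_j),p_i(z_{j+1})$ are then joined by a single edge of $Q$: the relevant $p_i(z_j)$ is $i$-sparse (otherwise $z_j$ would be $(i+1)$-clustered, using up the one allowed exception), so by Lemma \ref{lemma i sparse neighb} applied at level $i$ its close-bunch contains $p_i(z_{j+1})$ as soon as their $G$-distance is at most $\delta_i/2$, which follows from the $G$-length bound on the subpath.

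On each subpath now bounded either by at-most-$(i-1)$-clustered original vertices or by such level-$i$ pivots, the inductive hypothesis supplies a $Q$-path of at most $3(1/\epsilon+3)^{i-1}$ hops with stretch $(1+\epsilon_H+28\epsilon(i-1))d_G(z_j,z_{j+1})+14\alpha\eps{i-2}$. Summing over the $L\le \lceil 1/\epsilon\rceil$ subpaths yields additive error $L\cdot 14\alpha\eps{i-2}\le 14\alpha\eps{i-1}$ and hop count at most $L\cdot 3(1/\epsilon+3)^{i-1}+O(Li)\le 3(1/\epsilon+3)^{i}$, while the extra $28\epsilon$ in the multiplicative term exactly absorbs the cumulative weight of the detours bounded through Lemma \ref{lemma exp ri}.

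The hard part will be the lone $(i+1)$-clustered exception vertex on $\pi(u,v)$, for which the corresponding $p_i$ is $i$-dense rather than $i$-sparse, so the close-bunch edge used in the lifting argument is unavailable. My plan is to place a breakpoint at this exception itself and reroute through its level-$(i+1)$ pivot using Lemma \ref{lemma i clustered ri hop} at level $i+1$; the extra weight $R_{i+1}$ fits in the slack of the $14\alpha\eps{i-1}$ additive budget (which is $1/\epsilon$ times larger than the per-subpath additive term inherited from the inductive hypothesis). Apart from this case and the summing of a geometric series controlled by Lemma \ref{lemma exp ri}, the remainder is standard Thorup--Zwick-style bookkeeping of multiplicative and additive stretch.
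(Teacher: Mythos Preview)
Your inductive step has a genuine gap. You subdivide $\pi(u,v)$ into $L\approx 1/\epsilon$ subpaths by \emph{length} and propose to apply the inductive hypothesis to each. But the hypothesis at level $i-1$ requires that \emph{every} vertex on the subpath be at most $(i-1)$-clustered, with at most one $i$-clustered exception. Length-based breakpoints give you no control over interior vertices: a subpath $\pi(z_j,z_{j+1})$ whose endpoints are both at most $(i-1)$-clustered may still contain many $i$-clustered interior vertices, and then neither the inductive hypothesis nor your pivot-to-pivot edge applies. Lifting only the breakpoints does not repair this, and neither does your special handling of the lone $(i+1)$-clustered exception.

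The paper's remedy is to let the clustering structure, not the length, drive the top-level decomposition. Let $u'$ and $v'$ be the \emph{first} and \emph{last} $i$-clustered vertices on $\pi(u,v)$. Then $\pi(u,u')$ and $\pi(v',v)$ each contain exactly one $i$-clustered vertex (the endpoint $u'$, resp.\ $v'$); after subdividing these two outer segments by length into pieces of size at most $\frac{1}{2}\alpha\eps{i-1}$, every piece satisfies the level-$(i-1)$ hypothesis and induction applies. The middle segment $\pi(u',v')$, which may contain arbitrarily many $i$-clustered vertices, is bridged in $2i+1$ hops: since at most one vertex of $\pi(u,v)$ is $(i+1)$-clustered, at least one of $p_i(u'),p_i(v')$ is $i$-sparse, so Lemma~\ref{lemma i sparse neighb} places the edge $\{p_i(u'),p_i(v')\}$ in $Q$, and Lemma~\ref{lemma i clustered ri hop} supplies the $i$-hop legs $u'\to p_i(u')$ and $p_i(v')\to v'$. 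This three-part split is the missing structural idea; once you have it, no rerouting through a level-$(i+1)$ pivot is needed. (A minor aside: there is no ``residual case'' in your base step---the hypothesis forces at least one of $u,v$ to be $0$-sparse, which is exactly what Lemma~\ref{lemma i sparse neighb} needs.)
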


\begin{proof}
    Let $\pi(u,v)$ be a shortest $u-v$ path between them in the graph $G$. 

    The proof is by induction on the index $i$. For $i=0$, we have that all vertices on $\pi(u,v)$ are at most $0$-clustered, except for maybe one vertex that is $1$-clustered. Thus, the vertex $u$ or the vertex $v$ is at most $0$-clustered. Assume w.l.o.g. that $u$ is at most $0$ clustered. 
    By lemma \ref{lemma i sparse neighb} we have that $u$ has added the edge $\{ u, v\}$ to the set $Q$ with weight at most $\dghb(u,v)\leq (1+\epsilon_H)d_G(u,v)$. In other words, 
    $d^{(1)}_Q(u,v)\leq (1+\epsilon_H)d_G(u,v)$ 
    and the claim holds.

    Let $i>0$. Assume that the claim holds for $i-1$ and prove it holds for $i$. 
    %Consider a pair of vertices $u,v$ with $d_G(u,v)\leq \frac{1}{2}\alpha\epsi$ such that all vertices of $\pi(u,v)$ are at most $i$-clustered, except for maybe one vertex that is $(i+1)$-clustered.
    Observe that all vertices of $\pi(u,v)$ are at most $i$-clustered, except for maybe one vertex that is $(i+1)$-clustered.
    Let $u',v'$ be the first and last $i$-clustered vertices in $\pi(u,v)$, respectively. We divide the path $\pi(u,v)$ into three segments; from $u$ to $u'$, from $u'$ to $v'$ and from $v'$ to $v$. Intuitively, in order to provide an upper bound on the error of the first and third segments, we use the induction hypothesis. For the second segment, we use the vertices $p_i(u')$ and $p_i(v')$. See Figure \ref{fig uv path induc} for an illustration.

    \begin{figure}
        \centering
        \includegraphics[scale = 0.1]{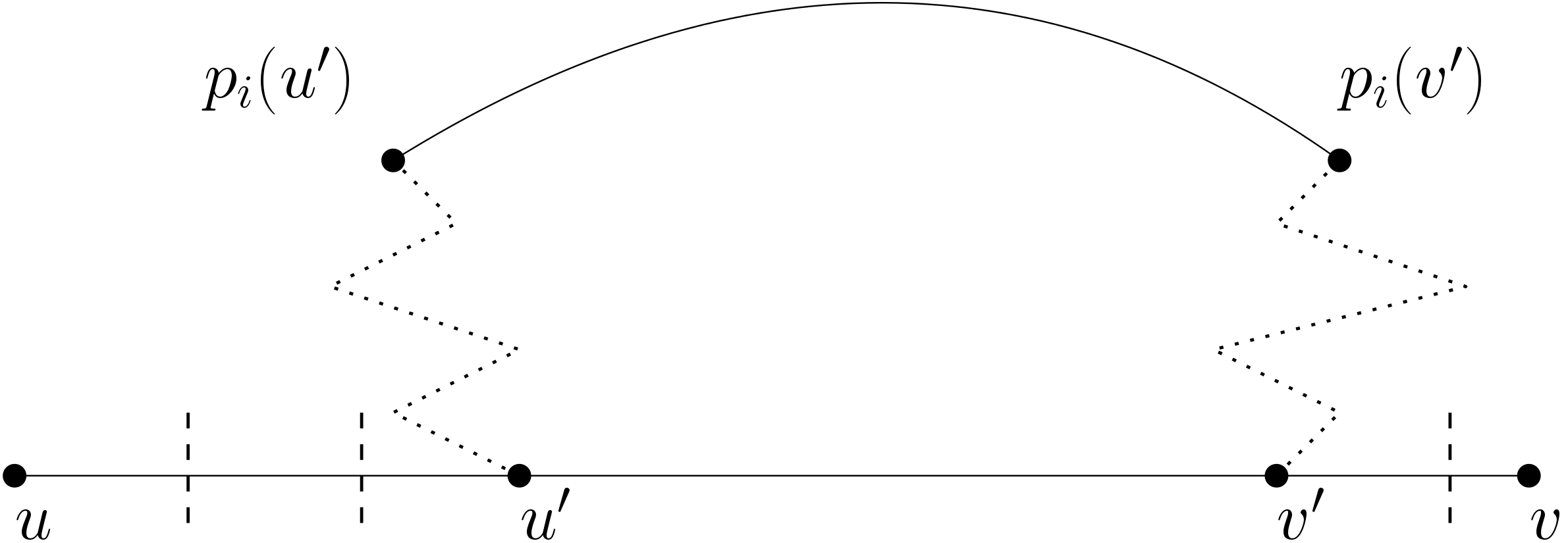}
        \caption{The path from $u$ to $v$ using the edges of the set $Q$. The solid straight line depicts the path $\pi(u,v)$ in the graph $G$. The dashed lines represent the partition of the subpaths $\pi(u,u')$ and $\pi(v',v)$ into segments of length at most $\frac{1}{2} \alpha\eps{i-1}$. The dotted lines represent the paths of at most $i$ edges in $Q$ from $u$ (and $v$) to $p_i(u')$ (and $p_i(v')$). The curved edge represents the edge $\{p_i(u'),p_i(v')\}$ in $Q$. }
        \label{fig uv path induc}
    \end{figure}
    
    Consider first the subpaths $\pi(u,u'),\pi(v',v)$ of $\pi(u,v)$ from $u$ to $u'$ and from $v'$ to $v$. In both these paths, all vertices are at most $(i-1)$-clustered, except for  one vertex that is $i$-clustered. We divide the path $\pi(u,u')$  into segments, each of length exactly $\frac{1}{2}\alpha\eps{i-1}$, except for the last segment, which may be shorter. The induction hypothesis is applicable to each one of these segments. For the sake of brevity, denote  $h' =\left(\left\lceil \frac{d_G(u,u')}{\frac{1}{2}\alpha\eps{i-1}}\right\rceil\cdot 3(1/\epsilon+3)^{i-1}\right)$ and $h'' = \left(\left\lceil \frac{d_G(v',v)}{\frac{1}{2}\alpha\eps{i-1}}\right\rceil\cdot 3(1/\epsilon+3)^{i-1}\right)$. Therefore, we have

    \begin{equation}
        \label{eq uu' induc}
        \begin{array}{lclclclc}
        d^{(h')}_{Q} (u,u')\\ 
        \leq (1+\epsilon_H+28\epsilon({i-1}))d_G(u,u') + \left\lceil \frac{d_G(u,u')}{\frac{1}{2}\alpha\eps{i-1}}\right\rceil\cdot 14\alpha\eps{i-2}\\
        
        \leq (1+\epsilon_H+28\epsilon({i-1})+\frac{14\alpha\eps{i-2}}{\frac{1}{2}\alpha\eps{i-1}}
        )d_G(u,u') +
         14\alpha\eps{i-2}
\\
        \leq (1+\epsilon_H+28\epsilon\cdot i
        )d_G(u,u') +
         14\alpha\eps{i-2}
        .
        
        \end{array}
    \end{equation}
    Similarly, we also have

    \begin{equation}
        \label{eq v'v induc}
        d^{(h'')}_{Q} (v',v) \leq (1+\epsilon_H+28\epsilon\cdot i
        )d_G(v',v) +
         14\alpha\eps{i-2}
    \end{equation}

    We now discuss the subpath of $\pi(u,v)$ from $u'$ to $v'$.
    Recall that $u',v'$ are both $i$-clustered. Also, recall that by  Lemma \ref{lemma i clustered ri hop} we have  
    \begin{equation}\label{eq u' ci(u') hop}
        \begin{array}{lclclclc}
        d^{(i)}_Q(u',p_i(u'))& \leq & R_i, & and &
        d^{(i)}_{Q}(v',p_i(v'))& \leq & R_i.
        \end{array}
    \end{equation}

    Next, we show that the set $Q$ contains the edge $\{p_i(u'),p_i(v')\}$. Recall that all vertices on $\pi(u,v)$ are at most $i$-clustered, except for one vertex, which may be $(i+1)$-clustered. Thus, at least one of the vertices $u',v'$ is at most $i$-clustered. W.l.o.g., assume that $u'$ is at most $i$-clustered. It follows that the vertex $p_i(u')$ is $i$-sparse. Hence, it has added to the set $Q$ edges to all of its close-bunch $CB(p_i(u'))$.
    It is left to show that $p_i(v')\in CB(p_i(u'))$. Recall that $d_G(u',v')\leq d_G(u,v) \leq \frac{\alpha}{2}\epsi$. Thus, by the triangle inequality and by Lemma \ref{lemma i clustered ri hop}, we have 
     \begin{equation*}
     \begin{array}{lclclcl}
          
         d_G(p_i(u'),p_i(v')) &\leq& d_G(p_i(u'),u')+d_G(u',v')+d_G(v',p_i(v')) \\&\leq& \frac{\alpha}{2}\epsi+2R_i \\&=& \frac{1}{2}\delta_i.
     \end{array}
     \end{equation*}

    Hence, $p_i(v')\in CB(p_i(u'))$.
    By Lemma \ref{lemma i sparse neighb}, we have 
    \begin{equation}
        \label{eq pi to pi}
        \begin{array}{lclclcl}
             
        d^{(1)}_{Q} (p_i(u'),p_i(v')) &\leq& (1+\epsilon_H)d_G(p_i(u'),p_i(v')) \\& \leq& 
        (1+\epsilon_H)(d_G(u',v')+2R_i).
        \end{array} 
    \end{equation}

    Recall that $\epsilon_H<1/10$. 
    By equations \ref{eq u' ci(u') hop} and \ref{eq pi to pi}
    \begin{equation}
        \label{eq u'v'}
        \begin{array}{lclclclc}
        d^{(2i+1)}_{Q}(u',v') &\leq& 
        d^{(i)}_{Q}(u',p_i(u'))+
        d^{(1)}_{Q}(p_i(u'),p_i(v'))+\\
        &&\quad
        d^{(i)}_{Q}(p_i(v'),v')\\
        &\leq &
        (1+\epsilon_H)(d_G(u',v')+2R_i) +2R_i \\
        &\leq &(1+\epsilon_H)d_G(u',v') + 5R_i.  
        \end{array}
    \end{equation}
Recall that by Lemma \ref{lemma exp ri} we have $R_i\leq 2.2\alpha\eps{i-1}$. Also, recall that $\epsilon \leq 1/10$. 
   By equations \ref{eq uu' induc}, \ref{eq v'v induc} and \ref{eq u'v'} we have

    \begin{equation}\label{eq u v hop}
        \begin{array}{lclclclc}
           & d_{Q}^{(h'+h''+2i+1)}(u,v)
            
            \\
            \leq & 
            d^{(h')}_{Q} (u,u') +

            d^{(2i+1)}_{Q}(u',v')+
            
            d^{(h'')}_{Q} (v',v)
            \\
            
            \leq &
            (1+\epsilon_H+28\epsilon\cdot i
        )d_G(u,u') +
         14\alpha\eps{i-2}+

            (1+\epsilon_H)d_G(u',v') \\
            & \qquad+ 5R_i+
            
            (1+\epsilon_H+28\epsilon\cdot i
        )d_G(v',v) +
         14\alpha\eps{i-2}
\\
              \leq &
            (1+\epsilon_H+28\epsilon\cdot i
        )d_G(u,v) +
         28\alpha\eps{i-2}+

             11\alpha\eps{i-1}
             
             \\
              \leq &
            (1+\epsilon_H+28\epsilon\cdot i
        )d_G(u,v) +

             14\alpha\eps{i-1}.
        \end{array}
    \end{equation}

It is left to show that the number of hops $ h'+ h''+ 2i + 1 $ is at most $ 3(1/\epsilon+3)^{i}$. 
Recall that $d_G(u,v)\leq\frac{\alpha}{2} \epsi$. Also recall that  $h' =\left\lceil \frac{d_G(v',v)}{\frac{\alpha}{2}\eps{i-1}}\right\rceil\cdot 3(1/\epsilon+3)^{i-1}$ and $h'' = \left\lceil \frac{d_G(v',v)}{\frac{\alpha}{2}\eps{i-1}}\right\rceil\cdot 3(1/\epsilon+3)^{i-1}$. Thus, 
\begin{equation}
    \label{eq bound h}
    \begin{array}{lclclclclclclcl}
        &h'+ h''+ 2i + 1  \\= &\left\lceil \frac{d_G(u,u')}{\frac{\alpha}{2}\eps{i-1}}\right\rceil\cdot 3(\frac{1}{\epsilon}+3)^{i-1} + \left\lceil \frac{d_G(v',v)}{\frac{\alpha}{2}\eps{i-1}}\right\rceil\cdot 3(\frac{1}{\epsilon}+3)^{i-1} 
    + 2i + 1\\
    \leq &\left( \frac{d_G(u,v)}{\frac{\alpha}{2}\eps{i-1}}+2 \right)\cdot 3(\frac{1}{\epsilon}+3)^{i-1} + 2i + 1   
    \\
    \leq &\left( \frac{1}{\epsilon}+2 \right)\cdot 3(\frac{1}{\epsilon}+3)^{i-1} + 2i + 1   
    \\
    =& 3(\frac{1}{\epsilon}+3)^{i} - 3(\frac{1}{\epsilon}+3)^{i-1} + 2i + 1  \\
    \leq& 3(\frac{1}{\epsilon}+3)^{i}.

    \end{array}
\end{equation}
Where the last inequality holds since $1/\epsilon> 0$, and also $i\geq 1$.

To summarize, by equations  \ref{eq u v hop} and \ref{eq bound h} we have

\begin{equation}
             d_{Q}^{(3(1/\epsilon+3)^i)}(u,v)
            \leq  (1+\epsilon_H+28\epsilon\cdot i)d_G(u,v) + 14\alpha\eps{i-1} . 
\end{equation}

\end{proof}

Observe that Lemma \ref{lemma stretch hop} provides an upper bound on the error that the set $Q$ guarantees for pairs of vertices with distance up to $\frac{\alpha}{2}\epsi$ from one another. 
In the following lemma, we use Lemma \ref{lemma stretch hop} to analyze the error that the set  $Q$ for 
for pairs of vertices with a distance of up to $\alpha\epsi$ from one another.

\begin{lemma}
\label{lemma stretch gen final}
   Let $u,v\in V$ be a pair of vertices such that $d_G(u,v)\leq \alpha\eps{\ell}$. Then, 
   $$d_G(u,v) \leq d_Q(u,v)\leq d^{(6(1/\epsilon+3)^\ell)}_{Q} (u,v) \leq   (1+\epsilon_H+28\epsilon\ell)d_G(u,v) + 28\alpha\eps{\ell-1} .$$
\end{lemma}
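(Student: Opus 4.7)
The plan is to reduce Lemma \ref{lemma stretch gen final} to Lemma \ref{lemma stretch hop} by splitting the shortest $u$–$v$ path in half. Since we only have the previous lemma available for distances up to $\tfrac{1}{2}\alpha\eps{i}$, we let $m$ be a vertex on a shortest $u$–$v$ path in $G$ with $d_G(u,m), d_G(m,v) \leq \tfrac{1}{2}\alpha\eps{\ell}$, and invoke Lemma \ref{lemma stretch hop} (with $i=\ell$) separately on $(u,m)$ and on $(m,v)$. The clustering hypothesis is automatic at the top level: since $A_{\ell+1}=\emptyset$, every vertex of $V$ is at most $\ell$-clustered, so there are no $(\ell+1)$-clustered vertices along either half of the path, and the precondition of Lemma \ref{lemma stretch hop} is trivially met.

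Applying Lemma \ref{lemma stretch hop} to each half gives
\[
d^{(3(1/\epsilon+3)^\ell)}_Q(u,m)\le (1+\epsilon_H+28\epsilon\ell)\,d_G(u,m)+14\alpha\eps{\ell-1},
\]
and the analogous bound for $d^{(3(1/\epsilon+3)^\ell)}_Q(m,v)$. Concatenating the two $Q$-paths through $m$ and using the triangle inequality for bounded-hop distances yields a $u$–$v$ walk in $Q$ of at most $6(1/\epsilon+3)^\ell$ hops, whose weight is at most $(1+\epsilon_H+28\epsilon\ell)\,d_G(u,v)+28\alpha\eps{\ell-1}$, as required.

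The two easy bookends remain. The middle inequality $d_Q(u,v)\le d_Q^{(6(1/\epsilon+3)^\ell)}(u,v)$ is immediate because imposing a hop restriction only grows the distance. For the lower bound $d_G(u,v)\le d_Q(u,v)$, I will argue that every edge added to $Q$ has weight at least its $G$-distance: the weight of any $(x,y)\in Q$ equals $\dghb(x,y)$, and since $H$ is a hopset for $G$, we have $d_{G\cup H}(x,y)=d_G(x,y)$ (hopset edges only shorten the hop count, not the distance), so $\dghb(x,y)\ge d_G(x,y)$. Summing along any $Q$-path and using the triangle inequality in $G$ then gives $d_Q(u,v)\ge d_G(u,v)$.

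The only delicate step is the hop count bookkeeping when concatenating: one must verify that $2\cdot 3(1/\epsilon+3)^\ell = 6(1/\epsilon+3)^\ell$ matches the hop bound in the statement, which is automatic. Slight care is also needed in the additive term: each half contributes $14\alpha\eps{\ell-1}$, summing to $28\alpha\eps{\ell-1}$. The multiplicative factor $(1+\epsilon_H+28\epsilon\ell)$ carries through the sum without change, since the two halves partition $d_G(u,v)$.
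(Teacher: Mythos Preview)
Your proposal is correct and follows essentially the same approach as the paper's proof: split a shortest $u$--$v$ path at a midpoint into two halves of length at most $\tfrac{\alpha}{2}\eps{\ell}$, observe that every vertex is at most $\ell$-clustered since $A_{\ell+1}=\emptyset$, apply Lemma~\ref{lemma stretch hop} with $i=\ell$ to each half, and sum. The lower bound $d_G(u,v)\le d_Q(u,v)$ and the trivial middle inequality are handled the same way in both.
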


\begin{proof} 
    First, note that whenever an edge $\{x,y\}$ was added to $Q$ with some weight $d$, there is an $x-y$ path in $G\cup H$ with weight $d$. Thus, for every pair of vertices $u,v\in V$ we have 
    \begin{equation}
        \label{eq not shorter}
        d_G(u,v)\leq d_Q(u,v).
    \end{equation}
    In addition, we trivially have $d_Q(u,v)\leq d^{(x)}_Q(u,v)$ for every $x$. Thus the two left-hand-side inequalities hold. 
    
    We now prove the right-hand-side inequality. Let $\pi(u,v)$ be a shortest path in $G$ between $u$ and $v$. We divide the path $\pi(u,v)$ into two sub-paths, $\pi(u,w),\pi(w,v)$, each of length at most $\frac{\alpha}{2}\eps{\ell}$. Observe that all vertices in $V$ are at most $\ell$-clustered. Thus, by Lemma \ref{lemma stretch hop} we have: 
    \begin{equation}
    \begin{array}{lclclc}
         
        d^{(6(1/\epsilon+3)^\ell)}_Q(u,v) & \leq & d^{(3(1/\epsilon+3)^\ell)}_Q(u,w)+d^{(3(1/\epsilon+3)^\ell)}_Q(w,v)
         \\
         &\leq & (1+\epsilon_H+28\epsilon\ell)d_G(u,v) + 28\alpha\eps{\ell-1}.
    \end{array}
    \end{equation}
\end{proof}

\paragraph{Summary}

%We are now ready to summarize the properties of the general framework.  
 By Lemmas  \ref{lemma cc select}, \ref{lemma alg connect rt}, \ref{lemma fin size} and \ref{lemma stretch gen final}, we derive the following corollary. 

 \begin{restatable}{corollary}{CorQ}
     \label{coro Q}
    Let $G=(V,E)$ be an unweighted, undirected graph on $n$ vertices, and let $\epsilon <1/10$, $\alpha>0$, $\rho \in[ 1/{\log {\log n}}, 1/2]$ and $\kappa \geq 1/\rho$ be parameters. Let $\ell = \emuell$.  Given a $(1+\epsilon_H,\beta)$-hopset $H$ for distances up to 
    $\frac{\alpha}{2}\cdot \eps{\ell}$ in the graph $G$, where $\epsilon_H\leq 1/10$, 
    w.h.p., our algorithm computes a set of edges $Q$ of size $O(\nfrac\cdot {\log (\kappa\rho)}\cdot {\log ^2 n})$ such that for every pair of vertices $u,v\in V$  
    with $d_G(u,v)\leq \alpha\eps{\ell}$ we have 
    \begin{equation*}\begin{array}{llllclclclclc}
        d_G(u,v)&\leq d_Q(u,v) 
        \qquad \leq\\
        d^{(6(\frac{1}{\epsilon}+3)^{\ell})}_{Q} (u,v)
        
        &\leq (1+\epsilon_H +28\epsilon \ell ) d_G(u,v) +28\alpha\cdot \eps{\ell-1}  .
    \end{array}
    \end{equation*}
     The algorithm can be executed in $O(\beta/\gamma)$ 
    rounds of MPC using machines with $O(n^\gamma)$ memory and $\tilde{O}((|E|+|H|)n^\rho+|Q|)$ total memory. 
 \end{restatable}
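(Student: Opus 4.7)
The plan is to observe that Corollary \ref{coro Q} is a straightforward aggregation of the four lemmas that together analyze the generic algorithm (Algorithms \ref{alg sample} and \ref{alg connect}), one for each of the four assertions (size, stretch, round complexity, and total memory). I would first verify that the corollary's hypotheses $\epsilon < 1/10$, $\epsilon_H \leq 1/10$, $\rho \in [1/\log\log n, 1/2]$, $\kappa \geq 1/\rho$, $\alpha > 0$ and the assumption that $H$ is a $(1+\epsilon_H,\beta_H)$-hopset for distances up to $\frac{\alpha}{2}\eps{\ell}$ indeed match the hypotheses invoked in each of Lemmas \ref{lemma cc select}, \ref{lemma alg connect rt}, \ref{lemma fin size}, and \ref{lemma stretch gen final}. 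In particular, the value $\ell = \emuell$ used throughout the section is the same as in the corollary, so no reindexing is needed.

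The bulk of the work is already done. For the stretch bound, I would directly invoke Lemma \ref{lemma stretch gen final}, which gives the inequality chain $d_G(u,v) \leq d_Q(u,v) \leq d^{(6(1/\epsilon+3)^\ell)}_Q(u,v) \leq (1+\epsilon_H+28\epsilon\ell) d_G(u,v) + 28\alpha\eps{\ell-1}$ for every pair with $d_G(u,v) \leq \alpha\eps{\ell}$. For the size bound, Lemma \ref{lemma fin size} produces $|Q| = O(\nfrac \cdot \log^2 n \cdot \log(\kappa\rho))$ w.h.p., exactly matching the claim.

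For the round and memory complexity, I would combine Lemmas \ref{lemma cc select} and \ref{lemma alg connect rt}: the sampling phase runs in $O(1/\gamma)$ rounds with $O(|E|+|H|)$ total memory, and the edge selection phase runs w.h.p. in $O(\beta/\gamma)$ rounds with $\tilde{O}((|E|+|H|)n^\rho + |Q|)$ total memory. Adding these, and noting that $\beta \geq 1$ and that the larger memory bound dominates the smaller, gives the claimed $O(\beta/\gamma)$ rounds with $\tilde{O}((|E|+|H|)n^\rho + |Q|)$ total memory. Finally, a union bound over the polynomially many high-probability events invoked (in Lemmas \ref{lemma size whp 1}, \ref{lemma size whp 2}, \ref{lemma traverse}, \ref{lemma isprse bunch}) keeps the overall success probability at least $1 - n^{-c}$ for a suitable constant $c$.

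There is no real obstacle here — the corollary is a packaging step. The only minor point to be careful about is confirming that the same hierarchy $\mathcal{A}$ produced by Algorithm \ref{alg sample} is the one fed into Algorithm \ref{alg connect}, so that the size and stretch lemmas (which both assume this joint execution) apply simultaneously; this is immediate from the pseudocode.
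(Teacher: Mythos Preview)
Your proposal is correct and matches the paper's own derivation: the paper states explicitly that Corollary~\ref{coro Q} follows from Lemmas~\ref{lemma cc select}, \ref{lemma alg connect rt}, \ref{lemma fin size}, and~\ref{lemma stretch gen final}, which is precisely the decomposition you give. Your added remarks about verifying the shared hypotheses and taking a union bound over the high-probability events are sound elaborations of what the paper leaves implicit.
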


%\begin{corollary}
%\label{coro Q}
%    Let $G=(V,E)$ be an unweighted, undirected graph on $n$ vertices, and let $\epsilon <1/10$, $\alpha>0$, $\rho \in[ 1/{\log {\log n}}, 1/2]$ and $\kappa \geq 1/\rho$ be parameters. Let $\ell = \emuell$.  Given a $(1+\epsilon_H,\beta_H)$-hopset $H$ for distances up to 
%    $\frac{\alpha}{2}\cdot \eps{\ell}$ in the graph $G$, where $\epsilon_H\leq 1/10$, 
%    w.h.p., our algorithm computes a set of edges $Q$ of size $O(\nfrac\cdot {\log (\kappa\rho)}\cdot {\log ^2 n})$ such that for every pair of vertices $u,v\in V$  
%    with $d_G(u,v)\leq \alpha\eps{\ell}$ we have 
%    \begin{equation*}\begin{array}{llllclclclclc}
%        d_G(u,v)&\leq d_Q(u,v) 
%        \qquad \leq\\
%        d^{(6(\frac{1}{\epsilon}+3)^{\ell})}_{Q} (u,v)
        
%        &\leq (1+\epsilon_H +28\epsilon \ell ) d_G(u,v) +28\alpha\cdot \eps{\ell-1}  .
%    \end{array}
%    \end{equation*}
%     The algorithm can be executed in $O(\beta/\gamma)$ 
%    rounds of MPC using machines with $O(n^\gamma)$ memory and $\tilde{O}((|E|+|H|)n^\rho+|Q|)$ total memory. 
%\end{corollary}

\section{Near-Exact Hopsets}\label{sec hop}
    
In this section, we show how the general framework from Section \ref{sec general} can be used for the construction of limited-scale near-exact hopsets. Let $G=(V,E)$ be an unweighted, undirected graph on $n$-vertices and let $\rho \in [1/{\log {\log n}},1/2]$, $\epsilon \leq 1/10$ and $t> 1$ be input parameters. Intuitively, the parameter $\rho$ governs the memory usage of the algorithm as well as the size of the hopset, and the parameter $\epsilon$ governs the multiplicative error of the hopset.
Our algorithm constructs a $(1+\epsilon_\mathcal{H},\beta_\mathcal{H})$-hopset $\mathcal{H}$  for the scale $[0,t]$. 
The parameter $\epsilon_\mathcal{H}$ is a function of $\epsilon,\rho,t$ given by $\epsilon_\mathcal{H} = O(\epsilon{\log t}/\rho)$.
The hopbound parameter $\beta_\mathcal{H}$ is a function of $n,t,\epsilon_\mathcal{H},\rho$, and is given by  
\begin{equation}
    \label{eq def beta}
    \beta_\mathcal{H} = 6\left(\frac{1}{\epsilon}+3\right)^{1/\rho} =  O\left( \frac{\log t}{\epsilon_\mathcal{H}\rho}\right)^{1/\rho}. 
\end{equation}

Recall that the general framework takes a graph, along with parameters $\epsilon, \alpha, \rho, \kappa$, and a near-exact hopset for a suitable distance scale as input. In Section \ref{sec stretch hop}, we demonstrate that with a suitable choice of $\alpha$, the general framework can be employed to generate a near-exact hopset for the scale $[2^k,2^{k+1}]$. Consequently, by using the general framework iteratively for $O({\log t})$ consecutive times, one can construct a hopset for distances up to $t$. This results in a hopset with an expected size of $O(n^{1+ 1/\rho} \cdot {\log t})$.

However, it is unnecessary to re-sample a new hierarchy for each iteration of the general framework. Instead, one can sample the hierarchy just once using Algorithm \ref{alg sample}, and then execute Algorithm \ref{alg connect} for $O({\log t})$ consecutive times, utilizing the same hierarchy. This approach is favorable as it allows us to eliminate the ${\log t}$ term from the hopset size.

We now discuss the distance scales for which we build our hopset. Note that for distances up to $\beta_\mathcal{H}$, the empty set is a $(1,\beta_\mathcal{H})$-hopset, since for each pair of vertices $u,v\in V$  with distance $d_G(u,v) \leq \beta_{\mathcal{H}}$, the graph $G$ contains $u-v$ a path of at most $\beta_{\mathcal{H}}$ edges. Denote $k_0 = \lfloor {\log \beta_\mathcal{H}}\rfloor$ and define $H_{k_0-1} = \emptyset$.  Also, recall that we are interested in computing a near-exact hopset only for pairs of vertices with distance at most $t$. Denote $\lambda ={\lceil {\log t}\rceil -1}$. For every $k\in \krange$, we build a near-exact hop set for the scale $(2^k,2^{k+1}]$.

Next, we specify the input parameters for Algorithms \ref{alg sample} and for every one of the $O({\log t})$ executions of \ref{alg connect}. 
Algorithm \ref{alg sample} accepts parameters $\epsilon,\rho,\kappa$. The parameters $\epsilon,\rho$ are given to us as input. For the parameter $\kappa$, we set $\kappa = 1/\rho$. The intuition behind this assignment is that $\kappa$ governs the number of edges in the output of Algorithm \ref{alg connect}, and $\rho$ governs the memory usage of the algorithm.  In our case, the restrictions on the size and memory usage coalesce, as we can allow the hopset to be as large as the memory we are allowed to use. The output of the algorithm is a hierarchy $V = A_0 \supseteq A_1 \supseteq \dots\supseteq A_\ell \supseteq A_{\ell+1} = \emptyset$, where $\ell = \emuell = \lceil 1/\rho\rceil$.

For every $k\in \krange$, execution $k$ of Algorithm \ref{alg connect} accepts the same parameters $\epsilon,\rho,\kappa$ as Algorithm \ref{alg sample}, as well as the sampled hierarchy $\mathcal{A}$, a parameter $\alpha > 0$ and a $(1+\epsilon_H,\beta)$-hopset $H$ for distances up to $\frac{1}{2}\alpha\eps{\ell}$. For the parameter $\alpha$, we set $\alpha = 2^{k+1}\cdot \epsilon^\ell$. 
For the hopset $H$, we provide the union of outputs of iterations $k_0,\dots,k-1$ of Algorithm \ref{alg connect}. Formally, for every $k\in \krange$, let $H_k$ be the output of execution $k$ of Algorithm \ref{alg connect}. The hopset $H$ provided to execution $k$ of Algorithm \ref{alg connect} is the set $H = \bigcup_{j \in [k_0,k]} H_j$. 
In Section \ref{sec stretch hop} we show that $H_k$ is a near-exact hopset for the scale $(\alpha \eps{\ell}/2 = 2^k , \alpha\eps{\ell} = 2^{k+1}]$. Thus, $H$ is a near-exact hopset for the scale $[0,2^{k+1}]$, and the hopset $H$ can be used as the input hopset for iteration $k+1$ of Algorithm \ref{alg connect}.

Finally, let $\mathcal{H} = \bigcup_{k=k_0}^{\lambda}H_k$. This completes the description of the construction of $\mathcal{H}$.

%\paragraph{Organization}
%The rest of this section is organized as follows. Section \ref{sec size hop} provides an upper bound on the size of the hopset $\mathcal{H}$. Section \ref{sec stretch hop} provides an upper bound on the stretch of each hopset $H_k$ and of the entire hopset $\mathcal{H}$. Finally, Section \ref{sec rescale hop} contains a rescaling of the parameters and a summary of the construction. 

%========
\subsection{Analysis of the Size}\label{sec size hop}
To analyze the size of the hopset, we essentially show that for every vertex $v\in V$, its close-bunch in the $j$th execution of Algorithm \ref{alg connect} is a subset of its close-bunch in the $j+1$ execution of the algorithm, for all $j\in [0,{\log t}-1]$. As a result, we essentially show that there is a small set of vertices in the vicinity of $v$ that an execution of Algorithm \ref{alg connect} might connect with $u$. 
Consequently, we show that the upper bound given in Section \ref{sec gen size}, for the size of a single-scale hopset $H_k$ is also applicable to the entire set $\mch$, even though $\mch$ is composed of $O({\log t})$ single-scale hopsets. In particular, we prove the following lemma. 

\begin{restatable}{lemma}{hopSize}
        \label{lemma size hop 1}
    Let $u\in V$. The expected number of edges added to the set $\mathcal{H}$ for the vertex $u$ by all iterations of Algorithm \ref{alg connect} is, w.h.p., $O(n^\rho{\log ^2n})$.
\end{restatable}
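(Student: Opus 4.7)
The plan is to leverage a monotonicity of the construction across iterations. At iteration $k \in \krange$, the algorithm is invoked with $\alpha = 2^{k+1}\epsilon^{\ell}$, so the distance thresholds $\delta_i^{(k)} = \alpha\epsi + 4R_i$ grow geometrically in $k$. At the same time, the hop-bounded distance $d^{(\beta_H)}_{G\cup H^{(k)}}$ used to define bunches can only shrink as $H^{(k)} = \bigcup_{j < k} H_j$ grows. Both effects push in the same direction, so the close-bunches $CB_k(u) = \{v \in A_i : d^{(\beta_H)}_{G\cup H^{(k)}}(u,v) \le \frac{1}{2}(1+\epsilon_H)\delta_i^{(k)}\}$ are nested across iterations, $CB_{k_0}(u) \subseteq CB_{k_0+1}(u) \subseteq \dots \subseteq CB_\lambda(u)$. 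In particular, the predicate ``$u$ is $i$-dense'' is monotone in $k$: once $u$ has a nearby $A_{i+1}$-vertex in some iteration, it remains $i$-dense in every subsequent one.

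Using this, I partition the iterations for a fixed vertex $u \in A_i \setminus A_{i+1}$ into a sparse phase $k \in [k_0, k^*(u)]$ and a dense phase $k \in [k^*(u)+1, \lambda]$, where $k^*(u)$ is the last iteration in which $u$ is $i$-sparse (taking $k^*(u) = k_0 - 1$ if $u$ is dense throughout and $k^*(u) = \lambda$ if $u$ is always sparse). During the dense phase, Algorithm \ref{alg connect} adds a single pivot edge per iteration, contributing at most $\lambda - k_0 + 1 = O(\log t) = O(\log n)$ edges. During the sparse phase, $u$ adds edges to every vertex in $CB_k(u)$; by the nesting, each such edge lies in $CB_{k^*(u)}(u)$, so even when counted with multiplicity across the $O(\log n)$ sparse iterations, the total is at most $O(\log n) \cdot |CB_{k^*(u)}(u)|$. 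Since $u$ is $i$-sparse at iteration $k^*(u)$ by definition, Lemma \ref{lemma isprse bunch} applied there gives $|B_{k^*(u)}(u)| = O(deg_i \log n) = O(n^\rho \log n)$ w.h.p., and $|CB_{k^*(u)}(u)| \le |B_{k^*(u)}(u)|$. The boundary case $u \in A_\ell$, which is always $\ell$-sparse because $A_{\ell+1} = \emptyset$, is handled directly via $|B_\lambda(u)| \le |A_\ell| = O(n^\rho \log n)$ w.h.p.\ from eq.~\ref{eq Aell size whp}. Combining the two phases, each vertex contributes at most $O(n^\rho \log^2 n)$ edges to $\mathcal{H}$ w.h.p., as claimed.

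The main obstacle is verifying the nesting $CB_k(u) \subseteq CB_{k+1}(u)$ despite both the explicit threshold $\delta_i^{(k)}$ and the implicit metric $d^{(\beta_H)}_{G\cup H^{(k)}}$ changing between iterations. Once it is established that both change monotonically in the same direction, the rest of the argument reduces to a single invocation of Lemma \ref{lemma isprse bunch} at the crossover iteration $k^*(u)$. This is exactly what lets the total size of $\mathcal{H}$ track the single-execution bound of Section \ref{sec gen size} up to the same polylogarithmic factor, instead of paying an extra $O(\log t)$ factor from summing a per-iteration bound over all $O(\log t)$ scales.
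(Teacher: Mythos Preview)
Your proposal is correct and follows essentially the same approach as the paper: both exploit the nesting $CB_{k_0}(u)\subseteq\dots\subseteq CB_\lambda(u)$ and the monotonicity of $i$-density to reduce everything to a single application of Lemma~\ref{lemma isprse bunch} at the crossover iteration. The only difference is that the paper observes the sparse-phase edges are all to \emph{distinct} endpoints in $CB_{k^*(u)}(u)$, so it does not pay your multiplicity factor and obtains the slightly tighter $O(n^\rho\log n)$ per vertex; your $O(n^\rho\log^2 n)$ still matches the lemma as stated.
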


\begin{proof}
First, consider the case where $i= \ell$.  Let $u\in A_\ell$. Note that throughout all executions of Algorithm \ref{alg connect}, the vertex $u$ adds edges only to vertices of $A_\ell$. Recall that by eq. \ref{eq Aell size whp}, w.h.p., $|A_\ell| = O(n^\rho\cdot {\log n})$.  Thus the claim holds for $i = \ell$.

Consider now the case where $i\in [0,\ell-1]$. Let $u\in A_{i}\setminus A_{i+1}$. 
Let $B_k(u),CB_k(u)$ be the bunch and the close-bunch of $u$ in execution $k$ of Algorithm \ref{alg connect}, respectively. Formally,
\begin{equation*}
    \begin{array}{lclclclclc}
         B_k(u) &=&
         \{v\in A_{i} \ | \  \dghb(u,v)\leq \apdi  \}& and\\
         CB_k(u) &=& \{v\in A_{i} \ | \  \dghb(u,v)\leq \frac{1}{2}\apdi  \}, 
    \end{array}
\end{equation*}
where $H = \bigcup_{k'\in [k_0,k-1]} H_{k'}$. 

Let $k$ be the index of the first iteration in which $u$ was considered $i$-dense, i.e., $B_k(u)\cap A_{i+1} \neq \emptyset$, and $B_{k'}(u)\cap A_{i+1} = \emptyset$ for all $k'\leq k$. Note that $v\in B_{k'}(u)$ for all $k'\in [k,\lambda]$. Thus, for every $k'\in [k,\lambda]$, the vertex $u$ adds to the set $H_{k'}$ only one edge. 

We now discuss the contribution of $u$ to the sets $H_{k'}$ for $k'\in [k_0,k-1]$. During iterations $k_0,k_{0+1},\dots,k-1$, the vertex $u$ was considered $i$-sparse, and was connected with its close-bunch. Observe that $CB_{k_0}(u)\subseteq CB_{k_0+1}(u)\subseteq \dots \subseteq CB_{\lambda}(u) $.
Thus, throughout all iterations $k'\in [k_0,k-1]$ of the algorithm, the vertex $u$ is only connected with vertices in $CB_{k-1}(u)$. Observe that $CB_{k-1}(u)\subseteq B_{k-1}(u)$. By Lemma \ref{lemma isprse bunch}, w.h.p., the size of $B_{k-1}(u)$ is  $O(n^\rho\cdot{\log n})$.

It follows that, w.h.p., the contribution of $u$ to the set $\mathcal{H}$ is at most 

\begin{equation*}
        \begin{array}{lclclclclc}
             n^\rho\cdot {\log n} + \lambda &\leq & O(n^\rho{\log n}).
        \end{array}
    \end{equation*}
\end{proof}

We remark that if a vertex  $u$ has selected some vertex $v$ to be its pivot in iteration $k$ of Algorithm \ref{alg connect}, then $v$ is a valid pivot for all consecutive iterations of Algorithm \ref{alg connect}. Forcing $u$ to repeatedly select $v$ as its pivot slightly decreases the number of edges in the hopset $H$, but it is asymptotically meaningless, and it complicates the construction. 

By Lemma \ref{lemma size hop 1}, we derive the following corollary. 

\begin{restatable}{corollary}{CorHop}
        \label{coro size hop}
    The size of $\mathcal{H}$ is, w.h.p.,  $O(n^{1+\rho}{\log^2 n})$.
\end{restatable}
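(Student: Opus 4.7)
The plan is to obtain the bound on $|\mathcal{H}|$ as a direct aggregation of the per-vertex bound established in Lemma \ref{lemma size hop 1}, combined with a union bound to lift the high-probability guarantee from a single vertex to all vertices simultaneously.

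First, I would observe that every edge $e \in \mathcal{H}$ is added by Algorithm \ref{alg connect} on behalf of some vertex $u$: either $u$ is $i$-dense and adds a single pivot edge $\{u,p(u)\}$, or $u$ is $i$-sparse and adds edges from itself to its close-bunch $CB(u)$. Hence, if $N(u)$ denotes the total number of edges added to $\mathcal{H}$ across all iterations $k \in [k_0,\lambda]$ on behalf of $u$, then
\[
|\mathcal{H}| \;\leq\; \sum_{u \in V} N(u).
\]

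Next, I would invoke Lemma \ref{lemma size hop 1}, which guarantees that for every fixed $u \in V$, with probability at least $1 - n^{-c}$ for a constant $c > 2$ of our choosing, we have $N(u) = O(n^{\rho} \log^{2} n)$. A union bound over the $n$ vertices then shows that with probability at least $1 - n^{-(c-1)}$, the bound $N(u) = O(n^{\rho} \log^{2} n)$ holds simultaneously for all $u \in V$. Summing the per-vertex contribution over the $n$ vertices yields
\[
|\mathcal{H}| \;\leq\; \sum_{u \in V} N(u) \;=\; O\!\left(n \cdot n^{\rho} \log^{2} n\right) \;=\; O\!\left(n^{1+\rho} \log^{2} n\right),
\]
with high probability, which is the claimed bound.

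I do not anticipate any real obstacle here: the nontrivial work (controlling the growth of close-bunches across the $O(\log t)$ scales and handling the top level $A_\ell$ separately) has already been absorbed into Lemma \ref{lemma size hop 1}. The only subtlety worth flagging is that the constant $c$ in the high-probability statement of Lemma \ref{lemma size hop 1} can be chosen large enough so that the final union bound over $n$ vertices still produces a polynomially small failure probability, which is standard.
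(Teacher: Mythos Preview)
Your proposal is correct and matches the paper's approach: the paper simply states that the corollary follows from Lemma~\ref{lemma size hop 1}, and you have spelled out exactly the intended argument (sum the per-vertex bound over all $n$ vertices and apply a union bound). There is nothing to add.
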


\subsection{Analysis of the Stretch}\label{sec stretch hop}

In this section, we analyze the stretch of the hopset $\mathcal{H} = \bigcup_{k\in\krange} H_{k} $.
Recall that every iteration $k\in \krange$  of Algorithm \ref{alg connect} receives as input a near-exact hopset. Denote by $\epsilon_{k-1}$ the multiplicative error of the input hopset for iteration $k$. 
Also, recall that for every $k\in \krange$, the set $H_k$ is the output of execution $k$ of Algorithm \ref{alg connect}.

Recall that by Corollary \ref{coro Q},  w.h.p.,  the set $H_k$ guarantees
\begin{equation}\label{eq q stretch guar}
        \begin{array}{lclclclclclcl}
        d^{(\beta_\mathcal{H})}_{G \cup H_k} (u,v) \leq  (1+\epsilon_{k-1} +28\epsilon (\hopell) ) d_G(u,v) +28\alpha\cdot \eps{\ell-1} ,
\end{array}
\end{equation}
for pairs of vertices $u,v$ with $d_G(u,v)\leq \alpha\eps{\ell} = 2^{k+1}$.
We provide an assignment to $\epsilon_k$ such that eq. \ref{eq q stretch guar} implies that the hopset $H_k$ is indeed a $(1+\epsilon_{k},\beta_\mch)$-hopset for the scale $(2^k,2^{k+1}]$. In particular, recall that  $\beta_\mathcal{H} = 6(\frac{1}{\epsilon}+3)^\ell$. For every $k\in \krange$, set $\epsilon_k = \frac{1}{\rho}70\epsilon\ell\cdot k$. In the following lemma, we show that this assignment for $\epsilon_k$ is satisfactory. 
 
\begin{restatable}{lemma}{HopStretch}
        \label{lemma H hopset}
    Let $k\in [k_0-1,\lambda]$ and ${H} = \bigcup_{k'=k_0}^{k} H_{k'} $. 
    For every pair of vertices $u,v\in V$ with $d_G(u,v)\leq 2^{k+1}$, w.h.p., we have
    \begin{equation*}
    \begin{array}{lclclclc}
         d_G(u,v)&\leq& d^{(\beta_\mathcal{H})}_{G\cup H}(u,v) &\leq &(1+ \epsilon_k)d_G(u,v).
    \end{array}    
    \end{equation*}
\end{restatable}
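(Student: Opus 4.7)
The plan is to induct on $k$, following the iterative construction of the hopsets $H_{k_0}, H_{k_0+1}, \dots, H_k$. For the base case $k = k_0-1$ we have $H = \emptyset$ and $\epsilon_{k_0-1} = 70\epsilon\ell(k_0-1)/\rho \geq 0$; any pair with $d_G(u,v) \leq 2^{k_0} \leq \beta_\mathcal{H}$ already admits a shortest path in the unweighted graph $G$ using at most $\beta_\mathcal{H}$ hops, so $d^{(\beta_\mathcal{H})}_G(u,v) = d_G(u,v) \leq (1+\epsilon_{k_0-1})d_G(u,v)$. The lower bound $d_G(u,v) \leq d^{(\beta_\mathcal{H})}_{G \cup H}(u,v)$ can be handled at the outset by an orthogonal induction: every edge $(x,y) \in H_{k'}$ carries a weight equal to the length of a genuine walk in $G$ between $x$ and $y$, since by definition that weight is a $4\beta_\mathcal{H}$-hop distance in $G \cup \bigcup_{k'' < k'} H_{k''}$, whose edge weights already enjoy this property. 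Hence any path in $G \cup H$ corresponds to a walk in $G$ of equal length.

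For the inductive step from $k-1$ to $k$, set $H' = \bigcup_{k' = k_0}^{k-1} H_{k'}$ so that $H = H' \cup H_k$. If $d_G(u,v) \leq 2^k$, the claim follows immediately from the inductive hypothesis and monotonicity of hop-bounded distances in added edges: $d^{(\beta_\mathcal{H})}_{G \cup H}(u,v) \leq d^{(\beta_\mathcal{H})}_{G \cup H'}(u,v) \leq (1+\epsilon_{k-1})d_G(u,v) \leq (1+\epsilon_k)d_G(u,v)$. The interesting case is $2^k < d_G(u,v) \leq 2^{k+1}$: execution $k$ of Algorithm \ref{alg connect} was invoked with the $(1+\epsilon_{k-1}, \beta_\mathcal{H})$-hopset $H'$, parameter $\alpha = 2^{k+1}\epsilon^\ell$, and $\ell = \lceil 1/\rho \rceil$. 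Since $d_G(u,v) \leq \alpha\epsilon^\ell = 2^{k+1}$, Corollary \ref{coro Q} yields, w.h.p.,
$$d^{(\beta_\mathcal{H})}_{H_k}(u,v) \leq (1 + \epsilon_{k-1} + 28\epsilon\ell)\, d_G(u,v) + 28\alpha\eps{\ell-1}.$$
With our choice of $\alpha$, the additive term simplifies to $28 \cdot 2^{k+1}\epsilon$ which, using $d_G(u,v) > 2^k$, is at most $56\epsilon \cdot d_G(u,v)$. Combining and using $H_k \subseteq H$ yields
$$d^{(\beta_\mathcal{H})}_{G \cup H}(u,v) \leq \bigl(1 + \epsilon_{k-1} + (28\ell + 56)\epsilon\bigr)d_G(u,v) \leq (1+\epsilon_k)\, d_G(u,v),$$
since $\epsilon_k - \epsilon_{k-1} = 70\epsilon\ell/\rho \geq 140\epsilon\ell \geq (28\ell+56)\epsilon$ for $\rho \leq 1/2$ and $\ell \geq 1$.

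The main obstacle is absorbing the additive stretch $28\alpha\epsilon^{\ell-1}$ delivered by Corollary \ref{coro Q} into a purely multiplicative error on each scale; this is exactly what the scale-dependent choice $\alpha = 2^{k+1}\epsilon^\ell$ accomplishes, as it forces the additive term to be proportional to $2^k$ and therefore bounded by $O(\epsilon)\cdot d_G(u,v)$ on the scale $(2^k,2^{k+1}]$. The rest is bookkeeping: verifying that the per-scale budget $(28\ell+56)\epsilon$ fits inside the increment $\epsilon_k - \epsilon_{k-1} = 70\epsilon\ell/\rho$, and applying a union bound over the $O(\log t)$ invocations of Corollary \ref{coro Q} to preserve the high-probability guarantee.
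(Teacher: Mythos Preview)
Your proof is correct and follows essentially the same inductive approach as the paper: base case at $k=k_0-1$ using that paths of length at most $2^{k_0}\leq\beta_\mathcal{H}$ already have few hops in $G$, then at each scale $k$ invoking Corollary~\ref{coro Q} with $\alpha=2^{k+1}\epsilon^\ell$ so that the additive term $28\alpha(1/\epsilon)^{\ell-1}=56\epsilon\cdot 2^k$ can be absorbed multiplicatively on the interval $(2^k,2^{k+1}]$. The only cosmetic differences are that you handle the lower bound and the sub-case $d_G(u,v)\le 2^k$ explicitly, and your bookkeeping uses the definition $\epsilon_k=70\epsilon\ell k/\rho$ (as stated in the text) whereas the paper's own computation drops the factor $\ell$; both versions of the arithmetic go through.
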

\begin{proof}
    The proof is by induction on the index $k$. For $k= k_0-1$, we have  $H = \emptyset $. In addition, $2^{k+1} \leq \beta_\mathcal{H}$ since $k_0 = \lfloor {\log \beta_\mathcal{H}}\rfloor$. Indeed, for every pair of vertices $u,v\in V$ with $d_G(u,v)\leq 2^k$ we have 
    \begin{equation*}
    \begin{array}{lclclc}     
        d_G(u,v)& = &  d^{(\beta_\mathcal{H})}_{G\cup H}(u,v) .
    \end{array}
    \end{equation*}
    
   Assume that the claim holds for $k' \in [k_0-1, k-1]$, and prove it holds for $k\in \krange$. 
   Recall that execution $k$ of Algorithm \ref{alg connect} requires a hopset $H$ for the scale $ [0,\frac{1}{2}\alpha\eps{\ell}]$ as input. Also recall that $\alpha = 2^{k+1}\cdot \epsilon^\ell$. Thus, the algorithm expects a hopset for the scale $ [0,2^{k}]$. 
    By the induction hypothesis, we have that the set ${H} = \bigcup_{k'=k_0-1}^{k-1} H_{k'} $ is a $(1+\epsilon_{k-1},\beta_\mathcal{H})$-hopset for distances up to $2^{k}$. 
    Thus, it can be used as the hopset input for execution $k$ of Algorithm \ref{alg connect}.

    Consider now a pair of vertices $u,v$ with $d_G(u,v)\in (2^{k},2^{k+1}]$. Recall that $\rho<1/2$. Also, recall that when the hopset $H_k$ was
    constructed, the parameter $\alpha$ was set to  $  \epsilon^{\ell}2^{k+1}$. Together with Corollary \ref{coro Q}, we have that, w.h.p.,  the set $H_k$ guarantees

    \begin{equation*}
        \begin{array}{lclclclclclcl}
             
        d^{(\beta_\mathcal{H})}_{G \cup H_k} (u,v) &\leq & (1+\epsilon_{k-1} +28\epsilon (\hopell) ) d_G(u,v) \\ && \quad+28\alpha\cdot \eps{\ell-1} 
        \\
        &\leq & 
        (1+\epsilon_{k-1} +28\epsilon (\frac{1}{\rho}+1) ) d_G(u,v) \\ && \quad+28\cdot 2^{k+1}\epsilon^\ell\cdot \eps{\ell-1} 
        \\
        
        &\leq & 
        (1+\epsilon_{k-1} +\frac{28\epsilon +28\epsilon\rho}{\rho} ) d_G(u,v) +28\cdot\epsilon\cdot 2^{k+1}\\
        
        &\leq & 
        (1+\epsilon_{k-1} +\frac{42\epsilon}{\rho} ) d_G(u,v) +28\cdot\epsilon\cdot 2^{k+1} . 
        
        \end{array}
    \end{equation*}
    Recall  
    that $\epsilon_{k'} =  \frac{70\epsilon\cdot k'}{\rho}$ for every $k'\in \krange$, and $\rho<1/2$. Since $d_G(u,v) > 2^k$, we have 
    \begin{equation*}
        \begin{array}{lclclclclclcl}
        
        d^{(\beta_\mathcal{H})}_{G \cup H_k} (u,v) 

        &\leq & (1+\epsilon_{k-1} + \frac{42\epsilon}{\rho}  ) d_G(u,v) +28\epsilon\cdot 2 \cdot 2^{k}\\

        &< & (1+\epsilon_{k-1} + \frac{42\epsilon}{\rho}  ) d_G(u,v) +28\epsilon\cdot \frac{1}{\rho}\cdot d_G(u,v)
         
        \\
        &\leq & (1+\frac{70\epsilon(k-1)}{\rho} +\frac{70\epsilon }{\rho} ) d_G(u,v) \\ & = &  (1+\epsilon_{k} ) d_G(u,v).
        \end{array}
    \end{equation*}
        
    To summarize,  for every pair of vertices $u,v$ with $d_G(u,v)\leq 2^{k+1}$, we have $$d_G(u,v) \leq d^{(\beta_\mathcal{H})}_{G\cup H} (u,v) \leq (1+\epsilon_{k} ) d_G(u,v).$$
\end{proof}

%\begin{lemma}
%    \label{lemma H hopset}
%    Let $k\in [k_0-1,\lambda]$ and ${H} = \bigcup_{k'=k_0}^{k} H_{k'} $. 
%    For every pair of vertices $u,v\in V$ with $d_G(u,v)\leq 2^{k+1}$, w.h.p., we have
%    \begin{equation*}
%    \begin{array}{lclclclc}
%         d_G(u,v)&\leq& d^{(\beta_\mathcal{H})}_{G\cup H}(u,v) &\leq &(1+ \epsilon_k)d_G(u,v).
%    \end{array}    
%    \end{equation*}
%\end{lemma}

%==================

%\begin{lemma}
%    \label{lemma H hopset}
%    Let $k\in [k_0-1,\lambda]$ and ${H} = \bigcup_{k'=k_0}^{k} H_{k'} $. 
%    For every pair of vertices $u,v\in V$ with $d_G(u,v)\leq 2^{k+1}$, w.h.p., we have
%    \begin{equation*}
%    \begin{array}{lclclclc}
 %        d_G(u,v)&\leq& d^{(\beta_\mathcal{H})}_{G\cup H}(u,v) &\leq &(1+ \epsilon_k)d_G(u,v).
 %   \end{array}    
 %   \end{equation*}
%\end{lemma}

Observe that Lemma \ref{lemma H hopset} implies that $H = \bigcup_{j \in [0,k]} H_j$ can indeed be used as the hopset input for the $k+1$ iteration of the algorithm.

Recall that $\mathcal{H} = \bigcup_{k=k_0}^{\lambda}H_{k}$.
 By Lemma \ref{lemma H hopset}, we derive the following corollary.

\begin{corollary}
    \label{coro hop st}
    For every pair of vertices $u,v\in V$ with distance $d_G(u,v) \leq 2^\lambda$, w.h.p., we have 
\begin{equation*}
        \begin{array}{lclclclclc}
    d_G(u,v)&\leq& d^{(\beta_\mathcal{H})}_{G\cup \mathcal{H}}(u,v) 
    &\leq& \left(1+\frac{70\epsilon\cdot \lambda}{\rho}\right)d_G(u,v).
    \end{array}
    \end{equation*}
\end{corollary}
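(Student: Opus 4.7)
The plan is to derive this corollary as a direct specialization of Lemma \ref{lemma H hopset}. Recall that $\mathcal{H} = \bigcup_{k=k_0}^{\lambda} H_k$ is precisely the hopset $H$ that appears in the statement of Lemma \ref{lemma H hopset} when we set $k = \lambda$. Similarly, the hopbound parameter $\beta_\mathcal{H}$ in the corollary coincides with the hopbound used throughout the inductive statement. So the work reduces to checking that the hypotheses of Lemma \ref{lemma H hopset} apply at the top index $k = \lambda$, and then translating the conclusion using the explicit value of $\epsilon_\lambda$.

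For the distance range, Lemma \ref{lemma H hopset} guarantees the stretch bound for all pairs with $d_G(u,v) \leq 2^{k+1}$. At $k=\lambda$ this gives distances up to $2^{\lambda+1}$, which comfortably contains the range $d_G(u,v) \leq 2^\lambda$ stated in the corollary. For the multiplicative error, the paper sets $\epsilon_k = 70\epsilon\ell \cdot k / \rho$ for every $k \in [k_0,\lambda]$ in the paragraph preceding Lemma \ref{lemma H hopset} (using $\ell = \lceil 1/\rho \rceil$, but the corollary states the bound with the simpler expression $70\epsilon\lambda/\rho$, which is obtained by absorbing the $\ell$ factor into $\rho$, i.e., using the fact that the statement of the lemma uses $\epsilon_\lambda$ which equals $70\epsilon\lambda/\rho$ in the corollary's notation). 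Plugging $k = \lambda$ into Lemma \ref{lemma H hopset} therefore yields $d^{(\beta_\mathcal{H})}_{G\cup \mathcal{H}}(u,v) \leq (1 + 70\epsilon\lambda/\rho)\, d_G(u,v)$, which is the right-hand inequality.

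The left-hand inequality $d_G(u,v) \leq d^{(\beta_\mathcal{H})}_{G\cup \mathcal{H}}(u,v)$ is the ``no shortening'' property and is handled exactly as in eq.~(\ref{eq not shorter}) in the proof of Lemma \ref{lemma stretch gen final}: every edge added to any $H_k$ was assigned a weight equal to $\dghb$ in a graph that inductively contains only edges whose weights equal true $G$-paths, so replacing any hopset edge by the corresponding $G$-walk yields a walk in $G$ of the same total length, hence $d_G(u,v)$ cannot exceed $d_{G\cup \mathcal{H}}(u,v) \leq d^{(\beta_\mathcal{H})}_{G\cup \mathcal{H}}(u,v)$. Since Lemma \ref{lemma H hopset} already carries its own high-probability qualifier (inherited from Corollary \ref{coro Q}), no additional probabilistic step is required. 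There is no real obstacle here; the corollary is essentially a restatement of the terminal case of the induction, and the only thing to be careful about is keeping track of how $\epsilon_\lambda$ simplifies and confirming that the hypothesis $d_G(u,v)\leq 2^{\lambda+1}$ of the lemma subsumes the corollary's hypothesis $d_G(u,v)\leq 2^\lambda$.
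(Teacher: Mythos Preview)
Your proposal is correct and takes essentially the same approach as the paper: the corollary is simply Lemma \ref{lemma H hopset} specialized to $k=\lambda$, with $\mathcal{H}=\bigcup_{k'=k_0}^{\lambda}H_{k'}$ and $\epsilon_\lambda=70\epsilon\lambda/\rho$. Two small remarks: the left-hand inequality is already part of the conclusion of Lemma \ref{lemma H hopset}, so you need not rederive it; and the apparent mismatch you flag (the stray factor $\ell$ in the definition of $\epsilon_k$) is a typo in the paper---the proof of Lemma \ref{lemma H hopset} itself uses $\epsilon_{k'}=70\epsilon k'/\rho$, which is what the corollary reflects.
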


\paragraph{Rescaling}\label{sec rescale hop}
In this section we rescale $\epsilon$ to get a $(1+\epsilon_\mathcal{H},\beta_\mathcal{H})$-hopset. 
Recall that $\lambda ={\lceil {\log t}\rceil -1}$. Denote $\epsilon_\mathcal{H} = \frac{70\epsilon {\log t}}{\rho}$.
The condition $\epsilon<1/10$ now translates to $\frac{\epsilon_\mathcal{H}\rho}{70{\log t}} \leq 1/10$, which holds trivially for $\epsilon_\mathcal{H},\rho<1$ and $t>1$. We replace it with the stronger condition $\epsilon_\mathcal{H}<1$. 
The hopbound $\beta_\mathcal{H}$  now translates to $\beta_\mathcal{H} 
  %=6(\frac{1}{\epsilon}+3)^{\ell} 
  = 6(
  \frac{70{\log t}}{\epsilon_\mathcal{H}\rho}+3)^{\frac{1}{\rho} +1}$
\begin{comment}
\begin{equation*}
\begin{array}{lclclc}
  \beta_\mathcal{H} 
  &=&
  6(\frac{1}{\epsilon}+3)^{\ell} 
  &=& 6(
  \frac{70{\log t}}{\epsilon_\mathcal{H}\rho}+3)^{\frac{1}{\rho} +1} .  
\end{array}
\end{equation*}
\end{comment}

Note that by Corollary \ref{coro Q}, for every $k\in \krange$, the memory required to construct each hopset $H_k$ is 
$\tilde{O}((|E|+|\bigcup_{k'\in [k_0,k-1]}H_{k'}|+|H_k|)n^\rho)$. Also, note that $ \bigcup_{k'\in [k_0,k-1]}H_{k'} \cup H_k \subseteq \mch$.  By \ref{coro size hop}, w.h.p., we have $|\mathcal{H}| = \tilde{O}(n^{1+\rho})$. Thus, we can construct the entire hopset $\mch$ using 
$\tilde{O}((|E|\cdot +n^{1+\rho})n^\rho)$  total memory.

By Corollaries \ref{coro Q} \ref{coro size hop} and \ref{coro hop st}, we derive the following corollary, which summarizes the properties of our hopset. 

%\mtodo{I didn't find this corollary on appendix, is this on purpose?}

\begin{restatable}{corollary}{CorFinalHop}
        \label{coro hop}
    Let $G=(V,E)$ be an unweighted, undirected graph on $n$-vertices and let $\rho \in [1/{\log {\log n}},1/2],\epsilon_\mathcal{H}<1 $ and $t>1$ be input parameters.
    Our algorithm computes. w.h.p., a $(1+\epsilon_\mathcal{H},\beta_\mathcal{H})$-hopset of size  $\tilde{O}(n^{1+\rho})$ for distances up to $t$ in the graph $G$, where 
    \begin{equation*}
        \begin{array}{lclclclclc}
               \beta_\mathcal{H} &=& O\left(\frac{{\log t}}{\epsilon_\mathcal{H}\rho}\right)^{\frac{1}{\rho} +1} . 
        \end{array}
    \end{equation*}
    The algorithm terminates within $O(\frac{\beta_\mathcal{H}{\log t}}{\rho\gamma})$ communication rounds when 
    using machines with $O(n^\gamma)$ memory, and $\tilde{O}((|E|+n^{1+\rho}) n^{\rho})$ total memory.
\end{restatable}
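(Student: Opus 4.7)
The plan is to combine the three ingredient results of Section~\ref{sec hop} (Corollaries~\ref{coro Q}, \ref{coro size hop}, and \ref{coro hop st}) under a single rescaling of $\epsilon$, and then to multiply the per-scale round and memory bounds by the number of scales.

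First I would fix the rescaling. The construction was described in terms of a parameter $\epsilon < 1/10$, and the final per-scale multiplicative error bound in Corollary~\ref{coro hop st} is $1 + 70\epsilon\lambda/\rho$ with $\lambda = \lceil \log t\rceil - 1$. Setting $\epsilon := \epsilon_\mathcal{H}\rho/(70\log t)$ turns this into $1 + \epsilon_\mathcal{H}$, as desired, and the hypothesis $\epsilon < 1/10$ becomes $\epsilon_\mathcal{H}\rho/(70\log t) < 1/10$, which is automatic under $\epsilon_\mathcal{H} < 1$, $\rho < 1/2$, and $t > 1$. Substituting this $\epsilon$ into the definition of $\beta_\mathcal{H}$ from equation~(\ref{eq def beta}) gives the stated form $\beta_\mathcal{H} = O\!\bigl(\log t/(\epsilon_\mathcal{H}\rho)\bigr)^{1/\rho + 1}$, using $\ell = \lceil 1/\rho\rceil$.

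Second, the correctness and size claims follow directly: Corollary~\ref{coro hop st} yields $d_G(u,v) \le d_{G\cup \mathcal{H}}^{(\beta_\mathcal{H})}(u,v) \le (1+\epsilon_\mathcal{H}) d_G(u,v)$ for every pair with $d_G(u,v) \le 2^\lambda \le t$, which is exactly the $(1+\epsilon_\mathcal{H},\beta_\mathcal{H})$-hopset property for distances up to $t$. Corollary~\ref{coro size hop} provides the size bound $|\mathcal{H}| = \tilde{O}(n^{1+\rho})$ w.h.p.

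Third, I would derive the round complexity and total memory by aggregating over the $|\krange| = O(\log t)$ scales. For each $k \in \krange$, the hopset $H_k$ is produced by one call to Algorithm~\ref{alg connect} that uses as its auxiliary hopset $H = \bigcup_{k' < k} H_{k'} \subseteq \mathcal{H}$, hence of size $\tilde{O}(n^{1+\rho})$. By Lemma~\ref{lemma alg connect rt} applied with hopbound $\beta_\mathcal{H}$, each such call runs in $O(\beta_\mathcal{H}/\gamma)$ rounds per phase over $\ell + 1 = O(1/\rho)$ phases, i.e.\ $O(\beta_\mathcal{H}/(\rho\gamma))$ rounds, while using $\tilde{O}((|E|+|H|+|H_k|) n^\rho) = \tilde{O}((|E|+n^{1+\rho}) n^\rho)$ total memory. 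Summing over the $O(\log t)$ scales yields the claimed $O(\beta_\mathcal{H}\log t/(\rho\gamma))$ round complexity, while the total memory bound is unchanged since the hopsets only grow inside the already-accounted-for budget.

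The only substantive step is book-keeping: I would verify that at scale $k$ the hopset provided as input truly is a $(1+\epsilon_H,\beta_\mathcal{H})$-hopset for the distance range $[0,\tfrac{1}{2}\alpha\eps{\ell}] = [0,2^k]$ that Corollary~\ref{coro Q} requires, and that $\epsilon_{k-1}$ is small enough that the final telescoped error sits inside the $1+\epsilon_\mathcal{H}$ envelope. Both are exactly what Lemma~\ref{lemma H hopset} establishes inductively, so no new argument is needed, and the only potential pitfall is ensuring the failure probabilities from Lemmas~\ref{lemma size whp 1}--\ref{lemma isprse bunch}, \ref{lemma traverse}, \ref{lemma size hop 1}, and \ref{lemma H hopset} are union-bounded over all $O(\log t)$ scales and over all $n$ vertices, which costs only a polylogarithmic factor in the stated w.h.p.\ bounds.
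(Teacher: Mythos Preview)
Your proposal is correct and follows essentially the same route as the paper: rescale $\epsilon$ to $\epsilon_\mathcal{H}\rho/(70\log t)$, read off the hopbound from the definition of $\beta_\mathcal{H}$ with $\ell = \lceil 1/\rho\rceil$, invoke Corollary~\ref{coro hop st} for stretch, Corollary~\ref{coro size hop} for size, and sum the per-scale cost of Corollary~\ref{coro Q}/Lemma~\ref{lemma alg connect rt} over the $O(\log t)$ scales for the round and memory bounds. Your accounting is in fact slightly more careful than the paper's in two places---you make the $\ell+1 = O(1/\rho)$ phase factor explicit when deriving the $O(\beta_\mathcal{H}\log t/(\rho\gamma))$ bound, and you note the need for a union bound over scales and vertices---but neither changes the argument.
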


%\begin{corollary}
%    \label{coro hop}
%    Let $G=(V,E)$ be an unweighted, undirected graph on $n$-vertices and let $\rho \in [1/{\log {\log n}},1/2],\epsilon_\mathcal{H}<1 $ and $t>1$ be input parameters.
%    Our algorithm computes. w.h.p., a $(1+\epsilon_\mathcal{H},\beta_\mathcal{H})$-hopset of size  $\tilde{O}(n^{1+\rho})$ for distances up to $t$ in the graph $G$, where 
%    \begin{equation*}
%        \begin{array}{lclclclclc}
%               \beta_\mathcal{H} &=& O\left(\frac{{\log t}}{\epsilon_\mathcal{H}\rho}\right)^{\frac{1}{\rho} +1} . 
%        \end{array}
%    \end{equation*}
%    The algorithm terminates within $O(\frac{\beta_\mathcal{H}{\log t}}{\rho\gamma})$ communication rounds when 
%    using machines with $O(n^\gamma)$ memory, and $\tilde{O}((|E|+n^{1+\rho}) n^{\rho})$ total memory. 
%\end{corollary}

% t = \left( \frac{28({\log (\rho{\log (n)}/2)}+2/\rho)}{\epsilon_{M}} \right)^{{\log (\rho{\log (n)}/2)}+2/\rho + 1}

Denote by ${\log ^{(x)} n}$ the $x$-times iterated logarithm of $n$, i.e., ${\log ^{(2)}n}$ $  = {\log {\log n}}$. 
Recall that we use our hopset to build near-exact emulators. Therefore, we have a special interest in the case where 
$ t   = \tval $, 
for $\epsilon_{M} \leq 1$ and $\rho > 1/ {\log^{(2)} n}$.
Note that in this case, $\beta_\mch$ is at most  
\begin{equation*}
    \begin{array}{lclclclclc}
        O\left(\frac{ {\log^{(2)} n} \cdot { \log \left( \frac{{\log^{(2)} n}}{\epsilon_{M}} \right) }}{\epsilon_\mathcal{H}\rho}\right)^{\frac{1}{\rho} +1}&  =& 
        
        O\left(\frac{ {\log^{(2)} n} 
        \cdot ({\log^{(3)} n} -{\log \epsilon_{M}})}
        {\epsilon_\mathcal{H}\rho}\right)^{\frac{1}{\rho} +1},
    \end{array}
\end{equation*}
\begin{comment}
and the running time of the algorithm becomes 
\begin{equation*}
    \begin{array}{lclclclclc}
        O(\frac{\beta_\mathcal{H}{\log t}}{\rho\gamma}) 
        & = & 
        O\left( \frac{1}{\gamma}\left( \frac{ 
        {\log^{(2)} n}\cdot ({\log^{(3)}n} -{\log {\epsilon_{M}}})
        }
        {\epsilon_\mch\rho}\right)^{\frac{1}{\rho} +2}\right). 
    \end{array}
\end{equation*}
\end{comment}
The hopset properties are summarized in the following theorem.
%\mtodo{this theorem is also not in appendix, so I didn't restate it}

\begin{theorem}
    \label{theo hopset}
        Let $G=(V,E)$ be an unweighted, undirected graph on $n$-vertices and let 
        $\rho \in[ 1/{{\log^{(2)} n}}, 1/2]$, $\epsilon_\mathcal{H}<1 $ and $\epsilon_{M}\leq 1$ be input parameters.
        Our algorithm computes, w.h.p.,  a $(1+\epsilon_\mathcal{H},\beta_\mathcal{H})$-hopset of size $\tilde{O}(n^{1+\rho})$ for distances up to $t $   in the graph $G$, where $t =\tval$ and 
\begin{equation*}
    \begin{array}{lclclclclc}
         \beta_\mathcal{H} &=& 
        O\left(\frac{ {\log^{(2)} n} 
        \cdot ({\log^{(3)} n} -{\log \epsilon_{M}})}
        {\epsilon_\mathcal{H}\rho}\right)^{\frac{1}{\rho} +1}.
    \end{array}
\end{equation*}
    In sublinear MPC when using machines with $O(n^\gamma)$ memory, and $\tilde{O}((|E|+n^{1+\rho}) n^{\rho})$ total memory, the construction time is
    \begin{equation*}
    \begin{array}{lclclclclc}
        O\left( \frac{1}{\gamma}\left( \frac{ 
        {\log^{(2)} n}\cdot ({\log^{(3)}n} -{\log {\epsilon_{M}}})
        }
        {\epsilon_\mch\rho}\right)^{\frac{1}{\rho} +2}\right). 
    \end{array}
\end{equation*}

\end{theorem}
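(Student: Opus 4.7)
The plan is to derive this theorem as a direct instantiation of Corollary~\ref{coro hop}, which already provides, for any $t>1$, a $(1+\epsilon_\mathcal{H},\beta_\mathcal{H})$-hopset of size $\tilde{O}(n^{1+\rho})$ with hopbound $\beta_\mathcal{H}=O(\log t/(\epsilon_\mathcal{H}\rho))^{1/\rho+1}$, round complexity $O(\beta_\mathcal{H}\log t/(\rho\gamma))$, and total memory $\tilde{O}((|E|+n^{1+\rho})n^\rho)$. Choosing $t=\tval$, the theorem reduces to two asymptotic bookkeeping steps: first bounding $\log t$ in terms of $\log^{(2)} n$, $\log^{(3)} n$, $\log\epsilon_M$, and $\rho$, and then substituting that bound into the expressions for $\beta_\mathcal{H}$ and the round complexity inherited from Corollary~\ref{coro hop}.

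The main quantitative step is to show that $\log t=\Theta\bigl(\log^{(2)} n\cdot(\log^{(3)} n-\log\epsilon_M)\bigr)$. Setting $x:=\log(\rho\log n)+1/\rho$, the value of $t$ becomes $(28x/\epsilon_M)^{x+1}$, so $\log t=(x+1)\log(28x/\epsilon_M)$. I would verify that $x=\Theta(\log^{(2)} n)$ uniformly for $\rho\in[1/\log^{(2)} n,1/2]$: at the large end $\rho=1/2$ the dominant contribution is $\log(\rho\log n)=\Theta(\log^{(2)} n)$ while $1/\rho$ is constant, and at the small end $\rho=1/\log^{(2)} n$ we have $1/\rho=\log^{(2)} n$ and $\log(\rho\log n)$ remains $\Theta(\log^{(2)} n)$. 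Consequently $\log(28x/\epsilon_M)=O(\log^{(3)} n-\log\epsilon_M)$, since $\log x=O(\log^{(3)} n)$ and $\epsilon_M\le 1$ ensures $-\log\epsilon_M\ge 0$.

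Substituting this estimate into the hopbound from Corollary~\ref{coro hop} immediately yields the claimed
\[
\beta_\mathcal{H}=O\left(\frac{\log^{(2)} n\cdot(\log^{(3)} n-\log\epsilon_M)}{\epsilon_\mathcal{H}\rho}\right)^{1/\rho+1}.
\]
For the round complexity, the additional factor $\log t/\rho$ in Corollary~\ref{coro hop} equals $\epsilon_\mathcal{H}$ times exactly one extra copy of the base appearing inside $\beta_\mathcal{H}$; since $\epsilon_\mathcal{H}<1$ this factor is absorbed by $O(\cdot)$, raising the exponent from $1/\rho+1$ to $1/\rho+2$ and producing precisely the stated complexity. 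The size $\tilde{O}(n^{1+\rho})$ and the total memory $\tilde{O}((|E|+n^{1+\rho})n^\rho)$ are inherited verbatim from Corollary~\ref{coro hop}.

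There is no genuine combinatorial content beyond what Corollary~\ref{coro hop} already supplies; the only mild subtlety is confirming that the two summands inside $x$ remain balanced across the full admissible range of $\rho$ (so that neither dominates by more than a constant factor), after which every remaining quantity is obtained by routine substitution and by absorbing $\epsilon_\mathcal{H}<1$ into the asymptotic notation.
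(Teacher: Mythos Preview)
Your proposal is correct and follows essentially the same approach as the paper: the theorem is obtained by instantiating Corollary~\ref{coro hop} with the specific value $t=\tval$ and simplifying $\log t$ using the bound $\log(\rho\log n)+1/\rho=O(\log^{(2)} n)$ over the admissible range of $\rho$. The paper's argument is in fact just this one-line substitution preceding the theorem statement; your write-up supplies the same derivation with a little more detail on the asymptotics.
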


\section{Near-Additive Emulators}\label{sec emu}
    
In this section, we show how the general framework from Section \ref{sec general} can be used for the construction of near-additive emulators with $\tilde{O}(n)$ size. 
We are given an unweighted, undirected graph $G=(V,E)$  on $n$ vertices, parameters $\epsilon <1/10$ and $\rho \in[ 1/{\log {\log n}}, 1/2]$. We set $\kappa = {\log n}$ and $\alpha  = 1$.

In addition, we build $(1+\epsilon_\mch,\beta_\mch)$-hopset $\mch$ for distances up to $t = \tval$ in the graph $G$, using the algorithm devised in Section \ref{sec hop}. For this aim, we provide the algorithm from Section \ref{sec hop} with the parameters $\epsilon_{M} = \epsilon_\mch = 28\epsilon (\ell+1)$ and $\rho$. 
In Section \ref{sec rescale emu} we show that $\mch$ is a satisfactory hopset.

The algorithm in Section \ref{sec general} takes the parameters $\epsilon, \kappa, \rho, \alpha$, and the hopset $\mch$ as inputs. Let $M$ be the output of the algorithm.

Corollary \ref{coro Q} provides a straightforward upper bound on both the size of the output $M$ as well as the round complexity of the algorithm. It is left to show that the set $M$ is indeed a near-additive emulator.

\paragraph{Analysis of the Stretch}
By Corollary \ref{coro Q},  for $\ell = \emuell  $, for every pair of vertices $u,v\in V$  with $d_G(u,v)\leq \eps{\ell} $ we have  
    \begin{equation}\label{eq emu st}
    \begin{array}{lclclclclc}
        d_G(u,v)\leq d_{M}(u,v) 
        %&\leq&d^{(6(\frac{1}{\epsilon}+3)^{\ell})}_{M} (u,v)\\&& 
        \leq (1+\epsilon_H +28\epsilon \ell ) d_G(u,v) +28\cdot \eps{\ell-1}  .
    \end{array}
    \end{equation}
Initially, this relationship might appear unsatisfactory because it doesn't explicitly establish an upper limit on the stretch for distant pairs of vertices. However, we will demonstrate that it is indeed sufficient. Essentially, when considering a pair of vertices $u,v$, one can break down a shortest path between them into segments of precisely $\epsilon \ell$ in length, with the exception of the last segment which may be shorter. Equation \ref{eq emu st} is applicable to each of these segments. For each segment, excluding the last one, the additive term $28\cdot \epsilon (\ell-1)$ represents only an $\epsilon'$ fraction of the segment's length, where $\epsilon'$ is appropriately defined. Refer to Figure \ref{fig emu} for a visual representation of this concept. The formal proof of this argument is presented in the following lemma.

\begin{restatable}{lemma}{EmuStretch}\label{lemma st emu}
        For every pair of vertices $u,v\in V$, we have 
    \begin{equation*}
        \begin{array}{lclclclclclc}
             
        d_G(u,v)&\leq &  d_{M}(u,v) &\leq& \left(1+2\epsilon_{M}   \right) \cdot  {d_G(u,v)} +28 \eps{\ell-1} .
        \end{array}
    \end{equation*}
\end{restatable}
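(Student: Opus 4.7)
The plan is to handle the two cases $d_G(u,v) \le \eps{\ell}$ and $d_G(u,v) > \eps{\ell}$ separately. In the short case we just appeal to Corollary~\ref{coro Q}, specialized with $\alpha=1$ and $\epsilon_H = \epsilon_\mathcal{M}$: since a shortest path of length at most $\eps{\ell}$ lies within the guarantee of that corollary, we immediately get $d_M(u,v) \le (1+\epsilon_\mathcal{M}+28\epsilon\ell)\,d_G(u,v) + 28\eps{\ell-1}$, which is already inside the claimed envelope. The lower bound $d_G(u,v) \le d_M(u,v)$ follows verbatim from the argument at the start of Lemma~\ref{lemma stretch gen final}: every edge added to $M$ has weight equal to some $\dghb$-distance, hence to the length of a walk in $G\cup\mathcal{H}$, and since the hopset edges in $\mathcal{H}$ were themselves assigned weights equal to lengths of walks in $G$, any path in $M$ projects down to a walk in $G$ of the same total weight.

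For the long case, the plan is to fix a shortest $u$-$v$ path $\pi$ in $G$ and chop it into $k = \lceil d_G(u,v)/\eps{\ell}\rceil$ consecutive subpaths at breakpoints $u = x_0, x_1,\dots, x_k = v$, each of length at most $\eps{\ell}$. On each subpath Corollary~\ref{coro Q} applies, giving $d_M(x_j,x_{j+1}) \le (1+\epsilon_\mathcal{M}+28\epsilon\ell)\,d_G(x_j,x_{j+1}) + 28\eps{\ell-1}$. Summing and using the triangle inequality in $M$ together with $\sum_j d_G(x_j,x_{j+1}) = d_G(u,v)$ yields
\[
d_M(u,v) \;\le\; (1+\epsilon_\mathcal{M}+28\epsilon\ell)\,d_G(u,v) \;+\; k\cdot 28\eps{\ell-1}.
\]

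The only real step to check is that the accumulated additive error $k\cdot 28\eps{\ell-1}$ can be absorbed back into the multiplicative stretch at the cost of a single additive $28\eps{\ell-1}$. Using $k \le 1 + d_G(u,v)/\eps{\ell}$ and the identity $\eps{\ell-1}/\eps{\ell} = \epsilon$, I split
\[
k\cdot 28\eps{\ell-1} \;\le\; 28\eps{\ell-1} + 28\epsilon\cdot d_G(u,v),
\]
which, substituted back and combined with the definition $\epsilon_\mathcal{M} = 28\epsilon(\ell+1) = 28\epsilon\ell + 28\epsilon$, upgrades the multiplicative factor from $1+\epsilon_\mathcal{M}+28\epsilon\ell$ to $1+\epsilon_\mathcal{M}+28\epsilon(\ell+1) = 1+2\epsilon_\mathcal{M}$, matching the claimed bound exactly.

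The main obstacle I expect is bookkeeping rather than a genuine difficulty: one has to be careful that the segments actually satisfy the hypothesis $d_G(x_j,x_{j+1}) \le \eps{\ell}$ of Corollary~\ref{coro Q} (which they do by construction), and that the two cases are assembled with identical final constants so that the statement holds uniformly. No new probabilistic events are introduced, so the high-probability guarantees inherited from Corollary~\ref{coro Q} carry over directly.
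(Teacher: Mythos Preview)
Your proof is correct and follows essentially the same approach as the paper: chop a shortest $u$--$v$ path into $\lceil d_G(u,v)/\eps{\ell}\rceil$ segments of length at most $\eps{\ell}$, apply Corollary~\ref{coro Q} to each, sum, and absorb the accumulated additive term via $\lceil x\rceil \le x+1$ together with the identity $\epsilon_{M}=28\epsilon(\ell+1)$. Your explicit case split for $d_G(u,v)\le\eps{\ell}$ is harmless but unnecessary, since it is just the $k=1$ instance of the general argument; the paper handles both cases uniformly.
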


%\begin{lemma}    \label{lemma st emu}
%    For every pair of vertices $u,v\in V$, we have 
%    \begin{equation*}
%        \begin{array}{lclclclclclc}
             
%        d_G(u,v)&\leq &  d_{M}(u,v) &\leq& \left(1+2\epsilon_{M}   \right) \cdot  {d_G(u,v)} +28 \eps{\ell-1} .
%        \end{array}
%    \end{equation*}
%\end{lemma}

\begin{figure}
        \centering
        \includegraphics[scale = 0.15]{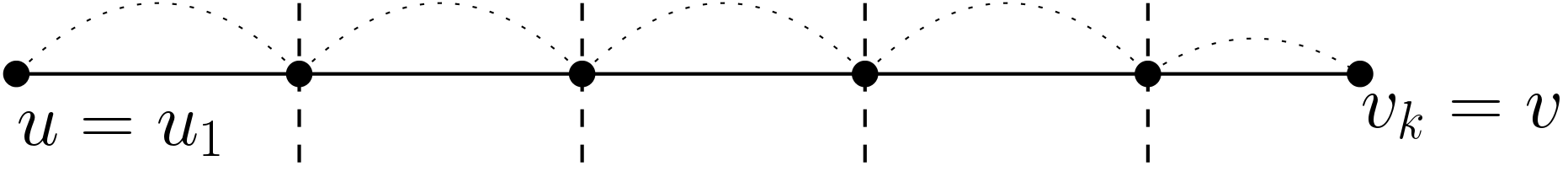}
        \caption{The $u-v$ path in $Q$. The solid line depicts the shortest path $\pi(u,v)$ in $G$. Dashed lines depicts the division of $\pi(u,v)$ into segments. Dotted lines depict the path we have between the endpoints of each segment by Corollary \ref{coro Q}.   }
        \label{fig emu}
    \end{figure}

\begin{proof}
    Let $u,v \in V$, and let $\pi(u,v)$ be a shortest path between them in the graph $G$. Let $r = \left\lceil \frac{d_G(u,v)}{\eps{\ell}}\right\rceil $. We divide the path $\pi(u,v)$ into $r$ segments $S_1, S_2, \dots, S_{r}$, each of length exactly $\eps{\ell}$, except for the last segment which may be shorter. For every segment $S_i$, denote by $u_i$ its first vertex and $v_i$ its last vertex. Observe that for all $i\in [1,r-1]$ we have $v_i = u_{i+1}$. In addition, we also have $u= u_1$ and $v = v_{r}$.  See Figure \ref{fig emu} for an illustration.

    By Corollary \ref{coro Q}, we have that for every index $i\in [1,r]$, it holds that 
       \begin{equation*}
        \begin{array}{lclclclclclc}
             d_G(u_i,v_i)&\leq& d_{M}(u_i,v_i) \\&\leq& (1+\epsilon_\mch +28\epsilon \ell ) d_G(u_i,v_i) +28\cdot \eps{\ell-1}  .
        \end{array}
    \end{equation*}
   In particular, we have $d_G(u_i,v_i)  \leq d_{M} (u_i,v_i)$ for all $i\in [1,r]$, and thus 
    $d_G(u,v)\leq d_{M}(u,v)$.

    Moreover, 
    \begin{equation*}
        \begin{array}{lclclclclclc}
             d_{M}(u,v) &\leq & \sum_{i=1}^{r} d_{M}(u_i,v_i) \\
             &\leq & \sum_{i=1}^{r} (1+\epsilon_H +28\epsilon \ell ) d_G(u_i,v_i) +28 \eps{\ell-1} \\

             &\leq &  (1+\epsilon_H +28\epsilon \ell ) d_G(u,v) + 
             \left\lceil \frac{d_G(u,v)}{\eps{\ell}} \right\rceil \cdot 28 \eps{\ell-1} \\
             
             &= &  (1+\epsilon_H +28\epsilon \ell ) d_G(u,v) + 
              28\epsilon \cdot d_G(u,v)
              \\ && \quad+  28 \eps{\ell-1} \\

             & = &  (1+\epsilon_H +28\epsilon (\ell+1) ) d_G(u,v) +  28 \eps{\ell-1}. 
                 
        \end{array}
    \end{equation*}
    
    Recall that $\epsilon_{M}= \epsilon_\mch = 28\epsilon (\ell+1)$. It follows that

    \begin{equation*}
        \begin{array}{lclclclclclc}
             d_{M}(u,v) &\leq &  (1+2\epsilon_{M}) d_G(u,v) +  28 \eps{\ell-1}.                  
        \end{array}
    \end{equation*}
\end{proof}

\paragraph{Rescaling} \label{sec rescale emu}

In this section we rescale $\epsilon$ to obtain a $(1+\epsilon_{M},\beta_{M})$-emulator. 
Set $\epsilon_{M} =28\epsilon (\ell+1)$. The condition $\epsilon< 1/10$ is translated to 
${\epsilon_{M}}/{28(\ell+1)}\leq 1/10$. We replace it with the stronger condition $\epsilon_{M}<1/3$.  
The additive term $28 \eps{\ell-1}$ now translates to $28\left(\frac{28 (\ell+1)}{\epsilon_{M}}\right)^{\ell-1}$.
Recall that $\ell = \emuell $, $\kappa = {\log n}$ and  $\rho \geq 1/\kappa$. Then, $\ell \leq {\log (\rho{\log n})} + 1/\rho$. Denote 
\begin{equation*}
\begin{array}{lclcl}
    \beta_{M}  &=& \betaemuko.
\end{array}
\end{equation*}

Next, we show that $\mch$ is indeed a satisfactory hopset. Note that our construction requires a hopset for distances up to $t'$, where 
\begin{equation*}
        \begin{array}{lclclclclc}
       t' =  \eps{\emuell} \leq  
        %\left(\frac{1}{\epsilon}\right)^{{\log (\kappa\rho)}+1/\rho} &=&
        \left(  \frac{28(\emuellbk+1)}{\epsilon_{M}}\right)^{\emuellbk}.
        \end{array}
    \end{equation*}
    Recall that $\mch$ is a  hopset for distances up to $t$.  Observe that 
\begin{equation*}
        \begin{array}{lclclclclc}
             
t'& = &\left(  \frac{28(\emuellbk+1)}{\epsilon_{M}}\right)^{\emuellbk} \\
&\leq & \tval & = & t .

        \end{array}
    \end{equation*}

Thus, $\mch$ is a satisfactory hopset for the construction. 

\begin{comment}
By Theorem \ref{theo hopset}, w.h.p., $\mch$ is a $(1+\epsilon_\mch,\beta_\mch)$ hopset of size $\tilde{O}(n^{1+\rho})$,  with  
\begin{equation*}
    \begin{array}{lclclclclc}
        \beta_\mathcal{H} &=& 
        O\left(\frac{ {\log^{(2)} n} 
        \cdot ({\log^{(3)} n} -{\log \epsilon_{M}})}
        {\epsilon_\mathcal{H}\rho}\right)^{\frac{1}{\rho} +1}.
    \end{array}
\end{equation*}
    
\end{comment}

\paragraph{Round complexity}
The round complexity and the memory usage of the algorithm are dominated by those needed to compute the hopset $\mch$.

\subsection{Round complexity}\label{sec append round complexity emu}
In this section, we analyze the round complexity of the construction of $\mch$. 
By Theorem \ref{theo hopset}, the construction time of $\mch$ is $$O\left( \frac{1}{\gamma}\left( \frac{ 
        {\log^{(2)} n}\cdot ({\log^{(3)}n} -{\log {\epsilon_{M}}})
        }
        {\epsilon_\mch\rho}\right)^{\frac{1}{\rho} +2}\right) $$ in sublinear MPC when using machines with $O(n^\gamma)$ memory, and $\tilde{O}((|E|+n^{1+\rho}) n^{\rho})$ total memory.

By Corollary \ref{coro Q}, given the hopset $\mch$, the emulator $M$ can be constructed  in $O(\beta_\mch/\gamma)$ rounds of sublinear MPC using machines with $O(n^\gamma)$ memory and $\tilde{O}((|E|+|\mch|+|M|)n^\rho)$ total memory. Observe that $|M| = \tilde{O}(n)$, thus the round complexity and the memory usage are dominated by those required for the construction of $\mch$. 
Recall that $\epsilon_\mch = \epsilon_{M}$.

By Corollary \ref{coro Q} and Lemma \ref{lemma st emu}, we derive the following corollary.

   \begin{restatable}{theorem}{EmuFinal}
%\begin{theorem}
    \label{theo emulators}
    Let $G=(V,E)$ be an unweighted, undirected graph on $n$ vertices, and let $\epsilon_{M} <1/3$,  $\rho \in[ 1/{\log {\log n}}, 1/2]$.   
    Our algorithm constructs an emulator $M$, of size $\tilde{O}(n) $, such that, w.h.p., for every pair of vertices $u,v\in V$   we have 
    \begin{equation*}
        \begin{array}{lclc}
    d_G(u,v) &\leq& d_{M} (u,v) \leq (1+2\epsilon_{M}) d_G(u,v) \\ && \quad+\betaemuk.
        \end{array}
    \end{equation*}
    %The expected size of the emulator is $\tilde{O}(n) $.
    %O( n \cdot {\log( \rho{\log n})})$. 
    %Moreover,  
    The round complexity is  $O\left( \frac{1}{\gamma}\left( \frac{ 
        {\log^{(2)} n}\cdot ({\log^{(3)}n} -{\log {\epsilon_{M}}})
        }
        {\epsilon_{M}\rho}\right)^{\frac{1}{\rho} +2}\right) $
    in sublinear MPC when using machines with $O(n^\gamma)$ memory, and $\tilde{O}((|E|+n^{1+\rho})\cdot n^{\rho})$ total memory.
%\end{theorem}
\end{restatable}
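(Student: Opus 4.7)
The plan is to apply the general framework from Section \ref{sec general} in a black-box manner, feeding it a carefully chosen limited-scale hopset built via Section \ref{sec hop}, and then assembling the three deliverables (size, stretch, round complexity) from results already established in the excerpt. The framework is invoked with $\kappa = {\log n}$, $\alpha = 1$, the given $\epsilon$ (to be rescaled), and the given $\rho$. The choice $\kappa = {\log n}$ is the key parameter pick: it forces the size bound $n^{1+1/\kappa}$ from Corollary \ref{coro Q} down to $\tilde{O}(n)$ (since $n^{1/{\log n}} = O(1)$), which is exactly what an emulator demands, at the cost of a large $\ell = \emuell \leq {\log (\rho {\log n})} + 1/\rho$ and hence a super-logarithmic additive term.

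First, I would build the hopset $\mathcal{H}$ by calling the algorithm of Section \ref{sec hop} with parameters $\epsilon_\mathcal{H} = \epsilon_M$, the same $\rho$, and $t = \tval$. Then I would verify, exactly as in the Rescaling paragraph just preceding the theorem, that this $t$ is at least $\frac{1}{2}\eps{\ell}$, so that $\mathcal{H}$ is a valid input hopset for the general framework at scale $\alpha \eps{\ell}$. Second, I would run Algorithms \ref{alg sample} and \ref{alg connect} to produce the edge set $M$. By Corollary \ref{coro Q}, w.h.p. $|M| = \tilde{O}(n)$, and for every pair $u,v$ with $d_G(u,v) \leq \eps{\ell}$ the stretch inequality of that corollary holds. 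Third, to extend the stretch guarantee to \emph{all} pairs, I would appeal directly to Lemma \ref{lemma st emu}, which chops a shortest $u$--$v$ path into segments of length exactly $\eps{\ell}$, applies Corollary \ref{coro Q} on each, and telescopes the additive errors into a single multiplicative blow-up of $(1+2\epsilon_M)$ together with one residual additive term $28\eps{\ell-1}$. Substituting the upper bound on $\ell$ converts this term into $\betaemuk$, matching the claim.

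For the round complexity and memory, Theorem \ref{theo hopset} gives the cost of constructing $\mathcal{H}$, namely $O\left(\frac{1}{\gamma}\left(\frac{{\log^{(2)} n}({\log^{(3)} n} - {\log \epsilon_M})}{\epsilon_\mathcal{H}\rho}\right)^{1/\rho + 2}\right)$ rounds and $\tilde{O}((|E|+n^{1+\rho})n^\rho)$ total memory. By Corollary \ref{coro Q} the additional work of running the framework on top of $\mathcal{H}$ takes only $O(\beta_\mathcal{H}/\gamma)$ rounds and $\tilde{O}((|E|+|\mathcal{H}|+|M|)n^\rho) = \tilde{O}((|E|+n^{1+\rho})n^\rho)$ total memory, since $|\mathcal{H}| = \tilde{O}(n^{1+\rho})$ by Corollary \ref{coro size hop} and $|M| = \tilde{O}(n)$. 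Thus the hopset construction strictly dominates, yielding the stated bounds after identifying $\epsilon_\mathcal{H} = \epsilon_M$.

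The main obstacle I anticipate is purely bookkeeping: the rescaling step must simultaneously (i) keep $\epsilon$ small enough to satisfy the precondition $\epsilon < 1/10$ of Corollary \ref{coro Q}, (ii) ensure that the hopset scale $t$ is large enough to dominate $\frac{1}{2}\eps{\ell}$, and (iii) produce an additive term in the final form $\betaemuk$ rather than in terms of the internal $\ell$. The assignment $\epsilon_M = 28\epsilon(\ell+1)$ together with the strengthened condition $\epsilon_M < 1/3$ handles (i) and (iii), and the inequality $(28(\emuellbk+1)/\epsilon_M)^{\emuellbk} \leq t$ handles (ii); these are mechanical but fiddly computations that I would grind through carefully to avoid off-by-one slips in the exponent of $\ell$.
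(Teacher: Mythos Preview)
Your proposal is correct and follows essentially the same approach as the paper: set $\kappa=\log n$, $\alpha=1$, build the limited-scale hopset $\mathcal{H}$ via Theorem \ref{theo hopset} with $\epsilon_\mathcal{H}=\epsilon_M$ and $t=\tval$, apply Corollary \ref{coro Q} for size, invoke Lemma \ref{lemma st emu} for the stretch extension to all pairs, and finish with the rescaling $\epsilon_M = 28\epsilon(\ell+1)$ and the verification that $t$ dominates $\frac{1}{2}\eps{\ell}$. The round-complexity argument---that the hopset construction dominates the subsequent call to Algorithm \ref{alg connect}---is exactly the paper's reasoning as well.
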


\section{Applications}\label{sec applications} 

    \subsection{Approximate Single Source Shortest Paths}\label{sec sssp}

In this section,  we devise an algorithm that uses the general framework to compute $(1+\epsilon)$-approximate shortest distances from a single source. Let $G = (V, E)$ be an unweighted, undirected graph on $n$ vertices, and let $\epsilon<1$, $\gamma \leq 1$ and $\rho \in[ 1/{\log {\log n}}, 1/2]$ be parameters. Let $s\in V$ be a source vertex. %Our algorithm works in heterogeneous extra memory MPC,  i.e., we have one machine with $\tilde{O}(n)$ memory, and $\Theta(\frac{mn^\rho}{n^\gamma})$  machines with $O(n^\gamma)$ memory each. 

Generally speaking, our algorithm builds a $(1+\epsilon_\mch,\beta_\mch)$ limited-scale hopset $\mathcal{H}$ and a $(1+\epsilon_{M},\beta_{M})$-emulator $M$. 
%Then, it aggregates all the edges of the emulator $M$ into a single machine $M^*$ which has $\tilde{O}(n)$ memory. Since the size of the emulator is $\tilde{O}(n)$, this can be done in $O(1)$ time. 
Then, the hopset and the emulator are used to compute distance estimates for all vertices. For every vertex, the minimal estimate is chosen.

%The computation of approximate distances unfolds in two phases. Firstly, the algorithm executes Bellman-Ford explorations limited to $(\beta_\mch)$-hops from the source $s$ in the graph $G\cup \mathcal{H}$. This step provides near-exact distance approximations for close pairs of vertices. Secondly, the machine $M^*$ locally computes distances in the emulator $M$ from  $s$ to all vertices $v\in V$. This step provides near-exact distance estimates for distant pairs of vertices. The machine $M^*$ sends Ultimately, for each vertex $v\in V$, the minimal distance estimate is chosen as the approximation for $d_G(s,v)$. 

%The rest of this section is organized as follows. In Sections \ref{sec near-exact hopsets msssp} and \ref{sec emulators msssp}, we provide the parameters and properties of the near-exact hopset $\mathcal{H}$ and of the near-additive emulator $M$, respectively. Then, in Section \ref{sec msssp computing distances} we show that the hopset and emulator can be used to compute near-exact distances for the graph $G$. Theorem \ref{theo sssp} summarizes our result. 

\subsubsection{Near-Exact Hopsets} \label{sec near-exact hopsets msssp}
%The algorithm begins by constructing a hopset using the algorithm provided in Section \ref{sec hop}. We 
Set $t = \tval$ and $\epsilon_\mch = \epsilon/3$. We provide the algorithm from Section \ref{sec hop} with the graph $G$, parameters $\epsilon_\mathcal{H},\rho$ and $t$.  By Theorem \ref{theo hopset}, the algorithm from Section \ref{sec hop} computes a $(1+\epsilon_\mathcal{H},\beta_\mathcal{H})$-hopset for distances up to $t$ in the graph $G$. 
\begin{comment} where 
\begin{equation*}
    \begin{array}{lclclclclc}
        \beta_\mathcal{H} &=& 
        O\left(\frac{ {\log^{(2)} n} 
        \cdot ({\log^{(3)} n} -{\log \epsilon_{M}})}
        {\epsilon_\mathcal{H}\rho}\right)^{\frac{1}{\rho} +1}.  
    \end{array}
\end{equation*}
The  size of $\mch$ is $\tilde{O}( n^{1+\rho})$, w.h.p.
\end{comment}
\subsubsection{Near-Additive Emulators}\label{sec emulators msssp}
%The algorithm computes an emulator using the algorithm given in Section \ref{sec emu}. We 
Set $\epsilon_{M} = \epsilon/3$ and $\kappa = {\log n}$. 
Recall that the hopset $\mathcal{H}$  is a 
$(1+\epsilon_\mathcal{H}, \beta_{\mathcal{H}})$-hopset for distances up to $t $ in the graph $G$. 
By Theorem \ref{theo emulators}, given the graph $G$, the parameters 
$\epsilon_{M}, \kappa,\rho$, w.h.p., the algorithm from Section \ref{sec emu} constructs a $(1+\epsilon_{\mathcal{H}}+\epsilon_{M},\beta_{M})$-emulator $M$ of size $\tilde{O}(n)$, where 
\begin{equation*}
    \begin{array}{lclclclclc}
         \beta_{M} &=& \betaemuk.
    \end{array}
\end{equation*}

%The size of the emulator is  $\tilde{O}(n) $ w.h.p.
   
\subsubsection{Computing Distances}\label{sec msssp computing distances}
We are ready to compute approximate distances. Let $s\in V$ be the source vertex. For every vertex $v\in V$, let $\mathpzc{d}(v)$ be the estimate of $v$ for $d_G(s,v)$. Set $\mathpzc{d}(s)=0$. For every $v\in V\setminus \{s\}$ set  $\mathpzc{d}(v) = \infty$.
%We begin with using the hopset $\mathcal{H}$ for computing distances. Specifically, 
First, we compute the $(\beta_\mch)$-hops limited distance in $G\cup \mathcal{H}$ from $s$ to all vertices $v\in V$. 
To achieve this, a Bellman-Ford exploration is executed from $s$ to $\beta_\mathcal{H}$-hops in the graph $G\cup \mathcal{H}$. 
By Theorem \ref{theo restricted bellman-ford}, this exploration requires $O(\beta_\mch/\gamma)$ time using machines with ${O}(n^\gamma)$ memory, and a total memory of $\tilde{O}((|E|+|\mch|)n^{\rho})$. For every vertex $v\in V$, a single machine from $M(v)$ sends 
%Recall that $M(u)$ is the set of machines that store edges with $u$ as their left endpoint, and that $r_u$ is the minimal index of a machine in $M(u)$. Also recall that the machine $M_{r_u}$ now holds $\dghBH(s,u)$. Next, the machine $M(u)$ sends  
$\langle v, \dghBH(s,v)\rangle$ to the machine $M^*$ with $\tilde{O}(n)$ memory. 

%that all  machines in $M(u)$ know the index $r_u$. For every vertex $v\in V$, the machine $M_{r_v}$ updates $\mathpzc{d}(v)$ to be the estimation of the distance $d_G(s,v)$ computed by the exploration, i.e., $\mathpzc{d}(v) = \dghBH(v,s)$. Note that if $v$ was not reached by the exploration, then $M_{r_v}$ does not update $\mathpzc{d}(v)$. 

%Next, we use the emulator $M$ to compute distances. First, 
Next, the edges of the emulator $M$ are aggregated into $M^*$ that has $\tilde{O}(n)$ memory. This machine internally computes $d_{M}(s,v)$ for every pair 
$s\in S$ and $v\in V$.  For every $v\in V$, the machine $M^*$ holds both $\dghBH(s,v)$ and $d_{M}(s,v)$. Thus, it can internally compute $\mathpzc{d}(v) = {\min\{\dghBH(s,v),d_{M}(s,v)  \} } $.

In the following lemma, we prove that the algorithm computes $(1+\epsilon)$-approximate distances. 
\begin{lemma}
    \label{lemma msssp error}
    Let $v\in V$. Then, $\mathpzc{d}(v) \leq (1+\epsilon)d_G(v,s)$.
\end{lemma}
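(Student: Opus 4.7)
The plan is to split into two cases based on whether $d_G(s,v)$ is below or above the scale $t$ for which the hopset $\mathcal{H}$ provides a near-exact approximation. In the ``short'' case, the Bellman--Ford exploration in $G\cup\mathcal{H}$ suffices; in the ``long'' case, the additive term of the emulator becomes a negligible fraction of the true distance, so the emulator estimate suffices.

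\textbf{Case 1: $d_G(s,v)\leq t$.} By Theorem~\ref{theo hopset}, $\mathcal{H}$ is a $(1+\epsilon_\mathcal{H},\beta_\mathcal{H})$-hopset for distances up to $t$, where $\epsilon_\mathcal{H}=\epsilon/3$. Since the Bellman--Ford exploration from $s$ inside $G\cup\mathcal{H}$ runs for $\beta_\mathcal{H}$ hops, the value $\dghBH(s,v)$ sent to $M^*$ satisfies
\[
\dghBH(s,v)\;\leq\;(1+\epsilon_\mathcal{H})d_G(s,v)\;=\;\left(1+\tfrac{\epsilon}{3}\right)d_G(s,v).
\]
Because $\mathpzc{d}(v)=\min\{\dghBH(s,v),d_M(s,v)\}\leq \dghBH(s,v)$, we obtain $\mathpzc{d}(v)\leq(1+\epsilon)d_G(s,v)$ in this case.

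\textbf{Case 2: $d_G(s,v)> t$.} Here I will use the emulator. By Theorem~\ref{theo emulators} applied with $\epsilon_M=\epsilon/3$,
\[
d_M(s,v)\;\leq\;\bigl(1+2\epsilon_M\bigr)d_G(s,v)+\beta_M\;=\;\left(1+\tfrac{2\epsilon}{3}\right)d_G(s,v)+\beta_M,
\]
where $\beta_M=\betaemuk$. The key quantitative step is to verify the inequality
\[
\beta_M\;\leq\;\tfrac{\epsilon}{3}\cdot t,
\]
which I expect to be the main technical obstacle and the whole reason $t$ was chosen of the form $\tval$. Plugging in the definitions of $t$ and $\beta_M$, both are expressions of the form $(C/\epsilon_M)^{L}$ with $L={\log(\rho\log n)}+\Theta(1/\rho)$, but $t$ carries an extra factor in the exponent (namely $+1/\rho$ more than $\beta_M$), which yields a multiplicative slack of $(C/\epsilon_M)^{1/\rho}\gg 3/\epsilon$. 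Thus $\beta_M\leq (\epsilon/3)t<(\epsilon/3)d_G(s,v)$, and combining with the previous display,
\[
d_M(s,v)\;\leq\;\left(1+\tfrac{2\epsilon}{3}+\tfrac{\epsilon}{3}\right)d_G(s,v)\;=\;(1+\epsilon)d_G(s,v).
\]
Since $\mathpzc{d}(v)\leq d_M(s,v)$, this completes Case~2.

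In both cases $\mathpzc{d}(v)\leq(1+\epsilon)d_G(s,v)$, as required. The lower bound $\mathpzc{d}(v)\geq d_G(s,v)$ (implicit in the statement since the approximation is one-sided) follows from the fact that both $\dghBH(s,v)$ and $d_M(s,v)$ correspond to genuine path weights in a graph that preserves or overestimates distances of $G$: the hopset edges are labelled with true distances in $G\cup\mathcal{H}$ which are at least $d_G$ by Lemma~\ref{lemma zeta}-type reasoning, and the emulator guarantee in Theorem~\ref{theo emulators} directly states $d_G(u,v)\leq d_M(u,v)$.
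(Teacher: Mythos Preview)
Your proof is correct and follows essentially the same two-case split as the paper: use the hopset guarantee when $d_G(s,v)\leq t$, and use the emulator when $d_G(s,v)>t$, showing $\beta_M\leq \epsilon_M\cdot t$ so the additive term is absorbed. One small arithmetic imprecision: the exponent of $t=\tval$ exceeds that of $\beta_M=\betaemuk$ by exactly~$1$, not by $1/\rho$; the actual slack is $\beta_M/t = 28\cdot\epsilon_M/(28(\emuellbk)) = \epsilon_M/(\emuellbk)\leq \epsilon_M=\epsilon/3$, which is precisely what you need (and is how the paper verifies it in~\eqref{eq emu guar}).
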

\begin{proof}
    %First, observe that the estimate $\mathpzc{d}(v)$ is ${\min\{ \dghBH(s,v), d_{M}(s,v)\} } $. 
    First, consider the case where $d_G(v,s) < t$. 
        Recall that $\mathcal{H}$ is a $(1+\epsilon_{\mathcal{H}},\beta_{\mathcal{H}})$-hopset for distances up to $t$, and that $\epsilon_\mch<\epsilon$. Thus, 
$\mathpzc{d}(v)  \leq  \dghBH(s,v) \leq   (1+\epsilon)d_G(u,v)$
        \begin{comment}
        \begin{equation*}
            \begin{array}{lclclc}
                 \mathpzc{d}(v) & \leq & \dghBH(s,v)& \leq &  (1+\epsilon)d_G(u,v).  
            \end{array}
        \end{equation*}
\end{comment} 

       Next, consider the case where $d_G(v,s)\geq t$. 
        Note that 
        \begin{equation}
            \label{eq emu guar}
            \begin{array}{llllllclclc}
            d_{M}(s,v) &\leq
        (1+\epsilon_{\mathcal{H}} + \epsilon_{M})d_G(u,v) + \beta_{M}, & and \\
            \beta_{M} & =  \betaemuk\\ 
            \leq & 
            
            \tval \cdot \epsilon_{M} 
             &= 
            t\cdot \epsilon_{M}.
            \end{array}
        \end{equation}
        Recall that $\epsilon_{M} = \epsilon_\mch = \epsilon/3$. 
        Hence,
        \begin{equation}
            \label{eq sv approximation}
            \begin{array}{lclclclc}
            d_{M}(s,v) \leq  (1+\epsilon_\mathcal{H}+\epsilon_{M})d_G(s,v) + \beta_{M} \leq 
             (1+\epsilon)d_G(s,v). 
            \end{array}
        \end{equation}    
\end{proof}
\paragraph{Round Complexity}\label{sec append sssp round complexity}

In the following lemma, we analyze the round complexity of the algorithm. 

\begin{lemma}
    \label{lemma sssp computational complexity}
    In heterogeneous MPC, when using a single machine with $\tilde{O}(n)$ memory and additional machines with size $O(n^\gamma)$ and a total memory of $\tilde{O}((|E|+n^{1+\rho}) n^{\rho})$,
    the round complexity of the algorithm is 
    \begin{equation*}
    \begin{array}{lclclclclc}
        O\left( \frac{1}{\gamma}\left( \frac{ 
        {\log^{(2)} n}\cdot ({\log^{(3)}n} -{\log {\epsilon }})
        }
        {\epsilon \rho}\right)^{\frac{1}{\rho} +2}\right). 
    \end{array}
\end{equation*} 
\end{lemma}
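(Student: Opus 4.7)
The plan is to add up the round cost of each step of the algorithm, identify the dominant term, and verify that the memory budget is respected throughout. The algorithm performs five actions: (a) construct the limited-scale hopset $\mathcal{H}$, (b) construct the emulator $M$, (c) run a $\beta_\mathcal{H}$-hop Bellman-Ford from $s$ in $G\cup\mathcal{H}$, (d) aggregate $M$ and the $n$ distance estimates $\dghBH(s,v)$ onto the single near-linear machine $M^*$, and (e) locally compute $d_M(s,v)$ and $\mathpzc{d}(v)=\min\{\dghBH(s,v),d_M(s,v)\}$ for every $v$. Since steps (d)--(e) are $O(1)$ rounds once the data fits on $M^*$, only (a), (b), (c) contribute asymptotically.

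First, by Theorem \ref{theo hopset} with $\epsilon_\mathcal{H}=\epsilon/3$ and the chosen $t=\tval$, the hopset $\mathcal{H}$ is built in
\[
O\!\left(\frac{1}{\gamma}\!\left(\frac{{\log^{(2)} n}\cdot({\log^{(3)} n}-{\log \epsilon_{M}})}{\epsilon_\mathcal{H}\rho}\right)^{\!\frac{1}{\rho}+2}\right)
\]
sublinear-MPC rounds, using $\tilde{O}((|E|+n^{1+\rho})n^{\rho})$ total memory and $O(n^\gamma)$ memory per machine. Second, by Theorem \ref{theo emulators} with $\epsilon_{M}=\epsilon/3$, the emulator $M$ is built in the same asymptotic round and memory budget; the emulator has size $\tilde{O}(n)$. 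Since $\epsilon_\mathcal{H}=\epsilon_{M}=\epsilon/3$, both bounds collapse into the expression appearing in the lemma (up to constants hidden in the $O(\cdot)$).

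Third, step (c) is a single-source Bellman-Ford limited to $\beta_\mathcal{H}$ hops in $G\cup\mathcal{H}$. By Theorem \ref{theo restricted bellman-ford} this costs $O(\beta_\mathcal{H}/\gamma)$ rounds using $O(n^\gamma)$ per-machine memory and $\tilde{O}(|E|+|\mathcal{H}|)=\tilde{O}(|E|+n^{1+\rho})$ total memory. Because $\beta_\mathcal{H}$ is the base of the exponent $\frac{1}{\rho}+1$ that already appears in (a), this term is strictly dominated by the hopset construction cost.

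Finally, for step (d) the emulator has $\tilde{O}(n)$ edges and the list of estimates $\{\dghBH(s,v)\}_{v\in V}$ has $n$ entries, so both fit on the single machine $M^*$ of memory $\tilde{O}(n)$, and sending them there is $O(1)$ rounds in the heterogeneous MPC model of \cite{fischer2022massively}. The local computations (Dijkstra/BFS in $M$ on $M^*$, plus the per-vertex minimum) add no communication rounds. The main step to double-check is therefore that the heavy memory use $\tilde{O}((|E|+n^{1+\rho})n^{\rho})$ is never exceeded, which is inherited directly from Theorems \ref{theo hopset} and \ref{theo emulators}, and that $|\mathcal{H}|+|M|=\tilde{O}(n^{1+\rho})$ so none of the auxiliary communication in (c)--(d) blows up the budget. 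Summing everything and using that (a) dominates gives the stated bound.
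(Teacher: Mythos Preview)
Your proof is correct and follows essentially the same approach as the paper: you decompose the algorithm into the hopset construction (Theorem~\ref{theo hopset}), emulator construction (Theorem~\ref{theo emulators}), the $\beta_\mathcal{H}$-hop Bellman--Ford (Theorem~\ref{theo restricted bellman-ford}), and the $O(1)$-round aggregation/local steps, and observe that the first dominates. The paper's own argument is structured identically and invokes the same theorems in the same order.
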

\begin{proof}
Recall that $ \epsilon_\mathcal{H} = \epsilon_{M} = \epsilon /3$, and also  
\begin{equation*}
    \begin{array}{lclclclclc}
        \beta_\mathcal{H} &=& 
        O\left(\frac{ {\log^{(2)} n} 
        \cdot ({\log^{(3)} n} -{\log \epsilon})}
        {\epsilon\rho}\right)^{\frac{1}{\rho} +1}.
    \end{array}
\end{equation*}
    By Theorems \ref{theo hopset} and \ref{theo emulators}, w.h.p., in sublinear MPC, when using machines with $O(n^\gamma)$ memory and $\tilde{O}((|E|+n^{1+\rho}) n^{\rho})$ total memory, the hopset $\mch$ and the emulator $M$ can be constructed in time 
    \begin{equation*}
        \begin{array}{lclclclclc}
            O\left( \frac{1}{\gamma}\left( \frac{ 
            {\log^{(2)} n}\cdot ({\log^{(3)}n} -{\log {\epsilon}})
            }
            {\epsilon\rho}\right)^{\frac{1}{\rho} +2}\right). 
        \end{array}
    \end{equation*}
    By Theorem \ref{theo hopset}, the size of the hopset $\mch$ is $\tilde{O}(n^{1+\rho})$. Together with Theorem \ref{theo restricted bellman-ford}, we have that the Bellman-Ford exploration to depth $\beta_\mathcal{H}$ from the vertex $s$ in the graph $G\cup H$ can be computed in $O(\beta_\mathcal{H}/\gamma)$ rounds, when using machines with $O(n^\gamma)$ memory, and $\tilde{O}(|E|+n^{1+\rho})$ total memory. 
    
    Collecting the distance estimates computed by the Bellman-Ford explorations to a single machine $M^*$ with $\tilde{O}(n)$ memory can be done in a single round. 
    By Theorem \ref{theo emulators}, the size of the emulator $M$ is $\tilde{O}(n)$. Hence, collecting the edges of $M$ into $M^*$ also requires a single round. 
    
    The distance estimates in the emulator $M$, as well as the final estimates $\mathpzc{d}(v)={\min\{ \dghBH(s,v), d_{M}(s,v)\} } $ for every vertex $v\in V$ are computed locally by $M^*$.

\end{proof}

\paragraph{Summary}

The following theorem summarizes the algorithm. 

\begin{theorem}\label{theo sssp}
    Let $G=(V,E)$ be an unweighted, undirected graph on $n$ vertices, and let $\epsilon <1$ and $\rho \in[ 1/{\log {\log n}}, 1/2]$ be parameters.  Also, let $s\in V$ be a source vertex.
    For every $v\in V$, w.h.p., our randomized algorithm computes $(1+\epsilon)$-approximate shortest paths from $s$ in
    $ O\left( \frac{1}{\gamma}\left( \frac{ 
        {\log^{(2)} n}\cdot ({\log^{(3)}n} -{\log {\epsilon }})
        }
        {\epsilon \rho}\right)^{\frac{1}{\rho} +2}\right) $
     rounds of heterogeneous MPC,
      when using a single machine with $\tilde{O}(n)$ memory and additional machines with $O(n^\gamma)$ memory and a total memory of size $\tilde{O}((|E|+n^{1+\rho})\cdot n^{\rho})$.
\end{theorem}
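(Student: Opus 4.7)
The plan is to assemble the end-to-end algorithm from the building blocks already developed in Sections \ref{sec hop}, \ref{sec emu}, and \ref{sec msssp computing distances}, and to verify that the round complexity, memory usage, and stretch guarantees all compose correctly. The high-level strategy is the standard hopset-plus-emulator combination: use the limited-scale hopset $\mathcal{H}$ to obtain a $(1+\epsilon)$-approximation for pairs at distance below a threshold $t$, and use the sparse near-additive emulator $M$ (collected on the near-linear machine) to absorb its super-constant additive error into a multiplicative $(1+\epsilon)$ factor for pairs at distance at least $t$.

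Concretely, I would first instantiate the hopset from Theorem \ref{theo hopset} with $\epsilon_\mathcal{H} = \epsilon/3$ and the threshold $t = \tval$, yielding a $(1+\epsilon_\mathcal{H},\beta_\mathcal{H})$-hopset $\mathcal{H}$ of size $\tilde{O}(n^{1+\rho})$. Next, I would construct the emulator $M$ via Theorem \ref{theo emulators} with $\epsilon_M = \epsilon/3$ and $\kappa = \log n$; since this construction internally builds its own hopset at the same scale $t$ (or we can reuse $\mathcal{H}$ directly), it produces an emulator of size $\tilde{O}(n)$ with additive term $\beta_M = \betaemuk$. I would then run a single Bellman-Ford from $s$ in $G\cup\mathcal{H}$ for $\beta_\mathcal{H}$ hops using Theorem \ref{theo restricted bellman-ford}, ship each estimate $\dghBH(s,v)$ and all edges of $M$ to the single near-linear machine $M^\ast$, and locally compute $d_M(s,v)$ there. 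The final estimate for each $v$ is $\mathpzc{d}(v) = \min\{\dghBH(s,v),\, d_M(s,v)\}$.

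For correctness, I would invoke Lemma \ref{lemma msssp error} as follows. When $d_G(s,v) < t$ the hopset gives $\dghBH(s,v) \le (1+\epsilon_\mathcal{H})d_G(s,v) \le (1+\epsilon)d_G(s,v)$. When $d_G(s,v) \ge t$, the emulator gives $d_M(s,v) \le (1+\epsilon_\mathcal{H}+\epsilon_M)d_G(s,v) + \beta_M$; the key inequality to verify is $\beta_M \le \epsilon_M \cdot t$, which is exactly the design choice behind the value of $t$ in Section \ref{sec rescale emu}. Combined with $d_G(s,v) \ge t$, this converts the additive term into $\epsilon_M \cdot d_G(s,v)$, giving the total bound $(1+\epsilon)d_G(s,v)$. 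Lower bounds $d_G(s,v)\le \mathpzc{d}(v)$ follow immediately because both the hopset and the emulator preserve distances from below.

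For the round/memory complexity, I would combine Lemma \ref{lemma sssp computational complexity} with the constituent bounds: constructing $\mathcal{H}$ dominates the running time (Theorem \ref{theo hopset}), the emulator construction is of the same order (Theorem \ref{theo emulators}), the Bellman-Ford step costs $O(\beta_\mathcal{H}/\gamma)$, and the gather-to-$M^\ast$ and local SSSP steps take $O(1)$ additional rounds since $|M| = \tilde{O}(n)$. The total memory is dominated by the hopset construction, which is $\tilde{O}((|E|+n^{1+\rho})n^\rho)$. The main delicate point is parameter alignment: the proof hinges on the fact that the chosen $t$ is simultaneously (i) large enough that $\beta_M \le \epsilon_M \cdot t$, so that the emulator's additive stretch is absorbed, and (ii) small enough that $\log t = O(\log\log n \cdot (\log^{(3)} n - \log\epsilon))$, keeping $\beta_\mathcal{H}$ within $\poly(\log\log n)$. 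Verifying this consistency is the only nontrivial technical step; everything else is a direct plug-in of the theorems from the preceding sections.
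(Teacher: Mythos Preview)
Your proposal is correct and follows essentially the same approach as the paper: build the limited-scale hopset $\mathcal{H}$ with $\epsilon_\mathcal{H}=\epsilon/3$ and the emulator $M$ with $\epsilon_M=\epsilon/3$, run a $\beta_\mathcal{H}$-hop Bellman--Ford from $s$ in $G\cup\mathcal{H}$, collect $M$ on the near-linear machine, and take the minimum of the two estimates; the correctness split on $d_G(s,v)\lessgtr t$ via the key inequality $\beta_M\le\epsilon_M\cdot t$ and the complexity accounting are exactly Lemmas \ref{lemma msssp error} and \ref{lemma sssp computational complexity}.
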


\subsection{Approximate Multiple Source Shortest Paths}\label{sec multi distances}

In this section, we devise an algorithm that uses the general framework to compute $(1+\epsilon)$-approximate shortest distances from a set $S$ of sources in unweighted, undirected graphs. Let $G = (V, E)$ be an unweighted, undirected graph on $n$ vertices, and let $\epsilon<1$, $\gamma \leq 1$ and $\rho\in [1/{\log{\log n} } ,1/2]$ be parameters. Let $S\subset V$ be a set of sources. 

We build on the algorithm given in Section \ref{sec sssp}. 
Recall that we use a single machine with $\tilde{O}(n)$ memory and additional machines with size $O(n^\gamma)$ and a total memory of $\tilde{O}((|E|+n^{1+\rho})n^{\rho})$. In addition, recall that the size of the hopset $\mch$ is w.h.p., $\tilde{O}(n^{1+\rho})$. Thus, by Theorem \ref{theo multi source BF} we can compute up to $O(n^\rho)$ limited $\beta_\mch$ Bellman-Ford explorations, in parallel, in $O(\beta_\mch/\gamma)$ time.  However, the machine $M^*$ has only $\tilde{O}(n)$ memory, hence, it can use the emulator $M$ to compute and store distances only for $\poly({\log n})$ sources. 

Recall that $\rho >1/{\log {\log n}}$, thus, $n^\rho = \Omega(\poly{\log n})$ whenever $n>1$. Therefore, computing distances from $\poly({\log n})$ sources can be done using the same resources as in Section \ref{sec sssp}. Computing distances from $O(n^\rho)$ sources in $\poly({\log {\log n}})$ time can be achieved by employing additional machines with near-linear memory. 

In Sections \ref{sec polylogn sources} and \ref{sec nrho sources}, we discuss the adaptations to the algorithm provided in Section \ref{sec sssp} that provide approximate distances in $\poly({\log {\log n}})$ time from $\poly({\log n})$ and from $O(n^\rho)$ sources, respectively. 

\subsubsection{Computing Distances From $\poly({\log n})$ Sources}\label{sec polylogn sources}
Let $S\subset V$ be a set of vertices of size $\poly({\log n})$. The algorithm follows the algorithm given in Section \ref{sec sssp} to compute the hopset $\mch$ and the emulator $M$. Then, the algorithm executes Bellman-Ford explorations in $G\cup \mch$ from all sources in $S$, in parallel. Each exploration is executed to at most $\beta_\mch$ hops. Recall that when the exploration terminates, for every pair $(s,v)\in S\times V$, there is exactly one machine in $M(v)$ which stores $\dghBH(s,v)$. This machine sends $\dghBH(s,v)$ to the machine $M^*$. Since there are $\poly({\log n})$ sources in $S$, the machine $M^*$ receives $\tilde{O}(n)$ messages, which it stores locally. 

As in Section \ref{sec sssp}, the emulator $M$ is aggregated to the machine $M^*$, which internally computes Bellman-Ford explorations from all sources in $S$. When this process terminates, the machine $M^*$ stores $d_{M}(s,v)$ and $\dghBH(s,v)$ for every pair $(s,v)\in S\times V$. Thus, it can locally compute $\mathpzc{d}_s(v) = {\min \{d_{M}(s,v)\dghBH(s,v)\}}$ for every pair $(s,v)\in S\times V$. 

By the same argument given in Section \ref{sec sssp}, for every pair $(s,v)\in S\times V$, we have that $\mathpzc{d}_s(v)$ is now a $(1+\epsilon)$-approximation to $d_G(s,v)$. In addition, note that the running time of the algorithm is dominated by the time required to build the hopset $\mathcal{H}$ and the emulator $M$. Thus, the round complexity of the algorithm matches that of Theorem \ref{theo sssp}. 
The following theorem summarizes the properties of the algorithm.

\begin{theorem}\label{theo sssp log sources}
    Let $G=(V,E)$ be an unweighted, undirected graph on $n$ vertices, and let $\epsilon <1$ and $\rho \in[ 1/{\log {\log n}}, 1/2]$ be parameters.  Also, let $ S\subset V$ be a set of sources with size $\poly({\log n})$. 
    For every pair $(s,v)\in S\times V$, w.h.p., our algorithm computes $(1+\epsilon)$-approximate shortest paths in
    $ O\left( \frac{1}{\gamma}\left( \frac{ 
        {\log^{(2)} n}\cdot ({\log^{(3)}n} -{\log {\epsilon }})
        }
        {\epsilon \rho}\right)^{\frac{1}{\rho} +2}\right) $
      rounds of heterogeneous extra memory MPC,
      when using a single machine with $\tilde{O}(n)$ memory and additional machines with size $O(n^\gamma)$ and a total memory of size  $\tilde{O}((|E|+n^{1+\rho})n^{\rho})$. 
\end{theorem}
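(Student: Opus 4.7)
The plan is to generalize the single-source algorithm of Theorem \ref{theo sssp} (from Section \ref{sec sssp}) to a set $S$ of $\poly(\log n)$ sources, exploiting the fact that this larger source set still fits comfortably within the resources already paid for. The preprocessing step, namely the construction of the $(1+\epsilon_\mch,\beta_\mch)$-hopset $\mch$ and the $(1+\epsilon_M,\beta_M)$-emulator $M$, is identical to that of Section \ref{sec sssp}, and by Theorems \ref{theo hopset} and \ref{theo emulators} runs within the round complexity stated in the theorem using $\tilde{O}((|E|+n^{1+\rho})n^{\rho})$ total memory and one machine of $\tilde{O}(n)$ memory.

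The first new step is to compute $\beta_\mch$-hop limited distances from every source in $S$ in the graph $G\cup\mch$. Since $|S|=\poly(\log n)$ is much smaller than $n^\rho$, every vertex $v\in V$ is trivially touched by at most $\poly(\log n) \leq \tilde{O}(n^\rho)$ explorations, so the hypothesis of Theorem \ref{theo multi source BF} holds, and we can run all $|S|$ explorations in parallel in $O(\beta_\mch/\gamma)$ rounds using $\tilde{O}((|E|+|\mch|)n^\rho)=\tilde{O}((|E|+n^{1+\rho})n^\rho)$ total memory. Once the explorations terminate, for each pair $(s,v)\in S\times V$ a unique machine holds $\dghBH(s,v)$ and forwards it to the designated near-linear machine $M^*$. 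The total number of such messages is $|S|\cdot n = \tilde{O}(n)$, so $M^*$ can store them all.

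The second new step is to use the emulator $M$ to handle far pairs. As in Section \ref{sec sssp}, the edges of $M$, being of size $\tilde{O}(n)$, are aggregated to $M^*$ in $O(1)$ rounds. $M^*$ then locally runs Bellman--Ford (or BFS in the unweighted sense, though $M$ is weighted) from each of the $|S|=\poly(\log n)$ sources, internally producing $d_M(s,v)$ for every pair; this is a purely local computation requiring no additional rounds. Finally, for every pair $(s,v)$, $M^*$ outputs $\mathpzc{d}_s(v)=\min\{\dghBH(s,v),\,d_M(s,v)\}$. Correctness for each pair follows verbatim from the case analysis of Lemma \ref{lemma msssp error}: for $d_G(s,v)<t$ the hopset estimate already gives a $(1+\epsilon)$-approximation, and for $d_G(s,v)\geq t$ the additive error $\beta_M$ of the emulator is at most $\epsilon_M\cdot t \leq \epsilon_M\cdot d_G(s,v)$, so the emulator estimate gives a $(1+\epsilon)$-approximation.

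The main point to verify is that no single component exceeds its memory budget: the multi-source BF stage is bounded by Theorem \ref{theo multi source BF} since $|S|\le n^\rho$; the messages delivered to $M^*$ total $\tilde{O}(n)$; and the emulator collected on $M^*$ also has size $\tilde{O}(n)$. The round complexity is dominated by the construction of $\mch$ (Theorem \ref{theo hopset}) and of $M$ (Theorem \ref{theo emulators}), yielding exactly the bound of Theorem \ref{theo sssp}; the extra parallel explorations and local computation contribute lower-order terms. I do not anticipate any technical obstacle beyond carefully noting that $|S|=\poly(\log n)$ is absorbed into polylogarithmic factors at each stage.
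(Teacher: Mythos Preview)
Your proposal is correct and follows essentially the same approach as the paper: reuse the hopset and emulator from the single-source algorithm, run the $|S|$ limited Bellman--Ford explorations in parallel via Theorem~\ref{theo multi source BF} (using $|S|=\poly(\log n)\le \tilde O(n^\rho)$), ship all $\tilde O(n)$ estimates plus the emulator to the single near-linear machine $M^*$, and let $M^*$ compute the minimum locally. The paper's argument is identical in structure and in the justification of the memory and round bounds.
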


\subsubsection{Computing Distances From $O(n^\rho)$ Sources}\label{sec nrho sources}
In order to compute distances from a set $S$ of $O(n^\rho)$ sources, we use $O(n^\rho)$ machines with $\tilde{O}(n)$ memory. The rest of the machines have memory $O(n^\gamma)$, and the total memory remains  $\tilde{O}((|E|+n^{1+\rho})n^{\rho})$. We build on the algorithm provided in Section \ref{sec polylogn sources}, with two differences. First, every near-linear memory machine is assigned a source vertex $s\in S$. For every vertex $v\in V$, the distance estimate $\dghBH(s,v)$ will be sent to the machine assigned to $s$. Second, the entire emulator $M$ is also sent to this machine. Now, each near-linear machine can compute distance estimates from its source in the emulator $M$, and infer near-exact distance approximations from the source.  

Formally, we assume that all machines are indexed, and that the low-index machines are near-linear. We assign the sources in $S$ a unique ID in the range $[0,|S|-1]$. For this aim, we sort the list of sources on the machines, such that each machine has at most $O(n^\gamma)$ sources stored on it. Since each machine knows its index, it can infer the indices corresponding to the sources it stores, and therefore it can assign them with appropriate IDs. 
For every $j\in [0,|S|-1]$, machine $M_j$ is assigned to the source $s_j$. 
This process is computed before the parallel Bellman-Ford exploration is executed. 

When the parallel Bellman-Ford exploration terminates, each machine that stores a distance estimate $\dghBH(s_j,v)$ knows the unique index $j$ of the source $s_j$. As in Section \ref{sec polylogn sources}, for every pair $(s_j,v)\in S\times V$ there is only one machine $M$ assigned with the task of sending the distance estimate $\dghBH(s_j,v)$ to a near-linear machine. Now, the machine $M$ sends the distance estimate  $\dghBH(s_j,v)$ to the machine $M_j$. Observe that the number of distance estimates sent from each machine is bounded by the size of its memory. In addition, each machine that receives messages in this step has near-linear memory, and it receives $n$ distance estimates. Therefore, the I/O limitation is not violated. 

Next, the emulator $M$ is computed. Recall that when the computation of $M$ terminates, it is stored on a set of output machines. We use the copying procedure discussed in Section \ref{sec edge selec gen}, to create $|S|$ copies of $M$. Then, for every $j\in [0,|S|-1]$, the set of machines that store the $j$th copy of $M$ send it to machine $M_j$. Note that in this process, the number of messages sent by each machine is bounded by its memory. In addition, all machines that receive input have near-linear memory. Since the size of $M$ is $\tilde{O}(n)$, this also does not violate I/O limitations. 

Finally, for every $j\in [0,|S|-1]$, the machine $M_j$ computes distances from $s_j$ to all $v\in V$. In addition, the machine $M_j$ also stores the distance $\dghBH(s_j,v)$ for every $v\in V$. Thus, it can locally compute $\mathpzc{d}_{s_j}(v) = {\min \{ \dghBH(s_j,v), d_{M}(s_j,v) \}} $ for every $v\in V$. This completes the description of the algorithm. 

Note that the algorithm has the same round complexity as the algorithm given in Section \ref{sec polylogn sources}. However, this algorithm uses $O(n^\rho)$ near-linear machines. Therefore, it works in near-linear MPC, rather than in the heterogeneous model.

\begin{theorem}\label{theo sssp poly sources}
    Let $G=(V,E)$ be an unweighted, undirected graph on $n$ vertices, and let $\epsilon <1$ and $\rho \in[ 1/{\log {\log n}}, 1/2]$ be parameters.  Also, let $ S\subset V$ be a set of sources with size $O(n^\rho)$. 
    For every pair $(s,v)\in S\times V$, w.h.p., our algorithm computes $(1+\epsilon)$-approximate shortest paths in
    $ O\left( \frac{1}{\gamma}\left( \frac{ 
        {\log^{(2)} n}\cdot ({\log^{(3)}n} -{\log {\epsilon }})
        }
        {\epsilon \rho}\right)^{\frac{1}{\rho} +2}\right) $
      rounds of near-linear extra memory MPC,
      when using a $O(n^\rho)$ machines with $\tilde{O}(n)$ memory, additional machines with size $O(n^\gamma)$ and a total memory of  $\tilde{O}((|E|+n^{1+\rho})n^{\rho})$.
\end{theorem}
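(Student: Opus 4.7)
The plan is to build directly on the construction from Section \ref{sec polylogn sources}, adjusting only the final two stages so that distance estimates and the emulator land on distinct near-linear machines, one per source, rather than all being concentrated on a single $M^{*}$. The hopset $\mathcal{H}$ and emulator $M$ are constructed exactly as in Section \ref{sec sssp}; by Theorems \ref{theo hopset} and \ref{theo emulators} this takes the claimed round complexity and fits within the total memory $\tilde{O}((|E|+n^{1+\rho})n^{\rho})$, and it uses only sublinear machines. The correctness of the final distance estimates $\mathpzc{d}_{s_j}(v) = \min\{\dghBH(s_j,v),\ d_{M}(s_j,v)\}$ then follows verbatim from Lemma \ref{lemma msssp error}, since that lemma is per-source and does not depend on where the estimates are stored.

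Next, I would set up the routing infrastructure. First, assign each source in $S$ a unique ID in $[0,|S|-1]$ by sorting $S$ across the machines using the sort of \cite{GoodrichSZ11} in $O(1/\gamma)$ rounds; each machine can then infer the IDs for the sources it stores from its own machine index. Reserve the $|S|=O(n^\rho)$ lowest-indexed machines as the near-linear machines $M_0,\dots,M_{|S|-1}$ and associate $M_j$ with source $s_j$. Then execute a multiple-sources restricted Bellman-Ford from $S$ in $G\cup\mathcal{H}$ to depth $\beta_{\mathcal{H}}$: by Lemma \ref{lemma traverse} applied to the set $S$ (whose size $O(n^\rho)$ matches the per-edge memory budget $\mu=\tilde{O}(n^\rho)$ used in Section \ref{sec edge selec gen}), w.h.p.\ each vertex is traversed by at most $\tilde{O}(n^\rho)$ explorations, so by Theorem \ref{theo multi source BF} this costs $O(\beta_{\mathcal{H}}/\gamma)$ rounds within our total memory. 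For each pair $(s_j,v)$ with a finite estimate, exactly one machine holds $\dghBH(s_j,v)$ and knows the ID $j$, so it can send that single message to $M_j$; each $M_j$ receives at most $n$ such messages, so the I/O bounds hold.

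In parallel, I would broadcast the emulator $M$ to all the near-linear machines. Using the edge-copying subroutine of Lemma \ref{lemma copy edges} applied to the set of output machines holding $M$ with $\mu=|S|=O(n^\rho)$ copies, we produce $|S|$ copies of each edge of $M$ in $O(1/\gamma)$ rounds and $\tilde{O}(|M|\cdot n^\rho)=\tilde{O}(n^{1+\rho})$ total memory, which is absorbed into the overall memory budget. Then for each $j$, the set of machines storing the $j$th copy of $M$ ship those edges to $M_j$; since $|M|=\tilde{O}(n)$, each $M_j$ receives $\tilde{O}(n)$ words, within its memory and I/O limits, and each sublinear machine sends at most $O(n^\gamma)$ words. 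Finally, each $M_j$ locally runs Bellman-Ford from $s_j$ in $M$ to obtain $d_M(s_j,v)$ for every $v\in V$, and combines with its stored estimates $\dghBH(s_j,v)$ to output $\mathpzc{d}_{s_j}(v)$.

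The main technical obstacle is ensuring that the two communication phases (routing the hop-bounded estimates and broadcasting the emulator) respect both the per-machine I/O limit and the total memory limit simultaneously for $|S|$ as large as $\Theta(n^\rho)$; Lemma \ref{lemma traverse} is what makes the Bellman-Ford phase feasible (it is exactly the reason we assume $\rho\in[1/\log\log n,1/2]$ throughout), and the edge-copying mechanism of Lemma \ref{lemma copy edges} is what makes the broadcast phase feasible without any single machine becoming a bottleneck. Once those two pieces are in place, the round complexity is dominated by the construction of $\mathcal{H}$ and $M$, giving the bound in the statement, and correctness and total memory follow directly from the corresponding statements in Section \ref{sec polylogn sources}.
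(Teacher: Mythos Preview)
Your proposal is correct and follows essentially the same approach as the paper: assign IDs to sources by sorting, dedicate one near-linear machine $M_j$ per source $s_j$, run the parallel restricted Bellman-Ford from $S$ in $G\cup\mathcal{H}$ and route each estimate $\dghBH(s_j,v)$ to $M_j$, replicate the emulator $M$ via the copying subroutine of Lemma~\ref{lemma copy edges} and ship the $j$th copy to $M_j$, and finally have each $M_j$ compute $d_M(s_j,\cdot)$ locally and take the minimum.

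One small technical point: you invoke Lemma~\ref{lemma traverse} to bound the number of explorations traversing each vertex, but that lemma is specifically about explorations originating from $i$-sparse vertices in the general framework, and its probabilistic argument is irrelevant here. The bound you need is simply that with $|S|=O(n^\rho)$ sources, trivially at most $|S|=O(n^\rho)$ explorations traverse any vertex, which is exactly the hypothesis of Theorem~\ref{theo multi source BF}; no appeal to Lemma~\ref{lemma traverse} is required (the paper just cites Theorem~\ref{theo multi source BF} directly at the start of Section~\ref{sec multi distances}).
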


\begin{comment}
\subsection{All-Pairs Shortest Paths}

Our next goal is to construct a distance oracle, a sparse data structure that allows to query a $(1+\epsilon)(2k-1)$-approximate distance between any pair of vertices in $O(1)$ rounds. For this aim, we build a hopset $\mch$ and an emulator $M$ as in Section \ref{sec sssp}. We show that emulators provide 
a $(2k-1)$-approximation for pairs of vertices with distance at least $d = \dlimits$. To deal with close-by pairs of vertices we develop limited-scale distance sketches that can be constructed efficiently in the MPC model. 

Essentially, we demonstrate that constructing a distance sketch for distance up to $d$ requires exploring the graph only to depth $O(tk)$. Thus, the hopset $\mathcal{H}$ suffices to construct these distance sketches efficiently. The size of the distance sketch is $\tilde{O}(kn^{1+1/k})$.

The emulator and the distance sketches are used as complementary products. Upon a distance query, we use both to obtain a distance estimate, and return the minimal estimate. 

\end{comment}

\section{APSP Constant Query Time}\label{sec APSP}
    
This section focuses on the computation of All-Pairs Shortest Paths (APSP). Given an unweighted, undirected graph $G= (V,E)$ on $n$-vertices and parameters $\epsilon \leq 1/2$, $\rho \in [1/{\log {\log n}},1/2]$ and $k \leq 1/\rho$, our randomized algorithm computes a distance oracle of size $\tilde{O}(n^{1+1/k})$, such that for any pair of vertices $u,v$ we can retrieve w.h.p. a $(1+\epsilon)(2k-1)$ approximation. The preprocessing requires
 $ O\left( \frac{1}{\gamma}\left( \frac{ 
        {\log^{(2)} n}\cdot ({\log^{(3)}n} -{\log {\epsilon }})
        }
        {\epsilon \rho}\right)^{\frac{1}{\rho} +2}\right) $
     rounds  and the query time is $O(1)$, in heterogeneous extra memory MPC, when using a single machine with $\tilde{O}(n)$ memory and additional machines with $O(n^\gamma)$ memory and a total memory of size $\tilde{O}((|E|+n^{1+\rho})\cdot n^{1/k})$.

This section begins with an overview of the motivation for our solution to the problem. Memory limitation implies that computing and storing all distances in the graph is impractical. Therefore, we aim for a compact data structure. This structure, when queried with a pair $(u,v)$, provides an approximation for the distance $d_G(u,v)$ in constant time.

Observe that the algorithm provided in Section \ref{sec sssp} can be used to solve this problem. In particular, building the near-exact hopset and near-additive emulator as in Section \ref{sec sssp} provides two solutions for the problem:
\begin{enumerate}
    \item A $(1+\epsilon)$-approximation with $O(1)$ preprocessing and  query time $\poly({\log {\log n}})$.
    \item A $(1+\epsilon,\beta_M)$-approximation with $\poly({\log {\log n}})$ preprocessing time and $O(1)$ query time, where $$\beta_{M} = \left(\frac{{\log {\log n}}}{\epsilon}\right)^{O({\log {\log n}})}.$$ 
\end{enumerate}
Given a distance query $(u,v)$, the first solution is obtained by computing all distances from $u$ in the emulator $M$ (which is done locally on a near-linear machine), and computing a restricted Bellman-Ford exploration from $u$ in the graph $G\cup H$. We can run the whole algorithm, including the construction of the emulator, in the query time, resulting in $\poly({\log {\log n}})$ query time. The second solution is obtained by computing the distance estimate only in the emulator $M$, and returning it. Here we compute the emulator in advance in $\poly({\log {\log n}})$ preprocessing time, and then the query time is just $O(1)$. 

Note that the above solutions either require $\poly({\log {\log n}})$ query time or have a large additive term in the approximation. Our next goal is to obtain a \emph{constant} approximation in \emph{constant} query time.
Recall that for distant pairs of vertices, the emulator $M$ guarantees a near-exact approximation. Therefore, the algorithm given in Section \ref{sec sssp} provides a satisfactory solution for distant pairs of vertices. Close pairs of vertices require a different approach.   

To deal with close pairs we use distance sketches. Essentially, a distance sketch is a data structure that, for every vertex $v\in V$, contains distances in $G$ to some carefully selected vertices $w\in V$, also referred to as the \textit{bunch} of $v$. 
Given a query $(u,v)$, the bunches of the vertices $u$ and $v$ are used to compute an approximation to the distance $d_G(u,v)$. To allow constant query time, we limit the size of the bunch of each vertex $v\in V$. 

Thorup and Zwick have provided an algorithm that computes distance sketches in the centralized model of computation. Their distance sketches provide $(2k-1)$-distance approximation and have query time of $O(k)$, but their construction requires that some vertices learn their distance from vertices that are far from them. This requirement is impractical when aiming at $\poly({\log {\log n}})$ time in the MPC model.

To overcome this hurdle, Dinitz and Nazary \cite{DinitzN19} have used near-exact hopsets. Generally speaking, they begin by computing a full-scale near-exact hopset $H$ in $\poly({\log n})$ time. Then, they use the hopset $H$ to compute their distance sketch. Their distance sketch provide $(1+\epsilon)(2k-1)$-distance approximation, for some appropriate $\epsilon < 1$.

Here, we cannot afford to compute a full-scale hopset in polylogarithmic time. However, we can use a limited-scale hopset to compute partial but satisfactory distance sketches. We introduce the concept of limited-scale distance sketches, where a $d$-limited-scale distance sketch offers a $(1+\epsilon)(2k-1)$-distance approximation exclusively for vertex pairs $u,v$ with $d_G(u,v)\leq d$.  Through the computation of an appropriate limited-scale distance sketch and a near-additive emulator, we can ensure $\poly({\log {\log n}})$ preprocessing time and $O(1)$ query time for every query $(u,v)\in V\times V$.

The rest of this section is organized as follows. Section \ref{sec APSP TZ ds} provides a brief overview of the algorithm of Thorup and Zwick for constructing distance sketches.  Section \ref{sec limited scale DS} provides the limited-scale distance sketches algorithm. 

\subsection{Thorup-Zwick Distance Sketches}\label{sec APSP TZ ds}
The input for the algorithm is a graph $G=(V,E)$ on $n$ vertices, and a parameter $k \geq 2$. 
The algorithm is composed of a preprocessing phase, which computes for every vertex $v\in V$ a table of approximate distances to a selected set of vertices. Then, given a query $u,v$, the algorithm uses a query algorithm that uses the tables of $u$ and $v$ to compute an estimate of the distance $d_G(u,v)$.

\subsubsection{Preprocessing}
The preprocessing step of the algorithm begins by sampling hierarchy $\mathcal{A} = A_0\supseteq A_1 \supseteq\dots\supseteq A_k $ where $A_0 = V$ and for every $i\in [0,k-2]$, each vertex of $A_i$ is sampled to $A_{i+1}$ with probability $n^{-1/k}$. Also,  define $A_{k} = \emptyset$. Then, each vertex $v\in V$ select $k$-pivots for itself, where the $i$-th pivot $p_i(v)$ is the closest $A_i$ vertex to $v$. 

Next, for every level $i\in [0,k-1]$, each vertex $v$ computes and stores the distance to $p_{i+1}(v)$ and to all $A_i$ vertices closer to $v$ than $p_{i+1}(v)$. Define the \textit{bunch} of $v$ to be the set $B(v)  = \bigcup_{i= 0}^{k-1} \{ w\in \aismai \ | \ d_G(w,v) < d_G(A_{i+1},v)\}$. The vertex $v$ stores the identities and distance $d_G(v,w)$ for all vertices $w\in B(v)$.  
This completes the description of the preprocessing step of the algorithm. Its pseudo code is given in Algorithm \ref{alg preprocessing TZ}.

\begin{algorithm}
   \caption{Distance Sketch Preprocessing}
    \begin{algorithmic}[1]
    \Statex \textbf{Input:} A graph $G= (V,E)$, a parameter $k\geq 2$.
    \Statex \textbf{Output:} A distance sketch $D$.
    \State $A_0 = V$, $A_k = \emptyset$
    \For {$i\in [1,k-1]$}
        \State each vertex of $A_i$ is sampled to $A_{i+1}$ with probability $n^{\frac{1}{k}}$
    \EndFor 
    \For {every $v\in V$}
        \For {$i\in [0,k-1]$}
            \State let $d_G(A_i,v) = 
            {\min \{d_G(w,v) \ | \ w \in A_i \} }$
                \State let $p_i(v) $ be a vertex in $A_i$ s.t. 
                $d_G(p_i(v),v) = d_G(A_i,v) $
        \EndFor
        \State $B(v) = \bigcup_{i=0}^{k-1}\{ w\in \aismai \ | \ d_G(w,v) < d_G(A_{i+1},v)    \} $
    \EndFor \\
    \Return for every $v\in V$, identity and distance $d_G(w,v)$ for every $w\in B(v)$
    \end{algorithmic}
       \label{alg preprocessing TZ}
\end{algorithm}

\subsubsection{Query Answering}
Generally speaking, given a query $(u,v)$, the query algorithm looks for the minimal index $i$ such that the $p_i(u)\in B(v)$ or $p_i(v)\in B(u)$. Once this index is found, the algorithm has essentially found a vertex $w\in B(v)\cap B(u)$. It returns $d_G(u,w)+d_G(w,v)$ as an estimate for the distance $d_G(u,v)$. The pseudo code of the query algorithm is given in Algorithm \ref{alg search TZ}.

\begin{algorithm}
   \caption{Compute Distances}
    \begin{algorithmic}[1]
      \Function{$dist_k$}{$u,v$} 
      \State $u_0\gets u\ ; \ v_0\gets v$
      \State $w_0\gets u\ ; \ i\gets 0$
      \While {$w_i\notin B(v_i)$}
      \State $i\gets i+1$
      \State $(u_i,v_i)\gets (v_{i-1},u_{i-1})$
      \State $w_i = p_i(u_i)$
      \EndWhile\\
        \Return $\dghBH(w_i,u_i)+\dghBH(w_i,v_i)$
    \EndFunction
    \end{algorithmic}
       \label{alg search TZ}
\end{algorithm} 

\subsubsection{Error Analysis Intuition}
The error analysis relies on two observations. The first observation is that $A_{k-1}\subseteq B(x)$ for every vertex $x\in V$. This implies that the $(k-1)$-pivot of $u$ belongs to $B(v)$ and vice versa. Consequently, the while loop terminates when $i = k-1$ or sooner.

The second observation concerns the distance between the vertex $u_i$ and its $i$-pivot $w_i$. Notably, whenever the algorithm assigns $w_i = p_i(u_i)$, it signifies that $w_{i-1}\notin B(v_{i-1})= B(u_i)$. Hence, $w_i$, which is the $i$-pivot of $u_i = v_{i-1}$, is closer to $v_{i-1}$ than $w_{i-1}$. Formally, 

\begin{equation}
    \begin{array}{lclclclc}
        d_G(v_{i-1},w_i) & \leq & d_G(v_{i-1},w_{i-1}).
    \end{array}
\end{equation}

By the triangle inequality, one can show 
\begin{equation*}
    \begin{array}{lclclclc}
        d_G(u_i,w_i) & = & d_G(v_{i-1},w_i)  
        \\& \leq & d_G(v_{i-1},u_{i-1})+d_G(u_{i-1},w_{i-1})
        \\
      & = &d_G(v,u)+d_G(u_{i-1},w_{i-1}).  
    \end{array}
\end{equation*}
This essentially implies that in every iteration of the while loop, the distance between $u_i$ and its pivot $w_i$ increases only by an additive factor of $d_G(u,v)$ w.r.t. the distance $d_G(u_{i-1},w_{i-1})$. Consequently, since $i$ is at most $k-1$, one can show that the algorithm returns a distance estimate which is at most $(2k-1)d_G(u,v)$.
This completes the intuition behind the analysis of the error of the distance sketch.

%======================
\subsection{Limited-Scale Distance Sketches}\label{sec limited scale DS}
The input for the algorithm is a graph $G=(V,E)$ on $n$ vertices and a parameter $k \geq 2$. In addition, the algorithm receives as input a  $(1+\epsilon_{\mathcal{H}},\beta_{\mathcal{H}})$-hopset $\mathcal{H}$ for distances up to $t$, with size $\tilde{O}(n^{1+\rho})$.  The output of the algorithm is a $d$-limited-scale distance sketch, for $d= \dlimits$.

\subsubsection{The Challenges in Using a Limited-Scale Hopset}

Using the limited-scale hopset $\mathcal{H}$ for constructing distance sketches gives rise to two challenges. The first challenge is that the hopset $\mathcal{H}$ does not allow us to compute distances between some pairs of vertices in a reasonable time.  In particular, this implies that computing the distance from every vertex $v\in V$ to all vertices in $A_{k-1}$ is prohibitively expensive. However, recall that the distance sketches are used only to approximate distances up to $d= \dlimits$. Therefore, we can modify the preprocessing and the query algorithm to use only the limited-scale hopset and still provide a meaningful stretch guarantee for pairs of vertices with distance up to $d$. For pairs of vertices with distance more than $d$, our algorithm may return \textit{out of range}.

The second challenge in using the limited-scale hopset $\mathcal{H}$ for constructing distance sketches is that the $\beta_{\mathcal{H}}$-hops limited distances in $G\cup \mathcal{H}$ do not adhere to the triangle inequality. This challenge occurs also when using a full-scale hopset. To illustrate, consider vertices 
$u,v,w\in V$ such that $d_G(u,w),d_G(w,v) = x$, and $d_G(w,v) = 2x$. It is possible that $\dghBH(u,w),\dghBH(w,v) = x$ while $\dghBH(u,v) = (1+\epsilon_{\mathcal{H}})2x$. Observe that this is true even when using a full-scale hopset. As a result, we can no longer say that if we entered the while loop in the query algorithm and assigned $w_i\gets p_i(u_i)$, it is because $w_i$ is closer to $u_i$ than $w_{i-1}$. 
This implies that the distance from $u_i$ to its pivot $w_i$ grows significantly more than in the Thorup-Zwick construction. A rescaling of $\epsilon_\mathcal{H}$ is possible, but it forces us to have $ \epsilon_\mathcal{H}  \leq 1/\exp(k) $. This approach is less favorable as the running time of the hopset algorithm depends on $1/\epsilon_{\mathcal{H}}$. 

Alternatively, we can modify the preprocessing algorithm.
Observe that whenever $$\dghBH(u_i,v_i),\dghBH(v_i,w_i)\leq t,$$ we have 
\begin{equation*}
    \begin{array}{lclclc}
        \dgmhb{2}(u_i,w_i) & \leq & \dghBH(u_i,v_i)+\dghBH(v_i,w_i).
    \end{array}
\end{equation*}
This incentivizes us to use $i\beta$ hops when we look for pivots and  $A_{i-1}$ vertices that are closer than the $i$-pivots.  This approach allows us to show that if the query algorithm has assigned a value to $w_i$, then the $u_i-w_i$ distance in the distance sketch is at most $(1+\epsilon_\mathcal{H})d_G(u,v)\cdot i$.

\subsubsection{The Preprocessing Algorithm}

Next, we outline the revised algorithm for computing a $d$-limited distance sketch $\mathcal{D}$ using the limited-scale hopset $\mathcal{H}$. 

The preprocessing algorithm begins by sampling a hierarchy as in Algorithm \ref{alg preprocessing TZ}. Recall that in Algorithm \ref{alg preprocessing TZ}, each vertex finds its $i$-pivot for every $i\in [0,k-1]$. Here, for every index $i\in [0,k-1]$, if there exists a vertex $w\in A_{i}$ such that $\dgmhb{i}(v,w)\leq t$, then $v$ selects its $i$-pivot to be the vertex $w\in A_i$ with minimal $\dgmhb{i}(v,w)$. In addition, for $i\in [1,k-1]$, the bunch of the vertex $v$ is added with $p_i(v)$ and with all vertices $u\in A_{i-1}\setminus A_i$ such that $\dgmhb{i}(u,v)< {\min\{\dgmhb{i}(v, A_i),t\}}$. Intuitively, if $v$ has an $i$-pivot, then $B(v)$ contains $p_i(v)$ and all vertices in $A_{i-1}$ that are closer to $v$ than the pivot (in $i\beta_\mathcal{H}$ hops limited distance in $G\cup \mathcal{H}$). In addition, the vertex $v$ stores $\dgmhb{i}(u,v)$ for all $A_{i-1}\setminus A_i$ vertices in $B(v)$, and $\dgmhb{i}(p_i(v),v)$ to the data sketch $\mathcal{D}$.   If $v$ does not have an $i$-pivot, then $B(v)$ contains all vertices in $A_{i-1}$ that are within distance at most $t$ from $v$ (in $(i\beta_\mathcal{H})$ hops limited distance in $G\cup \mathcal{H}$). In this case, the vertex $v$ stores $\dgmhb{i}(u,v)$ for all $A_{i-1}\setminus A_i$ vertices in $B(v)$.
This completes the description of the preprocessing algorithm. The pseudo code of the algorithm is given in \ref{alg preprocessing TZ MPC}. 

\begin{algorithm}
   \caption{Distance Sketch Preprocessing}
    \begin{algorithmic}[1]
    \Statex \textbf{input:} Graph $G=(V,E)$, a limited-scale $(1+\epsilon_{\mathcal{H}},\beta_{\mathcal{H}})$-hopset $\mathcal{H}$, a parameter $k$ and a parameter $d$. 
    \Statex a distance sketch for distances up to $d$. 
    \State $A_0 = V$, $A_k = \emptyset$
        \For {$i\in [1,k-1]$}
            \State each vertex of $A_i$ is sampled to $A_{i+1}$ with probability $n^{\frac{1}{k}}$
        \EndFor 
        \For {every $v\in V$}
            \State $p_0(v) = v$, $B(v) = \emptyset$ 
            \For {$i\in [1,k-1]$}
                \State let $\dgmhb{i}(A_i,v) = 
                {\min \{\dgmhb{i}(w,v) \ | \ w \in A_i \} }$
                \If { $\dgmhb{i}(A_i,v) \leq t$} 
                    \State let $p_i(v) $ be a vertex in $A_i$ s.t. 
                    $\dgmhb{i}(p_i(v),v) = \dgmhb{2}(A_i,v) $
                    \State $B(v) = B(v) \cup \{ w\in A_{i-1}\setminus A_i \ | \ \dgmhb{i}(w,v) < 
                    \dgmhb{i}(A_{i},v)    \} $
                \Else 
                    \State $B(v) = B(v) \cup \{ w\in A_{i-1}\setminus A_i \ | \ \dgmhb{i}(w,v) < t    \} $
                \EndIf
            \EndFor
        \EndFor\\
        \Return for every $v\in V$, for every pivot $p_i(V)$ its identity and  $\dgmhb{i}(p_i(v),v)$. For every $w\in B(v)\cup A_{i-1}\setminus A_i$ that is not a pivot of $v$, its identity and $\dgmhb{i}(w,v)$.   
    \end{algorithmic}
       \label{alg preprocessing TZ MPC}
\end{algorithm}

\subsubsection{The Query Algorithm} 

The query algorithm requires adaptation, due to the fact that some vertices
may lack $i$-pivots for some values of $i$. In particular, before the algorithm assigns
$w_i$ to be the $i$-pivot of $u_i$, we ask whether $u_i$ has an $i$-pivot. If the answer is no, the algorithm terminates without returning a value. We later show that this
happens only if $d_G(u, v) > d$. The pseudo code of the query algorithm is given
in Algorithm \ref{alg search TZ MPC}. This completes the description of the algorithm for computing $d$-limited-scale distance sketches.

\begin{algorithm}
   \caption{Compute Distances}
    \begin{algorithmic}[1]
      \Function{$dist$}{$u,v$} 
      \State $u_0\gets u\ ; \ v_0\gets v$
      \State $w_0\gets u\ ; \ i\gets 0$
      \While {$w_i\notin B(v_i)$}
          \State $i\gets i+1$
          \State $(u_i,v_i)\gets (v_{i-1},u_{i-1})$
          \If {$u_i$ has an $i$-pivot}
            \State $w_i = p_i(u_i)$
          \Else
            \State \textbf{return} $u,v$ are out of range
          \EndIf
      \EndWhile\\
        \Return $\mathcal{D}(w_i,u_i)+\mathcal{D}(w_i,v_i)$ 
    \EndFunction
    \end{algorithmic}
       \label{alg search TZ MPC}
\end{algorithm}

\subsubsection{Analysis of the Stretch}
We analyze the stretch of Algorithm \ref{alg search TZ MPC} for pairs of vertices $u,v$ with $d_G(u,v)\leq d$.
First, we provide an upper bound on $\dgmhb{i}(u_i,w_i)$, and $d_G(w_i,v_i)$ for every index $i$ such that $w_i$ was assigned a value. 

\begin{lemma}
    \label{lemma ui-wi mpc ds}
    Let $u,v$ be a pair of vertices such that $d_G(u,v)\leq d$. Let $i'$ be the maximal value assigned to $i$ by Algorithm \ref{alg search TZ MPC}. For every $i\in [0,i']$, we have 
    \begin{equation*}
        \begin{array}{clclclclcl}
            (1)& \dgmhb{i}(w_i,u_i) & \leq &
            (1+\epsilon_\mathcal{H})\cdot i \cdot d_G(u,v). 
            \\
            (2)& d_G(w_i,v_i) & \leq & t/(1+\epsilon).
        \end{array}
    \end{equation*} 
\end{lemma}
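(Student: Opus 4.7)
The plan is to prove both parts by induction on $i \in [0, i']$, following the spirit of the Thorup--Zwick stretch analysis but replacing each exact distance with a bounded-hop approximate distance in $G\cup\mathcal{H}$, and being careful to track how the hop-budget grows by $\beta_{\mathcal{H}}$ per induction step.

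For the base case $i=0$: by definition $w_0=u_0=u$ and $v_0=v$, so $\dgmhb{0}(w_0,u_0)=0$, giving (1); and $d_G(w_0,v_0)=d_G(u,v)\leq d$, which for a suitable choice of $\epsilon \leq \epsilon_\mathcal{H}$ is at most $t/(1+\epsilon)$ by the definition $d=\dlimits$, giving (2).

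For the inductive step, I first extract information from the loop condition $w_{i-1}\notin B(v_{i-1})$, where $v_{i-1}=u_i$. Because the algorithm continued to iteration $i$ and set $w_i=p_i(u_i)$, the vertex $u_i$ has an $i$-pivot, i.e.\ $\dgmhb{i}(A_i,u_i)\leq t$. Two sub-cases arise: if $w_{i-1}\in A_{i-1}\setminus A_i$, then the bunch rule at level $i$ implies $\dgmhb{i}(w_{i-1},u_i)\geq \dgmhb{i}(p_i(u_i),u_i)=\dgmhb{i}(w_i,u_i)$; if instead $w_{i-1}\in A_i$, then $w_{i-1}$ is itself a candidate for the $i$-pivot and so $\dgmhb{i}(w_i,u_i)=\dgmhb{i}(A_i,u_i)\leq \dgmhb{i}(w_{i-1},u_i)$. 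In both cases $\dgmhb{i}(w_i,u_i)\leq \dgmhb{i}(w_{i-1},u_i)$.

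Next I would apply a hop-aware triangle inequality along the walk $w_{i-1}\to u_{i-1}\to v_{i-1}=u_i$: concatenating an $(i-1)\beta_{\mathcal{H}}$-hop path with a $\beta_{\mathcal{H}}$-hop path gives an $i\beta_{\mathcal{H}}$-hop walk, so
\[
\dgmhb{i}(w_{i-1},u_i)\;\leq\;\dgmhb{i-1}(w_{i-1},u_{i-1})+\dghBH(u_{i-1},v_{i-1}).
\]
Since $d_G(u_{i-1},v_{i-1})=d_G(u,v)\leq d<t$, the hopset stretch guarantee gives $\dghBH(u_{i-1},v_{i-1})\leq (1+\epsilon_\mathcal{H})\,d_G(u,v)$, and the induction hypothesis bounds the first term by $(1+\epsilon_\mathcal{H})(i-1)\,d_G(u,v)$. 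Adding these proves (1). For (2) I then bound $d_G(w_i,v_i)\leq d_G(w_i,u_i)+d_G(u_i,v_i)\leq \dgmhb{i}(w_i,u_i)+d_G(u,v)\leq ((1+\epsilon_\mathcal{H})i+1)\,d_G(u,v)$; using $i\leq k-1$ and $d_G(u,v)\leq d=\dlimits$, a short calculation yields the bound $t/(1+\epsilon_\mathcal{H})\leq t/(1+\epsilon)$.

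The main obstacle I anticipate is the fact that bounded-hop distances in $G\cup\mathcal{H}$ do not satisfy the ordinary triangle inequality, so the argument must carefully inflate the hop budget from $(i-1)\beta_{\mathcal{H}}$ to $i\beta_{\mathcal{H}}$ at each inductive step (this is exactly why the preprocessing algorithm uses $i\beta_{\mathcal{H}}$ hops at level $i$), and must correctly handle the possibility $w_{i-1}\in A_i$ that is absent from the original Thorup--Zwick setting. Once these two issues are dispatched, the remainder of the proof is a direct bookkeeping of the $(1+\epsilon_\mathcal{H})$ factor introduced by the hopset per level.
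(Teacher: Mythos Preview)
Your proposal is correct and follows essentially the same route as the paper: induction on $i$, the base case is immediate, and the inductive step combines the hop-aware triangle inequality $\dgmhb{i}(w_{i-1},u_i)\le \dgmhb{i-1}(w_{i-1},u_{i-1})+\dghBH(u_{i-1},v_{i-1})$ with the hopset guarantee on $\dghBH(u,v)$ and the induction hypothesis on part~(1), then part~(2) follows from the triangle inequality in $G$ and the arithmetic $(1+\epsilon_\mathcal{H})(k+1)\cdot d \le t/(1+\epsilon_\mathcal{H})$.

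One small structural difference worth noting: you obtain the key inequality $\dgmhb{i}(w_i,u_i)\le \dgmhb{i}(w_{i-1},u_i)$ by a case split on whether $w_{i-1}\in A_i$, taking the existence of the $i$-pivot of $u_i$ as given (since $i\le i'$). The paper instead phrases the same step as a dichotomy: either $v_{i-1}=u_i$ has an $i$-pivot with $\dgmhb{i}(p_i(u_i),u_i)\le \dgmhb{i}(w_{i-1},u_i)$, or $\dgmhb{i}(w_{i-1},u_i)>t$, and then \emph{rules out the second alternative using the induction hypothesis on part~(2)}. The payoff is that the paper's proof, along the way, establishes that the algorithm never returns ``out of range'' when $d_G(u,v)\le d$; this fact is not part of the lemma statement but is invoked in the proof of the next lemma. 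Your argument proves exactly what the lemma claims, but if you continue to Lemma~\ref{lemma ds final stretch} you will need to supply that extra piece separately (which is easy: part~(2) at level $i-1$ plus the hopset bound forces $\dgmhb{i}(w_{i-1},u_i)\le t$, so an $i$-pivot of $u_i$ must exist).
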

\begin{proof}
    The proof is by induction on the index $i$. Let $i=0$. For the first assertion of the lemma, both sides of the inequality $(1)$  are equal to $0$ and the claim holds. For the second assertion of the lemma, recall that $w_0 = u_0= u$, $v_0 = v$ and also $d = \dlimits$. Thus, we have 

    \begin{equation*}
        \begin{array}{lclclclclcl}
            d_G(w_0,v_0) & = & d_G(u,v) &\leq & \dlimits & \leq & \frac{t}{ (1+\epsilon) }.
        \end{array}
    \end{equation*}

    Assume the claim holds for $i\in [0,k-2]$ and prove it holds for $i+1$.
    
    \textbf{First assertion:}\\
    The proof consists of two parts. In the first part, we show that the algorithm has assigned a value to $w_{i+1}$, i.e., it did not return \textit{$u,v$ are out of range}. Then, we provide an upper bound on $\dgmhb{(i+1)}(w_{i+1},u_{i+1})$. 

    Since the algorithm has entered the while loop, we know that $w_i\notin B(v_i)$. This implies that at least one of the following conditions hold: 
    \begin{enumerate}
        \item\label{item 1}{$v_i$ has an $(i+1)$-pivot $p_{i+1}(v_i)$ such that $\dgmhb{(i+1)}(p_{i+1}(v_i),v_i) \leq \dgmhb{(i+1)}(w_i,v_i).$ }
        \item\label{item 2} { $\dgmhb{(i+1)}(w_i,v_i) >t$}.
    \end{enumerate}
     By the induction hypothesis, we have $d_G(w_i,v_i)\leq t/(1+\epsilon)$. Since $\mch$ is a $(1+\epsilon_\mch,\beta_\mch)$-hopset for distances up to $t$, we conclude that $\dghBH(w_i,v_i)\leq t$. Hence, condition \ref{item 2} cannot hold. Consequently, we conclude that condition \ref{item 1} holds. Specifically, we have 
     the vertex $u_{i+1} = v_{i}$ has an $(i+1)$-pivot, and so the algorithm has set $w_{i+1} = p_{i+1}(u_{i+1}) = p_{i+1}(v_i)$. Condition \ref{item 1} implies that 
    \begin{equation}\label{eq vi to its pivot}
        \begin{array}{lclclc}
            \dgmhb{(i+1)}(w_{i+1},u_{i+1}) & = & \dgmhb{(i+1)}(p_{i+1}(v_i),v_i) \\&\leq& \dgmhb{(i+1)}(w_i,v_i).
        \end{array}
    \end{equation}

    We now provide an upper bound on $\dgmhb{(i+1)}(w_{i+1},u_{i+1})$. By eq. \ref{eq vi to its pivot}, and the induction hypothesis, we have 

    \begin{equation}\label{eq wi1-ui1}
        \begin{array}{lclclc}
            \dgmhb{(i+1)}(w_{i+1},u_{i+1})  &\leq& \dgmhb{(i+1)}(w_i,v_i)
            \\&\leq& \dgmhb{i}(w_i,u_i) + \dghBH(u_i,v_i)
            \\
            &\leq& (1+\epsilon_\mathcal{H})\cdot i \cdot d_G(u,v)
            \\ && \quad
            + (1+\epsilon_\mch)d_G(u,v)\\
            &\leq& (1+\epsilon_\mathcal{H})\cdot (i+1) \cdot d_G(u,v).
        \end{array}
    \end{equation}

    \textbf{Second Assertion:}\\
    For the second assertion of the lemma, note that the triangle inequality holds for distance in $G$. Also, recall that $d_G(u,v)\leq \dlimits$. Note that $i\leq k-1$. Together eq. \ref{eq wi1-ui1}, we have 
    \begin{equation*}
        \begin{array}{lclclclclcl}
            d_G(w_{i+1},v_{i+1}) & \leq & d_G(w_{i+1},u_{i+1})+d_G(u_{i+1},v_{i+1}) 
            \\&\leq & \dgmhb{(i+1)}(w_{i+1},u_{i+1}) +d_G(u,v)
            \\&\leq & (1+\epsilon_\mathcal{H})\cdot ( k+1) \cdot \dlimits
            \\&\leq & \frac{t}{1+\epsilon_\mathcal{H}}.
        \end{array}
    \end{equation*}

\end{proof}

We are now ready to provide an upper bound on the stretch of the distance sketch. 

\begin{lemma}
    \label{lemma ds final stretch}
    Given a pair of vertices $u,v$ with $d_G(u,v)\leq d$, the algorithm returns a distance estimate that is at most $(1+\epsilon_\mch)(2k-1)d_G(u,v)$.  
\end{lemma}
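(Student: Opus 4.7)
The plan is to bound separately the two stored distances whose sum the algorithm returns, $\mathcal{D}(w_{i'}, u_{i'})$ and $\mathcal{D}(w_{i'}, v_{i'})$, where $i' \in [0,k-1]$ denotes the final value of the loop counter when the while loop exits. The first task is to establish that this $i'$ exists, i.e., that Algorithm~\ref{alg search TZ MPC} does not return ``out of range'' for any pair with $d_G(u,v) \leq d$. I would argue this inductively using Lemma~\ref{lemma ui-wi mpc ds}: whenever the while loop advances past iteration $i$, part $(2)$ of that lemma gives $d_G(w_i, v_i) \leq t/(1+\epsilon_\mch)$, so $\dghBH(w_i, v_i) \leq t$ because $\mathcal{H}$ is a hopset valid at this scale. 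Thus $w_i \notin B(v_i)$ cannot be due to the ``no $(i+1)$-pivot'' condition in the preprocessing, which forces $u_{i+1} = v_i$ to have an $(i+1)$-pivot and rules out the ``out of range'' exit. Since $A_k = \emptyset$, the loop must halt at some $i' \leq k-1$ with $w_{i'} \in B(v_{i'})$.

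Next I would bound each summand. By Lemma~\ref{lemma ui-wi mpc ds}(1),
\begin{equation*}
\mathcal{D}(w_{i'}, u_{i'}) \;=\; \dgmhb{i'}(w_{i'}, u_{i'}) \;\leq\; (1+\epsilon_\mch)\, i'\, d_G(u,v).
\end{equation*}
For $\mathcal{D}(w_{i'}, v_{i'})$, concatenating a path realizing $\dgmhb{i'}(w_{i'}, u_{i'})$ with one realizing $\dghBH(u_{i'}, v_{i'})$ in $G \cup \mathcal{H}$ yields a $w_{i'}$--$v_{i'}$ walk of at most $(i'+1)\beta_\mch$ hops, so
\begin{equation*}
\dgmhb{(i'+1)}(w_{i'}, v_{i'}) \;\leq\; \dgmhb{i'}(w_{i'}, u_{i'}) + \dghBH(u_{i'}, v_{i'}) \;\leq\; (1+\epsilon_\mch)(i'+1)\, d_G(u,v),
\end{equation*}
using Lemma~\ref{lemma ui-wi mpc ds}(1) together with the hopset guarantee and the identity $d_G(u_{i'}, v_{i'}) = d_G(u,v) \leq d \leq t$. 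Since the stored value $\mathcal{D}(w_{i'}, v_{i'})$ equals $\dgmhb{(j+1)}(w_{i'}, v_{i'})$ for the $j \geq i'$ with $w_{i'} \in A_j \setminus A_{j+1}$, and a larger hop budget can only decrease a hop-limited distance, we get $\mathcal{D}(w_{i'}, v_{i'}) \leq \dgmhb{(i'+1)}(w_{i'}, v_{i'})$ and the same bound applies.

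Summing the two inequalities gives
\begin{equation*}
\mathcal{D}(w_{i'}, u_{i'}) + \mathcal{D}(w_{i'}, v_{i'}) \;\leq\; (1+\epsilon_\mch)(2i'+1)\, d_G(u,v) \;\leq\; (1+\epsilon_\mch)(2k-1)\, d_G(u,v),
\end{equation*}
which is the claimed bound. The main subtlety is the termination argument at the boundary $i' = k-1$: since the preprocessing's outer loop runs only over $i \in [1, k-1]$, one must use the convention that in the ``no $k$-pivot'' case (which is forced by $A_k = \emptyset$) the bunch $B(v)$ absorbs all $A_{k-1}$-vertices within hop-limited distance at most $t$, and the estimate from Lemma~\ref{lemma ui-wi mpc ds} then guarantees $w_{k-1}$ meets this condition whenever $d_G(u,v) \leq d$.
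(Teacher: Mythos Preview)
Your approach is essentially the same as the paper's: both invoke Lemma~\ref{lemma ui-wi mpc ds} to rule out the ``out of range'' exit and to bound the first summand $\mathcal{D}(w_{i'},u_{i'})$ by $(1+\epsilon_\mch)\,i'\,d_G(u,v)$. The only substantive difference is in how the second summand $\mathcal{D}(w_{i'},v_{i'})$ is bounded. The paper goes through actual graph distances: it uses part~(2) of Lemma~\ref{lemma ui-wi mpc ds} to get $d_G(w_{i'},v_{i'})\le t/(1+\epsilon_\mch)$, applies the hopset guarantee to obtain $\dghBH(w_{i'},v_{i'})\le(1+\epsilon_\mch)d_G(w_{i'},v_{i'})$, and then uses the triangle inequality in $G$; this route gives $\mathcal{D}(w_{i'},v_{i'})<(1+\epsilon_\mch)^2(i'+1)\,d_G(u,v)$. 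You instead concatenate hop-limited paths directly and obtain the sharper $(1+\epsilon_\mch)(i'+1)\,d_G(u,v)$, so that your final sum $(1+\epsilon_\mch)(2i'+1)\le(1+\epsilon_\mch)(2k-1)$ is immediate.

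One caution: your step $\mathcal{D}(w_{i'},v_{i'})\le \dgmhb{(i'+1)}(w_{i'},v_{i'})$ assumes that the distance stored for $w_{i'}$ in $v_{i'}$'s sketch uses a hop budget of at least $(i'+1)\beta_\mch$. This is indeed the case when $w_{i'}$ enters $B(v_{i'})$ as an ordinary (non-pivot) member, since then the budget is $(j+1)\beta_\mch$ with $j\ge i'$ the level of $w_{i'}$. But the paper's prose (and the return line of Algorithm~\ref{alg preprocessing TZ MPC}) also puts pivots into $B(v_{i'})$, and a pivot entry $w_{i'}=p_{j'}(v_{i'})$ with $j'\le i'$ would be stored with only $j'\beta_\mch$ hops, which does not justify your inequality. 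The paper's coarser bound $\mathcal{D}(w_{i'},v_{i'})\le \dghBH(w_{i'},v_{i'})$ avoids this issue entirely, since every stored entry uses at least $\beta_\mch$ hops. In short, your route is tighter and yields the stated bound cleanly, but it leans on a particular reading of what $\mathcal{D}(\cdot,\cdot)$ returns when several entries for the same vertex coexist.
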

\begin{proof}

By lemma \ref{lemma ui-wi mpc ds}, we have that Algorithm \ref{alg search TZ MPC} never returns \textit{$u,v$ out of range} for a query $u,v$ with $d_G(u,v)\leq d$. Therefore, the algorithm terminates only when the condition of the while loop is no longer satisfied, indicating that  $ w_{i}\in B(v_{i})$. In addition, Algorithm \ref{alg search TZ MPC} returns $\mathcal{D}(w_i,v_i)+\mathcal{D}(w_i,u_i)$. We now provide upper bounds on these terms. 

Note that $w_i$ is the $i$-pivot of $u_i$. Therefore, $$\mathcal{D}(w_i,u_i)\leq \dgmhb{i}(w_i,u_i).$$ By Lemma \ref{lemma ui-wi mpc ds}, we have 
\begin{equation}\label{eq Dui-wi}
    \begin{array}{lclclclclcl}
        \mathcal{D}(w_i,u_i)& \leq & \dgmhb{i}(w_i,u_i) & \leq &
        (1+\epsilon_\mathcal{H})\cdot i \cdot d_G(u,v).
    \end{array}
\end{equation} 

In addition, we also have 
\begin{equation}\label{eq dg wi-vi}
    \begin{array}{lclclclclcl}
        d_G(w_i,v_i) & \leq & t/(1+\epsilon_\mch).
    \end{array}
\end{equation} 
Since the hopset $\mch$ is a $(1+\epsilon_\mch,\beta_\mch)$-hopset for distances up to $t$, we have $\dghBH(w_i,v_i)\leq (1+\epsilon_\mch)d_G(w_i,v_i)$. 
Observe that since $w_i\in B(v_i)$, we have $\mathcal{D}(w_i,v_i)\leq \dghBH(w_i,v_i)$. Together with eq. \ref{eq dg wi-vi}, we have  
\begin{equation*}
    \begin{array}{lclclclclcl}
        \mathcal{D}(w_i,v_i)& \leq &\dghBH(w_i,v_i) 
        &\leq & 
        (1+\epsilon_\mch)d_G(w_i,v_i).
    \end{array}
\end{equation*} 

Observe that the triangle inequality holds for distances in $G$, and also that the hopset $\mch$ does not shorten distances w.r.t. the graph $G$. Together with eq. \ref{eq Dui-wi}
\begin{equation}\label{eq Dwi-vi}
    \begin{array}{lclclclclcl}
        \mathcal{D}(w_i,v_i)& \leq & 
        (1+\epsilon_\mch)(d_G(w_i,u_i)+d_G(u_i,v_i))
        \\& < & 
        (1+\epsilon_\mch)^2(i+1) d_G(u,v).
    \end{array}
\end{equation} 

Recall that $i\leq k-1$. 
By \ref{eq Dui-wi} and \ref{eq Dwi-vi}, we conclude 
\begin{equation*}
    \begin{array}{lclclclclcl}
       &&\mathcal{D}(w_i,u_i)+\mathcal{D}(w_i,v_i)\\& \leq & 
        (1+\epsilon_\mch) (k-1)  d_G(u,v) + 
        (1+\epsilon_\mch)^2\cdot k\cdot  d_G(u,v)\\
        & \leq & 
        (1+\epsilon_\mch) (2k-1)  d_G(u,v).
    \end{array}
\end{equation*} 
\end{proof}

\subsubsection{Distance Sketch Size}
The analysis of the size of the distance sketch is reminiscent of the analogous analysis of the distance sketches of Thorup and Zwick. The difference between the two analysis is only that in the construction of Thorup-Zwick, all vertices have an $i$-pivot for all levels $i\in [0,k-1]$, whereas in our algorithm, certain vertices may lack an $i$-pivot for certain levels. However, the absence of an $i$-pivot for a vertex $v\in V$ implies that the expected number of vertices $u\in A_{i-1}$ in the bunch of $v$ is no more than $n^{1/k}$. Therefore, the upper bound on the size of the distance sketch given by Thorup and Zwick is also applicable for our construction. For completeness, we analyze the size of our distance sketches.

First, we provide an upper bound on $|A_{k-1}|$. 
\begin{lemma}
    \label{lemma tz ak upper}
    The size of $A_{k-1}$ is, w.h.p., $\tilde{O}(n^{1/k})$.
\end{lemma}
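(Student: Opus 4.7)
The plan is a routine mean-plus-concentration calculation. By construction, $A_{k-1}$ is obtained from $A_0=V$ by $k-1$ successive independent subsamplings with retention probability $n^{-1/k}$, so each vertex of $V$ belongs to $A_{k-1}$ independently with probability
\[
  \left(n^{-1/k}\right)^{k-1} \;=\; n^{-(k-1)/k},
\]
and hence $\mathbb{E}[|A_{k-1}|] = n \cdot n^{-(k-1)/k} = n^{1/k}$. This gives the expectation immediately; the only remaining task is to turn it into a high-probability statement.

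For concentration I would apply a standard multiplicative Chernoff bound to $|A_{k-1}|$, viewed as a sum of $n$ independent indicator variables. The key observation that makes the Chernoff step clean is that the parameter regime forces $\mathbb{E}[|A_{k-1}|]$ to be much larger than $\log n$: since $k \le 1/\rho$ and $\rho \ge 1/\log\log n$, we have $k \le \log\log n$, and therefore
\[
  \mathbb{E}[|A_{k-1}|] \;=\; n^{1/k} \;\ge\; n^{1/\log\log n} \;=\; \omega(\log n).
\]
Consequently, for any constant $c > 0$, choosing (say) the deviation $\delta = 1$ in the Chernoff bound gives
\[
  \Pr\!\left[\,|A_{k-1}| > 2\, n^{1/k}\,\right] \;\le\; \exp\!\left(-\Omega(n^{1/k})\right) \;\le\; n^{-c},
\]
so $|A_{k-1}| = O(n^{1/k})$, and in particular $|A_{k-1}| = \tilde O(n^{1/k})$, with high probability.

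There is essentially no main obstacle here: independence of the level-by-level sampling and the fact that $k$ is at most doubly logarithmic make both the expectation and the Chernoff tail trivial. The only thing worth being careful about is the range of $k$, to ensure the mean is large enough for a multiplicative Chernoff bound to deliver polynomially small failure probability, which is exactly what the assumption $\rho \in [1/\log\log n, 1/2]$ combined with $k \le 1/\rho$ provides.
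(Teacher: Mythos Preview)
Your proof is correct and follows essentially the same route as the paper: compute $\mathbb{E}[|A_{k-1}|]=n^{1/k}$ and apply a multiplicative Chernoff bound. The only difference is cosmetic---the paper takes $\delta=c\ln n$ (using merely $n^{1/k}\ge 1$) and obtains $\tilde O(n^{1/k})$, whereas you exploit $k\le\log\log n$ to take $\delta=1$ and get the slightly sharper $O(n^{1/k})$.
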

\begin{proof}
    Recall that for every $i\in [0,k-2]$ each vertex $v\in A_i$ is sampled to $A_{i+1}$ with probability $n^{1/k}$. Thus, the expected size of $A_{k}$ is at most $\frac{n}{(n^{1/k})^{k-1}} = n^{1/k} $. 

    Note that $k$ is a constant, thus $n^{1/k}>1$. 
    Let $c>2$ be a constant. From a Chernoff bound, for $\delta = c{\ln n}$ we have 
    \begin{equation*}
    \begin{array}{lclclclclclclclc}
        Pr[ |A_{k-1}| > (1+c{\ln n}) E[|A_{k-1}|]] 
        &\leq & e^{-\frac{(c{\ln n})^2 E[|A_{k-1}|]}{2+c{\ln n}}} 
        \\&\leq& e^{-\frac{c{\ln n}}{2}} \\ & = &n^{-\frac{c}{2}}.          
    \end{array}
    \end{equation*}

    Thus, w.h.p., $|A_{k-1}| = \tilde{O}(n^{1/k})$. 
\end{proof}

Next, we upper bound the number of vertices from the bunch of every vertex $v$ that belong to level $i$ of the hierarchy. 
\begin{lemma}
\label{lemma traverses 2}
    Let $i\in [0,k-1]$. For every vertex $u\in V$, for any constant $c>1$, with probability at least $1-n^{-c}$, there are at most  $n^{1/k}\cdot c\cdot {\ln n}$ vertices $v\in A_i$ such that: 
    \begin{equation*}
        \begin{array}{lclclc}
             \dgmhb{(i+1)}(u,v) \leq  \dgmhb{(i+1)}(p_{i+1}(u),u) &\textit{if u has an (i+1)-pivot}\\
             
             \dgmhb{(i+1)}(u,v) \leq  t &\textit{otherwise}.
        \end{array}
    \end{equation*}
\end{lemma}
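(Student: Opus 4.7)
The plan is to mirror the structure of Lemma \ref{lemma traverse} and Lemma \ref{lemma isprse bunch}: the key observation is that the relevant count is essentially a truncated geometric variable in the sampling that creates $A_{i+1}$ from $A_i$.

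Fix $u\in V$ and $i\in [0,k-1]$. First I condition on the choice of $A_i$ (for a fixed $A_i$, each $v\in A_i$ is placed in $A_{i+1}$ independently with probability $1/n^{1/k}$). Let $V_1,V_2,\dots,V_L$ be the vertices of $A_i$ with $\dgmhb{(i+1)}(u,V_j)\le t$, listed in non-decreasing order of $\dgmhb{(i+1)}(u,\cdot)$ (breaking ties by vertex ID). Let $Z$ denote the quantity whose size the lemma bounds. Unraveling the definition of an $(i+1)$-pivot: if some $V_j\in A_{i+1}$, then $u$ has an $(i+1)$-pivot, its distance to $u$ is $\dgmhb{(i+1)}(u,V_{J})$ for $J=\min\{j:V_j\in A_{i+1}\}$, and $Z$ equals the largest index $Z^*$ with $\dgmhb{(i+1)}(u,V_{Z^*})\le \dgmhb{(i+1)}(u,V_J)$; in particular $Z\ge J$ and, with the chosen tie-breaking, all $V_1,\dots,V_{Z-1}$ that precede $V_J$ in the order are outside $A_{i+1}$. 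If no $V_j$ is in $A_{i+1}$, then $u$ has no $(i+1)$-pivot and $Z=L$, in which case $V_1,\dots,V_L$ are all outside $A_{i+1}$.

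The central step is then the following observation, valid in both cases: \emph{if $Z>z$, then $V_1,\dots,V_z$ are all outside $A_{i+1}$.} In the pivot case, $Z>z$ forces $J>z$ (since the entries strictly below the first sampled vertex all sit at rank $<J\le Z$), so $V_1,\dots,V_z\notin A_{i+1}$; in the no-pivot case, by definition none of $V_1,\dots,V_L\supseteq V_1,\dots,V_z$ lies in $A_{i+1}$. Because of the independence of the sampling step that defines $A_{i+1}\subseteq A_i$,
\begin{equation*}
\Pr[Z>z\mid A_i]\;\le\;\bigl(1-n^{-1/k}\bigr)^{z}\;\le\;e^{-z/n^{1/k}}.
\end{equation*}
Setting $z=n^{1/k}\cdot c\ln n$ for any constant $c>1$ gives $\Pr[Z>z\mid A_i]\le n^{-c}$, and since the bound is uniform in $A_i$ it holds unconditionally. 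This is exactly the desired statement.

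The main obstacle is the case analysis in the definition of $Z$, in particular the bookkeeping around ties in the hop-limited ``distance'' $\dgmhb{(i+1)}(u,\cdot)$, which is not a true metric; I handle this by fixing a tie-breaking rule on vertex identifiers so that the sorted order $V_1,\dots,V_L$ is well defined, which lets both cases reduce to the clean implication ``$Z>z\Rightarrow V_1,\dots,V_z\notin A_{i+1}$''. Apart from that, the argument is structurally identical to Lemmas \ref{lemma traverse} and \ref{lemma isprse bunch}, so I do not foresee further difficulties.
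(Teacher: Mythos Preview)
Your approach mirrors the paper's: order the $A_i$-vertices near $u$ by $(i{+}1)\beta_{\mathcal H}$-hop distance and use that the rank of the first vertex sampled into $A_{i+1}$ is geometric. Two points need fixing.

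First, you omit $i=k-1$. There $A_k=\emptyset$, so no sampling from $A_{k-1}$ into $A_k$ occurs and your bound $\Pr[Z>z\mid A_i]\le (1-n^{-1/k})^z$ is simply unavailable: conditioned on $A_{k-1}$, the quantity $Z=L$ is deterministic. The paper handles this case separately via $|A_{k-1}|=\tilde O(n^{1/k})$ (Lemma~\ref{lemma tz ak upper}); you should do the same.

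Second, your implication ``$Z>z$ forces $J>z$'' fails when several $A_i$-vertices tie with the pivot's distance. The lemma counts vertices with distance $\le \dgmhb{(i+1)}(u,p_{i+1}(u))$, so every tied vertex contributes to $Z$, yet $J$ can be tiny: if all $L$ listed vertices sit at one common distance, then $Z=L$ regardless of the sampling while $J$ may equal $1$. Breaking ties by ID fixes the ordering but not this step. The paper's proof is equally loose on this point; the clean repair is to bound instead the number of vertices \emph{strictly} closer than the pivot, which is always at most $J-1$ and hence obeys your geometric tail bound, and this suffices for the application since bunches in Algorithm~\ref{alg preprocessing TZ MPC} are defined with strict inequality.
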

\begin{proof}
    First, for $i= k-1$, note that by Lemma \ref{lemma tz ak upper} ,  $|A_{k-1}| = \tilde{O}(n^{1/k})$, thus the claim trivially holds.

     Let $i\in [0,k-2]$, and let $u\in V$. We sort the vertices  $v\in A_i$ according to $\dgmhb{(i+1)}(u,v)$. Let $j$ be the minimal index of a vertex such that $v_j \in A_{i+1}$. Recall that vertices of $A_i$ are sampled to $A_{i+1}$ with probability $n^{1/k}$. Thus, the expected value of $j$ is no more than $n^{1/k}$. 

    Let $c>2$ be a constant. From a Chernoff bound, for $\delta = c{\ln n}$ we have 
    \begin{equation*}
    \begin{array}{lclclclclclclclc}
        Pr[ j > (1+c{\ln n}) E[j]] 
        &\leq & e^{-\frac{(c{\ln n})^2 E[j]}{2+c{\ln n}}} 
        \\&\leq& e^{-\frac{c{\ln n}}{2}}  \\& = &n^{-\frac{c}{2}}.          
    \end{array}
    \end{equation*}
    Let $X'$ be the set of vertices $v\in X$ such that $\dgmhb{(i+1)}(u,v) \leq \dgmhb{(i+1)}(u,v_j)$. With probability at least $n^{-\frac{c}{2}} $  the size of $X'$ is at most $(1+c{\ln n})n^{1/k} $.

Observe that the set of vertices $v\in A_i$ that satiety: 
    \begin{equation*}
        \begin{array}{lclclc}
             \dgmhb{(i+1)}(u,v) \leq \dgmhb{(i+1)}(p_{i+1}(u),u) &\textit{if u has an (i+1)-pivot}\\
             
             \dgmhb{(i+1)}(u,v) \leq  t &\textit{otherwise}
        \end{array}
    \end{equation*}
    is a subset of $X'$, thus the claim holds. 
\end{proof}

By Lemmas \ref{lemma tz ak upper} and  \ref{lemma traverses 2}, we derive the following two lemmas.

 \begin{lemma}\label{lemma ds size single}
     For every $v\in V$ and $i\in [0,k-1]$, w.h.p., we have $|B(v) \cap (\aismai)| \leq  \tilde{O}(n^{1/k})$. 
 \end{lemma}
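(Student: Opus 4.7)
The plan is to reduce the claim directly to Lemma \ref{lemma traverses 2} and Lemma \ref{lemma tz ak upper}, by matching the edge-addition rules of Algorithm \ref{alg preprocessing TZ MPC} to the conditions counted in those lemmas. The key observation is that the set $B(v)\cap (A_i\setminus A_{i+1})$ can only be populated by one specific iteration of the preprocessing algorithm (the one that scans level $A_{i-1}\setminus A_i$ is irrelevant for level $i$; the relevant one is iteration $i+1$, which scans exactly $A_i\setminus A_{i+1}$), plus at most one pivot contribution.

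I would split the proof into two cases according to the value of $i$. For the top case $i=k-1$, we have $A_{k-1}\setminus A_k = A_{k-1}$ since $A_k=\emptyset$, and Algorithm \ref{alg preprocessing TZ MPC} never adds level-$(k-1)$ vertices to $B(v)$ through a bunch-addition step (those steps only add vertices from $A_{i'-1}\setminus A_{i'}$ for $i'\in[1,k-1]$). Hence $B(v)\cap(A_{k-1}\setminus A_k)$ is contained in $A_{k-1}$, and Lemma \ref{lemma tz ak upper} gives $|A_{k-1}|=\tilde{O}(n^{1/k})$ w.h.p.

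For the main case $i\in[0,k-2]$, I would observe that vertices of $A_i\setminus A_{i+1}$ are added to $B(v)$ only during iteration $i'=i+1$, where the added vertices $w\in A_i\setminus A_{i+1}$ satisfy either $\dgmhb{(i+1)}(w,v)<\dgmhb{(i+1)}(A_{i+1},v)$ (if $v$ has an $(i+1)$-pivot) or $\dgmhb{(i+1)}(w,v)<t$ (otherwise). Since $\dgmhb{(i+1)}(A_{i+1},v)=\dgmhb{(i+1)}(p_{i+1}(v),v)$, the set of candidate $w$'s is exactly contained in the set counted by Lemma \ref{lemma traverses 2} (with the roles of $u$ and $v$ swapped). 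Applying that lemma, the number of such vertices is at most $n^{1/k}\cdot c\ln n$ with probability $\geq 1-n^{-c}$. The only other contribution to $B(v)\cap(A_i\setminus A_{i+1})$ is possibly the pivot $p_i(v)$ when $i\geq 1$ and $p_i(v)\notin A_{i+1}$, which adds a single vertex. Summing gives the $\tilde{O}(n^{1/k})$ bound.

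There is no real obstacle here, since the hard probabilistic work is already done in Lemma \ref{lemma traverses 2}; the proof is essentially a bookkeeping argument that matches the distance condition in the algorithm to the distance condition in that lemma, and handles the two boundary contributions (the single pivot and the top level $A_{k-1}$) separately. The one subtle point to be careful about is that Lemma \ref{lemma traverses 2} counts $A_i$ vertices, while we want $A_i\setminus A_{i+1}$ vertices, but since the latter is a subset of the former, the same upper bound applies.
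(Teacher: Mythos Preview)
Your proposal is correct and follows exactly the approach the paper intends: the paper states Lemma~\ref{lemma ds size single} as an immediate consequence of Lemmas~\ref{lemma tz ak upper} and~\ref{lemma traverses 2}, without spelling out the bookkeeping, and you have filled in precisely that bookkeeping. One very minor imprecision: pivots $p_j(v)$ for $j<i$ could in principle also land in $A_i\setminus A_{i+1}$ (if such a pivot happens to be sampled up to level $i$), so the ``single extra pivot'' count should really be ``at most $k$ extra pivots''; this does not affect the $\tilde{O}(n^{1/k})$ bound.
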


 \begin{lemma}\label{lemma ds size total}
     For every $v\in V$, w.h.p., we have $|B(v)| \leq  \tilde{O}(kn^{1/k})$. 
 \end{lemma}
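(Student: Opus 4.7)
The plan is to observe that the bunch $B(v)$ decomposes as a disjoint union over the $k$ levels of the hierarchy, and then to apply Lemma~\ref{lemma ds size single} level by level. Concretely, since the sets $A_0 \setminus A_1,\, A_1 \setminus A_2,\, \ldots,\, A_{k-1} \setminus A_k$ partition $V$, and by Algorithm~\ref{alg preprocessing TZ MPC} every vertex placed into $B(v)$ (including the pivots $p_i(v)$) belongs to $A_{i-1} \setminus A_i$ for some $i \in [1,k]$, we have
\begin{equation*}
B(v) \;=\; \bigcup_{i=0}^{k-1} \bigl( B(v) \cap (A_i \setminus A_{i+1}) \bigr),
\end{equation*}
and this is a disjoint union, so $|B(v)| = \sum_{i=0}^{k-1} |B(v) \cap (A_i \setminus A_{i+1})|$.

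Next, I would invoke Lemma~\ref{lemma ds size single}, which asserts that for each fixed $v \in V$ and each fixed $i \in [0,k-1]$, with high probability $|B(v) \cap (A_i \setminus A_{i+1})| = \tilde{O}(n^{1/k})$. A union bound over the $k$ levels (and, if needed for a stated claim that holds simultaneously for all $v$, an additional union bound over the $n$ vertices, absorbed into the $\tilde{O}(\cdot)$ via a large enough polylog constant in the high-probability guarantee from Lemma~\ref{lemma traverses 2}) then yields
\begin{equation*}
|B(v)| \;=\; \sum_{i=0}^{k-1} |B(v) \cap (A_i \setminus A_{i+1})| \;\leq\; k \cdot \tilde{O}(n^{1/k}) \;=\; \tilde{O}(k n^{1/k}),
\end{equation*}
with high probability, as required.

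There is no real obstacle here: the lemma is a direct corollary of Lemma~\ref{lemma ds size single} via the partition of $V$ across hierarchy levels, and all probabilistic work has already been done in Lemma~\ref{lemma traverses 2}. The only mildly delicate point is ensuring the high-probability bounds compose correctly, but this is standard since $k \leq 1/\rho$ is at most a constant (or at most $O(\log n)$), so the union bound costs only a $\mathrm{poly}\log$ factor already absorbed into the $\tilde{O}$ notation.
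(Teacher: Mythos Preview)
Your proposal is correct and matches the paper's approach: the paper states both Lemma~\ref{lemma ds size single} and Lemma~\ref{lemma ds size total} together as direct consequences of Lemmas~\ref{lemma tz ak upper} and~\ref{lemma traverses 2}, without giving any further details, so your decomposition of $B(v)$ across the $k$ levels and summation of the per-level $\tilde{O}(n^{1/k})$ bounds is exactly the implicit argument being invoked.
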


\subsubsection{Implementation and Round Complexity}
This section is divided into two segments. First, we discuss the implementation of the preprocessing algorithm \ref{alg preprocessing TZ MPC}. Then, we discuss the round complexity of answering a query using Algorithm \ref{alg search TZ MPC}. 
\paragraph{Preprocessing time.}
The input for the preprocessing algorithm is the graph $G$, a $(1+\epsilon_\mch,\beta_\mch)$-hopset $\mathcal{H}$ of size $O(n^{1+\rho})$ for distances up to $t$ in $G$ and parameters $k \geq 2$ and $\gamma\in (0,1)$.  

The first step in the preprocessing algorithm is sampling the hierarchy $V= A_0\supseteq A_1 \supseteq \dots \supseteq A_{k-1} \supseteq A_k = \emptyset$. Computationally, this is equivalent to the hierarchy sampling procedure of the general framework. Therefore, Lemma \ref{lemma cc select} is also applicable for the sampling procedure of Algorithm \ref{alg preprocessing TZ MPC}, i.e., it can be executed in $O(1/\gamma)$ rounds of MPC using machines of $O(n^\gamma)$ memory and a total memory of size $\tilde{O}(|E|+n^{1+\rho})$.

Next, the preprocessing algorithm moves to compute the $i$-pivots and the bunches of every vertex $v\in V$. First, pivots are computed for all scales. For $i\in [0,k-1]$, a single Bellman-Ford exploration from all sources in $A_i$ is executed to $(i\beta_\mch)$-hops in $G\cup \mch$. Each vertex $v\in V$ that is detected by this exploration, selects the vertex $w\in A_i$ with minimal $\dgmhb{i}(w,v)$ to be its $i$-pivot. Observe that we are interested only in finding the closest $A_i$-vertex for every $v\in V$, thus we do not compute the distances from $v$ to all vertices of $A_i$. As discussed in Section \ref{sec selecting edges impl}, this is computationally equivalent to executing a single restricted-Bellman-Ford exploration from a dummy vertex $s$ which is connected with $0$-weight edges to all vertices in $A_i$. Therefore, by Theorem \ref{theo restricted bellman-ford}, executing $\beta_\mch$-restricted Bellman-Ford explorations from all vertices in $A_i$ can be done in $O(\beta_\mch/\gamma)$ rounds of sublinear MPC, when using machines of size $O(n^\gamma)$ and a total memory of size $\tilde{O}(|E|+n^{1+\rho})$.

For computing the bunches, for every $i\in[0,k-1]$, we execute $(i+1)\beta_\mch$-hops restricted Bellman-Ford explorations from all $A_i$-vertices, in parallel. For this aim, we employ the multiple sources Bellman-Ford algorithm devised in Section \ref{sec selecting edges impl}.
We use the term $A_i$-explorations to describe all explorations that are executed from vertices in $A_i$ to depth $(i+1)\beta_\mch$ in order to compute bunches. 
Observe however that the set $A_i$ may be of size $\Omega(n)$. Thus, we need to provide an upper bound on the number of $A_i$-explorations that traverse each vertex. For this aim, we observe that each vertex $v\in V$ is interested only in distances to vertices $u\in A_i$ such that $\dgmhb{(i+1)}(u,v) < \dgmhb{(i+1)}(p_{i+1}(v),v)$. Therefore, throughout the $A_i$-exploration, each vertex $v\in V$ needs to send messages to its neighbors only regarding vertices $u\in A_i$ only when 
$\dgmhb{(i+1)}(u,v) < \dgmhb{(i+1)}(p_{i+1}(v),v)$.

We slightly modify the multiple sources Bellman-Ford algorithm, such that vertices do not receive messages with distance estimates larger than their individual threshold. 
Essentially,  for each vertex $v\in V$, we set an individual distance threshold $\delta_{i,v} = \dgmhb{(i+1)}(p_{i+1}(v),v)$. Each vertex $v$ informs all its neighbors of its threshold. These vertices check whether the distance estimate they want to send $v$ comply with the distance threshold of $v$ or not.

Formally, we create $\mu = n^{1/k}c{\ln n}$ copies for every edge $(u,v)\in E\cup H$, using the copying procedure devised in Section \ref{sec selecting edges impl}, and compute appropriate tuples to allow communication between the $j$th copy of $(u,v)$ and the $j$th copy of $(v,u)$. 
For each edge-copy $((v,u),j)$ where $(u,v)\in G\cup \mch$ and $j\in [0,\addedmemory -1]$, the machines $M_{((u,v),j)}$ sends this threshold to the machine $M_{((v,u),j)}$. Now, whenever the machine $M_{((u,v),j)}$ 
wishes to send a message to $M_{(v,u),j)}$ regarding a distance estimate from a source $s$ to $u$, it first checks whether this estimate is at most $\delta_{i,v} - \omega(u,v)$, 
where $\omega(u,v)$ is the weight of the edge $(u,v)$. This ensures that each vertex $v\in V$  is traversed only by $A_i$-explorations such that their origin $s$ satisfies $\dgmhb{(i+1)}(s,v) < \dgmhb{(i+1)}(p_{i+1}(v),v)$.

By Lemma \ref{lemma traverses 2}, each vertex $v\in V$ has at most $\tilde{O}(n^{1/k})$ such vertices. Therefore, by Theorem \ref{theo multi source BF}, w.h.p., the $A_i$-explorations can be executed in parallel in $O(i\beta_\mch/\gamma)$ rounds of sublinear MPC using machines with $O(n^\gamma)$ memory and $\tilde{O}((|E|+|\mch|)n^{1/k})$ total memory. Recall that w.h.p., $|\mch| = O(n^{1+\rho})$. 
It follows that the time required to execute the $A_i$-explorations for every $i\in [0,k-1]$ is at most 
\begin{equation*}
    \begin{array}{lclclc}
         \sum_{i\in [0,k-1]} O(i\beta_\mch/\gamma) & = & O(k^2\beta_\mch/\gamma). 
    \end{array}
\end{equation*}
Observe that when  the algorithm terminates, for each vertex $v\in V$ and edge edge $(u,v)$ that the vertex $v$ adds to the distance sketch $\mathcal{D}$, there is a single machine in $M(v)$ that is aware of the edge $ (u,v)$ and its weight. In other words, the distances to $B(v)$ is stored on $M(v)$. 
The following lemma summarizes this discussion. 

\begin{lemma}
    \label{lemma apsp cc}
    The limited-scale distance sketches can be constructed in $O(k^2\beta_\mch/\gamma)$ rounds of sublinear MPC using machines with $O(n^\gamma)$ memory and $\tilde{O}((|E|+n^{1+\rho})n^{1/k})$ total memory. 
\end{lemma}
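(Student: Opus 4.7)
The plan is to decompose the algorithm into three stages---sampling the hierarchy, computing pivots, and computing bunches---and bound the round complexity and memory of each using results established earlier in the paper. Then the three bounds are summed to obtain the claimed $O(k^2\beta_\mch/\gamma)$ round complexity and $\tilde O((|E|+n^{1+\rho})n^{1/k})$ total memory.

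For the sampling stage I would observe that Algorithm \ref{alg preprocessing TZ MPC} samples its hierarchy in exactly the same fashion as Algorithm \ref{alg sample}: each vertex of $A_i$ is promoted independently with a specified probability. Hence Lemma \ref{lemma cc select} applies verbatim, yielding $O(1/\gamma)$ rounds using machines of size $O(n^\gamma)$ and total memory $\tilde O(|E|+|\mch|) = \tilde O(|E|+n^{1+\rho})$. For the pivot stage, for each level $i\in[0,k-1]$ I would replace the set $A_i$ of sources by a single dummy vertex $s_i$ attached by zero-weight edges to every vertex of $A_i$, reducing the task to a single $(i\beta_\mch)$-restricted Bellman-Ford exploration in $G\cup\mch$; Theorem \ref{theo restricted bellman-ford} then caps the cost of level $i$ at $O(i\beta_\mch/\gamma)$ rounds and $\tilde O(|E|+|\mch|)$ total memory, so summing over $i$ gives $O(k^2\beta_\mch/\gamma)$ rounds overall.

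The bunch-computation stage is the main obstacle, since $|A_i|$ may be $\Omega(n)$ and we cannot afford to run Theorem \ref{theo multi source BF} directly without first bounding how many explorations traverse a given vertex. My plan is to use the modified multiple-sources Bellman-Ford of Section \ref{sec selecting edges impl} with per-vertex thresholds: after invoking Lemma \ref{lemma copy edges} to create $\mu = \tilde O(n^{1/k})$ copies of every edge in $E\cup\mch$, each vertex $v$ broadcasts its threshold $\delta_{i,v} = \dgmhb{(i+1)}(p_{i+1}(v),v)$ (or $t$ if $v$ has no $(i+1)$-pivot) along each of its edge-copies, and every sender locally suppresses any candidate distance estimate that would exceed the receiver's threshold after traversing the edge. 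By Lemma \ref{lemma traverses 2}, this filtering guarantees that w.h.p.\ each vertex is traversed by at most $\tilde O(n^{1/k})$ explorations, which matches the $\mu$ reserved in advance and hence fits into $\tilde O((|E|+|\mch|)n^{1/k})$ total memory. Theorem \ref{theo multi source BF} then bounds the $A_i$-exploration at level $i$ by $O(i\beta_\mch/\gamma)$ rounds; summing over $i\in[0,k-1]$ yields $O(k^2\beta_\mch/\gamma)$ rounds, and combining with the earlier stages and the bound $|\mch| = \tilde O(n^{1+\rho})$ gives the stated total. The only subtlety I would need to verify carefully is that the threshold-broadcast preliminary step indeed costs only $O(1/\gamma)$ rounds per level (a direct consequence of Theorem \ref{theo min broadcast} applied along the already-sorted edge-copies), so that it is absorbed into the $O(i\beta_\mch/\gamma)$ budget.
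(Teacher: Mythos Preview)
Your proposal is correct and follows essentially the same approach as the paper: the same three-stage decomposition (sampling via Lemma \ref{lemma cc select}, pivots via the dummy-source reduction and Theorem \ref{theo restricted bellman-ford}, bunches via the threshold-modified multiple-sources Bellman-Ford with Lemma \ref{lemma copy edges}, Lemma \ref{lemma traverses 2}, and Theorem \ref{theo multi source BF}), and the same $\sum_{i} O(i\beta_\mch/\gamma) = O(k^2\beta_\mch/\gamma)$ summation. Your explicit remark that the threshold-broadcast costs only $O(1/\gamma)$ rounds per level is a small refinement the paper leaves implicit.
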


\paragraph{Query time }
Given a query $(u,v)$, the algorithm executes the Algorithm \ref{alg search TZ MPC}. If we have a machine $M^*$ with $\Omega(kn^{1/k} \poly{ \log(n)})$ memory, then the machines $M(u)$ and the machines $M(v)$ can send  $M^*$ the distances to their bunches that they store in $O(1)$ rounds. The machine $M^*$ can now execute Algorithm \ref{alg search TZ MPC} locally.

Consider the case where we are not allowed to use a machine with $\Omega(kn^{1/k} \poly{\log(n)})$ memory. In this case, for every $i$, the machines in $M(v_i)$ check if $w_i\in B(v_i)$.
For this aim, if a machine in $M(v_i)$ stores $\mathcal{D}(v_i,w_i)$, then it broadcasts this message to all machines in $M(v_i),M(u_i)$. Observe that there is at most one machine that broadcasts $\mathcal{D}(v_i,w_i)$. 
If no machine has sent such a broadcast, then the machines in $M(v_i),M(u_i)$ compute $u_{i+1},v_{i+1} = v_i,u_i$ locally, and the machines $M(u_{i+1})$ find $w_{i+1}$, and send its ID to $M(u_i),M(v_i)$.  Recall that the $(i+1)$-pivot of $u_{i+1}$ is stored on a single machine (if exists). Therefore, by Theorem \ref{theo min broadcast}, each round of the query algorithm can be computed in $O(1/\gamma)$ rounds of sublinear MPC when using machines of size $O(n^\gamma)$  and $\tilde{O}((|E|+n^{1+\rho})n^{1/k})$ total memory.

\paragraph{Summary}
 From Lemmas \ref{lemma ds final stretch}, \ref{lemma ds size total} and \ref{lemma apsp cc} we derive the following corollary. 

\begin{corollary}\label{coro distance sketch prop}
    There is a randomized algorithm that receives a graph $G = (V,E)$ on $n$ vertices, parameters $\epsilon <1$,  $\rho \in [1/{\log {\log n}},1/2]$, $k\leq 1/\rho$, a limited-scale $(1+\epsilon_\mch,\beta_\mch)$-hopset for distances up to $t$, and w.h.p. computes a limited distance sketch in 
    $O(k^2\beta_\mch/\gamma)$
     rounds of sublinear MPC,
      when using machines with $O(n^\gamma)$ memory and a total memory of size $\tilde{O}((|E|+n^{1+\rho})n^{1/k})$.
      For each vertex $v\in V$, the distance sketch retains $\tilde{O}(kn^{1/k})$ distances. 
\end{corollary}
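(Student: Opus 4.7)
The plan is to observe that this corollary is a summary statement bundling together the three lemmas already proved in this section: the stretch guarantee (Lemma \ref{lemma ds final stretch}), the per-vertex size bound (Lemma \ref{lemma ds size total}), and the round/memory complexity bound (Lemma \ref{lemma apsp cc}). So the proof is essentially an assembly argument rather than a new technical contribution.

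First I would recall that the algorithm in question is Algorithm \ref{alg preprocessing TZ MPC}, which (i) samples the hierarchy $V = A_0 \supseteq A_1 \supseteq \dots \supseteq A_{k-1}$, (ii) for each $i \in [0,k-1]$ computes the $i$-pivots by a single restricted Bellman-Ford from a dummy vertex attached to $A_i$ by $0$-weight edges, and (iii) for each $i$ computes bunches via a multiple-source $(i+1)\beta_\mch$-hop restricted Bellman-Ford from all of $A_i$, with per-vertex distance thresholds $\delta_{i,v} = \dgmhb{(i+1)}(p_{i+1}(v),v)$ (or $t$ when $v$ has no $(i+1)$-pivot) installed along the edges incident to $v$ to prune spurious messages.

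Next I would cite Lemma \ref{lemma apsp cc} for the round complexity $O(k^2 \beta_\mch/\gamma)$, noting that this bound already uses Lemma \ref{lemma traverses 2} to certify that at most $\tilde{O}(n^{1/k})$ explorations from $A_i$ traverse any given vertex, which in turn allows the application of Theorem \ref{theo multi source BF} with $\mu = \tilde{O}(n^{1/k})$ copies per edge of $G \cup \mch$. Since $|\mch| = \tilde{O}(n^{1+\rho})$ by Corollary \ref{coro size hop}, the total memory is $\tilde{O}((|E|+|\mch|)\cdot n^{1/k}) = \tilde{O}((|E|+n^{1+\rho}) n^{1/k})$, matching the statement. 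For the per-vertex output size, I would directly invoke Lemma \ref{lemma ds size total}, which sums Lemma \ref{lemma ds size single} over the $k$ levels to give $\tilde{O}(k n^{1/k})$.

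Finally, I would wrap up with a brief union-bound argument: each of the invoked statements (the traversal bound of Lemma \ref{lemma traverses 2}, the per-level bunch bound of Lemma \ref{lemma ds size single}, the hopset size bound of Corollary \ref{coro size hop}, and the stretch guarantee of Lemma \ref{lemma ds final stretch}) fails with probability at most $n^{-c}$ for a constant $c > 2$, and since we are taking a union over at most $\poly(n)$ many such events (one per vertex per level), all the desired properties hold simultaneously w.h.p. There is no real obstacle here; the only thing to check carefully is that the memory accounting absorbs the $|\mch|$ term into $\tilde{O}(n^{1+\rho})$ and that the hops bound $i \beta_\mch \le k \beta_\mch$ is compatible with the $(1+\epsilon_\mch)$-stretch of the limited-scale hopset on the relevant scale $t$, which is exactly what Lemma \ref{lemma ui-wi mpc ds} already established inside the proof of Lemma \ref{lemma ds final stretch}.
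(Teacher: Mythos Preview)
Your proposal is correct and matches the paper's approach: the paper derives this corollary directly ``From Lemmas \ref{lemma ds final stretch}, \ref{lemma ds size total} and \ref{lemma apsp cc}'' with no further argument, and your write-up is simply a more detailed unpacking of that assembly. The only (harmless) difference is that you add an explicit union-bound wrap-up and trace the $|\mch| = \tilde{O}(n^{1+\rho})$ term back to Corollary~\ref{coro size hop}, both of which the paper leaves implicit.
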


\begin{corollary}
    \label{coro query constant}
    Given a pair of vertices $u,v$ such that $d_G(u,v)\leq \dlimits$, the distance sketch can be used to compute a $(1+\epsilon_\mch)(2k-1)$-approximation for $d_G(u,v)$ in $O(1)$ rounds when using at least one machine with $\Omega(kn^{1/k}\poly{\log(n)})$ memory, and additional machines with $O(n^\gamma)$ memory each, such that the total memory is $\tilde{O}((|E|+|\mch|)n^{1/k})$. 
\end{corollary}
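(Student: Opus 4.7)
The plan is to reduce the query to a purely local computation at a single sufficiently large machine $M^*$, so that the round complexity only reflects one round of data collection plus one round of output. Concretely, I would argue the following three steps.

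First, I would identify exactly what information about $u$ and $v$ the query algorithm \ref{alg search TZ MPC} needs. Inspecting the pseudocode, at each iteration $i$ the algorithm only consults the pivots $p_i(u_i)$ and the entries $\mathcal{D}(w_i, u_i)$, $\mathcal{D}(w_i, v_i)$, together with membership tests $w_i \in B(v_i)$. Since $\{u_i,v_i\} = \{u,v\}$ throughout, the algorithm only accesses information associated with the two endpoints $u$ and $v$. By Lemma \ref{lemma ds size total}, w.h.p.\ the relevant data stored at the preprocessing output for $u$ (namely the identities and distances of $B(u)$ together with all pivots $p_i(u)$ and the values $\dgmhb{i}(p_i(u),u)$) has total size $\tilde{O}(kn^{1/k})$, and the same holds for $v$. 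This fits in the memory $\Omega(kn^{1/k}\poly{\log n})$ of $M^*$.

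Next, I would collect these two bunches and pivot-lists at $M^*$. At the end of the preprocessing (see the implementation of Lemma \ref{lemma apsp cc}), for every vertex $x$ the entries of $B(x)$ and the pivot records are distributed among the machines in $M(x)$, each of which holds at most $O(n^\gamma)$ words. Because the aggregate is $\tilde{O}(kn^{1/k})$, this data is spread across at most $\tilde{O}(kn^{1/k}/n^\gamma)$ machines, each sending at most $O(n^\gamma)$ words — well within the I/O budget. Using a standard sort/aggregate (e.g.\ the Goodrich et al.\ primitive already invoked in the paper), all of this data can be delivered to $M^*$ in $O(1/\gamma) = O(1)$ rounds. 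Once it arrives, $M^*$ simulates Algorithm \ref{alg search TZ MPC} internally with zero further communication, and broadcasts the answer in one additional round.

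Finally, the correctness of the returned value follows directly from Lemma \ref{lemma ds final stretch}: since $d_G(u,v) \leq \dlimits$, the local run produces an estimate at most $(1+\epsilon_\mch)(2k-1)\, d_G(u,v)$; the matching lower bound $d_G(u,v) \leq \mathcal{D}(w_i,u_i) + \mathcal{D}(w_i,v_i)$ follows because every stored entry is some hop-limited distance in $G \cup \mch$, which can only overestimate $d_G$, combined with the triangle inequality in $G$. The total memory usage is dominated by the preprocessed sketch, i.e.\ $\tilde{O}((|E|+|\mch|)n^{1/k})$, matching the statement.

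The main potential obstacle is the I/O accounting in the collection step: one must verify that no single machine, and in particular not $M^*$, is asked to receive more than its memory allows in one round. This is handled by the bunch-size bound of Lemma \ref{lemma ds size total} together with the fact that the preprocessing stores $B(x)$ in a balanced way over $M(x)$; once these two facts are in hand the routing is entirely standard and the rest of the argument is immediate.
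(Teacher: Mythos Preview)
Your proposal is correct and follows essentially the same approach as the paper: collect $B(u)$ and $B(v)$ (whose sizes are bounded by Lemma~\ref{lemma ds size total}) at a single large machine $M^*$, run Algorithm~\ref{alg search TZ MPC} locally, and invoke Lemma~\ref{lemma ds final stretch} for the stretch guarantee. Your write-up is somewhat more explicit about the I/O accounting than the paper's one-line justification, but the underlying argument is the same.
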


\begin{corollary}
    \label{coro query k}
    Given a pair of vertices $u,v$ such that $d_G(u,v)\leq \dlimits$, the distance sketch can be used to compute a $(1+\epsilon_\mch)(2k-1)$-approximation for $d_G(u,v)$ in $O(k)$ rounds when using  machines with $O(n^\gamma)$ memory and $\tilde{O}((|E|+|\mch|)n^{1/k})$ total memory. 
\end{corollary}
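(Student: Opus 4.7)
The plan is to combine the stretch guarantee already proved in Lemma \ref{lemma ds final stretch} with a round-complexity argument that executes Algorithm \ref{alg search TZ MPC} one iteration at a time in sublinear MPC, using only broadcast primitives that were shown earlier to run in $O(1/\gamma)$ rounds. Correctness is immediate: since $d_G(u,v)\leq \dlimits$, Lemma \ref{lemma ui-wi mpc ds} rules out the \emph{out of range} branch, so the loop terminates by returning $\mathcal{D}(w_i,u_i)+\mathcal{D}(w_i,v_i)$, which by Lemma \ref{lemma ds final stretch} is a $(1+\epsilon_\mch)(2k-1)$-approximation. So the only substantive task is bounding the round complexity.

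First I would observe that in the setting of Algorithm \ref{alg preprocessing TZ MPC}, for every vertex $x\in V$ the distances $\mathcal{D}(x,\cdot)$ to $B(x)$ and to the pivots $p_i(x)$ are laid out across $M(x)$, with each individual entry residing on exactly one machine (this is the output state established at the end of Lemma \ref{lemma apsp cc}'s implementation). Thus ``does $M(v_i)$ store $\mathcal{D}(v_i,w_i)$?'' is a question that can be answered by letting the (at most one) machine in $M(v_i)$ that holds this entry broadcast a YES together with the value $\mathcal{D}(v_i,w_i)$ to the contiguous blocks $M(u_i)$ and $M(v_i)$, and by letting every other machine in $M(v_i)$ broadcast a NO token. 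Using the subroutine of Theorem \ref{theo min broadcast}, this membership test plus propagation of the value takes $O(1/\gamma)$ rounds and total memory $\tilde{O}((|E|+|\mch|)n^{1/k})$, which is already allocated for the sketch.

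Second, if the test fails, the machines holding the tuples in $M(u_i)$ and $M(v_i)$ swap roles locally to set $(u_{i+1},v_{i+1})\gets(v_i,u_i)$, and then $M(u_{i+1})$ looks up whether $u_{i+1}$ has an $(i+1)$-pivot; if it does, the unique machine in $M(u_{i+1})$ that stores the record $(p_{i+1}(u_{i+1}),\mathcal{D}(p_{i+1}(u_{i+1}),u_{i+1}))$ broadcasts the identity of $w_{i+1}\gets p_{i+1}(u_{i+1})$ to both $M(u_{i+1})$ and $M(v_{i+1})$, again via Theorem \ref{theo min broadcast} in $O(1/\gamma)$ rounds; if it does not, the algorithm returns \emph{out of range}, which by the argument above cannot happen for the input $(u,v)$.

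Putting the pieces together, each iteration of the while loop is implemented in $O(1/\gamma)$ MPC rounds using machines of size $O(n^\gamma)$ and total memory $\tilde{O}((|E|+|\mch|)n^{1/k})$. Since Lemma \ref{lemma ui-wi mpc ds} guarantees that $i$ never exceeds $k-1$, the loop executes at most $k$ times, giving an overall query complexity of $O(k/\gamma)=O(k)$ rounds in $\gamma$-fixed regime, which is the bound claimed. The main technical obstacle to watch for is ensuring that every broadcast in a given iteration is emitted by at most one machine per contiguous target block, so that Theorem \ref{theo min broadcast} applies cleanly; this is handled by the uniqueness-of-storage property for pivot and bunch entries that was set up during preprocessing.
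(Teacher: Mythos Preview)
Your proposal is correct and follows essentially the same approach as the paper: the paper's ``Query time'' paragraph implements each iteration of Algorithm \ref{alg search TZ MPC} via a single broadcast from the unique machine in $M(v_i)$ holding $\mathcal{D}(v_i,w_i)$ (and, if absent, from the unique machine in $M(u_{i+1})$ holding the $(i{+}1)$-pivot), invoking Theorem \ref{theo min broadcast} for $O(1/\gamma)$ rounds per iteration and bounding the number of iterations by $k$. Your write-up is slightly more explicit about why the out-of-range branch is never taken (via Lemma \ref{lemma ui-wi mpc ds}) and about the uniqueness-of-storage invariant, but the argument is the same.
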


\subsubsection{Computing APSP}

In order to compute a distance oracle for all distance scales, we employ the algorithm devised in Section \ref{sec sssp} to construct a limited-scale near-exact hopset $\mch$ and a near-additive emulator $M$.
Recall that $\mch$ is a $(1+\epsilon_\mch,\beta_\mch)$-hopset for distances up to $t = \tval$. Also, recall that the emulator is stored on a machine $M^*$ with $\tilde{O}(n)$ memory. 

Then, we use the hopset $\mch$ and construct a distance sketch for distances up to $d = \dlimits$. 

Upon a distance query $u,v$, the machine $M^*$ computes $d_{M}(u,v)$. In addition, the machines $M(u)$ and $M(v)$ send $B(u),B(v)$ to the machine $M^*$, which in turn runs Algorithm \ref{alg search TZ MPC} locally. Finally, the machine $M^*$ chooses the minimal estimate as its final answer.

Recall that $M$ is a $(1+\epsilon_\mch+ \epsilon_{M},\beta_{M})$-emulator. Also, recall that 
\begin{equation*}
    \begin{array}{lclclclc}
         \beta_{M} & = & \betaemu\\
         t & = & \tval &\rightarrow\\
         \beta_{M} & \leq & t\cdot \epsilon_{M}
  \end{array}
\end{equation*}
Observe that for pairs of vertices $u,v$ with $d_G(u,v)\geq \dlimits$, we have 

\begin{equation*}
    \begin{array}{lclclclc}
        &&(1+\epsilon_\mch+\epsilon_{M})d_G(u,v) + \beta_{M} \\
        & \leq  & 
        (1+2\epsilon_{M})d_G(u,v) + t\cdot {\epsilon_{M}}\\
        & \leq  & (1+2\epsilon_{M})d_G(u,v) +d_G(u,v)\cdot(1+\epsilon_\mch)^2(k+1) \cdot {\epsilon_{M}}\\
        & =   & (1+2\epsilon_{M}+ (1+\epsilon_\mch)^2\cdot {\epsilon_{M}} (k+1) )d_G(u,v)

    \end{array}
\end{equation*}

Recall that $\epsilon_\mch = \epsilon_{M} = \epsilon/3$. In addition, we require $\epsilon <1/2$.  It follows that $(1+\epsilon_\mch)^2\cdot \epsilon_{M} \leq 3/10 $. Therefore, we have 

\begin{equation*}
    \begin{array}{lclclclc}
        (1+\epsilon_\mch+\epsilon_{M})d_G(u,v) + \beta_{M} & \leq  &
        (1+\frac{1}{3}+ \frac{3}{10} (k+1) )d_G(u,v)
\\
        & \leq  &
        (\frac{49}{30} + \frac{3k}{10} )d_G(u,v)
\\
        & \leq  &
        (2k-1)d_G(u,v).
    \end{array}
\end{equation*}
It follows that the emulator $M$ provides a $(2k-1)$-approximation for distanced at least $d$.

Observe that the running time of the algorithm is dominated by the time required to compute the hopset $\mch$ and the emulator $M$, as in Section \ref{sec sssp}. The total memory is dominated by the $\tilde{O}((|E|+n^{1+\rho})n^{1/k})$ memory required to compute the distance sketch. 
By Theorem \ref{theo sssp}, Corollary \ref{coro distance sketch prop} and \ref{coro query constant}, we derive the following corollary which summarizes the properties of the distance oracle.

\begin{theorem}
    \label{theo distance oracles} 
    Let $G=(V,E)$ be an unweighted, undirected graph on $n$ vertices, and let $\epsilon <1/2$, $\rho \in[ 1/{\log {\log n}}, 1/2]$ and $k \leq 1/\rho$ be parameters. There is a randomized algorithm that w.h.p.  computes a distance oracle of size $\tilde{O}(kn^{1+1/k})$ that provides $(1+\epsilon)(2k-1)$-approximation for all the distances.  The preprocessing algorithm can be executed in  $ O\left( \frac{1}{\gamma}\left( \frac{ 
        {\log^{(2)} n}\cdot ({\log^{(3)}n} -{\log {\epsilon }})
        }
        {\epsilon \rho}\right)^{\frac{1}{\rho} +2}\right) $
     rounds  and the query time is $O(1)$. The algorithm works in heterogeneous MPC, when using a single machine with $\tilde{O}(n)$ memory and additional machines with $O(n^\gamma)$ memory and a total memory of size $\tilde{O}((|E|+n^{1+\rho})\cdot n^{1/k})$.
\end{theorem}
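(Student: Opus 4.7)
The plan is to combine two complementary structures and argue that their pointwise minimum provides the required constant approximation with constant query time. Following the blueprint laid out in the ``Computing APSP'' subsection, I would first invoke the SSSP machinery from Section \ref{sec sssp} to build, in $\tilde{O}(\log\log n/(\epsilon\rho))^{1/\rho+2}$ rounds, a limited-scale $(1+\epsilon_\mathcal{H},\beta_\mathcal{H})$-hopset $\mathcal{H}$ for distances up to $t=\tval$ (via Theorem \ref{theo hopset}) and a $(1+\epsilon_\mathcal{H}+\epsilon_M,\beta_M)$-emulator $M$ of size $\tilde{O}(n)$ (via Theorem \ref{theo emulators}), setting $\epsilon_\mathcal{H}=\epsilon_M=\epsilon/3$. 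The emulator will be aggregated onto the single $\tilde{O}(n)$-memory machine $M^{*}$. Then I would feed $\mathcal{H}$ into the limited-scale distance-sketch construction (Corollary \ref{coro distance sketch prop}) to obtain a sketch that stores $\tilde{O}(kn^{1/k})$ distances per vertex and, by Lemma \ref{lemma ds final stretch}, provides a $(1+\epsilon_\mathcal{H})(2k-1)$-approximation for every pair at distance at most $d=\dlimits$ in $G$.

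The query procedure, given $(u,v)$, would compute in parallel (i) $d_M(u,v)$ locally on $M^{*}$ after collecting $M$ there, and (ii) the sketch estimate via Algorithm \ref{alg search TZ MPC} using Corollary \ref{coro query constant} (the bunches $B(u),B(v)$ of total size $\tilde{O}(kn^{1/k})$ are sent to $M^{*}$ in $O(1)$ rounds), then return the minimum. For the stretch analysis I would split into two regimes. If $d_G(u,v)\le d$, Lemma \ref{lemma ds final stretch} directly yields a $(1+\epsilon_\mathcal{H})(2k-1)\le(1+\epsilon)(2k-1)$ approximation. If $d_G(u,v)>d$, I would use the emulator bound $d_M(u,v)\le(1+\epsilon_\mathcal{H}+\epsilon_M)d_G(u,v)+\beta_M$ and, mirroring the computation already sketched in the APSP subsection, observe that $\beta_M\le t\cdot\epsilon_M$ while $t\le(1+\epsilon_\mathcal{H})^{2}(k+1)\,d$, so that $\beta_M\le\epsilon_M(1+\epsilon_\mathcal{H})^{2}(k+1)\,d_G(u,v)$. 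Substituting $\epsilon_\mathcal{H}=\epsilon_M=\epsilon/3$ and using $\epsilon<1/2$, this gives a total multiplicative factor of at most $2k-1$, hence a $(2k-1)$-approximation in this regime.

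For resources, the preprocessing round complexity is dominated by the hopset and emulator construction, matching Theorem \ref{theo sssp}; the distance-sketch construction adds $O(k^{2}\beta_\mathcal{H}/\gamma)$ rounds which is subsumed since $k\le 1/\rho$. The total memory is the maximum of the three subroutines, namely $\tilde{O}((|E|+n^{1+\rho})n^{1/k})$ (the sketch dominates, since $M$ has size $\tilde{O}(n)$ and $\mathcal{H}$ has size $\tilde{O}(n^{1+\rho})$). The oracle itself has size $\tilde{O}(kn^{1+1/k})$ from the per-vertex bunch bound in Lemma \ref{lemma ds size total} plus the $\tilde{O}(n)$ emulator edges. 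Finally, high-probability correctness follows from a union bound over the w.h.p. guarantees for $\mathcal{H}$, $M$, and the sketch.

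I expect the main obstacle to be the stretch calculation in the ``distant'' regime: verifying that the algebra tying $\beta_M$, $t$, and $d$ together actually closes to give exactly $(2k-1)$ rather than a slightly worse constant, since it relies on the specific choice $\alpha=2^{k+1}\epsilon^{\ell}$ used in the emulator construction and on $\epsilon<1/2$. A secondary subtlety is ensuring that the threshold $d$ chosen for the sketch matches the regime at which the emulator's additive term becomes negligible, so that the two estimators jointly cover all distance scales without a gap; I would verify this by checking that the sketch's guarantee for $d_G(u,v)=d$ and the emulator's guarantee just past $d$ are both dominated by $(1+\epsilon)(2k-1)d_G(u,v)$ under the chosen $\epsilon_\mathcal{H},\epsilon_M$.
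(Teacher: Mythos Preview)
Your proposal is correct and follows essentially the same approach as the paper: build the hopset and emulator as in Section~\ref{sec sssp}, feed the hopset into the limited-scale distance sketch of Corollary~\ref{coro distance sketch prop}, answer queries by taking the minimum of the sketch estimate and $d_M(u,v)$, and split the stretch analysis at the threshold $d=\dlimits$ using exactly the chain $\beta_M\le t\cdot\epsilon_M$ and $t\le(1+\epsilon_\mathcal{H})^2(k+1)\,d_G(u,v)$ that the paper uses. One small inaccuracy in your ``expected obstacles'' paragraph: the emulator construction uses $\alpha=1$, not $\alpha=2^{k+1}\epsilon^\ell$ (that choice is for the hopset), but this does not affect your argument since the relevant inequality $\beta_M\le t\cdot\epsilon_M$ follows directly from the explicit formulas for $\beta_M$ and $t$.
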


\bibliographystyle{alpha}

\bibliography{cite}

\newcommand{\etalchar}[1]{$^{#1}$}
\begin{thebibliography}{GGK{\etalchar{+}}18}

\bibitem[ABB{\etalchar{+}}19]{assadi2019coresets}
Sepehr Assadi, MohammadHossein Bateni, Aaron Bernstein, Vahab Mirrokni, and Cliff Stein.
\newblock Coresets meet edcs: algorithms for matching and vertex cover on massive graphs.
\newblock In {\em SODA}, pages 1616--1635. SIAM, 2019.

\bibitem[ABP18]{abboud2018hierarchy}
Amir Abboud, Greg Bodwin, and Seth Pettie.
\newblock A hierarchy of lower bounds for sublinear additive spanners.
\newblock {\em SIAM Journal on Computing}, 47(6):2203--2236, 2018.

\bibitem[ANOY14]{andoni2014parallel}
Alexandr Andoni, Aleksandar Nikolov, Krzysztof Onak, and Grigory Yaroslavtsev.
\newblock Parallel algorithms for geometric graph problems.
\newblock In {\em STOC}, pages 574--583. ACM, 2014.

\bibitem[ASZ20]{andoni2020parallel}
Alexandr Andoni, Clifford Stein, and Peilin Zhong.
\newblock Parallel approximate undirected shortest paths via low hop emulators.
\newblock In {\em STOC}, pages 322--335, 2020.

\bibitem[BBD{\etalchar{+}}19]{behnezhad2019massively}
Soheil Behnezhad, Sebastian Brandt, Mahsa Derakhshan, Manuela Fischer, MohammadTaghi Hajiaghayi, Richard~M Karp, and Jara Uitto.
\newblock Massively parallel computation of matching and mis in sparse graphs.
\newblock In {\em PODC}, pages 481--490, 2019.

\bibitem[BDG{\etalchar{+}}21]{DBLP:conf/spaa/BiswasDGMN21}
Amartya~Shankha Biswas, Michal Dory, Mohsen Ghaffari, Slobodan Mitrovic, and Yasamin Nazari.
\newblock Massively parallel algorithms for distance approximation and spanners.
\newblock In {\em SPAA}, pages 118--128, 2021.

\bibitem[BHG{\etalchar{+}}21]{bergamaschi2021new}
Thiago Bergamaschi, Monika Henzinger, Maximilian~Probst Gutenberg, Virginia~Vassilevska Williams, and Nicole Wein.
\newblock New techniques and fine-grained hardness for dynamic near-additive spanners.
\newblock In {\em Proceedings of the 2021 ACM-SIAM Symposium on Discrete Algorithms (SODA)}, pages 1836--1855. SIAM, 2021.

\bibitem[BHH19]{behnezhad2019exponentially}
Soheil Behnezhad, Mohammad~Taghi Hajiaghayi, and David~G Harris.
\newblock Exponentially faster massively parallel maximal matching.
\newblock In {\em FOCS}, pages 1637--1649. IEEE, 2019.

\bibitem[BKS17]{beame2017communication}
Paul Beame, Paraschos Koutris, and Dan Suciu.
\newblock Communication steps for parallel query processing.
\newblock {\em Journal of the ACM (JACM)}, 64(6):40, 2017.

\bibitem[CDP21]{czumaj2021simple}
Artur Czumaj, Peter Davies, and Merav Parter.
\newblock Simple, deterministic, constant-round coloring in congested clique and mpc.
\newblock {\em SIAM journal on computing}, 50(5):1603--1626, 2021.

\bibitem[CFG{\etalchar{+}}19]{chang2019complexity}
Yi-Jun Chang, Manuela Fischer, Mohsen Ghaffari, Jara Uitto, and Yufan Zheng.
\newblock The complexity of ($\delta$+ 1) coloring in congested clique, massively parallel computation, and centralized local computation.
\newblock In {\em PODC}, pages 471--480, 2019.

\bibitem[C{\L}M{\etalchar{+}}18]{czumaj2018round}
Artur Czumaj, Jakub {\L}{\k{a}}cki, Aleksander M{\k{a}}dry, Slobodan Mitrovi{\'c}, Krzysztof Onak, and Piotr Sankowski.
\newblock Round compression for parallel matching algorithms.
\newblock In {\em STOC}, pages 471--484, 2018.

\bibitem[DFKL21]{DBLP:conf/podc/DoryFKL21}
Michal Dory, Orr Fischer, Seri Khoury, and Dean Leitersdorf.
\newblock Constant-round spanners and shortest paths in congested clique and {MPC}.
\newblock In {\em PODC}, pages 223--233, 2021.

\bibitem[DG08]{dean2008mapreduce}
Jeffrey Dean and Sanjay Ghemawat.
\newblock Mapreduce: simplified data processing on large clusters.
\newblock {\em Communications of the ACM}, 51(1):107--113, 2008.

\bibitem[DN19]{DinitzN19}
Michael Dinitz and Yasamin Nazari.
\newblock Massively parallel approximate distance sketches.
\newblock In Pascal Felber, Roy Friedman, Seth Gilbert, and Avery Miller, editors, {\em 23rd International Conference on Principles of Distributed Systems, {OPODIS} 2019, December 17-19, 2019, Neuch{\^{a}}tel, Switzerland}, volume 153 of {\em LIPIcs}, pages 35:1--35:17. Schloss Dagstuhl - Leibniz-Zentrum f{\"{u}}r Informatik, 2019.

\bibitem[DP22]{DBLP:conf/podc/DoryP20}
Michal Dory and Merav Parter.
\newblock Exponentially faster shortest paths in the congested clique.
\newblock {\em J. {ACM}}, pages 29:1--29:42, 2022.

\bibitem[EM19]{DBLP:conf/podc/ElkinM19}
Michael Elkin and Shaked Matar.
\newblock Near-additive spanners in low polynomial deterministic {CONGEST} time.
\newblock In Peter Robinson and Faith Ellen, editors, {\em Proceedings of the 2019 {ACM} Symposium on Principles of Distributed Computing, {PODC} 2019, Toronto, ON, Canada, July 29 - August 2, 2019}, pages 531--540. {ACM}, 2019.

\bibitem[EM21]{DBLP:conf/podc/ElkinM21}
Michael Elkin and Shaked Matar.
\newblock Ultra-sparse near-additive emulators.
\newblock In {\em {PODC} '21: {ACM} Symposium on Principles of Distributed Computing}, pages 235--246. {ACM}, 2021.

\bibitem[EN19]{DBLP:journals/talg/ElkinN19}
Michael Elkin and Ofer Neiman.
\newblock Efficient algorithms for constructing very sparse spanners and emulators.
\newblock {\em {ACM} Trans. Algorithms}, 15(1):4:1--4:29, 2019.

\bibitem[EN20]{DBLP:journals/eatcs/ElkinN20}
Michael Elkin and Ofer Neiman.
\newblock Near-additive spanners and near-exact hopsets, {A} unified view.
\newblock {\em Bull. {EATCS}}, 130, 2020.

\bibitem[EP01]{ElkinP01}
Michael Elkin and David Peleg.
\newblock (1+epsilon, beta)-spanner constructions for general graphs.
\newblock In Jeffrey~Scott Vitter, Paul~G. Spirakis, and Mihalis Yannakakis, editors, {\em Proceedings on 33rd Annual {ACM} Symposium on Theory of Computing, July 6-8, 2001, Heraklion, Crete, Greece}, pages 173--182. {ACM}, 2001.

\bibitem[ET22]{DBLP:conf/approx/ElkinT22}
Michael Elkin and Chhaya Trehan.
\newblock (1+{\(\epsilon\)})-approximate shortest paths in dynamic streams.
\newblock In {\em Approximation, Randomization, and Combinatorial Optimization. Algorithms and Techniques, {APPROX/RANDOM} 2022}, volume 245 of {\em LIPIcs}, pages 51:1--51:23. Schloss Dagstuhl - Leibniz-Zentrum f{\"{u}}r Informatik, 2022.

\bibitem[FHO22]{fischer2022massively}
Orr Fischer, Adi Horowitz, and Rotem Oshman.
\newblock Massively parallel computation in a heterogeneous regime.
\newblock In {\em PODC}, pages 345--355, 2022.

\bibitem[GGK{\etalchar{+}}18]{ghaffari2018improved}
Mohsen Ghaffari, Themis Gouleakis, Christian Konrad, Slobodan Mitrovi{\'c}, and Ronitt Rubinfeld.
\newblock Improved massively parallel computation algorithms for mis, matching, and vertex cover.
\newblock In {\em PODC}, pages 129--138, 2018.

\bibitem[GN20]{ghaffari2020massively}
Mohsen Ghaffari and Krzysztof Nowicki.
\newblock Massively parallel algorithms for minimum cut.
\newblock In {\em PODC}, pages 119--128, 2020.

\bibitem[GSZ11]{GoodrichSZ11}
Michael~T. Goodrich, Nodari Sitchinava, and Qin Zhang.
\newblock Sorting, searching, and simulation in the mapreduce framework.
\newblock In Takao Asano, Shin{-}Ichi Nakano, Yoshio Okamoto, and Osamu Watanabe, editors, {\em Algorithms and Computation - 22nd International Symposium, {ISAAC} 2011, Yokohama, Japan, December 5-8, 2011. Proceedings}, volume 7074 of {\em Lecture Notes in Computer Science}, pages 374--383. Springer, 2011.

\bibitem[GU19]{ghaffari2019sparsifying}
Mohsen Ghaffari and Jara Uitto.
\newblock Sparsifying distributed algorithms with ramifications in massively parallel computation and centralized local computation.
\newblock In {\em SODA}, pages 1636--1653, 2019.

\bibitem[HLSS19]{hajiaghayi2019mapreduce}
MohammadTaghi Hajiaghayi, Silvio Lattanzi, Saeed Seddighin, and Cliff Stein.
\newblock Mapreduce meets fine-grained complexity: Mapreduce algorithms for apsp, matrix multiplication, 3-sum, and beyond.
\newblock {\em arXiv preprint arXiv:1905.01748}, 2019.

\bibitem[IBY{\etalchar{+}}07]{isard2007dryad}
Michael Isard, Mihai Budiu, Yuan Yu, Andrew Birrell, and Dennis Fetterly.
\newblock Dryad: distributed data-parallel programs from sequential building blocks.
\newblock In {\em ACM SIGOPS operating systems review}, pages 59--72. ACM, 2007.

\bibitem[KP22]{DBLP:conf/focs/KoganP22}
Shimon Kogan and Merav Parter.
\newblock Having hope in hops: New spanners, preservers and lower bounds for hopsets.
\newblock In {\em 63rd {IEEE} Annual Symposium on Foundations of Computer Science, {FOCS} 2022, Denver, CO, USA, October 31 - November 3, 2022}, pages 766--777. {IEEE}, 2022.

\bibitem[KSV10]{karloff2010model}
Howard Karloff, Siddharth Suri, and Sergei Vassilvitskii.
\newblock A model of computation for mapreduce.
\newblock In {\em SODA}, pages 938--948. SIAM, 2010.

\bibitem[Li20]{li2020faster}
Jason Li.
\newblock Faster parallel algorithm for approximate shortest path.
\newblock In {\em STOC}, pages 308--321, 2020.

\bibitem[LMSV11]{lattanzi2011filtering}
Silvio Lattanzi, Benjamin Moseley, Siddharth Suri, and Sergei Vassilvitskii.
\newblock Filtering: a method for solving graph problems in mapreduce.
\newblock In {\em SPAA}, pages 85--94, 2011.

\bibitem[Now21]{nowicki2021deterministic}
Krzysztof Nowicki.
\newblock A deterministic algorithm for the mst problem in constant rounds of congested clique.
\newblock In {\em STOC}, pages 1154--1165, 2021.

\bibitem[Pet10]{Pettie10}
Seth Pettie.
\newblock Distributed algorithms for ultrasparse spanners and linear size skeletons.
\newblock {\em Distributed Computing}, 22(3):147--166, 2010.

\bibitem[TZ05]{ThorupZ05}
Mikkel Thorup and Uri Zwick.
\newblock Approximate distance oracles.
\newblock {\em J. {ACM}}, 52(1):1--24, 2005.

\bibitem[TZ06]{ThorupZ06}
Mikkel Thorup and Uri Zwick.
\newblock Spanners and emulators with sublinear distance errors.
\newblock In {\em Proceedings of the Seventeenth Annual {ACM-SIAM} Symposium on Discrete Algorithms, {SODA} 2006, Miami, Florida, USA, January 22-26, 2006}, pages 802--809. {ACM} Press, 2006.

\bibitem[Whi12]{white2012hadoop}
Tom White.
\newblock {\em Hadoop: The definitive guide}.
\newblock " O'Reilly Media, Inc.", 2012.

\bibitem[ZCF{\etalchar{+}}10]{zaharia2010spark}
Matei Zaharia, Mosharaf Chowdhury, Michael~J Franklin, Scott Shenker, and Ion Stoica.
\newblock Spark: Cluster computing with working sets.
\newblock {\em HotCloud}, 10(10-10):95, 2010.

\end{thebibliography}

\end{document}